\documentclass[12pt, a4paper]{amsart}

\usepackage{amsmath,amssymb,amsfonts}
\usepackage{bbold}
\usepackage{epic,eepic}
\usepackage{enumerate}

\usepackage{amsthm}
\usepackage[all]{xy}
\usepackage{caption}
\usepackage{stmaryrd} 
\usepackage[dvipdfmx]{graphicx,color} 
\usepackage{tikz}
\usetikzlibrary{trees}

\usepackage{geometry}
\geometry{
	a4paper,
	total={170mm,257mm},
	left=20mm,
	tmargin=28mm,
	bmargin=25mm,
}

\usepackage{listings} 

\theoremstyle{plain}
\newtheorem{thm}{Theorem}[section]
\newtheorem{prop}[thm]{Proposition}
\newtheorem{lem}[thm]{Lemma}
\newtheorem{cor}[thm]{Corollary}

\theoremstyle{definition}
\newtheorem{defn}[thm]{Definition}
\newtheorem{exmp}[thm]{Example}
\newtheorem{Note}[thm]{Note}
\newtheorem{rem}[thm]{Remark}
\newtheorem{asm}[thm]{Assumption}

\numberwithin{figure}{section}
\newenvironment{diagram}[1][\fps@figure]{ 
  
  \begin{figure}[#1]
}{
  \end{figure}
}

\numberwithin{table}{section}


\DeclareMathOperator*{\esssup}{ess\,sup}

\newcommand{\lspace} {
  \vspace{0.8\baselineskip}
}

\newcommand{\abs}[1]{
  \lvert  #1 \rvert
}

\newcommand{\norm}[1]{
  \|  #1 \|
}

\newcommand\Prefix[3]{\vphantom{#3}#1#2#3}  

\newdir{ >}{{}*!/-5pt/@{>}}

\definecolor{arrowred}{rgb}{0,0,0} 
\newcommand{\newword}[1]{\textbf{\textit{#1}}}

\newcommand{\textred}[1]{#1}

\numberwithin{equation}{section}


%
\newcommand{\Prob}{\mathbf{Prob}}
\newcommand{\mpProb}{\mathbf{mpProb}}

\newcommand{\Unc}{\mathbf{Unc}}
\newcommand{\mpUnc}{\mathbf{mpUnc}}
\newcommand{\USeq}{\mathbf{USec}}

\newcommand{\Mble}{\mathbf{Mble}} 
\newcommand{\Ban}{\mathbf{Ban}} 


\mathchardef\mhyphen="2D  

\def \JAFEE		{0} 

\if \JAFEE 1
\title{Uncertainty Spaces}
\else 
\title{Hierarchical Structure of Uncertainty}
\fi 

\thanks{This work was supported by JSPS KAKENHI Grant Number 24K04941.}

\author[T. Adachi]{Takanori Adachi}
\address{Graduate School of Management,
         Tokyo Metropolitan University,
         1-4-1 Marunouchi, Chiyoda-ku, Tokyo 100-0005, Japan}
\email{Takanori Adachi <taka.adachi@tmu.ac.jp>}

%

\date{\today}

\keywords{
	uncertainty,
	decision theory,
	categorical probability theory
}



\subjclass[2020]{
  Primary 
	91B06,	
    16B50;   
  secondary
	91Gxx,	
	18C15;	
}

\def \MkLonger  {0} 
\def \WithProof {1} 
\def \InclCMFun {1} 
\def \InclPM    {1} 
\def \InclSC    {0} 

\if \JAFEE 1
\def \InclPM    {0} 
\def \InclSC    {0} 
\def \InclCMFun {0} 
\def \MkLonger  {0} 
\def \WithProof {0} 
\fi 

\if \MkLonger 1 
\def \WithProof {1} 
\fi 

\begin{document}

\maketitle

\if \InclSC  1
\input{whatIwanttodo}
\else 
We introduce a new concept called \textit{uncertainty spaces} which is an extended concept of probability spaces.
Then, we express $n$-layer uncertainty
by
a hierarchically constructed sequence of uncertainty spaces, called a \textit{U-sequence}.
We use U-sequences for providing examples that illustrate Ellsberg's paradox.
We will use the category theory to get a bird's eye view of the hierarchical structure of uncertainty.
We discuss maps between uncertainty spaces and maps between U-sequences,
seeing that they form categories of uncertainty spaces and the category of U-sequences, respectively.
We construct an endofunctor 
of the category of measurable spaces
in order to embed a given U-sequence into it.
Then,
by the iterative application of the endofunctor,
we construct the \textit{universal uncertainty space}
which may be able to serve as a basis for multi-layer uncertainty theory.

\fi 

\if \MkLonger  1
\tableofcontents
\fi 

\section{Introduction}
\label{sec:intro}

The experience of unknown events such as financial crises and infectious disease crises, 
which are black swans that appear suddenly, has revealed the limitations of measuring risk under a fixed probability measure. 
In order to solve this problem, the importance of so-called ambiguity, which allows the probability measure itself to change, 
has long been recognized in the financial world. 
On the other hand, 
there have been many studies on subjective probability measures in the field of economics, 
including Savage's work 
(\cite{savage1954}, \cite{izhakian_2017}, \cite{izhakian_2020}), 
but even in those cases, 
the studies are based on the two levels of uncertainty: 
risk when a conventional probability measure (probability distribution) is known, 
and ambiguity due to the fact that the subjective probability measure (prior) can be taken arbitrarily in a certain space.

In this study, we express $n$-layer uncertainty, 
which we call \textit{hierarchical uncertainty} 
by introducing a new concept called \textit{uncertainty spaces} 
which is an extended concept of probability spaces 
and \textit{U-sequence} which is a hierarchically constructed sequence of uncertainty spaces.  
By defining a value function at each level of uncertainty, 
it becomes possible to characterize preference relations at each level.
The U-sequences are used as examples to illustrate Ellsberg's paradox. 
Specifically, three U-sequences will be prepared, 
and each U-sequence will be used to check whether the paradox can be explained with it.
In particular, 
the third example introduces a third level of uncertainty in addition to risk and ambiguity, 
and calculates a value function at that level.
Although this is a simple example, 
it may be helpful in considering how to deal with uncertainties that mankind has not yet faced, 
for example, those brought about by the development of artificial intelligence.

We will use the category theory to get a bird's eye view of the hierarchical structure of uncertainty.
%
%
We discuss maps between uncertainty spaces and maps between U-sequences,
seeing that they form categories of uncertainty spaces and the category of U-sequences, respectively.
\if \InclCMFun 1 
We introduce the lift-up functor 
$\mathfrak{L}$
using one of the categories of uncertainty spaces, $\mpUnc$ 
whose arrows are measure preserving maps in a sense.
$\mathfrak{L}$
is used to define CM-functors that will be considered as envelopes of U-sequences.
By the iterative application of the CM-functor,
we construct a universal uncertainty space
that has a potential to be the basis of multi-level uncertainty theory 
because it has the uncertainty spaces of all levels as its projections.
\fi 

\lspace

The structure of this article is described below.
In Section \ref{sec:choquet},
we review the Choquet capacity and the Choquet integral.
After introducing the new concept of uncertainty spaces in 
Section \ref{sec:uncertainSp},
we introduce U-sequence, a sequence of uncertainty space in 
Section \ref{sec:hieraUncertain},
and give three different U-sequences as examples to illustrate Ellsberg's paradox. 
In particular, the third U-sequence represents another level of uncertainty after the conventional risk and ambiguity.
In sections \ref{sec:catUnc} and \ref{sec:uCat}, 
we organize the concepts discussed so far using category theory.
The key to recognize concept in category theoretical framework is how to define the arrows between objects.
In Section \ref{sec:catUnc}, 
we define two types of arrows between uncertainty spaces, 
one based on absolute continuity and the other based on measure preservation, 
seeing that they form categories of uncertainty spaces $\Unc$ and $\mpUnc$. 
Then, we discuss the relationship between these two types of arrows and the structure of U-sequences.
We also discuss
their possible embedding in higher-order uncertainty spaces.
In Section \ref{sec:uCat}, 
we define the arrows between U-sequences with the help of the concept of entropic risk measure
and also use them to introduce the category of U-sequences,
$\USeq^G$.
We briefly look at 
how they are treated among U-sequences as seen in the interpretation of Ellsberg's paradox earlier.
\if \InclCMFun 1 
One of the categories of uncertainty spaces, $\mpUnc$ whose arrows are measure preserving maps in a sense,
is used for introducing the lift-up functor
$
\mathfrak{L}
	:
\mpUnc
	\to
\Mble
$
in Section \ref{sec:functorS}.
We use 
$
\mathfrak{L}
$
for defining a endofunctor of 
$\Mble$
called a CM-functor
in order to embed a given set of uncertainty spaces into it.
After developing n-layer uncertainty analysis through uncertainty spaces,
we construct the universal uncertainty space
in Section \ref{sec:uniUncSp}
as a limit of the sequence generated by iterative applications of a CM-functor.
This universal uncertainty space may be able to serve as a basis for multi-layer uncertainty theory
because it has as its projections to the uncertainty spaces of all levels.
In Section \ref{sec:GiryMonad},
we confirm a sufficient condition for the functor
$\mathfrak{S}$
to be a probability monad developed by
\cite{lawvere_1962}
and
\cite{giry_1982}
though the resulting condition is not new.
\fi 
Finally, in 
Section \ref{sec:concl}, 
we discuss the significance of considering hierarchies of uncertainty in modern society.

\if \MkLonger  1
\section{Review of Choquet integrals}
\else 
\section{Preliminaries}
\fi 
\label{sec:choquet}

In
\cite{ellsberg_1961},
Ellsberg claimed a paradox showing that people's subjective probability may not be additive\footnote{
See Section \ref{sec:Ellsberg} for the detail.
}.
Since we are talking about uncertainty derived by subjective probabilities, 
we need to develop our theory based on non-additive probabilities, i.e. capacities.
Choquet extended the usual procedure of calculating expectations to non-additive probabilities, 
which is now called Choquet integrals
(\cite{choquet_1954}). 

In this section,
we review 
the concepts of capacities and Choquet integrals
based on
\cite{schmeidler_1986}.
\if \MkLonger 0
We omit the proofs in this section
because some of them are straightforward and others are described in
\cite{schmeidler_1986}.
\fi 

First, we introduce 
characteristic maps
$
\mathbb{1}_X(A)
$
of subsets.

\begin{defn}
Let
$X$
be a set.
For
$A \subset X$,
$
\mathbb{1}_X(A)
:
	X
\to
	\mathbb{R}
$
is the map defined by
for
$
x \in X
$,
\begin{equation}
\mathbb{1}_X(A)(x)
	:=
\begin{cases}
1
	&\textrm{if}
\; x \in A,
	\\
0
	&\textrm{if}
\; x \in X \setminus A.
\end{cases}
\end{equation}
\end{defn}

\begin{prop}
\label{prop:lemOne}
Let
$
h : X \to Y
$
be a map,
and
$
B \subset Y
$.
Then,
\if \MkLonger  1
\begin{equation}
\mathbb{1}_Y(B)
	\circ
h
	=
\mathbb{1}_X(
	h^{-1}(B)
).
\end{equation}
\else 
$
\mathbb{1}_Y(B)
	\circ
h
	=
\mathbb{1}_X(
	h^{-1}(B)
).
$
\fi 
\end{prop}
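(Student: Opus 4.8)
The plan is to prove the identity by evaluating both sides at an arbitrary point of $X$, since two maps $X \to \mathbb{R}$ coincide exactly when they agree pointwise. So I would fix $x \in X$ and compute $\bigl(\mathbb{1}_Y(B) \circ h\bigr)(x) = \mathbb{1}_Y(B)\bigl(h(x)\bigr)$, then unwind the definition of the characteristic map: this value is $1$ when $h(x) \in B$ and $0$ when $h(x) \in Y \setminus B$.

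The second step is the only place where anything is ``used'': I would invoke the definition of the preimage, namely that $h(x) \in B$ holds if and only if $x \in h^{-1}(B)$, and correspondingly $h(x) \in Y \setminus B$ if and only if $x \in X \setminus h^{-1}(B)$. Substituting this equivalence into the case distinction above turns $\mathbb{1}_Y(B)\bigl(h(x)\bigr)$ into the expression that is $1$ when $x \in h^{-1}(B)$ and $0$ when $x \in X \setminus h^{-1}(B)$, which is precisely $\mathbb{1}_X\bigl(h^{-1}(B)\bigr)(x)$ by definition. Since $x$ was arbitrary, the two maps agree everywhere on $X$, giving the claimed equality.

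There is essentially no obstacle here: the statement is purely definitional, and the ``hard part'' is nothing more than being careful to write the case split in terms of the correct partition of $X$ (namely $h^{-1}(B)$ and its complement $X \setminus h^{-1}(B)$) rather than some other set. No hypotheses on $h$ (continuity, measurability, etc.) are needed. I would present the argument as a short two-line chain of equalities, possibly with the single displayed computation
\begin{equation}
\bigl(\mathbb{1}_Y(B) \circ h\bigr)(x)
	=
\mathbb{1}_Y(B)\bigl(h(x)\bigr)
	=
\begin{cases}
1 & \textrm{if } x \in h^{-1}(B), \\
0 & \textrm{if } x \in X \setminus h^{-1}(B),
\end{cases}
	=
\mathbb{1}_X\bigl(h^{-1}(B)\bigr)(x),
\end{equation}
and then remark that this holds for all $x \in X$.
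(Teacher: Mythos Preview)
Your proposal is correct and matches the paper's proof essentially line for line: both fix an arbitrary $x \in X$, unfold the definition of $\mathbb{1}_Y(B)(h(x))$ into a case split, rewrite the conditions $h(x) \in B$ and $h(x) \in Y \setminus B$ as $x \in h^{-1}(B)$ and $x \in X \setminus h^{-1}(B)$, and recognize the result as $\mathbb{1}_X(h^{-1}(B))(x)$.
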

\if \MkLonger 1
\begin{proof}
Let
$x \in X$.
Then, we have
\begin{align*}
(\mathbb{1}_Y(B)
	\circ
h)
(x)
		&=
\mathbb{1}_Y(B)
(
h
(x)
)
		\\&=
\begin{cases}
1
	&\textrm{if}
\; h(x) \in B,
	\\
0
	&\textrm{if}
\; h(x) \in Y \setminus B.
\end{cases}
		\\&=
\begin{cases}
1
	&\textrm{if}
\; x \in h^{-1}(B),
	\\
0
	&\textrm{if}
\; x \in X \setminus h^{-1}(B).
\end{cases}
		\\&=
\mathbb{1}_X(
	h^{-1}(B)
)(x).
\end{align*}
\end{proof}
\fi 

Throughout the rest of this section,
$
\mathfrak{X}
 = (X, \Sigma_X)
$
is a measurable space,
and consider
$\mathbb{R}$
as a measurable space equipped with the usual Lebesgue measure.

\begin{prop}
\label{prop:charFunMeas}
Let 
$
f : 
\mathfrak{X}
\to \mathbb{R}
$ 
be a step function
defined by
\if \MkLonger  1
\begin{equation}
f
	:=
\sum_{i=1}^{\infty}
a_i
\mathbb{1}_X(A_i),
\end{equation}
\else 
$
f
	:=
\sum_{i=1}^{\infty}
a_i
\mathbb{1}_X(A_i),
$
\fi 
where
$a_i \in \mathbb{R}$
and
$A_i \in \Sigma_X$
for each $i = 1, 2, \cdots$,
and
$\{ A_i\}_{i=1}^{\infty}$
are mutually disjoint.
Then,
$f$ is measurable.
\end{prop}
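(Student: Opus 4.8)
The plan is to verify measurability directly, by showing that the preimage under $f$ of an arbitrary subset of $\mathbb{R}$ belongs to $\Sigma_X$. First I would introduce the remainder set
\[
A_0 := X \setminus \bigcup_{i=1}^{\infty} A_i,
\]
which lies in $\Sigma_X$, being the complement of a countable union of members of $\Sigma_X$. Since $\{A_i\}_{i=1}^{\infty}$ are mutually disjoint, every point $x \in X$ lies in exactly one of $A_0, A_1, A_2, \dots$, so at each $x$ the series defining $f$ has at most one nonzero term; hence it converges pointwise (which also shows $f$ is a well-defined function $X \to \mathbb{R}$), with $f(x) = a_i$ whenever $x \in A_i$ for $i \geq 1$ and $f(x) = 0$ whenever $x \in A_0$.

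Given this description, for any $B \subseteq \mathbb{R}$ I would write
\[
f^{-1}(B)
=
\Bigg( \bigcup_{\substack{i \geq 1 \\ a_i \in B}} A_i \Bigg)
\;\cup\;
\begin{cases} A_0 & \text{if } 0 \in B, \\ \emptyset & \text{if } 0 \notin B, \end{cases}
\]
which is a countable union of elements of $\Sigma_X$ and therefore belongs to $\Sigma_X$. Since this holds for every $B \subseteq \mathbb{R}$ — in particular for every Borel set — the map $f$ is measurable.

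The computation is routine, so there is no real obstacle; the only points needing a little attention are the bookkeeping for $A_0$, on which the sum vanishes even though no $A_i$ contains those points, and the harmless possibility that $a_i = a_j$ for distinct $i,j$, which merely merges the corresponding pieces in the union above. As an alternative I could instead note that the partial sums $f_n := \sum_{i=1}^{n} a_i \mathbb{1}_X(A_i)$ are simple functions, hence measurable, and that $f_n \to f$ pointwise — by disjointness $f_n(x) = f(x)$ for all sufficiently large $n$ — so that $f$ is measurable as a pointwise limit of measurable functions; but the direct preimage computation is shorter and invokes no limit theorem.
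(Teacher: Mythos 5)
Your proof is correct and follows essentially the same route as the paper's: the paper also argues directly that $f^{-1}(U)=\bigcup\{A_i \mid a_i\in U\}\in\Sigma_X$ for every $U\subset\mathbb{R}$. Your explicit treatment of the remainder set $A_0=X\setminus\bigcup_{i\ge 1}A_i$ (needed when $0\in B$) is a small point of care that the paper's one-line proof glosses over, but it does not change the argument.
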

\if \MkLonger 1
\begin{proof}
It is obvious since 
$
f^{-1}(U)
	=
\cup
\{
	A_i
\mid
	a_i \in U
\}
	\in
\Sigma_X
$
for every $U  \subset \mathbb{R}$.

\end{proof}
\fi 

\begin{defn}
\label{def:capacity}
A 
\newword{(Choquet) capacity}
on $X$
is a map
$
u
	:
\Sigma_X \to \mathbb{R}
$
satisfying the following conditions.
\begin{enumerate}
\item
$u(\emptyset) = 0$,
\quad
$u(X) = 1$.

\item
$u$
is an increasing map in the sense that
for every pair 
$A, B $
in $\Sigma_X$,
$
A \subset B
$
implies
$
u(A)
	\le
u(B)
$.

\end{enumerate}

\end{defn}
A capacity is sometimes called a
\newword{subjective probability}.
A probability measure on 
$\mathfrak{X}$
is a capacity on 
$\mathfrak{X}$.
But a capacity does not require the additivity like probability measures,
that is, the equation
$
u(A \cup B)
	=
u(A) + u(B)
$
for
$
A, B \in \Sigma_X
$
may not hold even if
$A \cap B = \emptyset$.

\begin{exmp}
\label{exmp:capacity}
Let
$u$
be a capacity on 
$\mathfrak{X}$
and
$
h : [0, 1] \to [0, 1]
$
be a non-decreasing function such that
$
h(0) = 0
$
and
$
h(1) = 1 .
$
Then,
$
h \circ u
$
is a capacity on
$\mathfrak{X}$
as well.
Especially,
For a probability measure
$\mathbb{P}$
on 
$\mathfrak{X}$,
$
h \circ \mathbb{P}
$
is a capacity on 
$\mathfrak{X}$.
\end{exmp}

\begin{defn}
\label{defn:Linfty}
Let
$
L^{\infty}(\mathfrak{X})
$
be the set of all bounded real-valued measurable functions on 
$\mathfrak{X}$.
It is a Banach space with the norm\footnote{
Usually, the
$L^{\infty}$
norm is defined by
$
\esssup
$
in order to exclude null-sets.
However, in our usage the null-sets are not pre-defined since
probability measure or capacity changes all the time.
So, we simply use $\sup$.
}
\begin{equation}
\label{eq:LinftyNorm}
\norm{f}_{\infty}
	:=
\sup\{ \abs{f(x)}
	\mid
x \in X
\} 
\end{equation}
for
$f \in L^{\infty}(\mathfrak{X})$.
Then, we can think
$L^{\infty}$
as a contravariant functor
\begin{equation*}
\xymatrix{
	\Mble
		\ar @{->} [r]^{L^{\infty}}
&
	\Ban
\\
	\mathfrak{X}
		\ar @{->} [d]_{h}
&
	L^{\infty}(
		\mathfrak{X}
	)
		\ar @{}_-{\mathrel{\rotatebox[origin=c]{-180}{$\in$} }} @<+6pt> [r]
&
	L^{\infty}(h)(f) :=
	f \circ h
\\
	\mathfrak{Y}
&
	L^{\infty}(
		\mathfrak{Y}
	)
		\ar @{->} [u]_{L^{\infty}(h)}
		\ar @{}_-{\mathrel{\rotatebox[origin=c]{-180}{$\in$} }} @<+6pt> [r]
&
	f
		\ar @{|->} [u]
}
\end{equation*}
where
$\Ban$ is the category of Banach spaces whose arrows are bounded linear maps.

\end{defn}

\begin{defn}
\label{defn:ChoquetInt}
Let
$u$
be
a capacity 
 on 
$\mathfrak{X}$
and 
$f \in
L^{\infty}(
	\mathfrak{X}
)
$.
\begin{enumerate}
\item
A function
$
f^{u}
	:
\mathbb{R} \to [0,1]
$
is
defined by
for $r \in \mathbb{R}$,
\begin{equation}
\label{eq:distUnu}
f^{u}(r)
	:=
\begin{cases}
u(\{f \ge r\})
	&\textrm{if}\;
r \ge 0,
	\\
u(\{f \ge r\}) - 1
	&\textrm{if}\;
r < 0
\end{cases}
\end{equation}
where
$
\{f \ge r\}
$
denotes the subset of $X$
specified by
$
\{
x \in X
	\mid
f(x) \ge r
\}
$,
or
$
f^{-1}([r, \infty))
$.

\item
The 
\newword{Choquet integral}
of
$f$
with respect to 
$u$
is 
the real value
$
I_{
	\mathfrak{X}
}^{u}
(f)
$
calculated by
\begin{equation}
\label{eq:choquetIntegral}
I_{
	\mathfrak{X}
}^{u}
(f)
	:=
\int_{-\infty}^{\infty}
f^{u}(r)
\, dr .
\end{equation}

\item
Two functions
$f$
and
$g$
in
$
L^{\infty}(
	\mathfrak{X}
)
$
are said to be 
\newword{comonotonic}
if
for all
$x$
and
$y$
in
$X$,
\begin{equation}
\label{eq:comonotonic}
(f(x) - f(y))
(g(x) - g(y))
	\ge 0 .
\end{equation}

\end{enumerate}
\end{defn}

\begin{lem}
\label{lem:stepFunChoquetInt}
Let 
$
f : 
\mathfrak{X}
 \to \mathbb{R}
$ 
be a finite step function
that can be written as
\if \WithProof 0
$
f = \sum_{i=1}^n
a_i
\mathbb{1}_X(A_i),
$
\else 
\begin{equation}
\label{eq:stepFunOne}
f = \sum_{i=1}^n
a_i
\mathbb{1}_X(A_i),
\end{equation}
\fi 
where
$A_1, \cdots, A_n$
are mutually disjoint members of
$\Sigma_X$,
and
$a_i$'s
are real numbers satisfying
$
a_1 \ge a_2 \ge \cdots \ge  a_n
$.
Then, for a capacity
$u$ on 
$
\mathfrak{X}
$,
we have
\if \WithProof 0
$
I_{
	\mathfrak{X}
}^{u}(f)
	=
\sum_{i=1}^n
	(a_i - a_{i+1})
	\,
	u\big(
		\bigcup_{j=1}^i
		A_j
	\big),
$
\else 
\begin{equation}
\label{eq:stepFunChoquetInt}
I_{
	\mathfrak{X}
}^{u}(f)
	=
\sum_{i=1}^n
	(a_i - a_{i+1})
	\,
	u\big(
		\bigcup_{j=1}^i
		A_j
	\big),
\end{equation}
\fi 
where
$
a_{n+1} := 0
$.
\end{lem}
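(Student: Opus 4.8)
The plan is to compute the Choquet integral $I_{\mathfrak{X}}^{u}(f)=\int_{-\infty}^{\infty}f^{u}(r)\,dr$ directly from Definition \ref{defn:ChoquetInt} by writing down the integrand $r\mapsto f^{u}(r)$ explicitly. Put $B_{i}:=\bigcup_{j=1}^{i}A_{j}$ for $0\le i\le n$, so that $B_{0}=\emptyset\subseteq B_{1}\subseteq\cdots\subseteq B_{n}$, and let $m(r)$ denote the number of indices $i$ with $a_{i}\ge r$; here I assume $\{A_{i}\}_{i=1}^{n}$ partitions $X$, so that $B_{n}=X$. The role of the ordering $a_{1}\ge a_{2}\ge\cdots\ge a_{n}$ is precisely that it forces $\{i:a_{i}\ge r\}$ to be an initial segment, so that every superlevel set of $f$ is one of these nested sets: since $f$ equals $a_{i}$ exactly on $A_{i}$, we have $\{f\ge r\}=B_{m(r)}$, and $r\mapsto m(r)$ is a nonincreasing, left-continuous step function with jumps located at $a_{1},\dots,a_{n}$.

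From this I would read off the piecewise form of $f^{u}$. For $r>a_{1}$ one has $m(r)=0$, hence $f^{u}(r)=u(\emptyset)=0$ whenever also $r\ge 0$; for $r$ strictly below the least value taken by $f$ one has $\{f\ge r\}=X$, hence, by $u(X)=1$, $f^{u}(r)=u(X)-1=0$ whenever $r<0$. On any interval where $m(r)\equiv i$, formula \eqref{eq:distUnu} gives $f^{u}(r)=u(B_{i})$ for $r\ge 0$ and $f^{u}(r)=u(B_{i})-1$ for $r<0$. These two vanishing tails are exactly what turn \eqref{eq:choquetIntegral} into the integral of a bounded step function over a bounded range, so the next step is simply to add up the areas of the finitely many constancy intervals (an interval of zero length, corresponding to $a_{i}=a_{i+1}$, contributes nothing, which matches the vanishing $i$-th term of the target sum).

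Carrying this out, the integral becomes a finite sum in which each maximal interval on which $f^{u}$ is constant contributes (its length)$\,\times\,$(its value), the value being $u(B_{i})$ on the portion with $r\ge 0$ and $u(B_{i})-1$ on the portion with $r<0$, as $i$ ranges over $0,\dots,n$. Because consecutive intervals share the breakpoints $a_{i}$, collecting the total coefficient of each $u(B_{i})$ with $1\le i\le n-1$ yields the factor $a_{i}-a_{i+1}$; for the top index $i=n$, the remaining contributions — the $r\ge 0$ portion of the interval where $m\equiv n$ together with the accumulated $-1$'s coming from the $r<0$ portions — combine, via $u(X)=1$, into the single term $a_{n}u(B_{n})=a_{n}$, i.e.\ the $i=n$ summand with the convention $a_{n+1}:=0$. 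The total is $\sum_{i=1}^{n}(a_{i}-a_{i+1})\,u(B_{i})$, which is \eqref{eq:stepFunChoquetInt}. I expect the one genuine difficulty to be bookkeeping rather than ideas: one must carry the sign of $r$ through the two-case formula \eqref{eq:distUnu}, permit the $a_{i}$ to be negative, and check that the $r\ge 0$ and $r<0$ pieces recombine into the clean telescoped expression; once the step structure of $m(r)$ and the normalization $u(X)=1$ are in hand, the rest is routine.
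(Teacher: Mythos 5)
Your proposal is correct and takes essentially the same route as the paper's proof: identify the superlevel sets $\{f\ge r\}$ as the nested unions $B_i=\bigcup_{j=1}^{i}A_j$, observe the vanishing tails, and integrate the resulting step function $f^{u}$, splitting at $r=0$ and recombining the $r<0$ contributions via $u(X)=1$ — the paper does exactly this by picking $k$ with $a_k\ge 0\ge a_{k+1}$ and telescoping. Your explicit assumption that the $A_i$ partition $X$ (so $B_n=X$) is also used implicitly in the paper's proof (which writes $\{f\ge r\}=X$ for $r\le a_n$), so flagging it is fair rather than a deviation.
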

\if \MkLonger 1
\begin{proof}
%
First, note that for 
$r \in \mathbb{R}$
\begin{equation}
\{ f \ge r\}
	= \begin{cases}
	\emptyset
		&\textrm{if} \;
		r > a_1,
\\
	\bigcup_{j=1}^i A_j
		&\textrm{if} \;
	a_i \ge r > a_{i+1} \quad(i = 1, 2, \cdots, n-1),
\\
	X
		&\textrm{if} \;
	r \le a_n .
\end{cases}
\end{equation}
Now let
$k \in \{0, 1, \cdots, n\}$
be a number such that
$
a_k \ge 0 \ge a_{k+1}
$.
Then, we have
\begin{align*}
I_{
	\mathfrak{X}
}^{u}(f)
	=&
\int_0^{\infty}
	u(f \ge r)
\, dr
	+
\int_{-\infty}^0
	\big( u(f \ge r) - 1 \big) 
\, dr
	\\=&
\sum_{i=1}^{k-1}
	(a_{i} - a_{i+1}) \,
	u\big( \bigcup_{j=1}^{i} A_j \big)
	+
(a_k - 0) \,
u\big( \bigcup_{j=1}^k A_j \big)
	\\&+
(0 - a_{k+1}) \,
\Big(
u\big( \bigcup_{j=1}^k A_j \big)
	- 1
\Big)
+
\sum_{i=k+1}^n
	(a_{i} - a_{i+1}) \,
\Big(
	u\big( \bigcup_{j=1}^{i} A_j \big)
	- 1
\Big)
	\\=&
\sum_{i=1}^n
	(a_{i} - a_{i+1}) \,
	u\big( \bigcup_{j=1}^{i} A_j \big)
+
	a_{k+1}
-
\sum_{i=k+1}^n
	(a_{i} - a_{i+1}) 
	\\=&
\sum_{i=1}^n
	(a_{i} - a_{i+1}) \,
	u\big( \bigcup_{j=1}^{i} A_j \big) .
\end{align*}
\end{proof}
\fi 

Note that
as a special case of Lemma \ref{lem:stepFunChoquetInt},
we have
\if \WithProof 0
$
I_{
	\mathfrak{X}
}^u(
	\mathbb{1}_X(A)
)
		=
u(A) .
$
\else 
\begin{equation}
\label{eq:ChoquetOneA}
I_{
	\mathfrak{X}
}^u(
	\mathbb{1}_X(A)
)
		=
u(A) .
\end{equation}
\fi 

\begin{lem}
\label{lem:stepFunComonotonic}
Let
$f$
and
$g$
be two finite measurable step functions on
	$\mathfrak{X}$.
Then,
$f$ and $g$ are comonotonic
if and only if
there exist disjoint members
$
A_1, \cdots, A_n 
	\in
\Sigma_X
$
and
real numbers
$
a_1, \cdots, a_n
$
and
$
b_1, \cdots, b_n
$
such that
$
a_1 \ge a_2 \ge \cdots \ge a_n
$,
$
b_1 \ge b_2 \ge \cdots \ge b_n
$
and
\begin{equation}
\label{eq:fgRep}
f = 
\sum_{i=1}^n
	a_i
	\mathbb{1}_X(A_i) ,
		\quad
g = 
\sum_{i=1}^n
	b_i
	\mathbb{1}_X(A_i) .
\end{equation}
\end{lem}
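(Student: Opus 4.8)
The plan is to prove the two implications separately. The ``if'' direction is a short case analysis; essentially all of the work is in the ``only if'' direction, where two \emph{a priori} unrelated step functions must be rewritten over one common partition, and the ordering conditions $a_1\ge\cdots\ge a_n$ and $b_1\ge\cdots\ge b_n$ must both come out for free. The organizing observation is that comonotonicity of $f$ and $g$ is exactly the assertion that the finite set of value pairs $(f(x),g(x))$ attained on $X$ can be linearly ordered so that \emph{both} coordinates are weakly decreasing along it, and this order will supply the indexing of the $A_i$. Throughout I read \eqref{eq:fgRep} with the (standard) understanding that the disjoint family $\{A_i\}_{i=1}^n$ covers $X$; this is needed for the ``if'' direction, since on the complement of $\bigcup_i A_i$ both functions would vanish and comonotonicity can fail there.

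For the ``if'' direction, suppose $f=\sum_{i=1}^n a_i\mathbb{1}_X(A_i)$ and $g=\sum_{i=1}^n b_i\mathbb{1}_X(A_i)$ with the $A_i$ disjoint and exhausting $X$, $a_1\ge\cdots\ge a_n$, and $b_1\ge\cdots\ge b_n$. Fix $x,y\in X$ and let $i,j$ be the unique indices with $x\in A_i$ and $y\in A_j$; after swapping $x$ and $y$ if necessary we may assume $i\le j$. Then $f(x)-f(y)=a_i-a_j\ge 0$ and $g(x)-g(y)=b_i-b_j\ge 0$, so $(f(x)-f(y))(g(x)-g(y))\ge 0$, which is \eqref{eq:comonotonic}. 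Hence $f$ and $g$ are comonotonic.

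For the ``only if'' direction, assume $f$ and $g$ are comonotonic. Since they are finite step functions, $S:=\{(f(x),g(x)):x\in X\}\subset\mathbb{R}^2$ is a finite set; enumerate it as $(c_1,d_1),\dots,(c_n,d_n)$ in lexicographically decreasing order --- first by the $c$-coordinate, ties broken by the $d$-coordinate --- so that $c_1\ge\cdots\ge c_n$. The key claim is that then $d_1\ge\cdots\ge d_n$ as well. Indeed, for $k<\ell$: if $c_k>c_\ell$, choose $x,y$ with $(f(x),g(x))=(c_k,d_k)$ and $(f(y),g(y))=(c_\ell,d_\ell)$; then \eqref{eq:comonotonic} gives $(c_k-c_\ell)(d_k-d_\ell)\ge 0$, and since $c_k-c_\ell>0$ we conclude $d_k\ge d_\ell$; while if $c_k=c_\ell$ then $d_k\ge d_\ell$ by the lexicographic tie-break. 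Now put $A_k:=f^{-1}(\{c_k\})\cap g^{-1}(\{d_k\})\in\Sigma_X$, $a_k:=c_k$, and $b_k:=d_k$ for $k=1,\dots,n$. The $A_k$ are pairwise disjoint because the pairs $(c_k,d_k)$ are distinct, and $\bigcup_k A_k=X$ because each $x\in X$ places its pair $(f(x),g(x))$ into $S$; on $A_k$ we have $f\equiv a_k$ and $g\equiv b_k$, so \eqref{eq:fgRep} holds, and $a_1\ge\cdots\ge a_n$, $b_1\ge\cdots\ge b_n$ hold by construction.

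I expect the lexicographic-ordering claim to be the only point requiring care: it is the single place where the comonotonicity hypothesis \eqref{eq:comonotonic} is actually consumed, and the tie-breaking rule is exactly what lets the argument survive the presence of several value pairs sharing a first (or a second) coordinate. The remaining items --- measurability of the $A_k$ as intersections of preimages of singletons under measurable maps, their disjointness, the fact that they cover $X$, and the reduction in the ``if'' direction to the single block containing each point --- are routine.
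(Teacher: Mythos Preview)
Your proof is correct and follows essentially the same line as the paper's, though you carry it through in considerably more detail. The paper's argument assumes $f$ and $g$ are already written over a common partition $\{A_i\}$, observes that $(f(x)-f(y))(g(x)-g(y))=(a_{i(x)}-a_{i(y)})(b_{i(x)}-b_{i(y)})$, and concludes that comonotonicity is equivalent to the pairwise sign condition $(a_i-a_j)(b_i-b_j)\ge 0$ for all $i,j$; it then leaves both the existence of a common partition and the reordering step (sign condition $\Rightarrow$ simultaneously decreasing enumeration) implicit. Your proof makes both of these explicit: the common partition is realized directly as the fibers of $(f,g)$, and the lexicographic sort is exactly the clean device that turns the sign condition into the required simultaneous ordering while handling ties. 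Your flag about the covering assumption is also apposite --- the paper's proof tacitly needs it as well (otherwise the index $i(x)$ is undefined off $\bigcup_i A_i$), and your ``only if'' construction produces a covering family anyway.
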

\if \MkLonger 1
\begin{proof}
Suppose that
$f$ and $g$
are represented as
(\ref{eq:fgRep}).
Then, the domains of 
$f$ and $g$
are subsets of
$
\bigcup_{i=1}^n A_i
$.
Now, for
$x \in 
\bigcup_{i=1}^n A_i
$,
let
$i(x)$
be the unique number such that
$x \in A_{i(x)}$.
Then,
for
$x , y \in X$
\begin{equation}
(f(x) - f(y))
(g(x) - g(y))
	=
(a_{i(x)} - a_{i(y)})
(b_{i(x)} - b_{i(y)}) .
\end{equation}
On the other hand,
$i(x)$
and
$i(y)$
vary
$1$ through $n$
depending on $x$ and $y$.
Therefore, 
$f$
and
$g$
are comonotonic
iff
for every pair
$i, j \in \{1, \cdots, n\}$,
$a_i - a_j$
and
$b_i - b_j$
have same sign,
which proves the lemma.

\end{proof}
\fi 

\begin{cor}
\label{cor:ChoquetComono}
Let
$u$ be a capacity on 
$\mathfrak{X}$,
and
$c \in \mathbb{R}$ be a constant.
\begin{enumerate}
\item
$
I_{
	\mathfrak{X}
}^{u}(a \mathbb{1}_X(A))
	=
a \, u(A) 
$
for
$a \in \mathbb{R}$
and
$A \in \Sigma_X$.

\item
$
I_{
	\mathfrak{X}
}^{u}
(c)
	=
c .
$

\item
A constant function and 
any function in $L^{\infty}(
	\mathfrak{X}
)$
are comonotonic.

\item
For a function
$
f \in L^{\infty}(
	\mathfrak{X}
)
$
and
non-negative numbers
$
a, b \in \mathbb{R}
$,
$af$
and
$bf$
are comonotonic.

\end{enumerate}
\end{cor}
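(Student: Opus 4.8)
The plan is to read off all four items from the two preceding lemmas and the definition of the Choquet integral, with (1) doing essentially all of the work and (2)--(4) being immediate consequences.

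For (1), I would regard $a\,\mathbb{1}_X(A)$ as the one-term finite step function of Lemma~\ref{lem:stepFunChoquetInt} with $n = 1$, $A_1 := A$, $a_1 := a$ (so that the decreasing-coefficient hypothesis is vacuous and $a_{n+1} = a_2 := 0$). Then (\ref{eq:stepFunChoquetInt}) gives directly $I^{u}_{\mathfrak{X}}(a\,\mathbb{1}_X(A)) = (a_1 - a_2)\,u(A_1) = a\,u(A)$. The hard part will be the sign of $a$: for $a \ge 0$ the computation underlying Lemma~\ref{lem:stepFunChoquetInt} goes through unchanged (equivalently, one first checks the positive homogeneity $I^{u}_{\mathfrak{X}}(af) = a\,I^{u}_{\mathfrak{X}}(f)$ by the change of variables $r \mapsto ar$ in (\ref{eq:choquetIntegral}) and combines it with (\ref{eq:ChoquetOneA})), whereas for $a < 0$ one should return to Definition~\ref{defn:ChoquetInt} and evaluate $(a\,\mathbb{1}_X(A))^{u}(r)$ explicitly on the intervals $r > 0$, $0 > r > a$, and $r < a$ before integrating. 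This sign bookkeeping is the only genuine obstacle in the corollary.

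Item (2) is then the special case $A = X$ of (1): since $u(X) = 1$ by Definition~\ref{def:capacity}, we get $I^{u}_{\mathfrak{X}}(c) = I^{u}_{\mathfrak{X}}(c\,\mathbb{1}_X(X)) = c\,u(X) = c$.

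For (3) and (4) I would argue straight from Definition~\ref{defn:ChoquetInt}(3), without using the step-function structure at all. If $g \in L^{\infty}(\mathfrak{X})$ is constant then $g(x) - g(y) = 0$ for all $x, y \in X$, so $(f(x) - f(y))(g(x) - g(y)) = 0 \ge 0$ for every $f \in L^{\infty}(\mathfrak{X})$, which is (3). For (4), if $a, b \ge 0$ then $(af(x) - af(y))(bf(x) - bf(y)) = ab\,(f(x) - f(y))^{2} \ge 0$, so $af$ and $bf$ are comonotonic. (One could instead invoke Lemma~\ref{lem:stepFunComonotonic} when $f$ is a step function, but the direct estimate is both shorter and valid for all of $L^{\infty}(\mathfrak{X})$, so the "corollary" label here is mostly organizational.)
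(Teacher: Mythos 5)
Your items (2)--(4) coincide with the paper's own proof: (2) is exactly the specialization of (1) to $A=X$ together with $u(X)=1$, and (3), (4) are the same direct verifications of (\ref{eq:comonotonic}) -- the constant factor makes the product vanish in (3), and in (4) the product is $ab\,(f(x)-f(y))^2\ge 0$. For (1) the paper likewise just reads off the one-term case of Lemma \ref{lem:stepFunChoquetInt} (its proof is the single word ``immediate''), so your main route is the paper's route.

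The one place where you diverge -- your worry about the sign of $a$ -- is also the one place where your proposal does not close as sketched. The worry itself is well founded, but the promised computation for $a<0$ does not deliver the stated identity: from Definition \ref{defn:ChoquetInt}, for $a<0$ one finds $(a\mathbb{1}_X(A))^{u}(r)=0$ for $r>0$ and for $r\le a$, while $(a\mathbb{1}_X(A))^{u}(r)=u(X\setminus A)-1$ for $a<r<0$, so that $I^{u}_{\mathfrak{X}}(a\mathbb{1}_X(A))=a\,(1-u(X\setminus A))$. This equals $a\,u(A)$ only when $u(A)+u(X\setminus A)=1$ (e.g.\ $A\in\{\emptyset,X\}$ or $u$ additive), not for a general capacity; and the change-of-variables argument you mention yields only positive homogeneity, hence only the case $a\ge 0$. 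So your write-up genuinely proves (1) for $a\ge 0$ (which is all that (2)--(4) and the later applications use), but the $a<0$ case cannot be finished along the lines you indicate without an extra hypothesis. Note that the paper's one-line appeal to Lemma \ref{lem:stepFunChoquetInt} conceals exactly the same issue, since that lemma's proof tacitly takes $\{f\ge r\}=X$ for $r\le a_n$, i.e.\ it assumes the sets $A_i$ cover $X$ whenever negative coefficients occur; your instinct to separate the signs was the right one, only the proposed resolution fails.
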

\if \MkLonger 1
\begin{proof}
\begin{enumerate}
\item
Immediate from Lemma \ref{lem:stepFunChoquetInt}.

\item
By (1),
$
I_X^{u}
(c)
	=
I_X^{u}
(c \, \mathbb{1}_X(X))
	=
c \, u(X)
	=
c .
$

\item
Let
$
f \in L^{\infty}(X)
$
be any bounded function
and
$
g :=
c \, \mathbb{1}_X(X)
$
be a constant function.
Then for every pair $x, y \in X$,
$
(f(x) - f(y))(g(x) - g(x))
	=
(f(x) - f(y))(c - c)
	= 0 \ge 0.
$

\item
For 
$x, y \in X$,
$
(
a\,f(x)
	-
a\,f(y)
)
(
b\,f(x)
	-
b\,f(y)
)
		=
ab
(f(x) - f(y))^2
		\ge
0 .
$

\end{enumerate}
\end{proof}
\fi 

\if \JAFEE 1
Let us introduce a general method of discretizing bounded functions.
\else 
Let us introduce a general method of discretizing bounded functions,
which we will use in several proofs.
\fi 

\begin{Note}
\label{note:discretizeU}
Let
$f$
be an element of
$L^{\infty}(\mathfrak{X})$
with
$
\norm{f}_{\infty}
	<
M
$.
For
$n \in \mathbb{N}$,
define
two finite step functions
$
\underline{f}_n
$
and
$
\bar{f}_n
$
by
\begin{equation}
\underline{f}_n
	:=
\sum_{k=-2^n+1}^{2^n}
	2^{-n} (k-1)M
	\,
	\mathbb{1}_X(A_k) ,
		\quad
\bar{f}_n
	:=
\sum_{k=-2^n+1}^{2^n}
	2^{-n} kM
	\,
	\mathbb{1}_X(A_k),
\end{equation}
where 
$
A_k
	:=
\{
	x \in X
\mid
2^{-n} (k-1)M
	<
f(x)
	\le
2^{-n} kM 
\} .
$
Then,
for $x \in X$
\begin{equation}
\label{eq:discUlimit}
\bar{f}_{n}(x)
	-
\underline{f}_{n}(x)
	=
2^{-n}M
	\to
0
	\quad
(n \to \infty) .
\end{equation}
Actually, for every $n \in \mathbb{N}$, we have
\if \WithProof 0
$
\underline{f}_n(x)
	\le
\underline{f}_{n+1}(x)
	\le
f(x)
	\le
\bar{f}_{n+1}(x)
	\le
\bar{f}_{n}(x) .
$
\else 
\begin{equation}
\label{eq:discUmono}
\underline{f}_n(x)
	\le
\underline{f}_{n+1}(x)
	\le
f(x)
	\le
\bar{f}_{n+1}(x)
	\le
\bar{f}_{n}(x) .
\end{equation}
\fi 
Moreover, for every pair of $x, y \in X$,
$
f(x) \le f(y)
$
implies
\if \WithProof 0
$
\underline{f}_n(x)
	\le
\underline{f}_n(y)
	\quad \textrm{and} \quad
\bar{f}_n(x)
	\le
\bar{f}_n(y) .
$
\else 
\begin{equation}
\label{eq:discUcomono}
\underline{f}_n(x)
	\le
\underline{f}_n(y)
	\quad \textrm{and} \quad
\bar{f}_n(x)
	\le
\bar{f}_n(y) .
\end{equation}
\fi 

\end{Note}

\begin{thm}
\label{thm:choquetThm}
Let
$u$ 
be a capacity on 
$\mathfrak{X}$
and 
$f, g \in
L^{\infty}(
	\mathfrak{X}
)
$.
\begin{enumerate}
\item
$
I_{
	\mathfrak{X}
}^{u}
$
is 
\newword{monotonic}, i.e.,
$f \ge g$ 
implies
$
I_{
	\mathfrak{X}
}^{u}
(f)
	\ge
I_{
	\mathfrak{X}
}^{u}
(g)
$.

\item
$
I_{
	\mathfrak{X}
}^{u}
$
is
\newword{comonotonically additive}, i.e.,
$
I_{
	\mathfrak{X}
}^{u}(f+g)
	=
I_{
	\mathfrak{X}
}^{u}(f)
	+
I_{
	\mathfrak{X}
}^{u}(g)
$
if
$f$
and
$g$
are comonotonic.

\item
$
I_{
	\mathfrak{X}
}^{u}
$
is 
\newword{positively homogeneous},
i.e.,
$
I_{
	\mathfrak{X}
}^{u}
(\lambda f)
	=
\lambda
I_{
	\mathfrak{X}
}^{u}
(f)
$
for every
$\lambda > 0$.

\end{enumerate}
\end{thm}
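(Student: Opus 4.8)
The plan is to prove each of the three properties by first establishing it for finite step functions via the explicit formula \eqref{eq:stepFunChoquetInt} of Lemma \ref{lem:stepFunChoquetInt}, and then passing to general bounded measurable functions by the discretization scheme of Note \ref{note:discretizeU}, using the squeeze \eqref{eq:discUlimit} together with monotonicity of $I_{\mathfrak{X}}^{u}$ on step functions to transfer the identities to the limit.

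For \textbf{(1)}, I would first note that if $f \ge g$ are finite step functions I may, after refining the common partition, write $f = \sum_i a_i \mathbb{1}_X(A_i)$ and $g = \sum_i b_i \mathbb{1}_X(A_i)$ over the same disjoint $A_i$; but to apply \eqref{eq:stepFunChoquetInt} I need the coefficients decreasingly ordered, which may require different orderings for $f$ and $g$. A cleaner route is to go back to the definition \eqref{eq:choquetIntegral}: since $f \ge g$ pointwise, $\{f \ge r\} \supseteq \{g \ge r\}$ for every $r$, so by monotonicity of the capacity (Definition \ref{def:capacity}(2)) we get $f^{u}(r) \ge g^{u}(r)$ for all $r$, hence $I_{\mathfrak{X}}^{u}(f) = \int f^{u} \ge \int g^{u} = I_{\mathfrak{X}}^{u}(g)$. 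This avoids the ordering issue entirely and is the approach I would take for (1).

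For \textbf{(3)}, if $\lambda > 0$ and $f$ is a finite step function with decreasingly ordered coefficients $a_1 \ge \cdots \ge a_n$, then $\lambda f$ has coefficients $\lambda a_1 \ge \cdots \ge \lambda a_n$, still decreasingly ordered over the same $A_i$, so \eqref{eq:stepFunChoquetInt} immediately gives $I_{\mathfrak{X}}^{u}(\lambda f) = \lambda\, I_{\mathfrak{X}}^{u}(f)$. For general $f \in L^{\infty}(\mathfrak{X})$, apply this to $\underline{f}_n$ and $\bar{f}_n$ from Note \ref{note:discretizeU}; by \eqref{eq:discUmono} and part (1) we have $I_{\mathfrak{X}}^{u}(\underline{f}_n) \le I_{\mathfrak{X}}^{u}(f) \le I_{\mathfrak{X}}^{u}(\bar{f}_n)$ and likewise $I_{\mathfrak{X}}^{u}(\lambda\underline{f}_n) \le I_{\mathfrak{X}}^{u}(\lambda f) \le I_{\mathfrak{X}}^{u}(\lambda\bar{f}_n)$, and since $\bar{f}_n - \underline{f}_n = 2^{-n}M \to 0$ one checks (using part (2) applied to a step function and a constant, or directly from \eqref{eq:stepFunChoquetInt}) that $I_{\mathfrak{X}}^{u}(\bar{f}_n) - I_{\mathfrak{X}}^{u}(\underline{f}_n) \to 0$, so both bounds converge to a common value and the identity for $f$ follows. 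Alternatively one can observe directly from \eqref{eq:distUnu}--\eqref{eq:choquetIntegral} that $(\lambda f)^{u}(r) = f^{u}(r/\lambda)$ and change variables; I would present whichever is shorter.

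For \textbf{(2)}, the comonotonic additivity, I would use Lemma \ref{lem:stepFunComonotonic}: if $f$ and $g$ are comonotonic finite step functions, write $f = \sum_i a_i \mathbb{1}_X(A_i)$, $g = \sum_i b_i \mathbb{1}_X(A_i)$ with \emph{simultaneously} decreasingly ordered coefficients over the same disjoint $A_i$; then $f + g = \sum_i (a_i + b_i)\mathbb{1}_X(A_i)$ also has decreasingly ordered coefficients, so applying \eqref{eq:stepFunChoquetInt} three times and using that $(a_i + b_i) - (a_{i+1} + b_{i+1}) = (a_i - a_{i+1}) + (b_i - b_{i+1})$ gives $I_{\mathfrak{X}}^{u}(f+g) = I_{\mathfrak{X}}^{u}(f) + I_{\mathfrak{X}}^{u}(g)$. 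For general comonotonic $f, g \in L^{\infty}(\mathfrak{X})$, I would replace them by $\underline{f}_n$ and $\underline{g}_n$: these remain comonotonic by \eqref{eq:discUcomono}, their sum squeezes $f+g$, and the finite-step case plus the squeeze/convergence argument above yields the result in the limit. The main obstacle is the bookkeeping in \textbf{(2)}: the whole argument hinges on the fact — furnished by Lemma \ref{lem:stepFunComonotonic} — that two comonotonic step functions can be represented over a \emph{common} disjoint partition with coefficients that decrease \emph{in the same order}, and I must be careful that the discretizations $\underline{f}_n, \underline{g}_n$ inherit comonotonicity (which is exactly what \eqref{eq:discUcomono} guarantees) and that the convergence $I_{\mathfrak{X}}^{u}(\underline{f}_n) \to I_{\mathfrak{X}}^{u}(f)$ is uniform enough to add; everything else is routine.
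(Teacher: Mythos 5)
Your proposal is correct, and for parts (1) and (2) it is essentially the paper's own argument: (1) is proved straight from \eqref{eq:distUnu}--\eqref{eq:choquetIntegral} using monotonicity of the capacity (your remark that the step-function route would run into ordering trouble is exactly why the paper also argues from the definition), and (2) follows the same two-stage scheme as the paper -- Lemma \ref{lem:stepFunComonotonic} plus \eqref{eq:stepFunChoquetInt} for comonotonic step functions, then the discretization of Note \ref{note:discretizeU}, with \eqref{eq:discUcomono} preserving comonotonicity and a squeeze whose gap is controlled by the constant shift $\bar f_n=\underline f_n+2^{-n}M$, which is precisely how the paper closes the limit (the ``uniform enough to add'' worry you flag is resolved exactly by that $2^{-n}(M+M')$ bound). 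Where you genuinely diverge is (3): the paper deduces positive homogeneity from comonotonic additivity, using Corollary \ref{cor:ChoquetComono} (4) to get $I^u_{\mathfrak X}(af)=aI^u_{\mathfrak X}(f)$ first for positive rationals and then invoking a monotonicity-as-continuity argument to reach arbitrary $\lambda>0$, whereas you prove it directly -- scaling by $\lambda>0$ preserves the decreasing order of the coefficients, so \eqref{eq:stepFunChoquetInt} gives the identity on step functions, and the discretization squeeze (or, even more simply, the change of variables $(\lambda f)^u(r)=f^u(r/\lambda)$ in \eqref{eq:choquetIntegral}) transfers it to all of $L^\infty(\mathfrak X)$. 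Your route is more elementary and self-contained, and it sidesteps the rational-to-real continuity step, which is the least transparent point of the paper's argument for sign-changing $f$; the paper's route, in the Schmeidler tradition, instead exhibits homogeneity as a formal consequence of comonotonic additivity. Both are sound.
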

\if \MkLonger  1
\begin{proof}
\begin{enumerate}
\item
By 
(\ref{eq:distUnu}),
$f \ge g$ on
$X$
implies
$
f^{u}(r)
	\ge
g^{u}(r)
$
for every
$r \in \mathbb{R}$.
Then, 
we have
$
I_{
	\mathfrak{X}
}^{u}(f)
	\ge
I_{
	\mathfrak{X}
}^{u}(g)
$
by (\ref{eq:choquetIntegral}).

\item
First, we show the case both $f$ and $g$ are comonotonic finite step functions.
By Lemma \ref{lem:stepFunComonotonic},
we can assume
\begin{equation}
f = 
\sum_{i=1}^n
	a_i
	\mathbb{1}_{
		\mathfrak{X}
	}(A_i) ,
		\quad
g = 
\sum_{i=1}^n
	b_i
	\mathbb{1}_{
		\mathfrak{X}
	}(A_i) ,
\end{equation}
where
$\{ A_i\}_{i=1}^n$
are mutually disjoint members of
$\Sigma_X$,
$
a_1 \ge a_2 \ge \cdots \ge a_n
$
and
$
b_1 \ge b_2 \ge \cdots \ge b_n
$.
Then,
by Lemma \ref{lem:stepFunChoquetInt},
we have
\begin{align*}
I_{
	\mathfrak{X}
}^{u}(f)
	+
I_{
	\mathfrak{X}
}^{u}(g)
		&=
\sum_{i=1}^n
	(a_i - a_{i+1})
	\,
	u\big(
		\bigcup_{j=1}^i
		A_j
	\big)
		+
\sum_{i=1}^n
	(b_i - b_{i+1})
	\,
	u\big(
		\bigcup_{j=1}^i
		A_j
	\big)
		\\&=
\sum_{i=1}^n
	((a_i + b_i) - (a_{i+1} + b_{i+1}))
	\,
	u\big(
		\bigcup_{j=1}^i
		A_j
	\big)
		\\&=
I_{
	\mathfrak{X}
}^{u}(f+g) .
\end{align*}

Now let's proceed to the case when
$f$ and $g$ are arbitrary comonotonic pair in
$L^{\infty}(
	\mathfrak{X}
)$.
Assume that
both are bounded like 
$
\norm{f}_{\infty} < M
$
and 
$
\norm{g}_{\infty} < M'
$.

Using the technique described in Note \ref{note:discretizeU},
we construct step functions
$\underline{f}_n$,
$\bar{f}_n$,
$\underline{g}_n$
and
$\bar{g}_n$
for $n \in \mathbb{N}$.
Then,
by the monotonicity of 
$I_{
	\mathfrak{X}
}^{u}$
and
(\ref{eq:discUmono})
implies
\begin{equation}
\label{eq:IeqMono}
I_{
	\mathfrak{X}
}^{u}
(\underline{f}_n)
	\le
I_{
	\mathfrak{X}
}^{u}
(f)
	\le
I_{
	\mathfrak{X}
}^{u}
(\bar{f}_n)
	\quad \textrm{and} \quad
I_{
	\mathfrak{X}
}^{u}
(\underline{g}_n)
	\le
I_{
	\mathfrak{X}
}^{u}
(g)
	\le
I_{
	\mathfrak{X}
}^{u}
(\bar{g}_n) .
\end{equation}

On the other hand,
since
$f$
and
$g$
are comonotonic,
by (\ref{eq:discUcomono})
shows that pairs
$\underline{f}_n$
and
$\underline{g}_n$,
and also
$\bar{f}_n$
and
$\bar{g}_n$
are comonotonic, respectively.
Therefore, we obtain
\begin{equation}
\label{eq:IeqIpI}
I_{
	\mathfrak{X}
}^{u}
(
\underline{f}_n
	+
\underline{g}_n
)
	=
I_{
	\mathfrak{X}
}^{u}
(\underline{f}_n)
	+
I_{
	\mathfrak{X}
}^{u}
(\underline{g}_n)
	\quad \textrm{and} \quad
I_{
	\mathfrak{X}
}^{u}
(
\bar{f}_n
	+
\bar{g}_n
)
	=
I_{
	\mathfrak{X}
}^{u}
(\bar{f}_n)
	+
I_{
	\mathfrak{X}
}^{u}
(\bar{g}_n) .
\end{equation}
By
(\ref{eq:IeqMono})
and
(\ref{eq:IeqIpI}),
we have
\begin{equation}
\label{eq:IIlimone}
I_{
	\mathfrak{X}
}^{u}
(
\underline{f}_n
	+
\underline{g}_n
)
	\le
I_{
	\mathfrak{X}
}^{u}(f)
	+
I_{
	\mathfrak{X}
}^{u}(g)
	\le
I_{
	\mathfrak{X}
}^{u}
(
\bar{f}_n
	+
\bar{g}_n
)
\end{equation}
Moreover, 
by
(\ref{eq:IeqMono})
and the monotonicity of
$
I_{
	\mathfrak{X}
}^{u}
$, we also have
\begin{equation}
\label{eq:IIlimtwo}
I_{
	\mathfrak{X}
}^{u}
(
\underline{f}_n
	+
\underline{g}_n
)
	\le
I_{
	\mathfrak{X}
}^{u}(f + g)
	\le
I_{
	\mathfrak{X}
}^{u}
(
\bar{f}_n
	+
\bar{g}_n
)
\end{equation}
Considering
\begin{align*}
	&
I_{
	\mathfrak{X}
}^{u}
(
\bar{f}_n
	+
\bar{g}_n
)
		-
I_{
	\mathfrak{X}
}^{u}
(
\underline{f}_n
	+
\underline{g}_n
)
		\\&=
(
I_{
	\mathfrak{X}
}^{u}
(
\bar{f}_n
)
	-
I_{
	\mathfrak{X}
}^{u}
(
\underline{f}_n
)
)
	+
(
I_{
	\mathfrak{X}
}^{u}
(
\bar{g}_n
)
	-
I_{
	\mathfrak{X}
}^{u}
(
\underline{g}_n
)
)
		& (\textrm{by (\ref{eq:IeqIpI})}) 
		\\&=
(
I_{
	\mathfrak{X}
}^{u}
(
\underline{f}_n
	+
2^{-n} M
)
	-
I_{
	\mathfrak{X}
}^{u}
(
\underline{f}_n
)
)
	+
(
I_{
	\mathfrak{X}
}^{u}
(
\underline{g}_n
	+
2^{-n} M'
)
	-
I_{
	\mathfrak{X}
}^{u}
(
\underline{g}_n
)
)
		& (\textrm{by (\ref{eq:discUlimit})}) 
		\\&=
2^{-n}(M + M')
\mathbb{1}_X(X)
		& (\textrm{by Corollary \ref{cor:ChoquetComono} (3)}) 
		\\& \to
0
\cdot
\mathbb{1}_X(X)
	\quad
(n \to \infty)
		& (\textrm{pointwise limit}) 
\end{align*}
and
taking limits of
(\ref{eq:IIlimone})
and
(\ref{eq:IIlimtwo}),
we get
$
I_{
	\mathfrak{X}
}^{u}(f)
	+
I_{
	\mathfrak{X}
}^{u}(g)
	=
I_{
	\mathfrak{X}
}^{u}(f + g).
$

\item
For a positive integer
$n$,
\begin{align*}
I_{
	\mathfrak{X}
}^u(f)
		&=
I_{
	\mathfrak{X}
}^u
\Big(
\frac{1}{n}f
	+
\frac{n-1}{n}f
\Big)
		\\&=
I_{
	\mathfrak{X}
}^u
\Big(
	\frac{1}{n}f
\Big)
	+
I_{\mathfrak{X}}^u
\Big(
	\frac{n-1}{n}f
\Big)
		&(\textrm{by 
			(2) and
			Corollary \ref{cor:ChoquetComono} (4)
		})
		\\&=
I_{\mathfrak{X}}^u
\Big(
	\frac{1}{n}f
\Big)
	+
I_{\mathfrak{X}}^u
\Big(
	\frac{1}{n}f
		+
	\frac{n-2}{n}f
\Big)
		\\&=
2
I_{\mathfrak{X}}^u
\Big(
	\frac{1}{n}f
\Big)
	+
I_{\mathfrak{X}}^u
\Big(
	\frac{n-2}{n}f
\Big)
		&(\textrm{by 
			(2) and
			Corollary \ref{cor:ChoquetComono} (4)
		})
		\\&=
\cdots
		\\&=
n
I_{\mathfrak{X}}^u
\Big(
	\frac{1}{n}f
\Big) .
\end{align*}
Hence,
\begin{equation}
\label{eq:comoPropPHone}
I_{\mathfrak{X}}^u
\Big(
	\frac{1}{n}f
\Big)
		=
\frac{1}{n}
I_{\mathfrak{X}}^u
(f) .
\end{equation}
Similarly,
for a non-negative integer
$m$,
\begin{align*}
I_{\mathfrak{X}}^u
(m f)
		&=
I_{\mathfrak{X}}^u
(
	f + (m-1) f
)
		\\&=
I_{\mathfrak{X}}^u
(f)
	+
I_{\mathfrak{X}}^u
(
	(m-1)f
)
		&(\textrm{by 
			(2) and
			Corollary \ref{cor:ChoquetComono} (4)
		})
		\\&=
\cdots
		\\&=
m
I_{\mathfrak{X}}^u
(f) .
\end{align*}
Therefore,
for any positive rational number $a$, 
\begin{equation}
\label{eq:posHomo}
I_{\mathfrak{X}}^u
(af)
	=
a 
I_{\mathfrak{X}}^u
(f) .
\end{equation}
Then, the monotonicity which is also a continuity implies, 
in turn, 
the equality 
(\ref{eq:posHomo})
for an arbitrary non-negative number 
$a \in \mathbb{R}$.

\end{enumerate}
\end{proof}
\fi 

\begin{prop}
\label{prop:additiveNu}
Let
$u$ 
be a capacity on $
	\mathfrak{X}
$.
If $u$ is additive, i.e.,
$
u(A \cup B)
	=
u(A)
	+
u(B)
$
for any disjoint pair 
$A$
and
$B$
in
$
\Sigma_X
$.
Then,
for
$f \in L^{\infty}(
	\mathfrak{X}
)$,
\if \WithProof 0
$
I_{\mathfrak{X}}^u
(f)
	=
\int_X f \, d u,
$
\else 
\begin{equation}
\label{eq:additiveNu}
I_{\mathfrak{X}}^u
(f)
	=
\int_X f \, d u,
\end{equation}
\fi 
where the right hand side is the usual Lebesgue integral.

Especially, $
I_{\mathfrak{X}}^u
$ is linear, that is,
for every
$a, b \in \mathbb{R}$
and
$
f, g \in L^{\infty}(
	\mathfrak{X}
)
$,
\begin{equation}
\label{eq:IXuLinear}
I_{\mathfrak{X}}^u
(a f + b g)
		=
a 
I_{\mathfrak{X}}^u
(f)
	+
b 
I_{\mathfrak{X}}^u
(g) .
\end{equation}
\end{prop}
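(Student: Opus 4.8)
The plan is to prove (\ref{eq:additiveNu}) first on finite step functions via Lemma \ref{lem:stepFunChoquetInt}, then promote it to all of $L^{\infty}(\mathfrak{X})$ by a squeeze argument based on Note \ref{note:discretizeU}, and finally read off the linearity (\ref{eq:IXuLinear}) from the resulting identification of $I_{\mathfrak{X}}^{u}$ with the ordinary integral.

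First I would handle a finite step function $f=\sum_{i=1}^{n}a_i\mathbb{1}_X(A_i)$ with the $A_i\in\Sigma_X$ mutually disjoint. After relabelling the indices we may assume $a_1\ge a_2\ge\cdots\ge a_n$, so Lemma \ref{lem:stepFunChoquetInt} applies and gives $I_{\mathfrak{X}}^{u}(f)=\sum_{i=1}^{n}(a_i-a_{i+1})\,u\big(\bigcup_{j=1}^{i}A_j\big)$ with $a_{n+1}:=0$. Since $u$ is additive, $u\big(\bigcup_{j=1}^{i}A_j\big)=\sum_{j=1}^{i}u(A_j)$; substituting this and exchanging the order of summation (Abel summation) collapses the sum to $\sum_{j=1}^{n}a_j\,u(A_j)$, which is exactly $\int_X f\,du$. (If the $A_i$ do not cover $X$ this is harmless, as $f$ vanishes off $\bigcup_i A_i$.) Thus (\ref{eq:additiveNu}) holds on finite step functions.

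Next I would take an arbitrary $f\in L^{\infty}(\mathfrak{X})$ with $\norm{f}_{\infty}<M$ and invoke Note \ref{note:discretizeU} to produce finite step functions $\underline{f}_n\le f\le\bar{f}_n$ with $\bar{f}_n-\underline{f}_n=2^{-n}M$ pointwise. By the step-function case just proved, $I_{\mathfrak{X}}^{u}(\underline{f}_n)=\int_X\underline{f}_n\,du$ and $I_{\mathfrak{X}}^{u}(\bar{f}_n)=\int_X\bar{f}_n\,du$, and these two numbers differ by $\int_X 2^{-n}M\,du=2^{-n}M\to 0$ (using $u(X)=1$). Monotonicity of the Choquet integral (Theorem \ref{thm:choquetThm}(1)) wedges $I_{\mathfrak{X}}^{u}(f)$ between $I_{\mathfrak{X}}^{u}(\underline{f}_n)$ and $I_{\mathfrak{X}}^{u}(\bar{f}_n)$, while the analogous monotonicity of the ordinary integral wedges $\int_X f\,du$ between the same two brackets; since the brackets converge to a common value, $I_{\mathfrak{X}}^{u}(f)=\int_X f\,du$.

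Finally, linearity (\ref{eq:IXuLinear}) is immediate once the identity is known for every bounded measurable function: for $a,b\in\mathbb{R}$ and $f,g\in L^{\infty}(\mathfrak{X})$ the function $af+bg$ is again bounded and measurable, so (\ref{eq:additiveNu}) applies to it as well as to $f$ and $g$, and linearity of $\int_X\cdot\,du$ transfers verbatim. The only point that needs a word of care is the precise meaning of the right-hand side of (\ref{eq:additiveNu}) when $u$ is merely finitely additive: there $\int_X f\,du$ should be read as the common value of $\sup\{\int_X s\,du : s\le f \text{ simple}\}$ and $\inf\{\int_X s\,du : s\ge f \text{ simple}\}$, which exists precisely because the step-function brackets above have vanishing gap, and which reduces to the usual Lebesgue integral when $u$ is a genuine probability measure. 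I expect this bookkeeping about the finitely-additive integral to be the only real obstacle; everything else is the routine step-function-plus-squeeze argument.
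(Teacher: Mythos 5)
Your proposal is correct and follows essentially the same route as the paper's proof: establish (\ref{eq:additiveNu}) for finite step functions via Lemma \ref{lem:stepFunChoquetInt}, additivity and an exchange of summation, then squeeze a general $f\in L^{\infty}(\mathfrak{X})$ between the discretizations of Note \ref{note:discretizeU} using monotonicity, with linearity then inherited from the ordinary integral. Your extra remark on how to read $\int_X f\,du$ when $u$ is only finitely additive is a reasonable refinement of a point the paper leaves implicit, but it does not change the argument.
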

\if \MkLonger  1
\begin{proof}
For a step function 
$f$
defined in 
(\ref{eq:stepFunOne}),
we have
\begin{align*}
I_{\mathfrak{X}}^u
(f)
			&=
\sum_{i=1}^n
	(a_i - a_{i+1})
	\,
	u\big(
		\bigcup_{j=1}^i
		A_j
	\big)
			& (\textrm{by (\ref{eq:stepFunChoquetInt})}) 
			\\&=
\sum_{i=1}^n
	(a_i - a_{i+1})
	\sum_{j=1}^i
		u(A_j)
			& (\textrm{by additivity}) 
			\\&=
\sum_{j=1}^n
	\sum_{i=j}^n
		(a_i - a_{i+1})
	\,
	u(A_j)
			\\&=
\sum_{j=1}^n
	a_j
	\,
	u(A_j) .
\end{align*}
Applying this result to 
(\ref{eq:discUmono})
and the monotonicity of $
I_{\mathfrak{X}}^u
$,
we obtain
for general
$f \in L^{\infty}(
	\mathfrak{X}
)$,
\begin{equation}
\label{eq:monoLebesgue}
\sum_{k=-2^n+1}^{2^n}
	2^{-n} (k-1)M
	\,
	u(A_k)
		\le
I_{\mathfrak{X}}^u
(f)
		\le
\sum_{k=-2^n+1}^{2^n}
	2^{-n} kM
	\,
	u(A_k) .
\end{equation}
Both sides of
(\ref{eq:monoLebesgue})
converge to
$
\int_X f \, du
$
as
$n \to \infty$
by the definition of Lebesgue integral,
which completes the proof.

\end{proof}
\fi 

By Proposition \ref{prop:additiveNu},
we write
$
\int_X f \, d u
	:=
I_{\mathfrak{X}}^u
(f)
$
even when 
$u$
is not additive.

\section{Uncertainty spaces}
\label{sec:uncertainSp}

In this section,
we introduce 
a concept of uncertainty space
as an augmented probability space.
We also introduce a Choquet expectation maps
which can be seen as maps between random variables in different uncertainty levels.

\subsection{Uncertainty space $\mathfrak{X}$}
\label{sec:uncSp}

\begin{defn}
\label{defn:uncertaintySp}
Let
$(X, \Sigma_X)$
be a measurable space.
\begin{enumerate}
\item
$
\mathbb{I}
	=
(\mathbb{I}, \Sigma_{\mathbb{I}})
	:=
([0,1], \mathcal{B}([0,1]) .
$

\item
An
\newword{uncertainty space}
is a triple
$
\mathfrak{X}
	=
(X, \Sigma_X, U_X)
$,
where
$U_X$ is a non-empty set of capacities on the measurable space
$(X, \Sigma_X)$.

From now on,
$
\mathfrak{X}
	=
(X, \Sigma_X, U_X)
$
is an uncertainty space.

\item
For 
$A \in \Sigma_X$,
$
\varepsilon_{\mathfrak{X}}(A)
	:
U_X
	\to
\mathbb{I}
$
is the map
defined
\footnote{
By using the lambda calculus notation, we can write
$
\varepsilon_{\mathfrak{X}}
	:=
\lambda A \in \Sigma_X
	\,.\,
\lambda u \in U_X
	\,.\,
u(A)
$.
}
by
for
$
u
\in
U_X
$,
\if \WithProof 0
$
\varepsilon_{\mathfrak{X}}(A)
(u)
		:=
u(A) .
$
\else 
\begin{equation}
\varepsilon_{\mathfrak{X}}(A)
(u)
		:=
u(A) .
\end{equation}
\fi 

\item
$
\Sigma_{
	\mathfrak{X}
}
$
is the smallest
$\sigma$-algebra
on
$U_X$
that makes
$
\varepsilon_{\mathfrak{X}}(A)
$
a measurable map to
$
(
\mathbb{I}, 
\Sigma_{\mathbb{I}}
)
$
for all
$A \in \Sigma_{\mathfrak{X}}$.

\end{enumerate}
\end{defn}


\begin{prop}
\label{prop:sigmaGX}
For
an uncertainty space
$
\mathfrak{X}
	=
(X, \Sigma_X, U_X)
$,
\begin{equation}
\label{eq:sigmaGX}
\Sigma_{
	\mathfrak{X}
}
		=
\sigma \big\{
	\{
		u
			\in
		U_X
	\mid
		u(A)
			\in
		B
	\}
\mid
	A \in \Sigma_X,
	\,
	B \in \Sigma_{\mathbb{I}}
\big\} .
\end{equation}
\end{prop}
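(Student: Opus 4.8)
The plan is to recognize $\Sigma_{\mathfrak{X}}$ as the initial (generated) $\sigma$-algebra on $U_X$ induced by the family of maps $\{\varepsilon_{\mathfrak{X}}(A)\}_{A \in \Sigma_X}$, and to invoke the standard description of such a $\sigma$-algebra via preimages of the generating $\sigma$-algebra on the codomain. Write $\mathcal{G}$ for the right-hand side of (\ref{eq:sigmaGX}). Unravelling the definition of $\varepsilon_{\mathfrak{X}}(A)$ from Definition \ref{defn:uncertaintySp}, one has $\{u \in U_X \mid u(A) \in B\} = \{u \in U_X \mid \varepsilon_{\mathfrak{X}}(A)(u) \in B\} = \varepsilon_{\mathfrak{X}}(A)^{-1}(B)$, so that $\mathcal{G} = \sigma\{\varepsilon_{\mathfrak{X}}(A)^{-1}(B) \mid A \in \Sigma_X,\ B \in \Sigma_{\mathbb{I}}\}$. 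It is also worth noting at the outset that the smallest $\sigma$-algebra on $U_X$ making all the $\varepsilon_{\mathfrak{X}}(A)$ measurable exists, since it is the intersection of the collection of all such $\sigma$-algebras, and this collection is non-empty as it contains the power set of $U_X$.

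First I would verify that $\mathcal{G}$ itself makes every $\varepsilon_{\mathfrak{X}}(A)$ a measurable map into $(\mathbb{I},\Sigma_{\mathbb{I}})$: for each $A \in \Sigma_X$ and each $B \in \Sigma_{\mathbb{I}}$, the set $\varepsilon_{\mathfrak{X}}(A)^{-1}(B)$ lies in $\mathcal{G}$ by construction, which is precisely the measurability condition. Hence $\mathcal{G}$ is one of the $\sigma$-algebras over which the infimum defining $\Sigma_{\mathfrak{X}}$ is taken, so by minimality $\Sigma_{\mathfrak{X}} \subseteq \mathcal{G}$.

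For the reverse inclusion, I would take an arbitrary $\sigma$-algebra $\mathcal{H}$ on $U_X$ that makes each $\varepsilon_{\mathfrak{X}}(A)$ measurable; $\Sigma_{\mathfrak{X}}$ is in particular such an $\mathcal{H}$. For every $A \in \Sigma_X$ and $B \in \Sigma_{\mathbb{I}}$, measurability of $\varepsilon_{\mathfrak{X}}(A)$ with respect to $\mathcal{H}$ gives $\varepsilon_{\mathfrak{X}}(A)^{-1}(B) \in \mathcal{H}$, so the entire generating family of $\mathcal{G}$ is contained in $\mathcal{H}$; since $\mathcal{H}$ is a $\sigma$-algebra, it follows that $\mathcal{G} \subseteq \mathcal{H}$. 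Applying this with $\mathcal{H} = \Sigma_{\mathfrak{X}}$ yields $\mathcal{G} \subseteq \Sigma_{\mathfrak{X}}$, and combining the two inclusions establishes (\ref{eq:sigmaGX}).

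There is essentially no hard step here; the only thing requiring care is the bookkeeping of the two inclusions — using minimality of $\Sigma_{\mathfrak{X}}$ in one direction, and using the fact that $\mathcal{G}$ is generated by exactly the preimages that any admissible $\sigma$-algebra is forced to contain in the other. This is the familiar argument identifying a pullback/initial $\sigma$-algebra with the one generated by preimages of generators, specialized to the single codomain $(\mathbb{I},\Sigma_{\mathbb{I}})$ and the index set $\Sigma_X$.
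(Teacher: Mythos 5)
Your proposal is correct and follows essentially the same route as the paper: both rest on rewriting $\varepsilon_{\mathfrak{X}}(A)^{-1}(B)$ as $\{u \in U_X \mid u(A) \in B\}$ and identifying the smallest $\sigma$-algebra making all $\varepsilon_{\mathfrak{X}}(A)$ measurable with the $\sigma$-algebra generated by these preimages. The only difference is that you spell out the standard double-inclusion argument for that identification, which the paper treats as immediate from its definition of $\Sigma_{\mathfrak{X}}$.
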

\if \WithProof 1
\begin{proof}
Since
$
\Sigma_{\mathfrak{X}}
	:= \sigma\{
	(\varepsilon_{\mathfrak{X}}(A))^{-1}(B)
\mid
	A \in \Sigma_X, \;
	B \in
	\Sigma_{\mathbb{I}}
\},
$
(\ref{eq:sigmaGX})
comes from
the fact that
\begin{equation}
\label{eq:epsXAinvU}
(\varepsilon_{\mathfrak{X}}(A))^{-1}
(B)
		=
\{
	u \in U_X
\mid
	\varepsilon_{\mathfrak{X}}(A)
	(u)
		\in
	B
\}
		=
\{
	u \in U_X
\mid
	u
	(A)
		\in
	B
\} .
\end{equation}
\end{proof}
\fi 

\subsection{Choquet expectation map $\xi_{\mathfrak{X}}$}
\label{sec:xiX}

Throughout this section,
$
\mathfrak{X}
	=
(X, \Sigma_X, U_X)
$
is an uncertainty space.

\begin{defn}
\label{defn:xiXf}
Let
$
f 
	\in L^{\infty}(X) .
$
Define a map
$
\xi_{\mathfrak{X}}(f)
	:
U_X
	\to
\mathbb{R}
$
by
for
$
u
	\in
U_X,
$
\begin{equation}
\label{eq:defXiX}
\xi_{\mathfrak{X}}(f)
(u)
	:=
I_X^{u}
(f) .
\end{equation}
\end{defn}

\begin{prop}
\label{prop:xiXstepF}
Let
$
f := \sum_{i=1}^n
a_i
\mathbb{1}_X(A_i)
$
be a finite step function,
where
$\{ A_i\}_{i=1}^n$
are mutually disjoint members of
$\Sigma_X$,
and
$
a_1 \ge a_2 \ge \cdots \ge  a_n
$
be a sequence of real numbers.
Then,
we have
\begin{equation}
\label{eq:xiXstepF}
\xi_{\mathfrak{X}}(f)
		=
\sum_{i=1}^n
	(a_i - a_{i+1})
	\,
	\varepsilon_{\mathfrak{X}}
	\Big(
		\bigcup_{j=1}^i
			A_j
	\Big) .
\end{equation}
Especially,
\begin{equation}
\label{eq:etaOneEpsilon}
\xi_{\mathfrak{X}}
	\circ
\mathbb{1}_X
		=
\varepsilon_{\mathfrak{X}} .
\end{equation}
\end{prop}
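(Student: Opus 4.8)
The plan is to prove \eqref{eq:xiXstepF} by a pointwise argument: for each fixed capacity $u \in U_X$, evaluate both sides at $u$ and check that they agree. Fixing $u$, the left-hand side is $\xi_{\mathfrak{X}}(f)(u) = I_X^u(f)$ by Definition \ref{defn:xiXf}. Since $f = \sum_{i=1}^n a_i \mathbb{1}_X(A_i)$ is a finite step function with $A_1, \dots, A_n$ mutually disjoint and $a_1 \ge a_2 \ge \cdots \ge a_n$, Lemma \ref{lem:stepFunChoquetInt} applies verbatim and gives
\[
I_X^u(f) = \sum_{i=1}^n (a_i - a_{i+1}) \, u\Big( \bigcup_{j=1}^i A_j \Big),
\]
with the convention $a_{n+1} := 0$. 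On the other hand, evaluating the right-hand side of \eqref{eq:xiXstepF} at $u$ and using the definition of $\varepsilon_{\mathfrak{X}}$ (Definition \ref{defn:uncertaintySp}(3)), namely $\varepsilon_{\mathfrak{X}}(A)(u) = u(A)$, we get
\[
\sum_{i=1}^n (a_i - a_{i+1}) \, \varepsilon_{\mathfrak{X}}\Big( \bigcup_{j=1}^i A_j \Big)(u) = \sum_{i=1}^n (a_i - a_{i+1}) \, u\Big( \bigcup_{j=1}^i A_j \Big).
\]
These two expressions coincide, and since $u \in U_X$ was arbitrary, the two maps $U_X \to \mathbb{R}$ are equal, establishing \eqref{eq:xiXstepF}.

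For the special case \eqref{eq:etaOneEpsilon}, I would apply \eqref{eq:xiXstepF} to the indicator $\mathbb{1}_X(A)$ for an arbitrary $A \in \Sigma_X$. This is the step function with $n = 1$, $A_1 = A$, $a_1 = 1$, and $a_2 = 0$; the formula collapses to $\xi_{\mathfrak{X}}(\mathbb{1}_X(A)) = (1 - 0)\,\varepsilon_{\mathfrak{X}}(A) = \varepsilon_{\mathfrak{X}}(A)$, which is precisely the assertion that $\xi_{\mathfrak{X}} \circ \mathbb{1}_X = \varepsilon_{\mathfrak{X}}$ as maps $\Sigma_X \to (U_X \to \mathbb{R})$. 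This also matches \eqref{eq:ChoquetOneA} pointwise.

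There is essentially no obstacle here: the proposition is a repackaging of Lemma \ref{lem:stepFunChoquetInt} in the language of the newly introduced maps $\xi_{\mathfrak{X}}$ and $\varepsilon_{\mathfrak{X}}$, so the only thing to be careful about is bookkeeping — making sure the convention $a_{n+1} := 0$ is carried over and that the pointwise evaluation is spelled out so that an equality of functions, not just of values, is what gets proved. The one minor point worth a sentence is to note that $f$ as given genuinely satisfies the hypotheses of Lemma \ref{lem:stepFunChoquetInt} (finiteness of the sum, disjointness, decreasing coefficients), which is immediate from the statement of the proposition, so Lemma \ref{lem:stepFunChoquetInt} may be invoked without modification.
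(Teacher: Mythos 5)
Your proposal is correct and follows essentially the same route as the paper: evaluate both sides pointwise at an arbitrary $u \in U_X$, apply Lemma \ref{lem:stepFunChoquetInt} to compute $I_X^u(f)$, and identify the result with the right-hand side via $\varepsilon_{\mathfrak{X}}(A)(u) = u(A)$. Your explicit treatment of the special case \eqref{eq:etaOneEpsilon} via $n=1$, $a_1=1$ is a small extra bit of bookkeeping that the paper leaves implicit, but the argument is the same.
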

\if \WithProof 1
\begin{proof}
For
$u \in U_X$,
we have 
\if \MkLonger  1
\begin{align*}
\xi_{\mathfrak{X}}(f)
(u)
		&=
I_X^u
(f)
		\\&=
\sum_{i=1}^n
	(a_i - a_{i+1})
	\,
	u
	\Big(
		\bigcup_{j=1}^i
			A_j
	\Big)
		&(\textrm{by Lemma \ref{lem:stepFunChoquetInt}})
		\\&=
\sum_{i=1}^n
	(a_i - a_{i+1})
	\,
	\varepsilon_{\mathfrak{X}}
	\Big(
		\bigcup_{j=1}^i
			A_j
	\Big)
	(u) .
\end{align*}
\else 
by Lemma \ref{lem:stepFunChoquetInt},
\[
\xi_{\mathfrak{X}}(f)
(u)
		=
I_X^u
(f)
		=
\sum_{i=1}^n
	(a_i - a_{i+1})
	\,
	u
	\Big(
		\bigcup_{j=1}^i
			A_j
	\Big)
		=
\sum_{i=1}^n
	(a_i - a_{i+1})
	\,
	\varepsilon_{\mathfrak{X}}
	\Big(
		\bigcup_{j=1}^i
			A_j
	\Big)
	(u) .
\]
\fi 
Therefore, we obtain
(\ref{prop:xiXstepF}).
\end{proof}
\fi 

Note that in
(\ref{eq:etaOneEpsilon}),
we treat
$
\mathbb{1}_X(A)
$
and
$
\varepsilon_{\mathfrak{X}}(A)
$
as maps like
$
\mathbb{1}_X
	:
\Sigma_X \to \mathbb{R}
$
and
$
\varepsilon_{\mathfrak{X}}
	:
\Sigma_X
	\to
(
U_X
	\to
\mathbb{R}
)
$.

\if \MkLonger  0
The following lemma is well-known.
\fi 

\begin{lem}
\label{lem:monotonicMeasurability}
Let 
$f : X \to \mathbb{R}$
be a function,
$
\underline{f}_n,
\bar{f}_n
	\in
L^{\infty}(X)
$
be measurable functions
such that
$
\underline{f}_n(x)
	\le
\underline{f}_{n+1}(x)
	\le
f(x)
	\le
\bar{f}_{n+1}(x)
	\le
\bar{f}_n(x)
$
for every $n \in \mathbb{N}$
and
$x \in X$,
and
$
\bar{f}_{n}(x)
	-
\underline{f}_n(x)
	\to
0
$
as
$
n
	\to
\infty
$.
Then,
$
f
	\in L^{\infty}(X) .
$
\end{lem}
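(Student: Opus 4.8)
The plan is to verify the two defining properties of membership in $L^{\infty}(X)$ from Definition~\ref{defn:Linfty}, namely boundedness and measurability, one at a time.

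Boundedness is disposed of by specializing the hypothesis to $n = 1$: this gives $\underline{f}_1(x) \le f(x) \le \bar{f}_1(x)$ for every $x \in X$, and since $\underline{f}_1, \bar{f}_1 \in L^{\infty}(X)$ are bounded, one immediately obtains $\abs{f(x)} \le \max\{ \norm{\underline{f}_1}_{\infty}, \norm{\bar{f}_1}_{\infty} \} =: M < \infty$ for all $x \in X$, so $\norm{f}_{\infty} \le M$.

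For measurability, the key observation is that $\underline{f}_n \to f$ pointwise on all of $X$: indeed $0 \le f(x) - \underline{f}_n(x) \le \bar{f}_n(x) - \underline{f}_n(x) \to 0$ as $n \to \infty$. Since moreover $(\underline{f}_n(x))_{n}$ is nondecreasing in $n$, we in fact have $f(x) = \sup_{n \in \mathbb{N}} \underline{f}_n(x)$. Consequently, for every $r \in \mathbb{R}$,
\[
\{ x \in X \mid f(x) > r \}
	=
\bigcup_{n \in \mathbb{N}}
\{ x \in X \mid \underline{f}_n(x) > r \}
	\in
\Sigma_X,
\]
because each $\underline{f}_n$ is measurable and $\Sigma_X$ is closed under countable unions. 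Hence $f$ is measurable, and together with the boundedness estimate this gives $f \in L^{\infty}(X)$.

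As for the main obstacle: there is none of real substance. This is just the standard fact that a pointwise limit of measurable functions is measurable, packaged with a one-line sup-norm estimate. The only point worth a moment's care is that the squeeze yields genuine everywhere (not merely almost-everywhere) convergence, which is precisely why no null sets enter — consistent with the footnote in Definition~\ref{defn:Linfty} that the present theory deliberately avoids a fixed null-set structure.
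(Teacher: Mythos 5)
Your proof is correct and follows essentially the same route as the paper: both deduce from the squeeze $0 \le \bar{f}_n - \underline{f}_n \to 0$ that $f$ is an everywhere pointwise (monotone) limit of measurable functions, and then write a generating family of level sets of $f$ as countable unions/intersections of level sets of the approximants, with boundedness being immediate from $\underline{f}_1 \le f \le \bar{f}_1$. The only difference is cosmetic but in your favor: the paper works with the decreasing upper envelope, $f = \inf_n \bar{f}_n$, and must insert the $c + \tfrac{1}{j}$ approximation to express $f^{-1}((-\infty,c])$ as $\bigcap_{j}\bigcup_{n}\bar{f}_n^{-1}\bigl((-\infty, c+\tfrac{1}{j}]\bigr)$, whereas your choice of the increasing lower envelope with strict superlevel sets gives $\{f > r\} = \bigcup_{n}\{\underline{f}_n > r\}$ directly, avoiding that extra limiting step.
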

\if \MkLonger  1
\begin{proof}
For every
$x \in X$,
since
$
\{ 
	\bar{f}_n(x)
\}_{n \ge 1}
$
is a decreasing sequence
and
$
\lim_{n \to \infty}
\bar{f}_n(x)
	=
f(x)
$,
we have 
\[
f(x)
		=
\lim_{n \to \infty}
\bar{f}_n(x)
		=
\inf_{n \ge 1}
\bar{f}_n(x).
\]
Therefore, 
for any
$c \in \mathbb{R}$,
\begin{align*}
x \in f^{-1}((-\infty, c])
		&\Leftrightarrow
f(x) \le c
		\\&\Leftrightarrow
\inf_{n \ge 1}
\bar{f}_n(x)
	\le c
		\\&\Leftrightarrow
\forall \varepsilon > 0
	\,.\,
\exists n_0 \ge 1
	\,.\,
\forall n_1 \ge n_0
	\,.\,
\bar{f}_{n_1}(x) < f(x) + \varepsilon \le c + \varepsilon
		\\&\Leftrightarrow
\forall \varepsilon > 0
	\,.\,
\exists n_0 \ge 1
	\,.\,
\bar{f}_{n_0}(x) < f(x) + \varepsilon \le c + \varepsilon
		\\&\Leftrightarrow
\forall j \ge 1
	\,.\,
\exists n \ge 1
	\,.\,
\bar{f}_{n}(x) \le c + \frac{1}{j}
		\\&\Leftrightarrow
x \in
\bigcap_{j \ge 1}
\bigcup_{n \ge 1}
	\bar{f}_n^{-1}\big(
		(-\infty, c + \frac{1}{j}]
	\big)
\end{align*}
Hence
\[
f^{-1}((-\infty, c])
		=
\bigcap_{j \ge 1}
\bigcup_{n \ge 1}
	\bar{f}_n^{-1}\big(
		(-\infty, c + \frac{1}{j}]
	\big)
		\in
\Sigma_{X} .
\]
Therefore,
$f$
is measurable.
The boundedness of $f$ is clear.

\end{proof}
\fi 

\begin{prop}
For
a measurable space
$
(X, \Sigma_X)
$
and
$
f \in L^{\infty}(X) ,
$
$
\xi_{\mathfrak{X}}(f)
$
is bounded measurable.
That is,
\if \WithProof 0
$
\xi_{\mathfrak{X}}
	:
L^{\infty}(X)
	\to
L^{\infty}(U_X) ,
$
\else 
\begin{equation}
\label{eq:xiXmap}
\xi_{\mathfrak{X}}
	:
L^{\infty}(X)
	\to
L^{\infty}(U_X) ,
\end{equation}
\fi 
where
$
L^{\infty}(U_X)
	:=
L^{\infty}(U_X, \Sigma_{\mathfrak{X}}) .
$
\end{prop}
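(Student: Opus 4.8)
The plan is to reduce everything to the case of finite step functions, handled by Proposition~\ref{prop:xiXstepF}, and then pass to a limit using the discretization of Note~\ref{note:discretizeU} together with Lemma~\ref{lem:monotonicMeasurability}, applied not to $(X,\Sigma_X)$ but to the measurable space $(U_X, \Sigma_{\mathfrak{X}})$.

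First I would dispose of the step-function case. Any finite measurable step function can be written as $f = \sum_{i=1}^n a_i \mathbb{1}_X(A_i)$ with the $A_i \in \Sigma_X$ mutually disjoint and $a_1 \ge \cdots \ge a_n$, by grouping level sets and reindexing in decreasing order. Then Proposition~\ref{prop:xiXstepF} gives $\xi_{\mathfrak{X}}(f) = \sum_{i=1}^n (a_i - a_{i+1})\, \varepsilon_{\mathfrak{X}}\big(\bigcup_{j=1}^i A_j\big)$. Each $\bigcup_{j=1}^i A_j$ lies in $\Sigma_X$, so $\varepsilon_{\mathfrak{X}}\big(\bigcup_{j=1}^i A_j\big)$ is $\Sigma_{\mathfrak{X}}$-measurable by the very definition of $\Sigma_{\mathfrak{X}}$ (Definition~\ref{defn:uncertaintySp}); being a finite linear combination of such maps, $\xi_{\mathfrak{X}}(f)$ is $\Sigma_{\mathfrak{X}}$-measurable, and it is evidently bounded. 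Hence $\xi_{\mathfrak{X}}(f) \in L^{\infty}(U_X)$ for every finite step function $f$.

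Now take an arbitrary $f \in L^{\infty}(X)$ with $\norm{f}_{\infty} < M$. I would invoke Note~\ref{note:discretizeU} to obtain finite step functions $\underline{f}_n, \bar{f}_n$ with $\underline{f}_n \le \underline{f}_{n+1} \le f \le \bar{f}_{n+1} \le \bar{f}_n$ pointwise on $X$ and $\bar{f}_n - \underline{f}_n \equiv 2^{-n}M$. Put $\underline{g}_n := \xi_{\mathfrak{X}}(\underline{f}_n)$ and $\bar{g}_n := \xi_{\mathfrak{X}}(\bar{f}_n)$, both in $L^{\infty}(U_X)$ by the previous paragraph. Applying the monotonicity of the Choquet integral (Theorem~\ref{thm:choquetThm}(1)) pointwise in $u \in U_X$ to the above chain gives $\underline{g}_n \le \underline{g}_{n+1} \le \xi_{\mathfrak{X}}(f) \le \bar{g}_{n+1} \le \bar{g}_n$. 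Moreover $\bar{f}_n = \underline{f}_n + 2^{-n}M$, and since a constant is comonotonic with any function (Corollary~\ref{cor:ChoquetComono}(3)), comonotonic additivity (Theorem~\ref{thm:choquetThm}(2)) together with $I_X^{u}(2^{-n}M) = 2^{-n}M$ yields $\bar{g}_n(u) - \underline{g}_n(u) = 2^{-n}M$ for every $u$, so $\bar{g}_n - \underline{g}_n \to 0$ pointwise. Lemma~\ref{lem:monotonicMeasurability}, whose proof uses nothing specific to $X$ and thus applies verbatim with $(U_X, \Sigma_{\mathfrak{X}})$ in place of $(X, \Sigma_X)$, then gives $\xi_{\mathfrak{X}}(f) \in L^{\infty}(U_X)$, which is exactly \eqref{eq:xiXmap}.

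I do not anticipate a genuine obstacle; the proof is essentially an application of already-established machinery. The only points needing a little care are: (i) that an arbitrary finite measurable step function can be brought into the ordered, disjoint form required by Proposition~\ref{prop:xiXstepF}; and (ii) that Lemma~\ref{lem:monotonicMeasurability}, although stated for $(X,\Sigma_X)$, is really a statement about an arbitrary measurable space, here $(U_X,\Sigma_{\mathfrak{X}})$. The boundedness $\norm{\xi_{\mathfrak{X}}(f)}_{\infty} \le \norm{f}_{\infty}$ is immediate from monotonicity and $I_X^{u}(\pm\norm{f}_{\infty}) = \pm\norm{f}_{\infty}$, and is in any case subsumed by Lemma~\ref{lem:monotonicMeasurability}.
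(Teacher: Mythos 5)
Your proposal is correct, and its overall architecture is the one the paper uses: settle the step-function case first, then approximate a general $f \in L^{\infty}(X)$ by the step functions $\underline{f}_n, \bar{f}_n$ of Note~\ref{note:discretizeU}, sandwich $\xi_{\mathfrak{X}}(f)$ between $\xi_{\mathfrak{X}}(\underline{f}_n)$ and $\xi_{\mathfrak{X}}(\bar{f}_n)$ via monotonicity, and conclude with Lemma~\ref{lem:monotonicMeasurability} applied on $(U_X,\Sigma_{\mathfrak{X}})$ (the paper leaves exactly this last step as a one-line remark, and your use of $\bar{f}_n=\underline{f}_n+2^{-n}M$ with comonotonic additivity of constants to get $\bar{g}_n-\underline{g}_n=2^{-n}M$ is a correct way to fill it in); the boundedness argument via $I_X^u(\pm M)=\pm M$ is also the paper's. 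The only genuine divergence is the base case: the paper proves measurability of $\xi_{\mathfrak{X}}$ on step functions by induction on the number of terms, splitting $f=g+h$ and invoking Lemma~\ref{lem:stepFunComonotonic} and Theorem~\ref{thm:choquetThm}(2), whereas you read it off directly from the closed formula of Proposition~\ref{prop:xiXstepF}, noting that each $\varepsilon_{\mathfrak{X}}\big(\bigcup_{j\le i}A_j\big)$ is measurable by the very definition of $\Sigma_{\mathfrak{X}}$ and that finite linear combinations preserve measurability. Your route is shorter and avoids the comonotonicity machinery in the base case; the paper's induction buys nothing extra here, so this is a legitimate streamlining rather than a gap.
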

\if \WithProof 1
\begin{proof}
First we show
that
$
\xi_{\mathfrak{X}}(f)
$
is measurable
in the case when
$f$ is a finite step function
\begin{equation}
\label{eq:stepFunTwo}
f = \sum_{i=1}^n
a_i
\mathbb{1}_X(A_i),
\end{equation}
where
$\{ A_i\}_{i=1}^n$
are mutually disjoint members of
$\Sigma_X$,
and
$
a_1 \ge a_2 \ge \cdots \ge  a_n
$
be a sequence of real numbers,
by induction on $n$.

When $n=1$, 
$
f = 
a_1
\mathbb{1}_X(A_1) .
$
Then for
$
u
	\in
U_X
$,
\if \MkLonger  1
\begin{align*}
\xi_{\mathfrak{X}}(f)(u)
		&=
\xi_{\mathfrak{X}}(
a_1
\mathbb{1}_X(A_1)
)(u)
		\\&=
I_X^u(
	a_1
	\mathbb{1}_X(A_1)
)
		\\&=
a_1
u(A_1)
		&(\textrm{by Lemma \ref{lem:stepFunChoquetInt}})
		\\&=
a_1
\varepsilon_{\mathfrak{X}}(A_1)(u) .
\end{align*}
\else 
we have, by Lemma \ref{lem:stepFunChoquetInt},
\begin{align*}
\xi_{\mathfrak{X}}(f)(u)
		=
\xi_{\mathfrak{X}}(
a_1
\mathbb{1}_X(A_1)
)(u)
		=
I_X^u(
	a_1
	\mathbb{1}_X(A_1)
)
		=
a_1
u(A_1)
		=
a_1
\varepsilon_{\mathfrak{X}}(A_1)(u) .
\end{align*}
\fi 
Therefore,
$
\label{eq:xiEqEpsilon}
\xi_{\mathfrak{X}}(
a_1
\mathbb{1}_X(A_1)
)
	=
a_1
\varepsilon_{\mathfrak{X}}(A_1),
$
which is a measurable map.

Assume that for every 
$m \le n$,
$
\xi_x\big(
	\sum_{i=1}^m
	a_i
	\mathbb{1}_X(A_i)
\big)
$
is measurable.
We will prove that
$\xi_{\mathfrak{X}}(f)$
is measurable
when
$
f = \sum_{i=1}^{n+1}
a_i
\mathbb{1}_X(A_i) .
$
Now let
\begin{equation}
g := 
\sum_{i=1}^{n}
(a_i - a_{n+1})
\mathbb{1}_X(A_i),
	\quad
h := 
\sum_{i=1}^{n+1}
a_{n+1}
\mathbb{1}_X(A_i)
	=
a_{n+1}
\mathbb{1}_X \big(
	\bigcup_{i=1}^{n+1}
		A_i
\big)
\end{equation}
Then, 
$f = g + h$
and
by
Lemma \ref{lem:stepFunComonotonic},
$g$ and $h$ are comonotonic.
So we have
\if \MkLonger  1
\begin{align*}
\xi_{\mathfrak{X}}(f)(u)
		&=
I_X^u(f)
		\\&=
I_X^u(g+h)
		\\&=
I_X^u(g)
	+
I_X^u(h)
		&(\textrm{by Theorem \ref{thm:choquetThm} (2)})
		\\&=
\xi_{\mathfrak{X}}(g)(u)
	+
\xi_{\mathfrak{X}}(h)(u)
		\\&=
(\xi_{\mathfrak{X}}(g) + \xi_{\mathfrak{X}}(h)) (u) .
\end{align*}
\else 
by
Theorem \ref{thm:choquetThm} (2),
\[
\xi_{\mathfrak{X}}(f)(u)
		=
I_X^u(f)
		=
I_X^u(g+h)
		=
I_X^u(g)
	+
I_X^u(h)
		=
\xi_{\mathfrak{X}}(g)(u)
	+
\xi_{\mathfrak{X}}(h)(u)
		=
(\xi_{\mathfrak{X}}(g) + \xi_{\mathfrak{X}}(h)) (u) .
\]
\fi 
Therefore,
$
\xi_{\mathfrak{X}}(f)
	=
\xi_{\mathfrak{X}}(g) + \xi_{\mathfrak{X}}(h) 
$
in which the first term is an $m=n$ case
and the second term is an $m=1$ case,
and both are measurable by the assumption.
Hence 
$
\xi_{\mathfrak{X}}(f)
$
is measurable.
For general
$f \in L^{\infty}(X)$,
use the approximation technique in
Note \ref{note:discretizeU}
and
Lemma \ref{lem:monotonicMeasurability}.

Next, we will show that
$
\xi_{\mathfrak{X}}(f)
$
is bounded.
Since 
$f$
is bounded, there exists a positive
$M$
such that
$
\norm{f}_{\infty}
	\le
M.
$
Then,
for any
$u \in U_X$
by the monotonicity  of
$I_X^u$, we have
\if \MkLonger  1
\begin{equation*}
I_X^u(-M)
	\le
I_X^u(f)
	\le
I_X^u(M) .
\end{equation*}
\else 
$
I_X^u(-M)
	\le
I_X^u(f)
	\le
I_X^u(M) .
$
\fi 
Hence, by
Corollary \ref{cor:ChoquetComono} (2)
and the definition of
$\xi_{\mathfrak{X}}(f)$,
we obtain
\if \MkLonger  1
\[
-M
	\le
\xi_{\mathfrak{X}}(f)(u)
	\le
M .
\]
\else 
$
-M
	\le
\xi_{\mathfrak{X}}(f)(u)
	\le
M .
$
\fi 
Therefore,
$\xi_{\mathfrak{X}}(f)$
is bounded.

\end{proof}
\fi 

By Theorem \ref{thm:choquetThm},
we have a following remark.

\begin{rem}
\label{rem:xiX}
Let
$\mathfrak{X} = (X, \Sigma_X, U_X)$
be an uncertainty space
and 
$f, g \in
L^{\infty}(X)
$.
The map
$
\xi_{\mathfrak{X}}
	:
L^{\infty}(X)
	\to
L^{\infty}(U_X)
$
has the following properties.
\begin{enumerate}
\item
$
\xi_{\mathfrak{X}}
$
is 
\newword{monotonic}, i.e.,
$f \ge g$ 
implies
$
\xi_{\mathfrak{X}}
(f)
	\ge
\xi_{\mathfrak{X}}
(g)
$.

\item
$
\xi_{\mathfrak{X}}
$
is
\newword{comonotonically additive}, i.e.,
$
\xi_{\mathfrak{X}}(f+g)
	=
\xi_{\mathfrak{X}}(f)
	+
\xi_{\mathfrak{X}}(g)
$
if
$f$
and
$g$
are comonotonic.

\item
$
\xi_{\mathfrak{X}}
$
is 
\newword{positively homogeneous},
i.e.,
$
\xi_{\mathfrak{X}}
(\lambda f)
	=
\lambda
\xi_{\mathfrak{X}}
(f)
$
for every
$\lambda > 0$.

\end{enumerate}

\end{rem}

But, in general,
the map
$\xi_{\mathfrak{X}}$
is not linear.
So it does not belong to the category
$\Ban$
whose arrows are bounded linear
unless
$
U_X
$
contains only additive capacities.

The following is an example in which
$\xi_{\mathfrak{X}}$
does not preserve comonotonicity.

\begin{exmp}
\label{exmp:comonoCE}
Let
$A_1, A_2, A_3 \in \Sigma_X$
be mutually disjoint non-empty subsets of $X$ such that
$
A_1 \cup A_2 \cup A_3 = X
$.
Define two measurable functions
$
f, g \in L^{\infty}{X}
$
by
\begin{align}
f(x)
		&:=
11 \cdot
\mathbb{1}_X(A_1)
	+
1 \cdot
\mathbb{1}_X(A_2),
	+
0 \cdot
\mathbb{1}_X(A_3),
\label{eq:exmpComonoF}
		\\
g(x)
		&:=
11 \cdot
\mathbb{1}_X(A_1)
	+
10 \cdot
\mathbb{1}_X(A_2),
	+
0 \cdot
\mathbb{1}_X(A_3).
\label{eq:exmpComonoG}
\end{align}
Then, by
Lemma \ref{lem:stepFunComonotonic},
$f$
and
$g$
are comonotonic.
For any
$u \in
U_X
$,
we have
\begin{align*}
\xi_{\mathfrak{X}}(f)
(u)
		&=
(11 - 1) u(A_1)
	+
(1 - 0) u(A_1 \cup A_2)
	+
0 \cdot u(X) ,
		\\
\xi_{\mathfrak{X}}(g)
(u)
		&=
(11 - 10) u(A_1)
	+
(10 - 0) u(A_1 \cup A_2)
	+
0 \cdot u(X) .
\end{align*}

Here, suppose we have two additive capacities
$
u_1, u_2 \in U_X
$
satisfying
\begin{align}
u_1(A_1) &= \frac{1}{3},
		\quad
u_1(A_2) = \frac{1}{3},
		\quad
u_1(A_3) = \frac{1}{3},
		\label{eq:exmpComonoCapOne}
			\\
u_2(A_1) &= \frac{1}{2},
		\quad
u_2(A_2) = \frac{1}{8},
		\quad
u_2(A_3) = \frac{3}{8} .
		\label{eq:exmpComonoCapTwo}
\end{align}
Then,
\if \MkLonger  1
\begin{align*}
\xi_{\mathfrak{X}}(f)(u_1) - \xi_{\mathfrak{X}}(f)(u_2)
			&=
10 \cdot
\big(
	u_1(A_1) - u_2(A_2)
\big)
		+
1  \cdot
\big(
	u_1(A_1 \cup A_2) - u_2(A_1 \cup A_2)
\big) 
			\\&=
10 \cdot
\big(
	\frac{1}{3} - \frac{1}{2}
\big)
		+
1  \cdot
\big(
	\frac{2}{3} - \frac{5}{8}
\big) 
			=
-\frac{13}{8} ,
			\\
\xi_{\mathfrak{X}}(g)(u_1) - \xi_{\mathfrak{X}}(g)(u_2)
			&=
1  \cdot
\big(
	u_1(A_1) - u_2(A_2)
\big)
		+
10  \cdot 
\big(
	u_1(A_1 \cup A_2) - u_2(A_1 \cup A_2)
\big) 
			\\&=
1 \cdot
\big(
	\frac{1}{3} - \frac{1}{2}
\big)
		+
10  \cdot
\big(
	\frac{2}{3} - \frac{5}{8}
\big)
			=
\frac{1}{4} .
\end{align*}
\else 
\begin{align*}
\xi_{\mathfrak{X}}(f)(u_1) - \xi_{\mathfrak{X}}(f)(u_2)
			&=
10 \cdot
\big(
	u_1(A_1) - u_2(A_2)
\big)
		+
1  \cdot
\big(
	u_1(A_1 \cup A_2) - u_2(A_1 \cup A_2)
\big) 
			=
-\frac{13}{8} ,
			\\
\xi_{\mathfrak{X}}(g)(u_1) - \xi_{\mathfrak{X}}(g)(u_2)
			&=
1  \cdot
\big(
	u_1(A_1) - u_2(A_2)
\big)
		+
10  \cdot 
\big(
	u_1(A_1 \cup A_2) - u_2(A_1 \cup A_2)
\big) 
			=
\frac{1}{4} .
\end{align*}
\fi 
Hence,
$
\big(
	\xi_{\mathfrak{X}}(f)(u_1)
		-
	\xi_{\mathfrak{X}}(f)(u_2)
\big)
\big(
	\xi_{\mathfrak{X}}(g)(u_1)
		-
	\xi_{\mathfrak{X}}(g)(u_2)
\big)
		=
-\frac{13}{32} .
$

Therefore,
$
\xi_{\mathfrak{X}}(f)
$
and
$
\xi_{\mathfrak{X}}(g)
$
in
$
L^{\infty}(
	U_X
)
$
are not comonotonic.
\end{exmp}

\section{Hierarchical uncertainty}
\label{sec:hieraUncertain}

In this section,
we express $n$-layer uncertainty,
which we call
\textit{hierarchical uncertainty}, 
by introducing a new concept called U-sequences
which are
inductively defined uncertainty spaces
introduced in
Section \ref{sec:uncertainSp}.
We analyze them mainly in relation to decision theory
such as multi-layer Choquet expectations.

Especially,
we demonstrate the second and the third level uncertainty hierarchies
with a concrete example representing the Ellsberg's paradox.

\subsection{U-sequences}
\label{sec:highLayerValFun}

\begin{defn}
\label{defn:Usequence}
A 
\newword{U-sequence}
$
\mathbb{X}
$
is a sequence
of uncertainty spaces
$
\{
	\mathfrak{X_n}
		=
	(X_n, \Sigma_{X_n}, U_{X_n})
\}_{n \in \mathbb{N}}
$
such that
for 
$n \in \mathbb{N}$,
\if \WithProof 0
$
X_{n+1}
	=
U_{X_n}
$
and
$
\Sigma_{X_{n+1}}
	=
\Sigma_{\mathfrak{X_{n}}}.
$
\else 
\begin{equation}
\label{eq:seqUS}
X_{n+1}
	=
U_{X_n},
		\quad
\Sigma_{X_{n+1}}
	=
\Sigma_{\mathfrak{X_{n}}}.
\end{equation}
\fi 
\end{defn}

\begin{Note}
\label{Note:UseqTerm}
Let
$
\mathfrak{T}
	=
(
	\{*\},
	\{ \emptyset, \{*\}\},
	\{ \bar{*}\}
)
$
is the uncertainty space
where
$
\bar{*}
	:
\{ \emptyset, \{*\}\}
	\to
[0,1]
$
is a capacity defined by
\if \WithProof 0
$
\bar{*}(\emptyset)
	= 0
$
and
$
\bar{*}(\{*\})
	= 1 .
$
\else 
\begin{equation}
\label{eq:UStermDef}
\bar{*}(\emptyset)
	= 0,
\quad
\bar{*}(\{*\})
	= 1 .
\end{equation}
\fi 
Then,
for a U-sequence
$
\mathbb{X}
	=
\{
	\mathfrak{X}_n
\}_{n \in \mathbb{N}}
$,
if
$
	\mathfrak{X}_{n_0}
=
	\mathfrak{T}
$ with some
$n_0 \in \mathbb{N}$,
we have
$
	\mathfrak{X}_{m}
=
	\mathfrak{T}
$
for every
$m \ge n_0$.
We call
$\mathfrak{T}$
the \newword{terminal uncertainty space}.
\end{Note}

For an uncertainty space
$
	\mathfrak{X}
		=
	(X, \Sigma_{X}, U_X)
$,
we write
$
L^{\infty}(\mathfrak{X})
$
for
$
L^{\infty}((X, \Sigma_X))
$.
Then,
U-sequence
$
\{
	\mathfrak{X_n}
		=
	(X_n, \Sigma_{X_n}, U_{X_n})
\}_{n \in \mathbb{N}}
$
yields the following sequence of Banach spaces.

\begin{equation*}
\label{eq:FunSSeq}
\xymatrix@C=30 pt@R=20 pt{
	L^{\infty}(\mathfrak{X}_0)
		\ar @{->}^{
			\xi_{
			\mathfrak{X}_0
			}
		} [r]
&
	L^{\infty}(\mathfrak{X}_1)
		\ar @{->}^{
			\xi_{
			\mathfrak{X}_1
			}
		} [r]
&
	L^{\infty}(\mathfrak{X}_2)
		\ar @{->}^{
			\xi_{
			\mathfrak{X}_2
			}
		} [r]
&
	L^{\infty}(\mathfrak{X}_3)
		\ar @{->}^-{
			\xi_{
			\mathfrak{X}_3
			}
		} [r]
&
	\cdots
}
\end{equation*}

\begin{defn}
\label{defn:etaN}
Let
$
\mathbb{X}
		=
\{
	\mathfrak{X_n}
		=
	(X_n, \Sigma_{X_n}, U_{X_n})
\}_{n \in \mathbb{N}}
$
be a U-sequence.
\begin{enumerate}
\item
Maps
$
\xi_{\mathbb{X}}^{m,n}
		:
L^{\infty}(\mathfrak{X}_m)
		\to
L^{\infty}(\mathfrak{X}_n)
$
for
$
m, n \in \mathbb{N}, 
(m \le n)
$
are defined inductively by
\begin{equation}
\label{eq:xiN}
\xi_{\mathbb{X}}^{m,m}
	:=
\mathrm{Id}_{
	L^{\infty}(\mathfrak{X}_m)
} ,
		\quad
\xi_{\mathbb{X}}^{m,n+1}
	:=
\xi_{
	\mathfrak{X}_n
}
	\circ
\xi_{\mathbb{X}}^{m,n}  .
\end{equation}

\item
Let
$f
\in
L^{\infty}(\mathfrak{X}_0)
$
be an act,
and
$
\mathfrak{u}
: \mathbb{R} \to \mathbb{R}
$
be a utility function.
Then,
the \newword{value function}
of $f$,
$
V_n(f)
	:
X_n
	\to
\mathbb{R}
$
from the $n$th uncertainty layer's point of view
is defined by
\if \WithProof 0
$
V_n(f)
	:=
\xi_{\mathbb{X}}^{0,n}(\mathfrak{u} \circ f) .
$
\else 
\begin{equation}
\label{eq:VnF}
V_n(f)
	:=
\xi_{\mathbb{X}}^{0,n}(\mathfrak{u} \circ f) .
\end{equation}
\fi 

\end{enumerate}
\end{defn}

\begin{prop}
\label{prop:VnInd}
For
$n = 1, 2, \cdots$,
\begin{equation}
\label{eq:VnInd}
V_{n+1}
		=
\xi_{
	\mathfrak{X}_n
}
	\circ
V_n .
\end{equation}
\end{prop}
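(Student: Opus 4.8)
The plan is to unwind the definitions; this is a direct computation with no real obstacle, so the proof will be short. Fix an arbitrary act $f \in L^{\infty}(\mathfrak{X}_0)$ and a utility function $\mathfrak{u} : \mathbb{R} \to \mathbb{R}$. The goal is to show the two maps $V_{n+1}$ and $\xi_{\mathfrak{X}_n} \circ V_n$ agree on $f$, and since $f$ is arbitrary, this gives the equality of maps.

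First I would write out $V_{n+1}(f) = \xi_{\mathbb{X}}^{0,n+1}(\mathfrak{u} \circ f)$ by Definition \ref{defn:etaN}(2). Then I would apply the inductive clause of Definition \ref{defn:etaN}(1), namely $\xi_{\mathbb{X}}^{0,n+1} = \xi_{\mathfrak{X}_n} \circ \xi_{\mathbb{X}}^{0,n}$, to rewrite this as $\xi_{\mathfrak{X}_n}\big(\xi_{\mathbb{X}}^{0,n}(\mathfrak{u} \circ f)\big)$. Finally, recognizing $\xi_{\mathbb{X}}^{0,n}(\mathfrak{u} \circ f) = V_n(f)$ again by Definition \ref{defn:etaN}(2), I get $V_{n+1}(f) = \xi_{\mathfrak{X}_n}(V_n(f)) = (\xi_{\mathfrak{X}_n} \circ V_n)(f)$, which is exactly (\ref{eq:VnInd}).

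One small point worth checking along the way: the composition $\xi_{\mathfrak{X}_n} \circ V_n$ is well-typed, i.e. $V_n(f) \in L^{\infty}(\mathfrak{X}_n)$ lies in the domain $L^{\infty}(X_n)$ of $\xi_{\mathfrak{X}_n}$. This follows from the fact that $\xi_{\mathbb{X}}^{0,n}$ maps into $L^{\infty}(\mathfrak{X}_n)$ (an iterated composition of the maps $\xi_{\mathfrak{X}_m} : L^{\infty}(X_m) \to L^{\infty}(U_{X_m}) = L^{\infty}(X_{m+1})$ from (\ref{eq:xiXmap}), using the U-sequence identities (\ref{eq:seqUS})), together with the convention $L^{\infty}(\mathfrak{X}_n) = L^{\infty}((X_n, \Sigma_{X_n}))$. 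Since there is no genuine difficulty here, the only thing to be careful about is keeping the index bookkeeping straight when invoking (\ref{eq:xiN}); the statement is really just the associativity/recursion of the composition $\xi_{\mathbb{X}}^{0,n+1}$ repackaged in terms of the value functions.
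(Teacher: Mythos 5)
Your proposal is correct and follows essentially the same route as the paper's own proof: unfold $V_{n+1}(f)=\xi_{\mathbb{X}}^{0,n+1}(\mathfrak{u}\circ f)$, apply the recursive clause $\xi_{\mathbb{X}}^{0,n+1}=\xi_{\mathfrak{X}_n}\circ\xi_{\mathbb{X}}^{0,n}$ from Definition \ref{defn:etaN}, and recognize $\xi_{\mathbb{X}}^{0,n}(\mathfrak{u}\circ f)=V_n(f)$. The extra remark on well-typedness via (\ref{eq:xiXmap}) and (\ref{eq:seqUS}) is a harmless addition the paper leaves implicit.
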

\if \WithProof 1
\begin{proof}
For any
$
f \in L^{\infty}(\mathfrak{X}_0)
$,
\[
V_{n+1}(f)
		=
\xi_{\mathbb{X}}^{n+1}
(\mathfrak{u} \circ f)
		=
\big(
\xi_{
	\mathfrak{X}_n
}
	\circ
\xi_{\mathbb{X}}^n
\big)
(\mathfrak{u} \circ f)
		=
\xi_{
	\mathfrak{X}_n
}\big(
	V_n(f)
\big) 
		=
\big(
	\xi_{
		\mathfrak{X}_n
	}
		\circ
	V_n
\big)
(f) .
\]

\end{proof}
\fi 

Let
$
	\mathfrak{X}
		=
	(X, \Sigma_{X}, U_X)
$
be an uncertainty space
throughout the rest of this section,

\subsection{Savage's axioms}
\label{sec:SavageAxioms}

This section
recalls 
a part of the axioms introduced by 
\cite{savage1954}.

\begin{defn}
\begin{enumerate}
\item
A random variable
$
f
	\in
L^{\infty}(\mathfrak{X})
$
is called an
\newword{act} (on $X$).

\item
For
$r \in \mathbb{R}$,
the act
$r
	\in
L^{\infty}(\mathfrak{X})
$
is a random variable defined by
$
r(x) := r
$
for every
$x \in X$.

\item
Let
$
f, g
	\in
L^{\infty}(\mathfrak{X})
$
be two acts
and
$
A \in \Sigma_X
$.
Then 
the act
$
(A; f, g)
	\in
L^{\infty}(\mathfrak{X})
$
is defined by
for
$x \in X$,
\begin{equation}
\label{eq:condActs}
(A; f, g)(x)
	:=
\begin{cases}
f(x)
		& \textrm{if} \; x \in A,
\\
g(x)
		& \textrm{if} \; x \in X \setminus A.
\end{cases}
\end{equation}

\end{enumerate}
\end{defn}

\begin{defn}
{[Savage's axioms]}
A
\newword{(Savage) preference order}
on $X$
is a binary relation
$
\succsim_X
$
over
$
L^{\infty}(\mathfrak{X})
$
satisfying seven axioms, 
\textbf{P1}
-
\textbf{P7},
called
\newword{Savage's axioms}.
The followings are the first five axioms,
in which
$
f 
	\succ_X
g
$
means
$
g 
	\succsim_X
f 
$
is untrue.
\begin{enumerate}.
\item[\textbf{P1}]
$
\succsim_X
$
is a weak order.
i.e.
for every
$f, g, h
	\in
L^{\infty}(\mathfrak{X})
$,
	\begin{enumerate}
	\item
	$
	f \succsim_X g
	$
	or
	$
	g \succsim_X f
	$,

	\item
	$
	f \succsim_X g
	$
	and
	$
	g \succsim_X h
	$
	imply
	$
	f \succsim_X h
	$.
	
	\end{enumerate}

\lspace
\item[\textbf{P2}]
For every
$f, g, h, h'
\in 
L^{\infty}(\mathfrak{X})
$
and every
$A \in \Sigma_X$,
\[
(A; f, h)
	\succsim_X
(A; g, h)
	\quad \textrm{iff} \quad
(A; f, h')
	\succsim_X
(A; g, h') .
\]

\lspace
\item[\textbf{P3}]
For every
$f
\in 
L^{\infty}(\mathfrak{X})
$,
$
A \in \Sigma_X
\, (A \ne \emptyset)
$
and
$
r, s \in \mathbb{R}
$,
\[
r
	\succsim_X
s
	\quad \textrm{iff} \quad
(A; r, f)
	\succsim_X
(A; s, f) .
\]

\lspace
\item[\textbf{P4}]
For every
$
A, B
	\in
\Sigma_X
$
and every
$
r, s, t, u \in \mathbb{R}
$
with
$
r \succ_X s
$
and
$
t \succ_X u
$,
\[
(A; r, s) 
	\succsim_X
(B; r, s)
	\quad \textrm{iff} \quad
(A; t, u)
	\succsim_X
(B; t, u) .
\]

\lspace
\item[\textbf{P5}]
There are
$
f, g
\in 
L^{\infty}(\mathfrak{X})
$
such that
$
f
	\succsim_X
g
$.


\end{enumerate}

\end{defn}

\begin{Note}
\label{Note:prefReflexive}
The axiom
\textbf{P1}
implies the relation
$\succsim_X$
is reflexive,
i.e.
$f \succsim_X f$
for any
$
f \in 
L^{\infty}(\mathfrak{X})
$.
\end{Note}
\if \WithProof 1
\begin{proof}
Substitute $g$ by $f$ in the condition (a) of \textbf{P1}.
Then, we have
$
f \succsim_X f
	\; \textrm{or} \;
f \succsim_X f
$,
which means
$
f \succsim_X f
$.

\end{proof}
\fi 


\subsection{Ellsberg's single-urn paradox}
\label{sec:Ellsberg}

In this section, we describe 
the Ellsberg's single-urn paradox (\cite{ellsberg_1961})
in our language developed in 
Section \ref{sec:uncertainSp}.

Suppose we have an urn filled with balls of three colors, red, blue and yellow.
Total number of balls is $3N$ for a fixed positive integer $N$.
The number of red balls in the urn is $N$.
But, we have no information about the ratio of numbers of blue and yellow balls.
In other words, we have an uncertainty space 
$
\mathfrak{X}_0
	=
(X_0, \Sigma_{X_0}, U_{X_0})
$
defined by
\begin{equation}
\label{eq:EllsUrn}
X_0 := \{ R, B, Y \},
	\quad
\Sigma_{X_0} := 2^{X_0},
	\quad
U_{X_0}
	:=
\{
	u_k
\mid
	k = 0, \cdots, 2N
\}
\end{equation}
where
$u_k$
is a capacity on $X$ defined by
with a fixed 
$\alpha \in \mathbb{R}
\;
(\alpha \ge 1)
$,
\begin{align*}
u_k(\{R\})
	&:=
\frac{N}{3N}
	=
\frac{1}{3},
	\quad
u_k(\{B\}) 
	:=
\frac{2}{3}
\big(
	\frac{k}{2N}
\big)^{\alpha},
	\quad
u_k(\{Y\}) 
	:=
\frac{2}{3}
	\big(
		1 - 
		\frac{k}{2N}
	\big)^{\alpha},
	\\
	\quad
u_k(\{B, Y\})
	&:=
1 - u_k(\{R\})
	=
\frac{2}{3},
	\\
u_k(\{R, B\}) 
	&:= 
u_k(\{R\})
	+
 u_k(\{B\})
	=
\frac{1}{3} 
	+
\frac{2}{3}
\big(
	\frac{k}{2N}
\big)^{\alpha},
	\\
u_k(\{R, Y\}) 
	&:= 
u_k(\{R\})
	+
 u_k(\{Y\})
	=
\frac{1}{3} 
	+
\frac{2}{3}
\big(
	1 -
	\frac{k}{2N}
\big)^{\alpha} .
\end{align*}

\begin{Note}
\label{Note:ellsUrn}
\begin{enumerate}
\item
The capacity
$u_k$
becomes an additive probability measure
if $\alpha = 1$.

\item
$
\Sigma_{
	\mathfrak{X}_0
}
	=
2^{
	U_{X_0}
} .
$
\end{enumerate}
\end{Note}
\if \WithProof 1
\begin{proof}
\begin{enumerate}
\item
Obvious.

\item
All we need to show is that
$
\{u_k\}
	\in
\Sigma_{
	\mathfrak{X}_0
}
$
for every
$
u_k
	\in
U_{X_0}
$.
But, 
there exists 
$\delta > 0$
such that
\[
\Big\{
	u
		\in
	U_{X_0}
\mid
	u(\{B\})
		\in
	\big(
		\frac{2}{3}
		\big(
			1 - 
			\frac{k}{2N}
		\big)^{\alpha}
			-
		\delta,
	\,
		\frac{2}{3}
		\big(
			1 - 
			\frac{k}{2N}
		\big)^{\alpha}
			+
		\delta
	\big)
\Big\}
		=
\{u_k\} .
\]
Therefore, by Proposition \ref{prop:sigmaGX},
$
\{u_k\}
	\in
\Sigma_{
	\mathfrak{X}_0
}
$.
\end{enumerate}
\end{proof}
\fi 

Next, we consider the following four acts 
$
f_1, f_2, f_3, f_4
	\in
L^{\infty}(\mathfrak{X}_0)
$
on $X_0$.
\[
\textred{f_1} := \mathbb{1}_{X_0}(\{R\}) ,
    \;
\textred{f_2} := \mathbb{1}_{X_0}(\{B\}) ,
    \;
\textred{f_3} := \mathbb{1}_{X_0}(\{B, Y\}) ,
    \;
\textred{f_4} := \mathbb{1}_{X_0}(\{R, Y\}) .
\]

Do you prefer to bet on which acts?
The modal response here is to prefer the known probability over the unknown probability,
that is,
\begin{equation}
\label{eq:EllsbExp}
f_1 \succ_{X_0} f_2
	\quad \textrm{and} \quad
f_3 \succ_{X_0} f_4 .
\end{equation}

On the other hand,
\if \MkLonger  1
\[
f_1 = 
(
\{R, B\} ;
	f_1,
	0
)
	\quad \textrm{and} \quad
f_2 = 
(
\{R, B\} ;
	f_2,
	0
) .
\]
\else 
$
f_1 = 
(
\{R, B\} ;
	f_1,
	0
)
$
and
$
f_2 = 
(
\{R, B\} ;
	f_2,
	0
) .
$
\fi 
Hence, by the axiom \textbf{P2},
\[
f_1 \succsim_{X_0} f_2
	\; \Leftrightarrow \;
(
\{R, B\} ;
	f_1,
	0
)
	\succsim_{X_0}
(
\{R, B\} ;
	f_2,
	0
)
	\; \Leftrightarrow \;
(
\{R, B\} ;
	f_1,
	1
)
	\succsim_{X_0}
(
\{R, B\} ;
	f_2,
	1
) .
\]
But, easily we can see
\if \MkLonger  1
\[
(
\{R, B\} ;
	f_1,
	1
)
		= f_4
	\quad \textrm{and} \quad
(
\{R, B\} ;
	f_2,
	1
) 
		=f_3 .
\]
\else 
$
(
\{R, B\} ;
	f_1,
	1
)
		= f_4
$
and
$
(
\{R, B\} ;
	f_2,
	1
) 
		=f_3 .
$
\fi 
Therefore,
\[
f_1 \succsim_{X_0} f_2
	\quad \textrm{iff} \quad
f_4 \succsim_{X_0} f_3,
\]
which violates to
(\ref{eq:EllsbExp}).
This observation is called the
\newword{Ellsberg single-urn paradox}.

\subsection{Second layer analysis}
\label{sec:secLayerAna}
From now on,
we are at the position that the preference relation
described in 
Section \ref{sec:SavageAxioms}
should be represented by
\begin{equation}
\label{eq:prefRelvsExpUtil}
f_1 \succsim_{X_0} f_2
		\quad \Leftrightarrow \quad
I_{X_0}^u (\mathfrak{u} \circ f_1)
	\ge
I_{X_0}^u (\mathfrak{u} \circ f_2)
\end{equation}
with a non-decreasing function $\mathfrak{u}$, called a \newword{utility function}.

In this section,
we overcome the Ellsberg' single-urn paradox
by introducing non-additive capacity,
which is actually a well-known fact.

A 
utility function
we use here
is a function
$\mathfrak{u} : \mathbb{R} \to \mathbb{R}$
satisfying
\begin{equation}
\label{eq:utilFun}
1 > \mathfrak{u}(1) > \mathfrak{u}(0) = 0 .
\end{equation}
For example,
$
\mathfrak{u}(x) :=
1 - e^{-x}
$
is a possible utility function
satisfying
(\ref{eq:utilFun}).

Now, for
$
u \in 
X_1
	:=
U_{X_0}
$,
we have
\if \MkLonger  1
\begin{align*}
V_1(f)(u)
		&=
\xi_{\mathfrak{X}_0}
(\mathfrak{u} \circ f)
(u)
		\\&=
I_{X_0}^u
(\mathfrak{u} \circ f)
		\\&=
\int_0^{1}
	dz
		\,
	u(\{
		\mathfrak{u} \circ f
			\ge
		z
	\})
\end{align*}
\else 
\[
V_1(f)(u)
		=
\xi_{\mathfrak{X}_0}
(\mathfrak{u} \circ f)
(u)
		=
I_{X_0}^u
(\mathfrak{u} \circ f)
		=
\int_0^{1}
	dz
		\,
	u(\{
		\mathfrak{u} \circ f
			\ge
		z
	\})
\]
\fi 
Here,
\if \MkLonger  1
\begin{align*}
\{ \mathfrak{u} \circ f \ge z \}
		&=
\{ x \in X_0 \mid \mathfrak{u}(f(x)) \ge z \}
		\\&=
\begin{cases}
X_0
	& \textrm{if} \;
z \le 0,
	\\
\{ f = 1 \}
	& \textrm{if} \;
0 < z \le \mathfrak{u}(1),
	\\
\emptyset
	& \textrm{if} \;
z > \mathfrak{u}(1) .
\end{cases}
\end{align*}
\else 
\[
\{ \mathfrak{u} \circ f \ge z \}
		=
\{ x \in X_0 \mid \mathfrak{u}(f(x)) \ge z \}
		=
\begin{cases}
X_0
	& \textrm{if} \;
z \le 0,
	\\
\{ f = 1 \}
	& \textrm{if} \;
0 < z \le \mathfrak{u}(1),
	\\
\emptyset
	& \textrm{if} \;
z > \mathfrak{u}(1) .
\end{cases}
\]
\fi 
Hence, we obtain
\begin{equation*}
V_1(f)(u)
	=
\int_0^{\mathfrak{u}(1)}
	dz \,
	u(\{f = 1\})
	=
\mathfrak{u}(1)
\,
u(\{f = 1\}) 
	=
\mathfrak{u}(1)
\,
\varepsilon_{\mathfrak{X}_0}
(\{f = 1\}) 
(u) .
\end{equation*}
Therefore,
\begin{equation}
\label{eq:Vone}
V_1(f)
	=
\xi_{\mathfrak{X}_0}(\mathfrak{u} \circ f)
	=
\mathfrak{u}(1)
\,
\varepsilon_{\mathfrak{X}_0}
(\{f = 1\})  .
\end{equation}

Next,
for
a capacity
$v$
over
$
X_1
$,
we have
\if \MkLonger  1
\begin{align}
V_2(f)(v)
		&=
\xi_{\mathbb{X}}^2(\mathfrak{u} \circ f)
(v)
		\nonumber
		\\&=
\xi_{
	\mathfrak{X}_1
}
\big(
	\xi_{\mathfrak{X}_0}(
		\mathfrak{u} \circ f
	)
\big)
(v)
		\nonumber
		\\&=
I_{
	X_1
}^v
\big(
	\mathfrak{u}(1) \,
	\varepsilon_{\mathfrak{X}_0}
	(\{f = 1\}
\big)
		\nonumber
		\\&=
\mathfrak{u}(1)
I_{
	X_1
}^v
\big(
	\varepsilon_{\mathfrak{X}_0}
	(\{f = 1\}
\big)
		\nonumber
\mathfrak{u}(1)
\int_0^{1}
	dz \,
	v
\Big(
\big\{
	u \in 
	X_1
\mid
	\varepsilon_{\mathfrak{X}_0}
	(\{f = 1\}) 
	(u)
\ge
	z
\big\}
\Big)
		\nonumber
		\\&=
\mathfrak{u}(1)
\int_0^{1}
	dz \,
	v
\Big(
\big\{
	u \in 
	X_1
\mid
	u
	(\{f = 1\}) 
\ge
	z
\big\}
\Big) .
		\label{eq:VtwoFv}
\end{align}
\else 
\begin{align}
V_2(f)(v)
		&=
\xi_{\mathbb{X}}^2(\mathfrak{u} \circ f)
(v)
		=
\xi_{
	\mathfrak{X}_1
}
\big(
	\xi_{\mathfrak{X}_0}(
		\mathfrak{u} \circ f
	)
\big)
(v)
		=
I_{
	X_1
}^v
\big(
	\mathfrak{u}(1) \,
	\varepsilon_{\mathfrak{X}_0}
	(\{f = 1\}
\big)
		\nonumber
		\\&=
\mathfrak{u}(1)
I_{
	X_1
}^v
\big(
	\varepsilon_{\mathfrak{X}_0}
	(\{f = 1\}
\big)
		\nonumber
\mathfrak{u}(1)
\int_0^{1}
	dz \,
	v
\Big(
\big\{
	u \in 
	X_1
\mid
	\varepsilon_{\mathfrak{X}_0}
	(\{f = 1\}) 
	(u)
\ge
	z
\big\}
\Big)
		\nonumber
		\\&=
\mathfrak{u}(1)
\int_0^{1}
	dz \,
	v
\Big(
\big\{
	u \in 
	X_1
\mid
	u
	(\{f = 1\}) 
\ge
	z
\big\}
\Big) .
		\label{eq:VtwoFv}
\end{align}
\fi 

Let us calculate value functions for acts
$f_1, f_2, f_3, f_4$
by using (\ref{eq:VtwoFv}).
\begin{align*}
		&
u_k(\{ f_1 = 1 \}) \ge z
		 \; \Leftrightarrow \; 
u_k(\{R\}) \ge z
		\; \Leftrightarrow \; 
z
		\le
\frac{1}{3} ,
		\\&
u_k(\{ f_2 = 1 \}) \ge z
		 \; \Leftrightarrow \; 
u_k(\{B\}) \ge z
		 \; \Leftrightarrow \; 
z
		\le
\frac{2}{3}
\big(
	\frac{k}{2N}
\big)^{\alpha}
		 \; \Leftrightarrow \; 
k
	\ge
2N
\big(
	\frac{3}{2}
	z
\big)^{
	\frac{1}{\alpha}
} ,
		\\&
u_k(\{ f_3 = 1 \}) \ge z
		 \; \Leftrightarrow \; 
u_k(\{B, Y\}) \ge z
		 \; \Leftrightarrow \; 
z
		\le 
\frac{2}{3}, 
		\\&
u_k(\{ f_4 = 1 \}) \ge z
		 \; \Leftrightarrow \; 
u_k(\{R, Y\}) \ge z
		 \; \Leftrightarrow \; 
z
		\le 
\frac{1}{3}
	+
\frac{2}{3}
\big(
	1 - 
	\frac{k}{2N}
\big)^{\alpha} 
		 \; \Leftrightarrow \; 
k
	\le
2N 
\Big(
	1 -
	\big(
		\frac{3z - 1}{2}
	\big)^{
		\frac{1}{\alpha}
	}
\Big) .
\end{align*}

Here, assume that $v$ is an additive probability measure.
Then, we obtain
\if \MkLonger  1
\begin{align}
V_2(f_1)(v)
		&=
\mathfrak{u}(1)
\int_0^{1}
	dz \,
	v
\Big(
\big\{
	u_k \in 
	X_1
\mid
	z \le
	\frac{1}{3}
\big\}
\Big) 
		\nonumber
		\\&=
\mathfrak{u}(1)
\int_0^{
	\frac{1}{3}
}
	dz \,
	v(
		X_1
	)
		\nonumber
		\\&=
\frac{1}{3}
\mathfrak{u}(1) ,
		\label{eq:VtwoFone}
		\\
V_2(f_2)(v)
		&=
\mathfrak{u}(1)
\int_0^{1}
	dz \,
	v
\Big(
\big\{
	u_k \in 
	X_1
\mid
z
		\le
\frac{2}{3}
\big(
	\frac{k}{2N}
\big)^{\alpha}
\big\}
\Big) 
		\nonumber
		\\&=
\mathfrak{u}(1)
\sum_{k=0}^{2N}
\int_0^{
	\frac{2}{3}
	\big(
		\frac{k}{2N}
	\big)^{\alpha}
}
dz \,
v
(
\{
	u_k
\}
)
		&(\textrm{since $v$ is additive})
		\nonumber
		\\&=
\frac{2}{3}
\mathfrak{u}(1)
\sum_{k=0}^{2N}
	\big(
		\frac{k}{2N}
	\big)^{\alpha}
v
(
\{
	u_k
\}
) ,
		\label{eq:VtwoFtwo}
\end{align}
\begin{align}
V_2(f_3)(v)
		&=
\mathfrak{u}(1)
\int_0^{1}
	dz \,
	v
\Big(
\big\{
	u_k \in 
	X_1
\mid
	z \le
	\frac{2}{3}
\big\}
\Big) 
		\nonumber
		\\&=
\mathfrak{u}(1)
\int_0^{
	\frac{2}{3}
}
	dz \,
	v(
		X_1
	)
		\nonumber
		\\&=
\frac{2}{3}
\mathfrak{u}(1) ,
		\label{eq:VtwoFthree}
		\\
V_2(f_4)(v)
		&=
\mathfrak{u}(1)
\int_0^{1}
	dz \,
	v
\Big(
\big\{
	u_k \in 
	X_1
\mid
z
		\le
\frac{1}{3}
	+
\frac{2}{3}
\big(
	1 - 
	\frac{k}{2N}
\big)^{\alpha} 
\big\}
\Big) 
		\nonumber
		\\&=
\mathfrak{u}(1)
\sum_{k=0}^{2N}
\int_0^{
	\frac{1}{3}
		+
	\frac{2}{3}
	\big(
		1 - 
		\frac{k}{2N}
	\big)^{\alpha} 
}
dz \,
v
(
\{
	u_k
\}
)
		&(\textrm{since $v$ is additive})
		\nonumber
		\\&=
\mathfrak{u}(1)
\sum_{k=0}^{2N}
\Big(
	\frac{1}{3}
		+
	\frac{2}{3}
	\big(
		1 - 
		\frac{k}{2N}
	\big)^{\alpha} 
\Big)
v
(
\{
	u_k
\}
)
		\nonumber
		\\&=
\mathfrak{u}(1)
\Big(
	\frac{1}{3}
		+
	\frac{2}{3}
\sum_{k=0}^{2N}
	\big(
		1 - 
		\frac{k}{2N}
	\big)^{\alpha} 
		\,
v
(
\{
	u_k
\}
)
\Big) .
		\label{eq:VtwoFfour}
\end{align}
\else 
since $v$ is additive,
\begin{align}
V_2(f_1)(v)
		&=
\mathfrak{u}(1)
\int_0^{1}
	dz \,
	v
\Big(
\big\{
	u_k \in 
	X_1
\mid
	z \le
	\frac{1}{3}
\big\}
\Big) 
		=
\mathfrak{u}(1)
\int_0^{
	\frac{1}{3}
}
	dz \,
	v(
		X_1
	)
		=
\frac{1}{3}
\mathfrak{u}(1) ,
		\label{eq:VtwoFone}
		\\
V_2(f_2)(v)
		&=
\mathfrak{u}(1)
\int_0^{1}
	dz \,
	v
\Big(
\big\{
	u_k \in 
	X_1
\mid
z
		\le
\frac{2}{3}
\big(
	\frac{k}{2N}
\big)^{\alpha}
\big\}
\Big) 
		=
\mathfrak{u}(1)
\sum_{k=0}^{2N}
\int_0^{
	\frac{2}{3}
	\big(
		\frac{k}{2N}
	\big)^{\alpha}
}
dz \,
v
(
\{
	u_k
\}
)
		\nonumber
		\\&=
\frac{2}{3}
\mathfrak{u}(1)
\sum_{k=0}^{2N}
	\big(
		\frac{k}{2N}
	\big)^{\alpha}
v
(
\{
	u_k
\}
) ,
		\label{eq:VtwoFtwo}
		\\
V_2(f_3)(v)
		&=
\mathfrak{u}(1)
\int_0^{1}
	dz \,
	v
\Big(
\big\{
	u_k \in 
	X_1
\mid
	z \le
	\frac{2}{3}
\big\}
\Big) 
		=
\mathfrak{u}(1)
\int_0^{
	\frac{2}{3}
}
	dz \,
	v(
		X_1
	)
		=
\frac{2}{3}
\mathfrak{u}(1) ,
		\label{eq:VtwoFthree}
		\\
V_2(f_4)(v)
		&=
\mathfrak{u}(1)
\int_0^{1}
	dz \,
	v
\Big(
\big\{
	u_k \in 
	X_1
\mid
z
		\le
\frac{1}{3}
	+
\frac{2}{3}
\big(
	1 - 
	\frac{k}{2N}
\big)^{\alpha} 
\big\}
\Big) 
		\nonumber
		\\&=
\mathfrak{u}(1)
\sum_{k=0}^{2N}
\int_0^{
	\frac{1}{3}
		+
	\frac{2}{3}
	\big(
		1 - 
		\frac{k}{2N}
	\big)^{\alpha} 
}
dz \,
v
(
\{
	u_k
\}
)
		=
\mathfrak{u}(1)
\sum_{k=0}^{2N}
\Big(
	\frac{1}{3}
		+
	\frac{2}{3}
	\big(
		1 - 
		\frac{k}{2N}
	\big)^{\alpha} 
\Big)
v
(
\{
	u_k
\}
)
		\nonumber
		\\&=
\mathfrak{u}(1)
\Big(
	\frac{1}{3}
		+
	\frac{2}{3}
\sum_{k=0}^{2N}
	\big(
		1 - 
		\frac{k}{2N}
	\big)^{\alpha} 
		\,
v
(
\{
	u_k
\}
)
\Big) .
		\label{eq:VtwoFfour}
\end{align}
\fi 

Now,
let us consider
an uncertainty space
$
\mathfrak{X}_1
	=
(X_1, \Sigma_{X_1}, U_{X_1})
$
defined by
\begin{equation}
\label{eq:EllXOne}
\Sigma_{X_1}
	:=
2^{X_1},
	\quad	
U_{X_1}
	:=
\{
	v^{\mathrm{u}}
\}
\end{equation}
where
$v^{\mathrm{u}}$
is the uniform distribution over
$
X_1
$,
that is,
$
v^{\mathrm{u}}
	:
\Sigma_{X_1}
	\to
[0,1]
$
is
an additive capacity
defined by
for
$
A \in
	\Sigma_{X_1}
$,
\begin{equation}
\label{eq:vStar}
v^{\mathrm{u}}(A)
	:=
\frac{
	\#A
}{
	2N+1
} .
\end{equation}
Here
$\#A$
is the cardinality of $A$.
Then, we have
\begin{align}
V_2(f_1)(v^{\mathrm{u}})
		&=
\frac{1}{3}
\mathfrak{u}(1) ,
		\label{eq:VtwoFoneVS}
		\\
V_2(f_2)(v^{\mathrm{u}})
		&=
\frac{2}{3}
\mathfrak{u}(1)
\frac{1}{
	2N + 1
}
\sum_{k=0}^{2N}
	\big(\frac{k}{2N}\big)^{\alpha} ,
		\label{eq:VtwoFtwoVS}
		\\
V_2(f_3)(v^{\mathrm{u}})
		&=
\frac{2}{3}
\mathfrak{u}(1) ,
		\label{eq:VtwoFthreeVS}
		\\
V_2(f_4)(v^{\mathrm{u}})
		&=
\mathfrak{u}(1)
\Big(
	\frac{1}{3}
		+
	\frac{2}{
		3
		(2N+1)
	}
\sum_{k=0}^{2N}
	\big(\frac{k}{2N}\big)^{\alpha}
\Big) .
		\label{eq:VtwoFfourVS}
\end{align}

Now, if 
$\alpha = 1$,
then,
\begin{equation}
\label{eq:VstarAlOne}
V_2(f_1)(v^{\mathrm{u}})
		=
\frac{1}{3} \mathfrak{u}(1)
		=
V_2(f_2)(v^{\mathrm{u}}),
			\quad \textrm{and} \quad
V_2(f_3)(v^{\mathrm{u}})
		=
\frac{2}{3} \mathfrak{u}(1)
		=
V_2(f_4)(v^{\mathrm{u}}) .
\end{equation}
Hence we fail to represent
(\ref{eq:EllsbExp})
with them.

However, if
$\alpha > 1$,
then
\begin{equation}
\label{eq:VstarAlOnePlus}
V_2(f_1)(v^{\mathrm{u}})
		=
\frac{1}{3} \mathfrak{u}(1)
		>
V_2(f_2)(v^{\mathrm{u}})
			\quad \textrm{and} \quad
V_2(f_3)(v^{\mathrm{u}})
		=
\frac{2}{3} \mathfrak{u}(1)
		>
V_2(f_4)(v^{\mathrm{u}}) 
\end{equation}
since
$
\big(\frac{k}{2N}\big)^{\alpha} 
	<
\frac{k}{2N} ,
$
which supports
(\ref{eq:EllsbExp}).
%
Note that
the above discussion is on the U-sequence
$
\mathbb{X}
	=
\{
	\mathfrak{X}_n
\}_{n \in \mathbb{N}}
$
where
$
	\mathfrak{X}_n
=
\mathfrak{T}
$
for
$
n \ge = 2
$.

Next, we slightly modify the U-sequence
$\mathbb{X}$
to
another U-sequence
$
\mathbb{Y} = 
\{
	\mathfrak{Y}_n
\}_{n \in \mathbb{N}}
$
such that
\begin{equation}
\label{eq:UseqY}
\mathfrak{Y}_n
	=
(Y_n, \Sigma_{Y_n}, U_{Y_n})
	:=
\begin{cases}
	(X_1, \Sigma_{X_1}, \{ v^{\mathrm{b}}\})
& \textrm{if} \;
	n = 1 ,
\\
	(X_n, \Sigma_{X_n}, U_{X_n})
& \textrm{otherwise} ,
\end{cases}
\end{equation}
where
$
v^{\mathrm{b}}
	:
\Sigma_{Y_1}
	\to
[0, 1]
$
is a symmetric binary distribution defined by
for
$
A \in \Sigma_{Y_1}
$
\begin{equation}
\label{eq:vB}
v^{\mathrm{b}}(A)
	:=
2^{-2N}
\sum_{u_k \in A}
	\binom{2N}{k} .
\end{equation}
Then, 
by
		(\ref{eq:VtwoFone}),
		(\ref{eq:VtwoFtwo}),
		(\ref{eq:VtwoFthree})
and
		(\ref{eq:VtwoFfour}),
we have
\begin{align}
V_2(f_1)(v^{\mathrm{b}})
		&=
\frac{1}{3}
\mathfrak{u}(1) ,
		\label{eq:VtwoFoneVN}
		\\
V_2(f_2)(v^{\mathrm{b}})
		&=
\frac{2}{
	3
	\cdot
	2^{2N}
}
\mathfrak{u}(1)
\sum_{k=0}^{2N}
	\big(\frac{k}{2N}\big)^{\alpha} 
	\binom{2N}{k}
		\label{eq:VtwoFtwoVN}
		\\
V_2(f_3)(v^{\mathrm{b}})
		&=
\frac{2}{3}
\mathfrak{u}(1) ,
		\label{eq:VtwoFthreeVN}
		\\
V_2(f_4)(v^{\mathrm{b}})
		&=
\mathfrak{u}(1)
\Big(
	\frac{1}{3}
		+
	\frac{2}{
		3
		\cdot
		2^{2N}
	}
\sum_{k=0}^{2N}
	\big(\frac{2N-k}{2N}\big)^{\alpha}
	\binom{2N}{k}
\Big) .
		\label{eq:VtwoFfourVN}
\end{align}
Now if 
$\alpha=1$,
then
\if \MkLonger  1
\begin{align*}
V_2(f_2)(v^{\mathrm{b}})
		&=
\frac{2}{
	3
	\cdot
	2^{2N}
}
\mathfrak{u}(1)
\sum_{k=0}^{2N}
	\big(\frac{k}{2N}\big)
	\binom{2N}{k}
		\\&=
\frac{2}{
	3
	\cdot
	2^{2N}
}
\mathfrak{u}(1)
\sum_{k=1}^{2N}
	\binom{2N-1}{k-1}
		\\&=
\frac{2}{
	3
	\cdot
	2^{2N}
}
\mathfrak{u}(1)
\sum_{k=0}^{2N-1}
	\binom{2N-1}{k}
		\\&=
\frac{
	2
	\cdot
	2^{2N-1}
}{
	3
	\cdot
	2^{2N}
}
\mathfrak{u}(1)
		\\&=
\frac{
	1
}{
	3
}
\mathfrak{u}(1) ,
		\\
V_2(f_4)(v^{\mathrm{b}})
		&=
\mathfrak{u}(1)
\Big(
	\frac{1}{3}
		+
	\frac{2}{
		3
		\cdot
		2^{2N}
	}
\sum_{k=0}^{2N}
	\big(\frac{2N-k}{2N}\big)
	\binom{2N}{k}
\Big)
		\\&=
\mathfrak{u}(1)
\Big(
	\frac{1}{3}
		+
	\frac{2}{
		3
		\cdot
		2^{2N}
	}
\sum_{k=0}^{2N-1}
	\binom{2N-1}{k}
\Big)
		\\&=
\mathfrak{u}(1)
\Big(
	\frac{1}{3}
		+
	\frac{
		2
		\cdot
		2^{2N-1}
	}{
		3
		\cdot
		2^{2N}
	}
\Big)
		\\&=
\frac{2}{3}
\mathfrak{u}(1) .
\end{align*}
Therefore,
\fi 
\begin{equation}
\label{eq:VstarAlOneB}
V_2(f_1)(v^{\mathrm{u}})
		=
\frac{1}{3} \mathfrak{u}(1)
		=
V_2(f_2)(v^{\mathrm{u}})
			\quad \textrm{and} \quad
V_2(f_3)(v^{\mathrm{u}})
		=
\frac{2}{3} \mathfrak{u}(1)
		=
V_2(f_4)(v^{\mathrm{u}}) .
\end{equation}
Hence we fail to represent
(\ref{eq:EllsbExp})
with them.

However, if
$\alpha > 1$,
then
\begin{equation}
\label{eq:VstarAlOnePlusB}
V_2(f_1)(v^{\mathrm{u}})
		=
\frac{1}{3} \mathfrak{u}(1)
		>
V_2(f_2)(v^{\mathrm{u}})
			\quad \textrm{and} \quad
V_2(f_3)(v^{\mathrm{u}})
		=
\frac{2}{3} \mathfrak{u}(1)
		>
V_2(f_4)(v^{\mathrm{u}}) .
\end{equation}

\subsection{Third layer analysis}
\label{sec:thiLayerAna}

So far,
we have assumed that the probability distribution on
$
X_1 = U_{X_0}
	=
\{u_k \mid k = 0,1,\cdots, 2N\}
$
is predetermined
such as 
a uniform distribution
or 
a symmetric binary distribution.
However, 
the situation changes when we consider, 
for example, 
the process 
to produce the urns in question
in a factory.
Suppose that 
in the factory,
$N$ red balls are supplied from one large tank filled with red balls each time, 
while 
$2N$ blue and yellow balls are supplied together from another large tank
filled with blue and yellow balls whose filling ratio is unknown,
Then, each urn is filled with 3N balls in total.
As a result, 
the blue and yellow balls are filled for each urn 
according to the binomial distribution
that is determined by the unknown ratio of blue and yellow balls in the second tank.
Let us think this situation by introducing another U-sequence
$
\mathbb{Z}
	=
\{
	\mathfrak{Z}_n
		=
	(Z_n, \Sigma_{Z_n}, U_{Z_n})
\}_{n \in \mathbb{N}}
$
such that
$
	\mathfrak{Z}_0
=
	\mathfrak{X}_0
$
but
\begin{equation}
\label{eq:SSX}
	\mathfrak{Z}_1
=
	(Z_1, \Sigma_{Z_1}, U_{Z_1})
=
	(X_1, \Sigma_{X_1}, 
\{
	v_p
\mid
	p \in [0,1]
\}
)
\end{equation}
where
$
v_p
$
is
an additive capacity
defined  by
for
$A \in \Sigma_{
	Z_1
}$,
\begin{equation}
\label{eq:vK}
v_p(A)
	:=
\sum_{u_k \in A}
	\binom{2N}{k}
	p^k
	(1-p)^{2N-k} .
\end{equation}

\begin{Note}
\label{Note:StwoXiso}
Two measurable spaces
$
(
	Z_2,
	\Sigma_{
		Z_2
	}
)
	=
(
	U_{Z_1},
	\Sigma_{
		\mathfrak{Z}_1
	}
)
$
and
$
(
	[0,1],
	\mathcal{B}([0,1])
)
$
are isomorphic.
Especially, the function
$
w : [0,1] \to Z_2
$
defined by
$w(p) := v_p$
for
$p \in [0,1]$
is an isomorphism.
\end{Note}
\if \WithProof 1
\begin{proof}
It is obvious that the map
$v$
is 1-1 and onto.
So all we need to show is that
both
$v$ and $v^{-1}$ are measurable maps.
To this end, we will prove
\begin{equation}
\label{eq:vMeas}
\{
	p
\mid
	v_p(A) \in B
\}
	\in
\mathcal{B}([0,1])
	\quad
(
A \in \Sigma_{Z_1},
 \;
B \in \mathcal{B}([0,1])
)
\end{equation}
and
\begin{equation}
\label{eq:vInvMeas}
\{
	v_p
\mid
	p < s < q
\}
	\in
\Sigma_{Z_2}
	\quad
(
	0 \le p < q \le 1
)
\end{equation}
one by one.

Let
$g(p) := v_p(A)
	=
\sum_{u_k \in A}
	\binom{2N}{k}
	p^k
	(1-p)^{2N-k}
$.
Then,
since $A$ is a finite set,
the function 
$g : [0,1] \to [0.1]$
is polynomial, 
which is measurable.
Therefore,
\[
\{
	p
\mid
	v_p(A) \in B
\}
		=
\{
	p
\mid
	g(p) \in B
\}
		=
g^{-1}(B)
		\in
\mathcal{B}([0,1]).
\]

Next,
we will check the shape of the graph
of 
\[
v_p(\{u_k\})
	=
\binom{2N}{k}
p^k
(1-p)^{2N-k}
\]
as a function of 
$p \in [0,1]$,
where $k \in \{0, 1, \cdots, 2N\}$ is fixed.
We have
\begin{align*}
		&
v_0(\{u_k\})
	=
v_1(\{u_k\})
	=
0 ,
		\\&
\frac{
	\partial
}{
	\partial
	p
}
v_p(\{u_k\})
		=
\binom{2N}{k}
p^{k-1}
(1-p)^{2N-k-1}
(k - 2Np) ,
		\\&
\frac{
	\partial
}{
	\partial
	p
}
v_p(\{u_k\})
		\mid_{p=0}
		=
\frac{
	\partial
}{
	\partial
	p
}
v_p(\{u_k\})
		\mid_{p=1}
		=
0.
\end{align*}
Therefore,
$
v_p(\{u_k\})
$
is increasing on
$
[0, \frac{k}{2N}]
$,
and
is decreasing on
$
[\frac{k}{2N}, 1]
$.

Now let 
$p, q \in [0,1] \; (p < q)$
be fixed.
For any
$
r \in 
(p, q) \cap \mathbb{Q}
$
and
$
k = 0, 1, \cdots, 2N
$,
we will define
$
\delta_{r,k} > 0
$
and
$
B_{r,k}
	\in
\mathcal{B}([0,1])
$
by the following procedure.
\begin{enumerate}
\item [Case $r < \frac{k}{2N}$:]
Pick
$
\delta_{r,k}
 > 0$
such as
$
p
	<
r - \delta_{r,k}
	<
r
	<
r + \delta_{r,k}
	<
\frac{k}{2N} .
$
Define
$B_{r,k}$
by
\begin{equation}
\label{eq:isoCaseOne}
B_{r,k}
	:=
[
	v_{
		r - \delta_{r,k}
	} (\{u_k\}) ,
	v_{
		r + \delta_{r,k}
	} (\{u_k\})
] .
\end{equation}

\item [Case $r = \frac{k}{2N}$:]
Pick
$
\delta_{r,k}
 > 0$
such as
$
p
	<
r - \delta_{r,k}
	<
r
	=
\frac{k}{2N}
	<
r + \delta_{r,k}
	<
q .
$
Define
$B_{r,k}$
by
\begin{equation}
\label{eq:isoCaseTwo}
B_{r,k}
	:=
[
\max\big(
	v_{
		r - \delta_{r,k}
	} (\{u_k\}) ,
	v_{
		r + \delta_{r,k}
	} (\{u_k\})
\big),
v_r(\{u_k\})
] .
\end{equation}

\item [Case $r > \frac{k}{2N}$:]
Pick
$
\delta_{r,k}
 > 0$
such as
$
\frac{k}{2N} 
	<
r - \delta_{r,k}
	<
r
	<
r + \delta_{r,k}
	<
q .
$
Define
$B_{r,k}$
by
\begin{equation}
\label{eq:isoCaseThree}
B_{r,k}
	:=
[
	v_{
		r + \delta_{r,k}
	} (\{u_k\}) ,
	v_{
		r - \delta_{r,k}
	} (\{u_k\})
] .
\end{equation}

\end{enumerate}
Then in these three cases, we have
\begin{equation}
\label{eq:isoCaseZero}
\{
	v_s
\mid
	v_s(\{u_k\})
		\in
	B_{r,k}
\}
			\subset
\{
	v_s
\mid
	p < s < q
\} .
\end{equation}
We can easily check
\begin{equation}
\label{eq:isoCaseFinal}
\{
	v_s
\mid
	p < s < q
\} 
		=
\bigcup_{
	r \in
	(p, q) \cap \mathbb{Q}
}
	\,
\bigcap_{k = 0}^{2N}
\{
	v_s
\mid
	v_s(\{u_k\})
		\in
	B_{r,k}
\} 
		\in
\Sigma_{
	X_2
} ,
\end{equation}
which completes the proof.

\end{proof}
\fi 

Now let us define the Uncertainty space
$
	\mathfrak{Z}_2
$
by
\begin{equation}
\label{eq:UseqZtwo}
	\mathfrak{Z}_2
=
	(Z_2, \Sigma_{Z_2}, U_{Z_2})
:=
	(U_{Z_1}, \Sigma_{\mathfrak{Z_1}}, \{ \lambda \})
\end{equation}
where
$\lambda$
is the Lebesgue measure on 
$[0,1]$.
This is well-defined by
Note \ref{Note:StwoXiso}.
Since
the Lebesgue measure is additive,
we obtain
for
$
f \in 
L^{\infty}
(
\mathfrak{Z}_2
)
$,
\begin{equation}
\label{eq:LebesqueStowX}
\xi_{
	\mathfrak{Z_2}
}
(f)
(\lambda)
		=
I_{
	Z_2
}^{\lambda}
(f)
		=
\int_0^1
	dp \,
	f(v_p) .
\end{equation}
Then, by 
(\ref{eq:VnInd}),
\begin{equation}
\label{eq:VthreeF}
V_3(f)(\lambda)
			=
\xi_{
	\mathfrak{Z_2}
}\big( 
	V_2(f)
\big)
(\lambda)
			=
\int_0^1
	dp \,
	V_2(f)
	(v_p) .
\end{equation}

Hence, by using
(\ref{eq:VtwoFone}),
(\ref{eq:VtwoFtwo})
(\ref{eq:VtwoFthree})
and
(\ref{eq:VtwoFfour}),
we get
\begin{align}
V_3(f_1)(\lambda)
		&=
\int_0^1
	dp \,
\frac{1}{3}
\mathfrak{u}(1)
		=
\frac{1}{3}
\mathfrak{u}(1) ,
		\label{eq:VtwoFoneVH}
		\\
V_3(f_2)(\lambda)
		&=
\int_0^1
	dp \,
\frac{2}{3}
\mathfrak{u}(1)
\sum_{k=0}^{2N}
	\big(
		\frac{k}{2N}
	\big)^{\alpha}
v_p
(
\{
	u_k
\}
) 
		\nonumber
		\\&=
\frac{2}{3}
\mathfrak{u}(1)
\int_0^1
	dp \,
\sum_{k=0}^{2N}
	\big(
		\frac{k}{2N}
	\big)^{\alpha}
v_p
(
\{
	u_k
\}
) ,
		\label{eq:VtwoFtwoVH}
		\\
V_3(f_3)(\lambda)
		&=
\int_0^1
	dp \,
\frac{2}{3}
\mathfrak{u}(1) 
		=
\frac{2}{3}
\mathfrak{u}(1) ,
		\label{eq:VtwoFthreeVH}
		\\
V_3(f_4)(\lambda)
		&=
\int_0^1
	dp \,
\mathfrak{u}(1)
\Big(
	\frac{1}{3}
		+
	\frac{2}{3}
\sum_{k=0}^{2N}
	\big(
		1 - 
		\frac{k}{2N}
	\big)^{\alpha} 
		\,
v_p
(
\{
	u_k
\}
)
\Big) 
		\\&=
\mathfrak{u}(1)
\Big(
	\frac{1}{3}
			+
	\frac{2}{3}
	\int_0^1
		dp \,
		\sum_{k=0}^{2N}
			\big(
				1 - 
				\frac{k}{2N}
			\big)^{\alpha} 
		\,
		v_p
		(
		\{
			u_k
		\}
		)
\Big) .
		\label{eq:VtwoFfourVH}
\end{align}

When
$
\alpha = 1
$,
i.e.
the additive case, 
since
$
\sum_{k=0}^{2N}
	k
		\,
	v_p(\{u_k\})
		=
2Np
$,
we obtain
\if \MkLonger  1
\begin{align}
V_3(f_2)(\lambda)
		&=
\frac{2}{3}
\mathfrak{u}(1)
\int_0^1 dp \,
p
		\nonumber
		\\&=
\frac{1}{3}
\mathfrak{u}(1) ,
		\\
V_3(f_4)(\lambda)
		\nonumber
		&=
\mathfrak{u}(1)
\Big(
	\frac{1}{3}
			+
	\frac{2}{3}
	\int_0^1
		dp \,
		\sum_{k=0}^{2N}
			\big(
				1 - 
				\frac{k}{2N}
			\big)
		\,
		v_p
		(
		\{
			u_k
		\}
		)
\Big) 
		\nonumber
		\\&=
\mathfrak{u}(1)
\Big(
	1
			-
	\frac{2}{3}
	\int_0^1
		dp \,
		\sum_{k=0}^{2N}
			\frac{k}{2N}
		\,
		v_p
		(
		\{
			u_k
		\}
		)
\Big) 
		\nonumber
		\\&=
\mathfrak{u}(1)
\Big(
	1
			-
	\frac{2}{3}
	\int_0^1
		dp \,
		p
\Big) 
		\nonumber
		\\&=
\frac{2}{3}
\mathfrak{u}(1) .
\end{align}
\else 
\begin{align}
V_3(f_2)(\lambda)
		&=
\frac{2}{3}
\mathfrak{u}(1)
\int_0^1 dp \,
p
		=
\frac{1}{3}
\mathfrak{u}(1) ,
		\\
V_3(f_4)(\lambda)
		\nonumber
		&=
\mathfrak{u}(1)
\Big(
	\frac{1}{3}
			+
	\frac{2}{3}
	\int_0^1
		dp \,
		\sum_{k=0}^{2N}
			\big(
				1 - 
				\frac{k}{2N}
			\big)
		\,
		v_p
		(
		\{
			u_k
		\}
		)
\Big) 
		\nonumber
		\\&=
\mathfrak{u}(1)
\Big(
	1
			-
	\frac{2}{3}
	\int_0^1
		dp \,
		\sum_{k=0}^{2N}
			\frac{k}{2N}
		\,
		v_p
		(
		\{
			u_k
		\}
		)
\Big) 
		=
\mathfrak{u}(1)
\Big(
	1
			-
	\frac{2}{3}
	\int_0^1
		dp \,
		p
\Big) 
		=
\frac{2}{3}
\mathfrak{u}(1) .
\end{align}
\fi 
Hence we fail to represent
(\ref{eq:EllsbExp})
with them.

However, if
$\alpha > 1$,
then
\begin{equation}
\label{eq:VstarAlTwoPlus}
V_3(f_1)(\lambda)
		=
\frac{1}{3} \mathfrak{u}(1)
		>
V_3(f_2)(\lambda),
			\quad \textrm{and} \quad
V_3(f_3)(\lambda)
		=
\frac{2}{3} \mathfrak{u}(1)
		>
V_3(f_4)(\lambda) 
\end{equation}
since
$
\big(\frac{k}{2N}\big)^{\alpha} 
	<
\frac{k}{2N} ,
$
which supports
(\ref{eq:EllsbExp}).

Note that 
in
the U-sequence
$\mathbb{Z}$,
$\mathfrak{Z}_n = \mathfrak{T}$
if
$n \ge 3$ .

\section{Categories of Uncertainty Spaces}
\label{sec:catUnc}

In this section,
we introduce two categories of uncertainty spaces,
one is with arrows based on absolutely continuous relation,
and the other is with measure preserving arrows.

For those who are not so familiar with the category theory,
please refer to \cite{maclane1997}.

\subsection{Categories $\Unc$ and $\mpUnc$}
\label{sec:UncAndMpUnc}

\begin{defn}
\label{defn:UncMap}

Let
$
\mathfrak{X}
	=
(X, \Sigma_X, U_X)
$
and
$
\mathfrak{Y}
	=
(Y, \Sigma_Y, U_Y)
$
be two uncertainty spaces,
and
$
f : 
(X, \Sigma_X)
	\to
(Y, \Sigma_Y)
$
be a measurable map.
\begin{enumerate}
\item
$f$
is called a
\newword{
$\Unc$-map
}
from
$\mathfrak{X}$
to
$\mathfrak{Y}$
if
for all
$u \in U_X$
there exists
$v \in U_Y$
such that
$
u \circ f^{-1}
	\ll
v
$.

\item
$f$
is called an
\newword{
$\mpUnc$-map
}
from
$\mathfrak{X}$
to
$\mathfrak{Y}$
if
for all
$u \in U_X$,
$
u \circ f^{-1}
	\in
U_Y
$.

\end{enumerate}
\end{defn}

Note that every 
$\mpUnc$-map
is a
$\Unc$-map.

\begin{prop}
\label{prop:compUnc}
Let
$
\mathfrak{X}
	=
(X, \Sigma_X, U_X)
$,
$
\mathfrak{Y}
	=
(Y, \Sigma_Y, U_Y)
$
and
$
\mathfrak{Z}
	=
(Z, \Sigma_Z, U_Z)
$
be uncertainty spaces,
and
$
f : 
(X, \Sigma_X)
	\to
(Y, \Sigma_Y)
$
and
$
g : 
(Y, \Sigma_Y)
	\to
(Z, \Sigma_Z)
$
be measurable maps.
\begin{enumerate}
\item
If $f$ and $g$ are 
$\Unc$-maps,
from
$\mathfrak{X}$
to
$\mathfrak{Y}$
and
from
$\mathfrak{Y}$
to
$\mathfrak{Z}$,
respectively,
then
$g \circ f$ is 
a $\Unc$-map from
$\mathfrak{X}$
to
$\mathfrak{Z}$
.

\item
If $f$ and $g$ are 
$\mpUnc$-maps,
from
$\mathfrak{X}$
to
$\mathfrak{Y}$
and
from
$\mathfrak{Y}$
to
$\mathfrak{Z}$,
respectively,
then
$g \circ f$ is 
an $\mpUnc$-map from
$\mathfrak{X}$
to
$\mathfrak{Z}$
.

\end{enumerate}

\end{prop}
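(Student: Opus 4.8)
The plan is to reduce everything to the set-level identity $(g \circ f)^{-1} = f^{-1} \circ g^{-1}$ between the induced maps on $\sigma$-algebras, which yields, for any capacity $u$ on $(X,\Sigma_X)$, the equality of set functions on $\Sigma_Z$
\[
u \circ (g \circ f)^{-1} = (u \circ f^{-1}) \circ g^{-1}.
\]
The measurability of $g \circ f$ is the standard fact that a composite of measurable maps is measurable, so in both parts it only remains to verify the condition on capacities.

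For part (2) I would argue directly. Fix $u \in U_X$. Since $f$ is an $\mpUnc$-map, $u \circ f^{-1} \in U_Y$; applying the $\mpUnc$-property of $g$ to this capacity gives $(u \circ f^{-1}) \circ g^{-1} \in U_Z$. By the displayed identity this is exactly $u \circ (g \circ f)^{-1}$, so $g \circ f$ is an $\mpUnc$-map.

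For part (1) I would first isolate two elementary properties of the absolute-continuity relation $\ll$ between capacities on a common measurable space: (i) it is transitive, since $\rho(A) = 0 \Rightarrow \nu(A) = 0 \Rightarrow \mu(A) = 0$ whenever $\mu \ll \nu \ll \rho$; and (ii) it is preserved by pushforward along a measurable map $h$, that is, $\mu \ll \nu$ implies $\mu \circ h^{-1} \ll \nu \circ h^{-1}$, because $\nu(h^{-1}(C)) = 0$ forces $\mu(h^{-1}(C)) = 0$. (One also notes in passing that a pushforward of a capacity is again a capacity, since the conditions $u(\emptyset)=0$, $u(X)=1$ and monotonicity are all inherited.) Now fix $u \in U_X$. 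Applying the $\Unc$-property of $f$ gives $v \in U_Y$ with $u \circ f^{-1} \ll v$; applying the $\Unc$-property of $g$ to $v$ gives $w \in U_Z$ with $v \circ g^{-1} \ll w$. By (ii), $(u \circ f^{-1}) \circ g^{-1} \ll v \circ g^{-1}$, and then by (i), $(u \circ f^{-1}) \circ g^{-1} \ll w$. Using the displayed identity once more, $u \circ (g \circ f)^{-1} \ll w$ with $w \in U_Z$, which is what we wanted.

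I do not anticipate a genuine obstacle here: the content is entirely in the preimage identity $(g\circ f)^{-1} = f^{-1}\circ g^{-1}$ together with the two bookkeeping facts (i) and (ii) about $\ll$. The only place to be slightly careful is to keep the order of composition straight when pushing capacities forward, since the operation $u \mapsto u \circ f^{-1}$ is covariant in $f$ while the preimage map is contravariant.
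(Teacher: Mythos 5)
Your proof is correct; the paper in fact omits a proof of this proposition (treating it as routine), and your argument via $(g\circ f)^{-1}=f^{-1}\circ g^{-1}$, the transitivity of $\ll$, and its stability under pushforward is exactly the standard verification the authors intend. Nothing further is needed.
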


\begin{defn}
\label{defn:catUnc}
\begin{enumerate}
\item
$\Unc$
is the category whose objects are
all uncertainty spaces
and
arrows between objects are 
$\Unc$-maps.

\item
$\mpUnc$
is the category whose objects are
all uncertainty spaces
and
arrows between objects are 
$\mpUnc$-maps.

\end{enumerate}

\end{defn}

Since
an $\mpUnc$-map is always a
$\Unc$-map,
the category $\mpUnc$
is a full subcategory of
$\Unc$.

In case both 
$U_X
	=\{ u\}
$
and
$U_Y
	=\{ v\}
$
are singleton sets,
and $u$ and $v$ are probability measures.
Then, a $\Unc$-map $f$ satisfies
$
u \circ f^{-1} \ll v
$,
which means that 
$f$ is a null-preserving map defined in
\cite{ANR_2020b}
and
\cite{AR_2019}.
Therefore the category $\Prob$ defined there is a subcategory of $\Unc$.

Similarly,
an $\mpUnc$-map $f$ satisfies
$
u \circ f^{-1} = v
$,
which means that 
$f$ is a measure-preserving function.
Therefore, the category $\mpProb$ defined in 
\cite{ANR_2020b}
is a subcategory of $\mpUnc$.

\begin{prop}
\label{prop:terminalObj}
The terminal uncertainty space
$
\mathfrak{T}
	=
(
	\{*\},
	\{ \emptyset, \{*\}\},
	\{ \bar{*}\}
)
$
is a terminal object in both
$\Unc$
and
$\mpUnc$.
\end{prop}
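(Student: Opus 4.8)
The plan is the standard verification of a terminal object: for an arbitrary uncertainty space $\mathfrak{X} = (X, \Sigma_X, U_X)$ I exhibit one arrow $\mathfrak{X} \to \mathfrak{T}$ in each category and then observe it is the only possibility. At the level of underlying measurable spaces there is a single candidate, the constant map $!_X : X \to \{*\}$ given by $!_X(x) := *$; it is measurable since $!_X^{-1}(\emptyset) = \emptyset \in \Sigma_X$ and $!_X^{-1}(\{*\}) = X \in \Sigma_X$. A $\Unc$-map or $\mpUnc$-map $\mathfrak{X} \to \mathfrak{T}$ is by Definition \ref{defn:UncMap} in particular a measurable map into a one-point space, and there is exactly one such map, namely $!_X$; hence uniqueness is automatic and the only real content is checking that $!_X$ meets the condition on the uncertainty structures.

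I would first treat $\mpUnc$. For each $u \in U_X$ I compute the pushforward set function $u \circ !_X^{-1}$ on $\{\emptyset, \{*\}\}$: by Definition \ref{def:capacity}(1), $(u \circ !_X^{-1})(\emptyset) = u(\emptyset) = 0 = \bar{*}(\emptyset)$ and $(u \circ !_X^{-1})(\{*\}) = u(!_X^{-1}(\{*\})) = u(X) = 1 = \bar{*}(\{*\})$, so $u \circ !_X^{-1} = \bar{*}$, which lies in the capacity set $\{\bar{*}\}$ of $\mathfrak{T}$. Thus $!_X$ is an $\mpUnc$-map from $\mathfrak{X}$ to $\mathfrak{T}$. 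Since every $\mpUnc$-map is a $\Unc$-map — or directly, taking $v := \bar{*}$ in Definition \ref{defn:UncMap}(1) and using $u \circ !_X^{-1} = \bar{*} \ll \bar{*}$ — the same map $!_X$ is also a $\Unc$-map. Combined with the uniqueness noted above, $\mathfrak{T}$ is terminal in both $\Unc$ and $\mpUnc$.

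I do not expect a genuine obstacle here: the argument is mechanical, and even the usual caveat that a pushforward of a capacity must again be a capacity is bypassed, because the pushforward is identified outright with $\bar{*}$. The only step worth writing out carefully is the two-line pushforward computation above, since that is where the normalizations $u(\emptyset) = 0$ and $u(X) = 1$ from the capacity axioms are actually used.
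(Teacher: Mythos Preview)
Your proof is correct and follows essentially the same approach as the paper: both identify the unique measurable map $!_X$ into the one-point space, compute $u \circ !_X^{-1} = \bar{*}$ using $u(\emptyset)=0$ and $u(X)=1$, and conclude that $!_X$ is an $\mpUnc$-map (hence also a $\Unc$-map). Your version is slightly more explicit about uniqueness and about the passage from $\mpUnc$ to $\Unc$, but the argument is the same.
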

\begin{proof}
All we need to show is that the unique measurable map
$
! :
(X, \Sigma_X, U_X)
	\to
\mathfrak{T}
$
is an $\mpUnc$-map.
However, since
$
!^{-1}(\emptyset) = \emptyset
$
and
$
!^{-1}(\{*\}) = X
$,
we have
for any
$u \in U_X$,
\[
(u \, \circ \, !^{-1})(\emptyset)
	=
u(\emptyset)
	=
0
	=
\bar{*}(\emptyset)
		\quad \textrm{and} \quad
(u \, \circ \, !^{-1})(\{*\})
	=
u(X)
	=
1
	=
\bar{*}(\{*\}).
\]
Therefore
$
u \, \circ \, !^{-1}
	=
\bar{*} ,
$
which means that
$!$
is an $\mpUnc$-map.

\end{proof}

\subsection{U-sequences and $\Unc$-maps}
\label{sec:UseqAndUncMap}

In this section,
we will examines if the U-sequences appeared in 
the examples examining Ellsberg's paradox in
Section \ref{sec:hieraUncertain}
can be considered as a sequence whose components are connected by $\Unc$-maps.

Before going into individual cases, we will prepare the following proposition.

\begin{prop}
\label{prop:UncMapLemma}
Let
$
\mathfrak{X} = (X, \Sigma_{X}, U_{X})
$
and
$
\mathfrak{Y} = (Y, \Sigma_Y, U_{Y})
$
be two uncertainty spaces.
If there exists
$v \in U_Y$
such that
for every
$C \in \Sigma_{Y}$,
$v(C) = 0$
implies
$C = \emptyset$,
then
every measurable map
$
f :
\mathfrak{X}
	\to
\mathfrak{Y}
$
is a $\Unc$-map.
\end{prop}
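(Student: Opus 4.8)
The plan is to use the single capacity $v \in U_Y$ supplied by the hypothesis as a \emph{common} dominating capacity for every pushed-forward capacity $u \circ f^{-1}$, so that the existential clause in Definition \ref{defn:UncMap}(1) is satisfied in the strongest possible way: the same $v$ works for all $u \in U_X$ at once.

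First I would fix an arbitrary $u \in U_X$ and check that $u \circ f^{-1}$ is genuinely a capacity on $(Y, \Sigma_Y)$, so that the assertion $u \circ f^{-1} \ll v$ is even meaningful. Measurability of $f$ gives $f^{-1}(C) \in \Sigma_X$ for each $C \in \Sigma_Y$, and the three defining properties of a capacity in Definition \ref{def:capacity} are inherited from $u$ because $f^{-1}$ sends $\emptyset$ to $\emptyset$, sends $Y$ to $X$, and preserves inclusions; hence $(u \circ f^{-1})(\emptyset) = 0$, $(u \circ f^{-1})(Y) = u(X) = 1$, and $u \circ f^{-1}$ is increasing.

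Next I would unwind the meaning of $u \circ f^{-1} \ll v$: it requires that for every $C \in \Sigma_Y$, $v(C) = 0$ implies $(u \circ f^{-1})(C) = 0$. But by hypothesis $v(C) = 0$ forces $C = \emptyset$, and then $(u \circ f^{-1})(\emptyset) = u(f^{-1}(\emptyset)) = u(\emptyset) = 0$. Thus $u \circ f^{-1} \ll v$, and since $u \in U_X$ was arbitrary and $v \in U_Y$ does not depend on $u$, the measurable map $f$ is a $\Unc$-map from $\mathfrak{X}$ to $\mathfrak{Y}$.

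I do not anticipate a real obstacle here; once the definitions are spelled out the statement is almost a tautology. The only step warranting a line of care is the verification that $u \circ f^{-1}$ is a capacity, together with the observation that the condition ``$v(C) = 0 \Rightarrow C = \emptyset$'' is precisely what trivializes the collection of $v$-null sets, after which absolute continuity against $v$ holds automatically for \emph{any} capacity on $(Y,\Sigma_Y)$, in particular for every pushforward $u \circ f^{-1}$.
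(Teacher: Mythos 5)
Your argument is correct and is essentially the paper's own proof: both unwind the definition of absolute continuity, use the hypothesis that $v(C)=0$ forces $C=\emptyset$, and conclude $(u\circ f^{-1})(C)=u(\emptyset)=0$ so that the single $v$ dominates every pushforward. The extra check that $u\circ f^{-1}$ is a capacity is harmless but not needed, since Definition \ref{defn:UncMap}(1) only asks for the absolute-continuity relation $u\circ f^{-1}\ll v$.
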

\begin{proof}
For
$C \in \Sigma_{Y}$,
assume that
$v(C) = 0$.
Then,
$C = \emptyset$.
Therefore,
for any
$u \in U_X$,
\[
(u \circ f^{-1})(C)
	=
u (f^{-1}(C))
	=
u (f^{-1}(\emptyset))
	=
u (\emptyset)
	=
0,
\]
which means
$
u \circ f^{-1}
	\ll
v .
$
\end{proof}

First let us consider maps between 
$
\mathfrak{X}_0 = (X_0, \Sigma_{X_0}, U_{X_0})
$
and
$
\mathfrak{X}_1 = (X_1, \Sigma_{X_1}, U_{X_1})
$
where
$
X_0 = \{ R, B, Y\}
$
and
$
X_1 = \{
	u_k
\mid
	k = 0, \cdots, 2N
\}
$.

Define a map
$
f : X_0 \to X_1
$
by
\if \MkLonger 1
\[
f(R) := u_N, \;
f(B) := u_{2N}, \;
f(Y) := u_{0} .
\]
\else 
$
f(R) := u_N,
$
$
f(B) := u_{2N}
$
and
$
f(Y) := u_{0} .
$
\fi 
Since
$
U_{X_1}
	=
\{
v^{\mathrm{u}}
\}
$
where
$
v^{\mathrm{u}}
$
is a uniform distribution over a finite universe,
every non-empty subset
$C$
of
$X_1$
has non-zero mass
$
v^{\mathrm{u}}(C) 
$.
Therefore by 
Proposition \ref{prop:UncMapLemma},
$f$ becomes a $\Unc$-map.

Similarly,
any measurable map 
from
$
\mathfrak{X}_0
$
to
$
\mathfrak{Y}_1
$
or
from
$
\mathfrak{X}_0
$
to
$
\mathfrak{Z}_1
$
are $\Unc$-maps
since
both
$
U_{Y_1}
	=
\{
v^{\mathrm{b}}
\}
$
and 
$
U_{Z_1}
	=
\{
v_p
	\mid
p \in [0, 1]
\}
$
consists of
the binomial probability measures
that
satisfy the assumption of 
Proposition \ref{prop:UncMapLemma}
as long as
$p \in (0,1)$.

However, if we consider the measurable maps from 
$\mathfrak{Z}_1$
to
$
\mathfrak{Z}_2
$,
the situation has been changed.

\begin{prop}
\label{prop:UncMapLemmaLambda}
Let
$
\mathfrak{X} = (X, \Sigma_{X}, U_{X})
$
and
$
\mathfrak{Y} = (Y, \Sigma_Y, U_{Y})
$
be two uncertainty spaces
and
$
f :
\mathfrak{X}
	\to
\mathfrak{Y}
$
be a 1-1 measurable map.
If
$\{y\} \in U_Y$
and
$v(\{y\}) = 0$
for
every 
$v \in U_Y$
and
$y \in Y$,
and 
for every
$u \in U_X$
there exists
$x \in X$
such that
$
\{x\} \in U_X
$
and
$
u(\{x\})
	> 0
$,
then
$f$
is not a $\Unc$-map.
\end{prop}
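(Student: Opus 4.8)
The plan is to negate the definition of $\Unc$-map directly. Recall that $f$ is a $\Unc$-map from $\mathfrak{X}$ to $\mathfrak{Y}$ precisely when, for every $u \in U_X$, there is some $v \in U_Y$ with $u \circ f^{-1} \ll v$; so to show $f$ is \emph{not} a $\Unc$-map it suffices to exhibit one $u \in U_X$ for which $u \circ f^{-1} \not\ll v$ holds for \emph{every} $v \in U_Y$. Since $\mathfrak{X}$ is an uncertainty space, $U_X$ is non-empty, so I would begin by fixing an arbitrary $u \in U_X$ and applying the hypothesis on $\mathfrak{X}$ to obtain a point $x \in X$ with $\{x\} \in \Sigma_X$ and $u(\{x\}) > 0$.

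The next step is to push this positive mass forward through $f$. Set $y := f(x)$. The assumption that every singleton of $Y$ is measurable yields $\{y\} \in \Sigma_Y$, and because $f$ is 1-1 we have $f^{-1}(\{y\}) = \{x\}$; injectivity is precisely what turns the automatic inclusion $\{x\} \subseteq f^{-1}(\{y\})$ into an equality. Hence
\[
(u \circ f^{-1})(\{y\}) = u\big(f^{-1}(\{y\})\big) = u(\{x\}) > 0 .
\]

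To finish, let $v \in U_Y$ be arbitrary. The hypothesis on $\mathfrak{Y}$ gives $v(\{y\}) = 0$, so $\{y\}$ is a $v$-null measurable set on which $u \circ f^{-1}$ is strictly positive, i.e. $u \circ f^{-1} \not\ll v$. Since $v$ was an arbitrary element of $U_Y$, no choice of $v$ can work for this particular $u$, and therefore $f$ fails to be a $\Unc$-map. The argument is essentially immediate once the definitions are unwound; the only places requiring a moment's care are the appeal to injectivity to identify $f^{-1}(\{y\})$ with $\{x\}$ and the check that $\{y\}$ is a bona fide measurable set, so that it can genuinely witness the failure of absolute continuity. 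I do not anticipate any real obstacle here.
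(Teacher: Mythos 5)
Your argument is correct and coincides with the paper's own proof: fix $u\in U_X$, take $x$ with positive $u$-mass on the (measurable) singleton, set $y=f(x)$, use injectivity to get $f^{-1}(\{y\})=\{x\}$, and conclude $(u\circ f^{-1})(\{y\})>0=v(\{y\})$ for every $v\in U_Y$, so absolute continuity fails for all $v$. Nothing is missing; the only difference is cosmetic (you correctly read the hypotheses as singleton measurability, i.e. $\{x\}\in\Sigma_X$ and $\{y\}\in\Sigma_Y$, which is what the paper intends).
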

\begin{proof}
For a given
$u \in U_X$,
let
$x_0 \in U_X$
with 
$
\{x_0\} \in U_X
$
and
$
u(\{x_0\})
	> 0
$, 
and define
$
y_0 := f(x_0).
$
Then 
$
f^{-1}(\{y_0\})
	=
\{ x_0\}
$
since
$f$
is 1-1.
Hence, we obtain
\[
(u \circ f^{-1})(\{y_0\})
	=
u(\{x_0\})
	> 0
\]
while
$v(\{y_0\}) = 0$.
Therefore
$
u \circ f^{-1}
$
is not absolutely continuous to
$v$,
which means
$f$
is not a $\Unc$-map.

\end{proof}

By Proposition \ref{prop:UncMapLemmaLambda},
any measurable 1-1 map
$
g : Z_1 \to Z_2
$
including the map
defined by
$
g(u_k) :=
v_{\frac{k}{2N}}
$
cannot be a $\Unc$-map.

\subsection{Embedding with a Dirac measure operator}
\label{sec:embEtaX}

Let
$
\mathfrak{X}
	=
(X, \Sigma_X, U_X)
$
be a given uncertainty space
throughout this section.
We will investigate a possible embedding of $X$ into $U_X$ by a $\Unc$-map.
As a candidate of an embedding operator, we will introduce a so-called Dirac measure.

\begin{defn}
\label{defn:xDot}
For
$x \in X$,
$
\dot{x} : \mathfrak{T} \to (X, \Sigma_X, U_X)
$
is a map defined by
$
\dot{x}(*) := x .
$
\end{defn}

Note that the map $\dot{x}$ is measurable since the $\sigma$-algebra of 
$\mathfrak{T}$ is the powerset.

\begin{prop}
\label{prop:xDot}
Let
$x \in X$,
and
$
\dot{x}
 : \mathfrak{T} \to (X, \Sigma_X, U_X)
$
be the function defined in
Definition \ref{defn:xDot}.
\begin{enumerate}
\item
$\dot{x}$
is an 
$\mpUnc$-map
if and only if
\;
$
\bar{*} \circ (\dot{x})^{-1}
	\in
U_X .
$

\item
$\dot{x}$
is a 
$\Unc$-map
if and only if
there exists
$
u \in U_X
$
such that
for any
$
A \in \Sigma_X
$
containing $x$,
$u(A)$
is strictly positive.
\end{enumerate}

\end{prop}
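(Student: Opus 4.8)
The plan is to push everything through the fact that the source $\mathfrak{T}$ has a one-point capacity set, $U_{\mathfrak{T}} = \{\bar{*}\}$, so that the ``for all $u$'' quantifiers in Definition~\ref{defn:UncMap} collapse and both parts reduce to a statement about the single push-forward capacity $\bar{*} \circ (\dot{x})^{-1}$. First I would record that $\dot{x}$ is measurable (noted after Definition~\ref{defn:xDot}), so it is a legitimate candidate arrow in both categories, and then compute its inverse image: for $A \in \Sigma_X$ one has $(\dot{x})^{-1}(A) = \{*\}$ if $x \in A$ and $(\dot{x})^{-1}(A) = \emptyset$ if $x \notin A$. Composing with $\bar{*}$ and using (\ref{eq:UStermDef}) gives
\[
(\bar{*} \circ (\dot{x})^{-1})(A) = \begin{cases} 1 & \textrm{if } x \in A, \\ 0 & \textrm{if } x \notin A, \end{cases}
\]
i.e.\ $\bar{*} \circ (\dot{x})^{-1}$ is the Dirac capacity concentrated at $x$; one checks along the way that this is indeed a capacity on $(X, \Sigma_X)$, which is anyway the special case $f = \dot{x}$, $u = \bar{*}$ of the general fact that the push-forward of a capacity along a measurable map is a capacity, so that the membership condition ``$\bar{*} \circ (\dot{x})^{-1} \in U_X$'' is meaningful.

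For (1): by Definition~\ref{defn:UncMap}(2), $\dot{x}$ is an $\mpUnc$-map iff $w \circ (\dot{x})^{-1} \in U_X$ for every $w \in U_{\mathfrak{T}}$; since $U_{\mathfrak{T}} = \{\bar{*}\}$ this is verbatim the asserted condition. For (2): by Definition~\ref{defn:UncMap}(1), $\dot{x}$ is a $\Unc$-map iff there exists $u \in U_X$ with $\bar{*} \circ (\dot{x})^{-1} \ll u$. Unwinding absolute continuity, this says $u(A) = 0 \Rightarrow (\bar{*} \circ (\dot{x})^{-1})(A) = 0$ for all $A \in \Sigma_X$; taking contrapositives and using the displayed formula above — together with $u(A) \ge u(\emptyset) = 0$ from Definition~\ref{def:capacity}, so that ``$u(A) \ne 0$'' is the same as ``$u(A) > 0$'' — this is exactly ``for every $A \in \Sigma_X$ with $x \in A$ we have $u(A) > 0$'', as claimed.

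Since the whole argument is definition-chasing, I do not expect any serious obstacle. The only points that require a moment's care are the direction of the relation $\ll$ (it is $\bar{*} \circ (\dot{x})^{-1} \ll u$, not the reverse) and the small well-posedness remark that push-forwards of capacities are capacities, so that part~(1)'s membership condition makes sense at all.
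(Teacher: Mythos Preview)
Your argument is correct and matches the paper's proof essentially line for line: the paper dispatches (1) with ``Obvious from the definition of $\mpUnc$-maps'' and for (2) writes out exactly the chain of equivalences you describe, passing from $\bar{*} \circ (\dot{x})^{-1} \ll u$ to the implication $u(A)=0 \Rightarrow x \notin A$ and then to its contrapositive. Your version is slightly more explicit (the Dirac formula, the remark that push-forwards of capacities are capacities, the observation that $u(A)\ge 0$ turns $u(A)\ne 0$ into $u(A)>0$), but there is no difference in approach.
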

\begin{proof}
\begin{enumerate}
\item
Obvious from the definition of 
$\mpUnc$-maps.

\item
\begin{align*}
		&
\exists u \in U_X \,.\,
\bar{*} \circ (\dot{x})^{-1}
			 \ll u
		\\\Longleftrightarrow \; &
\exists u \in U_X \,.\,
\forall A \in \Sigma_X \,.\,
\big(
	u(A) = 0 \;\to\; 
	(\bar{*} \circ (\dot{x})^{-1})(A) = 0
\big)
		\\\Longleftrightarrow \; &
\exists u \in U_X \,.\,
\forall A \in \Sigma_X \,.\,
\big(
	u(A) = 0 \;\to\; 
	x \notin A
\big)
		\\\Longleftrightarrow \; &
\exists u \in U_X \,.\,
\forall A \in \Sigma_X \,.\,
\big(
	x \in A
		\; \to\; 
	u(A) > 0 
\big) .
\end{align*}

\end{enumerate}
\end{proof}

\begin{defn}
\label{defn:etaX}
Let
$
\mathfrak{X}
	=
(X, \Sigma_X)
$
be a measurable space.
For
$x \in X$,
define a map
$
\eta_{\mathfrak{X}}(x)
	:
\Sigma_X
	\to
[0,1]
$
by
\begin{equation}
\label{eq:etaX}
\eta_{\mathfrak{X}}(x)
	:=
\bar{*} \circ (\dot{x})^{-1} .
\end{equation}

\end{defn}

Note that
by Proposition \ref{prop:xDot} (1),
$
\dot{x}
$
is an $\mpUnc$-map
if and only if
$
\eta_{\mathfrak{X}}(x)
	\in
U_X
$ .

\begin{prop}
\label{prop:etaXxX}
For
a measurable space
$\mathfrak{X} = (X, \Sigma_X)$,
$x \in X$
and
$A \in \Sigma_X$,
we have
\begin{equation}
\label{eq:etaXxX}
\eta_{\mathfrak{X}}(x)(A)
	=
\mathbb{1}_X(A)(x) .
\end{equation}
In other words,
\begin{equation}
\label{eq:etaXxY}
\varepsilon_{\mathfrak{X}}(A) \circ \eta_{\mathfrak{X}}
	=
\mathbb{1}_X(A) .
\end{equation}
\end{prop}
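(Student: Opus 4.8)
The plan is simply to unwind the three definitions involved---$\eta_{\mathfrak{X}}$ from Definition~\ref{defn:etaX}, $\dot{x}$ from Definition~\ref{defn:xDot}, and the capacity $\bar{*}$ on the terminal uncertainty space from Note~\ref{Note:UseqTerm}---since all the content of the statement is in the bookkeeping of their defining clauses. First I fix $x \in X$ and $A \in \Sigma_X$ and compute the preimage $(\dot{x})^{-1}(A)$. Because $\dot{x}$ has the one-point set $\{*\}$ as its domain and $\dot{x}(*) = x$, this preimage equals $\{*\}$ when $x \in A$ and $\emptyset$ when $x \notin A$.

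Next I apply $\bar{*}$ to this set. Since $\bar{*}(\{*\}) = 1$ and $\bar{*}(\emptyset) = 0$, we obtain
\[
\eta_{\mathfrak{X}}(x)(A)
  = \big(\bar{*} \circ (\dot{x})^{-1}\big)(A)
  = \bar{*}\big((\dot{x})^{-1}(A)\big)
  = \begin{cases} 1 & \textrm{if } x \in A, \\ 0 & \textrm{if } x \notin A, \end{cases}
\]
which is exactly the value $\mathbb{1}_X(A)(x)$ of the characteristic map, establishing (\ref{eq:etaXxX}). A slicker route to the same line, if preferred: writing $\bar{*}(C) = \mathbb{1}_{\{*\}}(C)(*)$ for subsets $C$ of $\{*\}$, Proposition~\ref{prop:lemOne} applied with $h = \dot{x}$ gives $\mathbb{1}_{\{*\}}\big((\dot{x})^{-1}(A)\big) = \mathbb{1}_X(A) \circ \dot{x}$, hence $\eta_{\mathfrak{X}}(x)(A) = \big(\mathbb{1}_X(A) \circ \dot{x}\big)(*) = \mathbb{1}_X(A)(x)$.

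Finally, for the reformulation (\ref{eq:etaXxY}), I read $\varepsilon_{\mathfrak{X}}(A)$ as the evaluation-at-$A$ map on capacities and $\eta_{\mathfrak{X}}$ as the map $x \mapsto \eta_{\mathfrak{X}}(x)$, in the same spirit as the remark following (\ref{eq:etaOneEpsilon}). Then $\big(\varepsilon_{\mathfrak{X}}(A) \circ \eta_{\mathfrak{X}}\big)(x) = \varepsilon_{\mathfrak{X}}(A)\big(\eta_{\mathfrak{X}}(x)\big) = \eta_{\mathfrak{X}}(x)(A)$, so (\ref{eq:etaXxX}) upgrades pointwise to the equality of maps $\varepsilon_{\mathfrak{X}}(A) \circ \eta_{\mathfrak{X}} = \mathbb{1}_X(A)$. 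There is no real obstacle here; the only thing requiring a little care is keeping the two readings of $\eta_{\mathfrak{X}}$ and $\varepsilon_{\mathfrak{X}}(A)$ straight---$\eta_{\mathfrak{X}}$ as a map on points versus $\eta_{\mathfrak{X}}(x)$ as a capacity, and $\varepsilon_{\mathfrak{X}}(A)$ as a set-function entry versus as an evaluation map---exactly the bookkeeping already flagged after (\ref{eq:etaOneEpsilon}).
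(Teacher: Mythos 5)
Your proposal is correct and follows essentially the same route as the paper's proof: unwind $\eta_{\mathfrak{X}}(x) = \bar{*}\circ(\dot{x})^{-1}$, split on whether $x \in A$, and read off $1$ or $0$ to match $\mathbb{1}_X(A)(x)$. The extra remarks (the alternative via Proposition~\ref{prop:lemOne} and the explicit pointwise-to-map upgrade for (\ref{eq:etaXxY})) are fine but add nothing the paper's argument lacks.
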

\begin{proof}
\if \MkLonger  1
\begin{align*}
\eta_{\mathfrak{X}}(x)(A)
		&=
(\bar{*} \circ \dot{x}^{-1})(A)
		\\&=
\bar{*} ( \dot{x}^{-1}(A))
		\\&=
\begin{cases}
	\bar{*} (\{*\})
			& (\textrm{if } x \in A) ,
		\\
	\bar{*} (\emptyset)
			& (\textrm{if } x \notin A) 
\end{cases}
		\\&=
\begin{cases}
	1
			& (\textrm{if } x \in A) ,
		\\
	0
			& (\textrm{if } x \notin A) 
\end{cases}
		\\&=
\mathbb{1}_X(A)(x) .
\end{align*}
\else 
\begin{align*}
\eta_{\mathfrak{X}}(x)(A)
		&=
(\bar{*} \circ \dot{x}^{-1})(A)
		=
\bar{*} ( \dot{x}^{-1}(A))
		\\&=
\begin{cases}
	\bar{*} (\{*\})
			& (\textrm{if } x \in A)
		\\
	\bar{*} (\emptyset)
			& (\textrm{if } x \notin A) 
\end{cases}
		=
\begin{cases}
	1
			& (\textrm{if } x \in A) 
		\\
	0
			& (\textrm{if } x \notin A) 
\end{cases}
		=
\mathbb{1}_X(A)(x) .
\end{align*}
\fi 
\end{proof}

\begin{prop}
\label{prop:mmuXmeasurable}
For
an uncertainty space
$
\mathfrak{X} =
(X, \Sigma_X, U_X)
$,
$
\eta_{\mathfrak{X}}
	:
(X, \Sigma_X)
	\to
(U_X, \Sigma_{\mathfrak{X}})
$
is measurable.
\end{prop}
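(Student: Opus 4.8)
The plan is to reduce measurability of $\eta_{\mathfrak{X}}$ to checking preimages of a generating family of the target $\sigma$-algebra $\Sigma_{\mathfrak{X}}$, and then to identify those preimages explicitly by means of the identity in Proposition \ref{prop:etaXxX}. Recall that a map into a measurable space is measurable as soon as the preimage of every member of some family generating the target $\sigma$-algebra is measurable. By Definition \ref{defn:uncertaintySp} together with Proposition \ref{prop:sigmaGX}, the $\sigma$-algebra $\Sigma_{\mathfrak{X}}$ is generated by the sets $(\varepsilon_{\mathfrak{X}}(A))^{-1}(B)$ with $A \in \Sigma_X$ and $B \in \Sigma_{\mathbb{I}}$. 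Hence it suffices to show that $\eta_{\mathfrak{X}}^{-1}\big((\varepsilon_{\mathfrak{X}}(A))^{-1}(B)\big) \in \Sigma_X$ for every such $A$ and $B$.

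The next step is to compute this preimage. Using the identity $\varepsilon_{\mathfrak{X}}(A) \circ \eta_{\mathfrak{X}} = \mathbb{1}_X(A)$ established in Proposition \ref{prop:etaXxX}, I would write
\[
\eta_{\mathfrak{X}}^{-1}\big((\varepsilon_{\mathfrak{X}}(A))^{-1}(B)\big)
	=
(\varepsilon_{\mathfrak{X}}(A) \circ \eta_{\mathfrak{X}})^{-1}(B)
	=
(\mathbb{1}_X(A))^{-1}(B) .
\]
Since $\mathbb{1}_X(A)$ is a measurable (finite step) function on $\mathfrak{X}$ by Proposition \ref{prop:charFunMeas} --- concretely, $(\mathbb{1}_X(A))^{-1}(B)$ is one of $\emptyset$, $A$, $X \setminus A$, or $X$, according to whether $0$ and $1$ belong to $B$ --- this set lies in $\Sigma_X$. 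This finishes the verification, and therefore $\eta_{\mathfrak{X}}$ is measurable.

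I do not expect a genuine obstacle here: the only points that require care are to invoke the ``check on generators'' criterion for measurability rather than attempting to describe $\Sigma_{\mathfrak{X}}$ set by set, and to have Proposition \ref{prop:etaXxX} available so as to rewrite $\varepsilon_{\mathfrak{X}}(A) \circ \eta_{\mathfrak{X}}$ as the concrete characteristic map $\mathbb{1}_X(A)$, whose measurability is already recorded.
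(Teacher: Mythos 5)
Your proposal is correct and follows essentially the same route as the paper's proof: both check preimages of the generating sets $\{u \in U_X \mid u(A) \in B\}$ of $\Sigma_{\mathfrak{X}}$, rewrite them via Proposition \ref{prop:etaXxX} as $(\mathbb{1}_X(A))^{-1}(B)$, and conclude measurability from Proposition \ref{prop:charFunMeas}. Your explicit appeal to the ``check on generators'' criterion is only a minor elaboration of what the paper leaves implicit.
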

\begin{proof}
For
$
A \in \Sigma_X
$
and
$
Z \in \Sigma_{\mathbb{I}}
$,
\if \MkLonger  1
\begin{align*}
			&
\eta_{\mathfrak{X}}^{-1}
\big(
	\{
		u
			\in
		\mathfrak{S} X
	\mid
		u
		(A)
			\in
		Z
	\}
\big)
			\\&=
\big\{
	x \in X
\mid
	\eta_{\mathfrak{X}}(x)
		\in
	\{
		u
			\in
		\mathfrak{S} X
	\mid
		u
		(A)
			\in
		Z
	\}
\big\}
			\\&=
\big\{
	x \in X
\mid
	\eta_{\mathfrak{X}}
	(x)(A)
		\in
	Z
\big\}
			\\&=
\big\{
	x \in X
\mid
	\mathbb{1}_X(A)(x)
		\in
	Z
\big\}
					&(\textrm{
		by Proposition \ref{prop:etaXxX}
					})
			\\&=
(
	\mathbb{1}_X(A)
)^{-1}
(Z)
			\in
\Sigma_X .
					&(\textrm{
		by Proposition \ref{prop:charFunMeas}
					})
\end{align*}
\else 
by Proposition \ref{prop:etaXxX}
and
Proposition \ref{prop:charFunMeas},
we have
\begin{align*}
			&
\eta_{\mathfrak{X}}^{-1}
\big(
	\{
		u
			\in
		\mathfrak{S} X
	\mid
		u
		(A)
			\in
		Z
	\}
\big)
			=
\big\{
	x \in X
\mid
	\eta_{\mathfrak{X}}(x)
		\in
	\{
		u
			\in
		\mathfrak{S} X
	\mid
		u
		(A)
			\in
		Z
	\}
\big\}
			\\&=
\big\{
	x \in X
\mid
	\eta_{\mathfrak{X}}(x)(A)
		\in
	Z
\big\}
			=
\big\{
	x \in X
\mid
	\mathbb{1}_X(A)(x)
		\in
	Z
\big\}
			=
(
	\mathbb{1}_X(A)
)^{-1}
(Z)
			\in
\Sigma_X .
\end{align*}
\fi 
\end{proof}

Due to the following proposition,
we call
$
\eta_{\mathfrak{X}}(x)
$
a
\newword{Dirac measure}.

\begin{prop}
\label{prop:DiracMesX}
Let
$
\mathfrak{X}
	=
(X, \Sigma_X)
$
be a measurable space,
and
$
f \in
L^{\infty}(X) .
$
Then, for
$x \in X$,
we have
\begin{equation}
\label{eq:DiracMesEqX}
I_X^{
	\eta_{\mathfrak{X}}(x)
}
(f)
		=
f(x).
\end{equation}
\end{prop}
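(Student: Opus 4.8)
The plan is to recognize $\eta_{\mathfrak{X}}(x)$ as the Dirac measure concentrated at $x$ and then evaluate the Choquet integral directly. By Proposition \ref{prop:etaXxX} we have, for every $A \in \Sigma_X$,
\[
\eta_{\mathfrak{X}}(x)(A)
	=
\mathbb{1}_X(A)(x)
	=
\begin{cases}
1 & \textrm{if } x \in A,\\
0 & \textrm{if } x \notin A.
\end{cases}
\]
First I would check that this is an additive capacity on $\mathfrak{X}$: $\eta_{\mathfrak{X}}(x)(\emptyset) = 0$, $\eta_{\mathfrak{X}}(x)(X) = 1$, monotonicity is clear since $A \subset B$ and $x \in A$ force $x \in B$, and for disjoint $A, B \in \Sigma_X$ the point $x$ lies in at most one of them, so $\mathbb{1}_X(A \cup B)(x) = \mathbb{1}_X(A)(x) + \mathbb{1}_X(B)(x)$, i.e.\ $\eta_{\mathfrak{X}}(x)(A \cup B) = \eta_{\mathfrak{X}}(x)(A) + \eta_{\mathfrak{X}}(x)(B)$.

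With additivity in hand, Proposition \ref{prop:additiveNu} gives $I_X^{\eta_{\mathfrak{X}}(x)}(f) = \int_X f \, d\eta_{\mathfrak{X}}(x)$, the ordinary Lebesgue integral against the Dirac measure at $x$, which equals $f(x)$. If one prefers to avoid invoking this last fact from measure theory and argue within the framework of the paper, I would instead proceed in two steps. For a finite step function $f = \sum_{i=1}^n a_i \mathbb{1}_X(A_i)$ with $A_1, \dots, A_n$ disjoint and $a_1 \ge \cdots \ge a_n$, Lemma \ref{lem:stepFunChoquetInt} yields $I_X^{\eta_{\mathfrak{X}}(x)}(f) = \sum_{i=1}^n (a_i - a_{i+1})\, \eta_{\mathfrak{X}}(x)\big(\bigcup_{j=1}^i A_j\big)$; if $x \in A_{i_0}$ then $\eta_{\mathfrak{X}}(x)(\bigcup_{j=1}^i A_j) = 1$ for $i \ge i_0$ and $0$ otherwise, so the telescoping sum collapses to $a_{i_0} = f(x)$, while if $x$ lies in no $A_i$ then $f(x) = 0$ and every term vanishes. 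For general $f \in L^{\infty}(X)$, I would apply the discretization of Note \ref{note:discretizeU} to get step functions with $\underline{f}_n(x) \le f(x) \le \bar{f}_n(x)$ and $\bar{f}_n(x) - \underline{f}_n(x) = 2^{-n}M \to 0$, and then use the monotonicity of $I_X^{\eta_{\mathfrak{X}}(x)}$ (Theorem \ref{thm:choquetThm}(1)) together with the step-function case to obtain
\[
\underline{f}_n(x)
	=
I_X^{\eta_{\mathfrak{X}}(x)}(\underline{f}_n)
	\le
I_X^{\eta_{\mathfrak{X}}(x)}(f)
	\le
I_X^{\eta_{\mathfrak{X}}(x)}(\bar{f}_n)
	=
\bar{f}_n(x),
\]
and letting $n \to \infty$ squeezes $I_X^{\eta_{\mathfrak{X}}(x)}(f)$ to $f(x)$.

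I do not expect a genuine obstacle here: the content of the statement is simply that $\eta_{\mathfrak{X}}(x)$ is a point mass. The only place warranting a moment of care is ensuring the argument is not circular — hence the preference for either explicitly citing the Dirac-measure evaluation of a Lebesgue integral or, alternatively, carrying out the elementary telescoping computation for step functions and the monotone squeeze for general bounded measurable $f$.
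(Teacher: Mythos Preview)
Your proposal is correct, and your second route (step functions via Lemma~\ref{lem:stepFunChoquetInt} followed by the monotone squeeze from Note~\ref{note:discretizeU}) is essentially the paper's own argument; the only cosmetic difference is that the paper evaluates the telescoping sum by swapping the order of summation, $\sum_{i=1}^n (a_i-a_{i+1})\sum_{j=1}^i \mathbb{1}_X(A_j)(x) = \sum_{j=1}^n a_j\,\mathbb{1}_X(A_j)(x)$, whereas you do a case split on the index $i_0$ with $x\in A_{i_0}$. Your first route, observing that $\eta_{\mathfrak{X}}(x)$ is additive and invoking Proposition~\ref{prop:additiveNu} to reduce to the ordinary Lebesgue integral against a Dirac measure, is a legitimate shortcut the paper does not take; it is cleaner but imports the Dirac evaluation from outside the paper's internal development, which is exactly the trade-off you flagged.
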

\begin{proof}
First we show
(\ref{eq:DiracMesEqX})
in the case when
$
f := 
\sum_{i=1}^n
	a_i
	\,
	\mathbb{1}_X
	(A_i) 
$
where
$
a_1 \ge a_2 \ge \cdots \ge a_n
$
are decreasing real numbers
and
$A_i$
are mutually disjoint elements of
$\Sigma_X$.
\if \MkLonger  1
\begin{align*}
I_X^{
	\eta_{\mathfrak{X}}(x)
}
(f)
		&=
I_X^{
	\eta_{\mathfrak{X}}(x)
}
\Big(
\sum_{i=1}^n
	a_i
	\,
	\mathbb{1}_X
	(A_i)
\Big)
		\\&=
\sum_{i=1}^n
	(a_i - a_{i+1})
	\,
	\eta_X(x)
	\Big(
	\bigcup_{j=1}^i
		A_j
	\Big)
		&(\textrm{
			by Lemma \ref{lem:stepFunChoquetInt}
		})
		\\&=
\sum_{i=1}^n
	(a_i - a_{i+1})
	\,
	\mathbb{1}_X
	\Big(
	\bigcup_{j=1}^i
		A_j
	\Big)
	(x)
		&(\textrm{
			by Proposition \ref{prop:etaXxX}
		})
		\\&=
\sum_{i=1}^n
	(a_i - a_{i+1})
	\sum_{j=1}^i
		\mathbb{1}_X
		(A_j)
		(x)
		\\&=
\sum_{j=1}^n
	\mathbb{1}_X
	(A_j)
	(x)
	\sum_{i=j}^n
		(a_i - a_{i+1})
		\\&=
\sum_{j=1}^n
	a_j
	\mathbb{1}_X
	(A_j)
	(x)
		=
f(x) .
\end{align*}
\else 
By 
			Lemma \ref{lem:stepFunChoquetInt}
and
			Proposition \ref{prop:etaXxX},
and by assuming
$
a_{n+1} = 0
$,
we obtain
\begin{align*}
I_X^{
	\eta_{\mathfrak{X}}(x)
}
(f)
		&=
I_X^{
	\eta_{\mathfrak{X}}(x)
}
\Big(
\sum_{i=1}^n
	a_i
	\,
	\mathbb{1}_X
	(A_i)
\Big)
		=
\sum_{i=1}^n
	(a_i - a_{i+1})
	\,
	\eta_{\mathfrak{X}}(x)
	\Big(
	\bigcup_{j=1}^i
		A_j
	\Big)
		\\&=
\sum_{i=1}^n
	(a_i - a_{i+1})
	\,
	\mathbb{1}_X
	\Big(
	\bigcup_{j=1}^i
		A_j
	\Big)
	(x)
		=
\sum_{i=1}^n
	(a_i - a_{i+1})
	\sum_{j=1}^i
		\mathbb{1}_X
		(A_j)
		(x)
		\\&=
\sum_{j=1}^n
	\mathbb{1}_X
	(A_j)
	(x)
	\sum_{i=j}^n
		(a_i - a_{i+1})
		=
\sum_{j=1}^n
	a_j
	\mathbb{1}_X
	(A_j)
	(x)
		=
f(x) .
\end{align*}
\fi 

Next,
for general
$
f \in L^{\infty}(Y),
$
we create 
$\underline{f}_n$
and
$\bar{f}_n$
by the method described in
Note \ref{note:discretizeU}.
By the monotonicity of
$
I_X^u
$
and the previous result about step functions, 
we have
\begin{equation}
\underline{f}_n
(x)
	=
I_X^{
	\eta_{\mathfrak{X}}(x)
}(
\underline{f}_n
)
	\le
I_X^{
	\eta_{\mathfrak{X}}(x)
}(
f
)
	\le
I_X^{
	\eta_{\mathfrak{X}}(x)
}(
\bar{f}_n
) 
	=
\bar{f}_n
(x) .
\end{equation}
But,
since
both
$
\underline{f}_n
(x)
$
and
$
\bar{f}_n
(x)
$
converge to
$
f(x),
$
we get
(\ref{eq:DiracMesEqX}).

\end{proof}

Now, in order to adopt the Dirac measure operator 
$
\eta_{\mathfrak{X}}
	:
X \to U_X
$
as an  embedding operator, we need at least the following assumption.

\begin{asm}
\label{asm:etaX}
An uncertainty space
$
\mathfrak{X}
	=
(X, \Sigma_X, U_X)
$
is said to satisfy 
the
\newword{embedding condition}
if for every
$x \in X$,
$
\eta_{\mathfrak{X}}(x)
	\in
U_X
$
holds.

\end{asm}

Here is a necessary condition of making  
$
\eta_{\mathfrak{X}}
$
be a $\Unc$-map.

\begin{prop}
\label{prop:embDirac}
Let
$
\mathfrak{X}
	=
(X, \Sigma_X, U_X)
$
and
$
\mathfrak{Y}
	=
(Y, \Sigma_Y, U_Y)
	:=
(U_X, \Sigma_{\mathfrak{X}}, U_Y)
$
be uncertainty spaces
where
$
\mathfrak{X}
$
satisfies the embedding condition.
If
$
\eta_{\mathfrak{X}}
	:
\mathfrak{X}
	\to
\mathfrak{Y}
$
is a $\Unc$-map, then it implies
for every
$u \in U_X$
there exists
$v \in U_Y$
such that
for all
$A \in \Sigma_X$,
the following three statements hold.
\begin{enumerate}
\item
$
v\big(\{
	w \in U_X
		\mid
	w(A) \in \{0,1\}
\}\big)
	> 0
$,
\item
$
u(X \setminus A) > 0
$
implies
$
v\big(\{
	w \in U_X
		\mid
	w(A) = 0
\}\big)
	> 0
$,
\item
$
u(A) > 0
$
implies
$
v\big(\{
	w \in U_X
		\mid
	w(A) = 1
\}\big)
	> 0
$.
\end{enumerate}

\end{prop}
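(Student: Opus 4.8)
The plan is to unwind the definition of a $\Unc$-map and test it on three explicitly chosen measurable subsets of $U_X$. By Definition \ref{defn:UncMap}, the hypothesis that $\eta_{\mathfrak{X}} : \mathfrak{X} \to \mathfrak{Y}$ is a $\Unc$-map says: for every $u \in U_X$ there exists $v \in U_Y$ with $u \circ \eta_{\mathfrak{X}}^{-1} \ll v$; passing to contrapositives, this is equivalent to saying that for every $S \in \Sigma_{\mathfrak{X}}$, $u(\eta_{\mathfrak{X}}^{-1}(S)) > 0$ forces $v(S) > 0$. First I would fix $u \in U_X$ and let $v \in U_Y$ be the capacity provided by this hypothesis; the point is that this single $v$ will witness all of (1)--(3) simultaneously, for every $A \in \Sigma_X$, since the existential quantifier on $v$ sits outside the universal quantifier on $S$.

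Next, for a fixed $A \in \Sigma_X$, I would introduce the sets $S^{01}_A := (\varepsilon_{\mathfrak{X}}(A))^{-1}(\{0,1\})$, $S^0_A := (\varepsilon_{\mathfrak{X}}(A))^{-1}(\{0\})$ and $S^1_A := (\varepsilon_{\mathfrak{X}}(A))^{-1}(\{1\})$. Since $\{0\}$, $\{1\}$ and $\{0,1\}$ are Borel subsets of $[0,1]$, i.e., members of $\Sigma_{\mathbb{I}}$, Proposition \ref{prop:sigmaGX} gives $S^{01}_A, S^0_A, S^1_A \in \Sigma_{\mathfrak{X}}$, and by construction these are precisely the three sets whose $v$-measure appears in (1), (2) and (3). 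Then I would compute their $\eta_{\mathfrak{X}}$-preimages using Proposition \ref{prop:etaXxX}, which gives $\eta_{\mathfrak{X}}(x)(A) = \mathbb{1}_X(A)(x)$ for all $x \in X$. Because $\mathbb{1}_X(A)(x)$ is always $0$ or $1$, we get $\eta_{\mathfrak{X}}^{-1}(S^{01}_A) = X$; because $\mathbb{1}_X(A)(x) = 1$ exactly when $x \in A$, we get $\eta_{\mathfrak{X}}^{-1}(S^1_A) = A$ and $\eta_{\mathfrak{X}}^{-1}(S^0_A) = X \setminus A$.

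Finally I would feed these three identities into the absolute-continuity implication recorded in the first paragraph. Since $u(\eta_{\mathfrak{X}}^{-1}(S^{01}_A)) = u(X) = 1 > 0$, we conclude $v(S^{01}_A) > 0$, which is (1); if $u(X \setminus A) > 0$, then $u(\eta_{\mathfrak{X}}^{-1}(S^0_A)) > 0$, whence $v(S^0_A) > 0$, which is (2); if $u(A) > 0$, then $u(\eta_{\mathfrak{X}}^{-1}(S^1_A)) > 0$, whence $v(S^1_A) > 0$, which is (3). This is essentially a direct computation, so I do not expect a genuine obstacle; the only points requiring care are the bookkeeping just noted (one $v$ must serve uniformly in $A$ and across all three clauses) and the observation that $\{0\}$, $\{1\}$, $\{0,1\}$ are Borel, so that the target sets indeed lie in $\Sigma_{\mathfrak{X}}$. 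The embedding condition on $\mathfrak{X}$ plays no role beyond ensuring that $\eta_{\mathfrak{X}}$ actually takes values in $U_X = Y$, so that speaking of $\eta_{\mathfrak{X}}$ as an arrow $\mathfrak{X} \to \mathfrak{Y}$ is meaningful in the first place.
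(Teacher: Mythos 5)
Your proposal is correct and follows essentially the same route as the paper's proof: both extract the absolute continuity $u \circ \eta_{\mathfrak{X}}^{-1} \ll v$ from the $\Unc$-map hypothesis, use $\varepsilon_{\mathfrak{X}}(A) \circ \eta_{\mathfrak{X}} = \mathbb{1}_X(A)$ to identify $\eta_{\mathfrak{X}}^{-1}$ of the sets $\{w \mid w(A) \in Z\}$, and then read off (1)--(3) by contraposition. The only cosmetic difference is that the paper records the pullback $(u \circ \eta_{\mathfrak{X}}^{-1})$ on a general generator set via a four-case formula in $Z$, whereas you specialize immediately to $Z = \{0,1\}, \{0\}, \{1\}$; your remarks on the quantifier order (one $v$ uniform in $A$) and on the role of the embedding condition are also accurate.
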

\begin{proof}
Note that 
$\Sigma_{\mathfrak{X}}$
is generated by the sets of the form
\[
(\varepsilon_{\mathfrak{X}}(A))^{-1}(Z)
	=
\{
	w \in U_X
\mid
	w(A) \in Z
\}
\]
where
$A \in \Sigma_X$
and
$Z \in \beta([0,1])$.
Since
$
\eta_{\mathfrak{X}}
$
is a $\Unc$-map, it must satisfy
$
u \circ
\eta_{\mathfrak{X}}^{-1}
	\ll
v
$. 
Especially,
\if \MkLonger 1
\[
v(
(\varepsilon_{\mathfrak{X}}(A))^{-1}(Z)
)
 = 0
	\; \textrm{implies} \;
(
u \circ
\eta_{\mathfrak{X}}^{-1}
)(
(\varepsilon_{\mathfrak{X}}(A))^{-1}(Z)
)  = 0.
\]
\else 
$
v(
(\varepsilon_{\mathfrak{X}}(A))^{-1}(Z)
)
 = 0
$
implies
$
(
u \circ
\eta_{\mathfrak{X}}^{-1}
)(
(\varepsilon_{\mathfrak{X}}(A))^{-1}(Z)
)  = 0.
$
\fi 
Here we have by (\ref{eq:etaXxY}), 
\[
\eta_{\mathfrak{X}}^{-1}
(
(
\varepsilon_{\mathfrak{X}}(A))^{-1}(Z)
		=
(
\eta_{\mathfrak{X}}^{-1}
	\circ
\varepsilon_{\mathfrak{X}}(A))^{-1}
)
(Z)
		=
(
\varepsilon_{\mathfrak{X}}(A))
	\circ
\eta_{\mathfrak{X}}
)^{-1}
(Z)
		=
(
	\mathbb{1}_X(A)
)^{-1}
(Z)
\]
Therefore,
\[
(
u \circ
\eta_{\mathfrak{X}}^{-1}
)(
(\varepsilon_{\mathfrak{X}}(A))^{-1}(Z)
)
	=
\begin{cases}
0
	&\quad \textrm{if} \quad
0 \notin Z \; \land \;  1 \notin Z
	\\
u(X \setminus A)
	&\quad \textrm{if} \quad
0 \in Z \; \land \;  1 \notin Z
	\\
u(A)
	&\quad \textrm{if} \quad
0 \notin Z \; \land \;  1 \in Z
	\\
1
	&\quad \textrm{if} \quad
0 \in Z \; \land \;  1 \in Z ,
\end{cases}
\]
from which the result comes immediately.
\end{proof}

\section{Categories of U-sequences}
\label{sec:uCat}

In this section,
we introduce maps between U-sequences,
called $U^G$-maps, and they commute with Choquet expectation maps in a sense.
We will see that
U-sequences and $U^G$-maps forms a category.


\begin{defn}
\label{defn:UGmap}
Let
$
G :
\mathbb{R}
\to
\mathbb{R}
$
be a strictly increasing continuous measurable function,
and
$
	\mathfrak{X}
		=
	(X, \Sigma_{X}, U_{X})
$
be a uncertainty space.
Then, the map
$
\xi^G_{\mathfrak{X}}
	:
L^{\infty}(X, \Sigma_X)
	\to
L^{\infty}(U_X, \Sigma_{\mathfrak{X}})
$
is defined by
for
$
f
	\in
L^{\infty}(X, \Sigma_X)
$,
\begin{equation}
\label{eq:UGmap}
\xi^G_{\mathfrak{X}}(f)
	:=
G^{-1}
	\circ
\xi_{\mathfrak{X}}
(G \circ f) .
\end{equation}
\end{defn}

\begin{Note}
\label{Note:UGmap}
Let
$
\mathfrak{X}
	=
(X, \Sigma_X, U_X)
$
be a uncertainty space.
\begin{enumerate}
\item
If
$
G(x) := cx
$
with some positive constant
$c \in \mathbb{R}$,
then
$
	\xi^G_{\mathfrak{X}}
=
	\xi_{\mathfrak{X}}
$.

\item
If
$
G(x)
	:=
e^{\lambda x}
$
for
$\lambda > 0$,
then
for
$
f
	\in
L^{\infty}(\mathfrak{X})
$
and
$
u \in U_{X}
$,
\begin{equation}
\label{eq:entVMtwo}
\xi^G_{\mathfrak{X}}
(f)
(u)
	=
\frac{1}{\lambda}
\log
	I_X^u(
		e^{\lambda f}
	) .
\end{equation}
This is
an entropic value measure,
which is one of the major tools in monetary risk measure theory.
See Section 2 of \cite{adachi_2014crm} for the introduction to monetary risk measure theory.

\end{enumerate}
\end{Note}

Let
$
G :
\mathbb{R}
\to
\mathbb{R}
$
be a strictly increasing continuous measurable function
throughout this section.

\begin{defn}
\label{defn:Uarrow}
Let
$
\mathbb{X}
	=
\{
	\mathfrak{X}_n
		=
	(X_n, \Sigma_{X_n}, U_{X_n})
\}_{n \in \mathbb{N}}
$
and
$
\mathbb{Y}
	=
\{
	\mathfrak{Y}_n
		=
	(Y_n, \Sigma_{Y_n}, U_{Y_n})
\}_{n \in \mathbb{N}}
$
be two U-sequences.
A
\newword{$U^G$-map}
from
$\mathbb{X}$
to
$\mathbb{Y}$
is a set of measurable functions
$
\varphi
	=
\{
	\varphi_n
		:
	(X_n, \Sigma_{X_n}) 
		\to 
	(Y_n, \Sigma_{Y_n}) 
\}_{n \in \mathbb{N}}
$
such that
the following diagram commutes
for every
$n \in \mathbb{N}$.
\begin{equation*}
\xymatrix@C=30 pt@R=30 pt{
	L^{\infty}(\mathfrak{X}_n)
		\ar @{->}^{
			\xi^G_{
				\mathfrak{X}_n
			}
		} [r]
&
	L^{\infty}(\mathfrak{X}_{n+1})
\\
	L^{\infty}(\mathfrak{Y}_n)
		\ar @{->}^{
			\xi^G_{
				\mathfrak{Y}_n
			}
		} [r]
		\ar @{->}^{
			L^{\infty}
			(\varphi_n)
		} [u]
&
	L^{\infty}(\mathfrak{Y}_{n+1})
		\ar @{->}_{
			L^{\infty}
			(\varphi_{n+1})
		} [u]
}
\end{equation*}
\end{defn}

\begin{prop}
\label{prop:UarrowEquiv}
Let
$
\mathbb{X}
	=
\{
	\mathfrak{X}_n
		=
	(X_n, \Sigma_{X_n}, U_{X_n})
\}_{n \in \mathbb{N}}
$
and
$
\mathbb{Y}
	=
\{
	\mathfrak{Y}_n
		=
	(Y_n, \Sigma_{Y_n}, U_{Y_n})
\}_{n \in \mathbb{N}}
$
be U-sequences
and
$
\varphi
	=
\{
	\varphi_n
		:
	(X_n, \Sigma_{X_n}) 
		\to 
	(Y_n, \Sigma_{Y_n}) 
\}_{n \in \mathbb{N}}
$
be a
set of measurable functions.
Then the followings are equivalent.
\begin{enumerate}
\item
$\varphi$
is a 
$U^G$-map
from
$\mathbb{X}$
to
$\mathbb{Y}$.

\item
For every
$n \in \mathbb{N}$
and
$
f \in 
L^{\infty}(\mathfrak{Y}_n)
$,
\begin{equation}
\label{eq:UarrowXi}
\xi_{\mathfrak{X}_n}
(G \circ f \circ \varphi_n)
		=
\xi_{\mathfrak{Y}_n}
(G \circ f)
	\circ
\varphi_{n+1} .
\end{equation}

\item
For every
$n \in \mathbb{N}$,
$
f \in 
L^{\infty}(\mathfrak{Y}_n)
$
and
$
u \in U_{X_n}
$,
\begin{equation}
\label{eq:UarrowI}
I_{
	X_n
}^u
(G \circ f \circ \varphi_n)
		=
I_{
	Y_n
}^{
	\varphi_{n+1}(u)
}
(G \circ f) .
\end{equation}

\end{enumerate}
\end{prop}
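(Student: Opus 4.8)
The plan is to peel off the outer layers $G^{-1}$, $G$ and the functor $L^{\infty}$ until both (1) and (3) have been rewritten as the single pointwise identity appearing in the statement; concretely I will prove (1) $\Leftrightarrow$ (2) and then (2) $\Leftrightarrow$ (3).

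For the first equivalence, fix $n \in \mathbb{N}$ and recall from Definition \ref{defn:Linfty} that $L^{\infty}(\varphi_k)(g) = g \circ \varphi_k$. Thus commutativity of the square defining a $U^G$-map at level $n$ says precisely that $\xi^G_{\mathfrak{X}_n}(f \circ \varphi_n) = \xi^G_{\mathfrak{Y}_n}(f) \circ \varphi_{n+1}$ for every $f \in L^{\infty}(\mathfrak{Y}_n)$. Expanding each side with (\ref{eq:UGmap}), the left-hand side equals $G^{-1} \circ \xi_{\mathfrak{X}_n}(G \circ f \circ \varphi_n)$ and the right-hand side equals $G^{-1} \circ \xi_{\mathfrak{Y}_n}(G \circ f) \circ \varphi_{n+1}$. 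Since $G$ is strictly increasing and continuous it is a bijection onto an interval, so $G^{-1}$ is injective there; moreover $\xi_{\mathfrak{X}_n}(G \circ f \circ \varphi_n)$ and $\xi_{\mathfrak{Y}_n}(G \circ f)$ take values in that interval, being squeezed between $\inf (G\circ f)$ and $\sup (G\circ f)$ by the monotonicity of the Choquet integral together with the normalization of $I^u$ on constants (Corollary \ref{cor:ChoquetComono} (2)). Hence precomposition with $G^{-1}$ can be cancelled from both sides, so the displayed identity is equivalent to $\xi_{\mathfrak{X}_n}(G \circ f \circ \varphi_n) = \xi_{\mathfrak{Y}_n}(G \circ f) \circ \varphi_{n+1}$, i.e. to (\ref{eq:UarrowXi}). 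Quantifying over $n$ yields (1) $\Leftrightarrow$ (2).

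For the second equivalence, observe that in (\ref{eq:UarrowXi}) both sides are elements of $L^{\infty}(\mathfrak{X}_{n+1}) = L^{\infty}(U_{X_n})$, i.e. real-valued functions on $U_{X_n}$, so the identity holds if and only if the two functions agree at every $u \in U_{X_n}$. By Definition \ref{defn:xiXf}, the value of the left-hand side at $u$ is $I_{X_n}^u(G \circ f \circ \varphi_n)$, while the value of the right-hand side at $u$ is $\xi_{\mathfrak{Y}_n}(G \circ f)(\varphi_{n+1}(u)) = I_{Y_n}^{\varphi_{n+1}(u)}(G \circ f)$, where we use that $\varphi_{n+1}(u) \in Y_{n+1} = U_{Y_n}$ is a capacity on $Y_n$. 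This is exactly (\ref{eq:UarrowI}), so (2) $\Leftrightarrow$ (3), completing the chain.

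The argument is essentially bookkeeping: unwinding the definitions of $\xi^G$, of $L^{\infty}$ on morphisms, and of $\xi_{\mathfrak{X}}$ as the pointwise Choquet expectation. The only point that deserves a moment of care is the cancellation of $G^{-1}$ in the step from (1) to (2) — one must know that the Choquet expectations being fed into $G^{-1}$ actually lie in the range of $G$ — and this is where the monotonicity of the Choquet integral and its normalization on constants are invoked; everything else is routine.
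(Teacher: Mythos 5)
Your proof is correct and follows essentially the same route as the paper, whose ``proof'' is just the element-chase of the defining square: unwinding $L^{\infty}(\varphi_k)$ and $\xi^G$ gives (1)$\Leftrightarrow$(2), and pointwise evaluation via Definition \ref{defn:xiXf} gives (2)$\Leftrightarrow$(3). Your explicit justification that the Choquet expectations land in the range of $G$ (so that $G^{-1}$ can be cancelled) is a detail the paper leaves implicit, and it is handled correctly.
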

\if \WithProof 1
\begin{proof}
\begin{equation*}
\xymatrix@C=15 pt@R=15 pt{
	f \circ \varphi_n
		\ar @{|->} [rrrr]
		\ar @{}_-{\mathrel{\rotatebox[origin=c]{-45}{$\in$} }} @<+6pt> [rd]
&&&&
\xi^G_{\mathfrak{X}_n}
(f \circ \varphi_n)
		=
\xi^G_{\mathfrak{Y}_n}
(f)
	\circ
\varphi_{n+1} 
		\ar @{}_-{\mathrel{\rotatebox[origin=c]{210}{$\in$} }} @<+6pt> [ld]
\\
&
	L^{\infty}(\mathfrak{X}_n)
		\ar @{->}^{
			\xi^G_{
				\mathfrak{X}_n
			}
		} [rr]
&&
	L^{\infty}(\mathfrak{X}_{n+1})
\\
\\
&
	L^{\infty}(\mathfrak{Y}_n)
		\ar @{->}_{
			\xi^G_{
				\mathfrak{Y}_n
			}
		} [rr]
		\ar @{->}^{
			L^{\infty}
			(\varphi_n)
		} [uu]
&&
	L^{\infty}(\mathfrak{Y}_{n+1})
		\ar @{->}_{
			L^{\infty}
			(\varphi_{n+1})
		} [uu]
\\
	f
		\ar @{|->} [uuuu]
		\ar @{|->} [rrrr]
		\ar @{}_-{\mathrel{\rotatebox[origin=c]{45}{$\in$} }} @<+6pt> [ru]
&&&&
	\xi^G_{
		\mathfrak{Y}_n
	}
	(f)
		\ar @{|->} [uuuu]
		\ar @{}_-{\mathrel{\rotatebox[origin=c]{150}{$\in$} }} @<+6pt> [lu]
}
\end{equation*}
\end{proof}
\fi 

If we write
\if \MkLonger  1
\begin{equation}
\label{eq:adjLphi}
\langle f, u \rangle^G_{\mathfrak{X}}
	:=
\xi^G_{\mathfrak{X}}(f)(u)
\end{equation}
\else 
$
\label{eq:adjLphi}
\langle f, u \rangle^G_{\mathfrak{X}}
	:=
\xi^G_{\mathfrak{X}}(f)(u)
$
\fi 
for
$f \in 
L^{\infty}(\mathfrak{X})
$
and
$
u \in
U_{X}
$,
then 
(\ref{eq:UarrowI})
will be represented as
\begin{equation}
\label{eq:adjXiG}
\langle 
	L^{\infty}(\varphi_n)(f),
	u 
\rangle^G_{\mathfrak{X}_n}
		=
\langle 
	f,
	\varphi_{n+1}(u)
\rangle^G_{\mathfrak{Y}_n} ,
\end{equation}
which means that 
$
	L^{\infty}(\varphi_n)
$
is a left adjoint of
$
	\varphi_{n+1}
$.

\begin{prop}
\label{prop:compUarrows}
Let
$
\mathbb{X}
	=
\{
	\mathfrak{X}_n
		=
	(X_n, \Sigma_{X_n}, U_{X_n})
\}_{n \in \mathbb{N}}
$,
$
\mathbb{Y}
	=
\{
	\mathfrak{Y}_n
		=
	(Y_n, \Sigma_{Y_n}, U_{Y_n})
\}_{n \in \mathbb{N}}
$
and
$
\mathbb{Z}
	=
\{
	\mathfrak{Z}_n
		=
	(Z_n, \Sigma_{Z_n}, U_{Z_n})
\}_{n \in \mathbb{N}}
$
be U-sequences.
Let
$
\varphi
	=
\{
	\varphi_n
		:
	(X_n, \Sigma_{X_n}) 
		\to 
	(Y_n, \Sigma_{Y_n}) 
\}_{n \in \mathbb{N}}
$
and
$
\psi
	=
\{
	\psi_n
		:
	(Y_n, \Sigma_{Y_n}) 
		\to 
	(Z_n, \Sigma_{Z_n}) 
\}_{n \in \mathbb{N}}
$
be $U^G$-maps.
Then,
\if \MkLonger  1
\begin{equation}
\label{eq:compUarrows}
\psi \circ \varphi
		:=
\{
	\psi_n \circ \varphi_n
\}_{n \in \mathbb{N}}
\end{equation}
\else 
$
\psi \circ \varphi
		:=
\{
	\psi_n \circ \varphi_n
\}_{n \in \mathbb{N}}
$
\fi 
is a $U^G$-map
from
$\mathbb{X}$
to
$\mathbb{Z}$.
\end{prop}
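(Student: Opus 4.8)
The plan is to reduce the statement to a routine diagram chase, exploiting that membership in the class of $U^G$-maps is precisely the commutativity of the squares in Definition \ref{defn:Uarrow} and that these squares paste. First I would record that each component $\psi_n \circ \varphi_n \colon (X_n, \Sigma_{X_n}) \to (Z_n, \Sigma_{Z_n})$ is measurable, being a composite of measurable maps, so that $\psi \circ \varphi = \{\psi_n \circ \varphi_n\}_{n \in \mathbb{N}}$ is at least a legitimate candidate for a $U^G$-map.

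Next, fix $n \in \mathbb{N}$. Since $\varphi$ is a $U^G$-map, the square with horizontal arrows $\xi^G_{\mathfrak{Y}_n}, \xi^G_{\mathfrak{X}_n}$ and vertical arrows $L^{\infty}(\varphi_n), L^{\infty}(\varphi_{n+1})$ commutes; since $\psi$ is a $U^G$-map, the square with horizontal arrows $\xi^G_{\mathfrak{Z}_n}, \xi^G_{\mathfrak{Y}_n}$ and vertical arrows $L^{\infty}(\psi_n), L^{\infty}(\psi_{n+1})$ commutes. Placing the $\psi$-square beneath the $\varphi$-square and using the functoriality relation $L^{\infty}(\psi_m \circ \varphi_m) = L^{\infty}(\varphi_m) \circ L^{\infty}(\psi_m)$ for $m = n, n+1$ from Definition \ref{defn:Linfty} (recall $L^{\infty}$ is contravariant, so the composition order reverses), the outer rectangle yields
\begin{equation*}
\xi^G_{\mathfrak{X}_n} \circ L^{\infty}(\psi_n \circ \varphi_n)
	=
L^{\infty}(\psi_{n+1} \circ \varphi_{n+1}) \circ \xi^G_{\mathfrak{Z}_n},
\end{equation*}
which is exactly the square required of $\psi \circ \varphi$ at level $n$. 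As $n$ was arbitrary, $\psi \circ \varphi$ is a $U^G$-map from $\mathbb{X}$ to $\mathbb{Z}$.

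Alternatively, one may bypass diagrams and verify condition (3) of Proposition \ref{prop:UarrowEquiv}: given $f \in L^{\infty}(\mathfrak{Z}_n)$ and $u \in U_{X_n}$, first apply (\ref{eq:UarrowI}) for $\varphi$ to the act $f \circ \psi_n \in L^{\infty}(\mathfrak{Y}_n)$ to get $I^u_{X_n}(G \circ f \circ \psi_n \circ \varphi_n) = I^{\varphi_{n+1}(u)}_{Y_n}(G \circ f \circ \psi_n)$, and then apply (\ref{eq:UarrowI}) for $\psi$ to the act $f$ and the capacity $\varphi_{n+1}(u) \in U_{Y_n}$ (recall $\varphi_{n+1} \colon U_{X_n} \to U_{Y_n}$ by the defining relation of a U-sequence) to get $I^{\varphi_{n+1}(u)}_{Y_n}(G \circ f \circ \psi_n) = I^{\psi_{n+1}(\varphi_{n+1}(u))}_{Z_n}(G \circ f)$. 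Since $\psi_{n+1}(\varphi_{n+1}(u)) = (\psi \circ \varphi)_{n+1}(u)$, this chain is precisely (\ref{eq:UarrowI}) for $\psi \circ \varphi$.

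The argument carries essentially no difficulty; the only points demanding care are the contravariance of $L^{\infty}$, which reverses the order of composition when one stacks the squares, and the bookkeeping that the maps $\varphi_{n+1}, \psi_{n+1}$ act on the capacity sets $U_{X_n}, U_{Y_n}$ by virtue of the identifications $X_{n+1} = U_{X_n}$, $Y_{n+1} = U_{Y_n}$ built into a U-sequence. I would also remark in passing that the identity $U^G$-map $\{\mathrm{Id}_{(X_n,\Sigma_{X_n})}\}_{n\in\mathbb{N}}$ together with this composition makes the associativity and unit laws immediate from those in $\Mble$, so that U-sequences and $U^G$-maps form the category $\USeq^G$.
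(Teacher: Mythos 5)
Your proposal is correct and is essentially the paper's argument: the paper verifies the composite by applying condition (2) of Proposition \ref{prop:UarrowEquiv} twice (once for $\varphi$ applied to $f\circ\psi_n$, once for $\psi$), which is the same two-step pasting you carry out, whether phrased as stacking the defining squares with contravariant functoriality of $L^{\infty}$ or as chaining condition (3) through $\varphi_{n+1}(u)\in U_{Y_n}$. Your closing remarks on measurability of the composites and on the identity and associativity laws are accurate and consistent with Definition \ref{defn:uCategory}.
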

\if \WithProof 1
\begin{proof}
By
Proposition \ref{prop:UarrowEquiv},
it is enough to show that
\begin{equation}
\label{eq:UarrowXiComp}
\xi_{\mathfrak{X}_n}
(G \circ f \circ (\psi_n \circ \varphi_n))
		=
\xi_{\mathfrak{Z}_n}
(G \circ f)
	\circ
(
\psi_{n+1} 
	\circ
\varphi_{n+1} 
) 
\end{equation}
for
every
$n \in \mathbb{N}$
and
$
f \in 
L^{\infty}(\mathfrak{Z}_n)
$.
But by using
(\ref{eq:UarrowXi})
twice,
we have
\if \MkLonger  1
\begin{align*}
\xi_{\mathfrak{X}_n}
(G \circ f \circ (\psi_n \circ \varphi_n))
			&=
\xi_{\mathfrak{X}_n}
\big(
	G \circ (f \circ \psi_n) \circ \varphi_n)
\big)
			\\&=
\xi_{\mathfrak{Y}_n}
\big(
	G \circ (f \circ \psi_n)
\big)
	\circ
\varphi_{n+1}
			\\&=
\big(
	\xi_{\mathfrak{Z}_n}
	(G \circ f)
		\circ
	\psi_{n+1} 
\big)
	\circ
\varphi_{n+1} 
			\\&=
\xi_{\mathfrak{Z}_n}
(G \circ f)
	\circ
(
\psi_{n+1} 
	\circ
\varphi_{n+1} 
) .
\end{align*}
\else 
\begin{align*}
\xi_{\mathfrak{X}_n}
(G \circ f \circ (\psi_n \circ \varphi_n))
			&=
\xi_{\mathfrak{X}_n}
\big(
	G \circ (f \circ \psi_n) \circ \varphi_n)
\big)
			=
\xi_{\mathfrak{Y}_n}
\big(
	G \circ (f \circ \psi_n)
\big)
	\circ
\varphi_{n+1}
			\\&=
\big(
	\xi_{\mathfrak{Z}_n}
	(G \circ f)
		\circ
	\psi_{n+1} 
\big)
	\circ
\varphi_{n+1} 
			=
\xi_{\mathfrak{Z}_n}
(G \circ f)
	\circ
(
\psi_{n+1} 
	\circ
\varphi_{n+1} 
) .
\end{align*}
\fi 
\end{proof}
\fi 

\begin{defn}
\label{defn:uCategory}
The 
\newword{$U^G$-category}
is a category 
$\USeq^G$
whose objects are
all U-sequences
and arrows are all $U^G$-maps.
\end{defn}

\begin{prop}
\label{prop:inclUseq}
Let
$
\mathbb{X}
	=
\{
	\mathfrak{X}_n
		=
	(X_n, \Sigma_{X_n}, U_{X_n})
\}_{n \in \mathbb{N}}
$
and
$
\mathbb{Y}
	=
\{
	\mathfrak{Y}_n
		=
	(Y_n, \Sigma_{Y_n}, U_{Y_n})
\}_{n \in \mathbb{N}}
$
be two U-sequences.
Assume that
for some 
$m \in \mathbb{N}$,
\if \MkLonger  1
\[
X_m = Y_m,
	\quad
\Sigma_{X_m}
	=
\Sigma_{Y_m},
	\quad
U_{X_m}
	\subset
U_{Y_m} .
\]
\else 
$
X_m = Y_m
$,
$
\Sigma_{X_m}
	=
\Sigma_{Y_m}
$
and
$
U_{X_m}
	\subset
U_{Y_m} .
$
\fi 
Then,
by defining
$
\varphi_m := \mathrm{Id}_{X_m} : X_m \to Y_m
$
and
$
\varphi_{m+1} : U_{X_m} \to U_{Y_m}
$
be an inclusion map,
we have
\begin{equation}
\label{eq:inclUseq}
\xi^G_{\mathfrak{X}_m}
	\circ
L^{\infty}(\varphi_m)
		=
L^{\infty}(\varphi_{m+1})
	\circ
\xi^G_{\mathfrak{Y}_m} .
\end{equation}

\end{prop}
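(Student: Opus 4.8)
The plan is to apply both sides of (\ref{eq:inclUseq}) to an arbitrary $f \in L^{\infty}(\mathfrak{Y}_m) = L^{\infty}(\mathfrak{X}_m)$ (these Banach spaces coincide since $X_m = Y_m$ and $\Sigma_{X_m} = \Sigma_{Y_m}$) and to reduce the identity to the observation that a Choquet integral with respect to a capacity $u$ depends only on $u$ and on the underlying measurable space, not on which U-sequence that space sits in.

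First I would describe the two constituent maps explicitly. Because $\varphi_m = \mathrm{Id}_{X_m}$ is the identity measurable map of the common space $(X_m,\Sigma_{X_m})=(Y_m,\Sigma_{Y_m})$, Definition \ref{defn:Linfty} gives $L^{\infty}(\varphi_m) = \mathrm{Id}$, so the left-hand side of (\ref{eq:inclUseq}) is just $\xi^G_{\mathfrak{X}_m}$. For the right-hand side one must first check that the inclusion $\varphi_{m+1} : (U_{X_m},\Sigma_{\mathfrak{X}_m}) \to (U_{Y_m},\Sigma_{\mathfrak{Y}_m})$ is measurable: by Proposition \ref{prop:sigmaGX}, $\Sigma_{\mathfrak{Y}_m}$ is generated by the sets $\{v \in U_{Y_m} \mid v(A) \in B\}$ with $A \in \Sigma_{Y_m}$ and $B \in \Sigma_{\mathbb{I}}$, and the preimage of such a set under $\varphi_{m+1}$ is $\{u \in U_{X_m} \mid u(A) \in B\} \in \Sigma_{\mathfrak{X}_m}$. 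Hence $L^{\infty}(\varphi_{m+1})(g) = g \circ \varphi_{m+1}$ is well defined, and the right-hand side of (\ref{eq:inclUseq}) applied to $f$ equals $\xi^G_{\mathfrak{Y}_m}(f) \circ \varphi_{m+1}$.

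Next, unwinding Definition \ref{defn:UGmap} (i.e.\ (\ref{eq:UGmap})), $\xi^G_{\mathfrak{X}_m}(f) = G^{-1} \circ \xi_{\mathfrak{X}_m}(G \circ f)$ and $\xi^G_{\mathfrak{Y}_m}(f) = G^{-1} \circ \xi_{\mathfrak{Y}_m}(G \circ f)$, where $G \circ f$ lies in $L^{\infty}(X_m,\Sigma_{X_m})$ because $G$ is measurable and continuous, hence bounded on the bounded range of $f$. Since $G^{-1}$ is injective and post-composition with $G^{-1}$ commutes with pre-composition with $\varphi_{m+1}$, it suffices to prove
\[
\xi_{\mathfrak{X}_m}(h) = \xi_{\mathfrak{Y}_m}(h) \circ \varphi_{m+1}, \qquad h := G \circ f .
\]
Evaluating at any $u \in U_{X_m} \subset U_{Y_m}$, Definition \ref{defn:xiXf} gives $\xi_{\mathfrak{X}_m}(h)(u) = I^{u}_{X_m}(h)$ and $(\xi_{\mathfrak{Y}_m}(h)\circ\varphi_{m+1})(u) = \xi_{\mathfrak{Y}_m}(h)(u) = I^{u}_{Y_m}(h)$; these two Choquet integrals are literally equal because $(X_m,\Sigma_{X_m}) = (Y_m,\Sigma_{Y_m})$ and $u$ is the same capacity, so the distribution function $h^{u}$ of (\ref{eq:distUnu}) and its integral (\ref{eq:choquetIntegral}) are unchanged. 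This establishes the displayed identity, and hence (\ref{eq:inclUseq}).

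I do not expect a genuine obstacle here: the content is simply that the Choquet expectation map restricts along an inclusion of capacity sets. The only points needing a little care are the bookkeeping of the $\sigma$-algebras carried on $U_{X_m}$ versus $U_{Y_m}$ (measurability of $\varphi_{m+1}$, handled above) and the well-definedness of $\xi^G$, namely that $\xi_{\mathfrak{X}_m}(G\circ f)$ takes values in the range of $G$ so that $G^{-1}$ may be applied — which follows from monotonicity of the Choquet integral (Theorem \ref{thm:choquetThm}(1)) together with the fact that it fixes constants (Corollary \ref{cor:ChoquetComono}(2)).
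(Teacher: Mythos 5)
Your proof is correct and follows essentially the same route as the paper: both reduce (\ref{eq:inclUseq}) to the identity $\xi_{\mathfrak{X}_m}(G\circ f)=\xi_{\mathfrak{Y}_m}(G\circ f)\circ\varphi_{m+1}$ (the paper by citing Proposition \ref{prop:UarrowEquiv}, you by cancelling $G^{-1}$ directly) and then observe that, since $(X_m,\Sigma_{X_m})=(Y_m,\Sigma_{Y_m})$ and $\varphi_{m+1}(u)=u$, the two Choquet integrals coincide, i.e.\ $\xi_{\mathfrak{X}_m}$ is the restriction of $\xi_{\mathfrak{Y}_m}$ to $U_{X_m}$. Your extra checks (measurability of the inclusion $\varphi_{m+1}$ and well-definedness of $\xi^G$) are welcome bookkeeping that the paper leaves implicit.
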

\if \WithProof 1
\begin{proof}
By
Proposition \ref{prop:UarrowEquiv},
it is enough to show that
for any
$
f \in
L^{\infty}(
	\mathfrak{Y}_m
)
$,
\[
\xi_{\mathfrak{X}_m}
(G \circ f \circ \varphi_m)
		=
\xi_{\mathfrak{Y}_m}
(G \circ f)
	\circ
\varphi_{m+1} .
\]
But,
\if \MkLonger  0
by noting that
$\varphi_m = \textrm{Id}_{X_m}$,
$
\xi_{\mathfrak{X}_m}
(f)
	=
\xi_{\mathfrak{Y}_m}
(f)
|_{U_{X_m}}
$
and
$
\varphi_{m+1}(u) = u
$,
\fi 
for every
$
u \in U_{X_m}
= X_{m+1}
$,
we have
\if \MkLonger  1
\begin{align*}
\xi_{\mathfrak{X}_m}
(G \circ f \circ \varphi_m)
(u)
		&=
\xi_{\mathfrak{X}_m}
(G \circ f)
(u)
			&(\textrm{since $\varphi_m = \textrm{Id}_{X_m}$})
		\\&=
\xi_{\mathfrak{Y}_m}
(G \circ f)
(u)
			&(\textrm{since $
\xi_{\mathfrak{X}_m}
(f)
	=
\xi_{\mathfrak{Y}_m}
(f)
|_{U_{X_m}}
$})
		\\&=
\xi_{\mathfrak{Y}_m}
(G \circ f)
(\varphi_{m+1}(u))
			&(\textrm{since $
\varphi_{m+1}(u) = u
$})
		\\&=
\big(
\xi_{\mathfrak{Y}_m}
(G \circ f)
	\circ
\varphi_{m+1} 
\big)
(u) .
\end{align*}
\else 
\begin{align*}
\xi_{\mathfrak{X}_m}
(G \circ f \circ \varphi_m)
(u)
		&=
\xi_{\mathfrak{X}_m}
(G \circ f)
(u)
		=
\xi_{\mathfrak{Y}_m}
(G \circ f)
(u)
		\\&=
\xi_{\mathfrak{Y}_m}
(G \circ f)
(\varphi_{m+1}(u))
		=
\big(
\xi_{\mathfrak{Y}_m}
(G \circ f)
	\circ
\varphi_{m+1} 
\big)
(u) .
\end{align*}
\fi 

\end{proof}
\fi 

\begin{exmp}
In 
Section \ref{sec:secLayerAna}
and
Section \ref{sec:thiLayerAna},
we demonstrate three concrete U-sequences
$\mathbb{X}$,
$\mathbb{Y}$
and
$\mathbb{Z}$.
Among them,
we have a
$U^G$-map
$
\varphi
	=
\{
	\varphi_n
\}_{n \in \mathbb{N}}
$
from 
$\mathbb{Y}$
to
$\mathbb{Z}$,
where
all
$\varphi_n$
is the identity map
$
\mathrm{Id}_{Y_n}
$
except the case
$n = 2$
in which
\if \MkLonger  1
\[
\varphi_2
	:
Y_2 = \{*\}
	\to
Z_2 = U_{Z_1}
	=
\{
	v_p
\mid
	p \in [0,1]
\}
\]
\else 
$
\varphi_2
	:
Y_2 = \{*\}
	\to
Z_2 = U_{Z_1}
	=
\{
	v_p
\mid
	p \in [0,1]
\}
$
\fi 
is the inclusion map defined by
$
\varphi_2(*) := v_{\frac{1}{2}}
$.

\end{exmp}

\if \InclCMFun 1 
\section{CM-functors}
\label{sec:functorS}

In this section,
we will introduce 
a concept of C-functor
which is 
an endofunctor
$\mathfrak{S}$
of
$\Mble$,
the category of measurable spaces and measurable maps between them,
defined through
the lift-up functor
$
\mathfrak{L}
	:
\mpUnc
	\to
\Mble
$.
We also introduce two natural transformations 
$\eta^{\mathfrak{S}}$
and
$\mu^{\mathfrak{S}}$
over 
$\mathfrak{S}$.
In the final subsection,
We introduce
a concept of
CM-functors
which is an endofunctor of
$\Mble$,
and then
prove
the existence of the CM-functor that encompasses a given set of uncertainty spaces,
especially a given U-sequence.

\subsection{C-functors}
\label{sec:conditionsCM}

\begin{thm}
\label{thm:functorL}
For a uncertainty space
$
\mathfrak{X}
	=
(X, \Sigma_X, U_X)
$,
define a measurable space
$
\mathfrak{L}\mathfrak{X}
$
by
\begin{equation}
\label{eq:defLobj}
\mathfrak{L}\mathfrak{X}
	:=
(U_X, \Sigma_{\mathfrak{X}}) ,
\end{equation}
and
for an $\mpUnc$-map
$
h : 
\mathfrak{X}
	\to
\mathfrak{Y}
	= (Y, \Sigma_Y, U_Y)
$,
define a map
$
\mathfrak{L}h
	:
\mathfrak{L}\mathfrak{X}
	\to
\mathfrak{L}\mathfrak{Y}
$
by
\begin{equation}
\label{eq:defLmap}
\mathfrak{L}h(u)
	:=
u \circ h^{-1}
\end{equation}
for 
$u \in U_X$.
Then,
the correspondence
$\mathfrak{L}$
becomes a functor
from
$\mpUnc$
to
$\Mble$.

\end{thm}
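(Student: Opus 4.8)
The plan is to verify the four defining clauses of a functor in order: that $\mathfrak{L}$ sends objects to objects of $\Mble$, that it sends $\mpUnc$-maps to measurable maps, that it preserves identities, and that it preserves composition. The object clause is immediate: by Definition \ref{defn:uncertaintySp}, $\Sigma_{\mathfrak{X}}$ is a $\sigma$-algebra on $U_X$, so $\mathfrak{L}\mathfrak{X} = (U_X, \Sigma_{\mathfrak{X}})$ is a legitimate object of $\Mble$. For the morphism clause, given an $\mpUnc$-map $h : \mathfrak{X} \to \mathfrak{Y}$, I would first note that $u \mapsto u \circ h^{-1}$ really does land in $U_Y$ for every $u \in U_X$ — this is exactly the defining condition of an $\mpUnc$-map in Definition \ref{defn:UncMap} — so $\mathfrak{L}h : U_X \to U_Y$ is at least a well-defined set map.

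The one step that needs actual work is the measurability of $\mathfrak{L}h$ relative to $\Sigma_{\mathfrak{X}}$ and $\Sigma_{\mathfrak{Y}}$. Here I would invoke Proposition \ref{prop:sigmaGX}: $\Sigma_{\mathfrak{Y}}$ is generated by the sets $\{v \in U_Y \mid v(B) \in Z\}$ with $B \in \Sigma_Y$ and $Z \in \Sigma_{\mathbb{I}}$, so it suffices to pull back one such generator. The key computation is
\begin{equation*}
(\mathfrak{L}h)^{-1}\big(\{v \in U_Y \mid v(B) \in Z\}\big)
	=
\{u \in U_X \mid (u \circ h^{-1})(B) \in Z\}
	=
\{u \in U_X \mid u(h^{-1}(B)) \in Z\} .
\end{equation*}
Since $h$ is measurable, $A := h^{-1}(B) \in \Sigma_X$, and the last set is $\{u \in U_X \mid u(A) \in Z\}$, which lies in $\Sigma_{\mathfrak{X}}$ by Proposition \ref{prop:sigmaGX}. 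Hence $\mathfrak{L}h$ is measurable, i.e. an arrow of $\Mble$.

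It then remains to check functoriality on arrows, which is a formal manipulation of inverse images. For identities, $\mathfrak{L}(\mathrm{Id}_{\mathfrak{X}})(u) = u \circ (\mathrm{Id}_X)^{-1} = u$, so $\mathfrak{L}(\mathrm{Id}_{\mathfrak{X}}) = \mathrm{Id}_{\mathfrak{L}\mathfrak{X}}$. For composition, given $\mpUnc$-maps $h : \mathfrak{X} \to \mathfrak{Y}$ and $g : \mathfrak{Y} \to \mathfrak{Z}$ — whose composite $g \circ h$ is again an $\mpUnc$-map by Proposition \ref{prop:compUnc}(2), so that $\mathfrak{L}(g \circ h)$ is defined — I would use the identity of power-set maps $(g \circ h)^{-1} = h^{-1} \circ g^{-1}$ to compute, for $u \in U_X$,
\begin{equation*}
\mathfrak{L}(g \circ h)(u)
	=
u \circ (g \circ h)^{-1}
	=
(u \circ h^{-1}) \circ g^{-1}
	=
\mathfrak{L}g\big(\mathfrak{L}h(u)\big) ,
\end{equation*}
so $\mathfrak{L}(g \circ h) = \mathfrak{L}g \circ \mathfrak{L}h$. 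The only genuine obstacle is the measurability argument above, and even that reduces cleanly to Proposition \ref{prop:sigmaGX}, since the $\sigma$-algebras $\Sigma_{\mathfrak{X}}$ and $\Sigma_{\mathfrak{Y}}$ were constructed precisely to make the evaluation maps $\varepsilon_{\mathfrak{X}}(A)$, $\varepsilon_{\mathfrak{Y}}(B)$ measurable; everything else is purely formal.
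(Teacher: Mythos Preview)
Your proof is correct and follows essentially the same approach as the paper: well-definedness of $\mathfrak{L}h$ from the $\mpUnc$-map condition, measurability via the generator description of $\Sigma_{\mathfrak{Y}}$ in Proposition~\ref{prop:sigmaGX} and the identical pull-back computation, and functoriality on composites via $(g\circ h)^{-1}=h^{-1}\circ g^{-1}$. You are slightly more explicit than the paper in separately stating the object clause and the identity clause, but there is no substantive difference in strategy.
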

\begin{proof}
Note that
(\ref{eq:defLmap})
is well-defined because
$h$ is an $\mpUnc$-map.

First, we will show that the map
$
\mathfrak{L}h
$
is a measurable map.
But to this end,
all we need to show is
by Proposition \ref{prop:sigmaGX},
$
(\mathfrak{L}h)^{-1}
(
	\{
		v
			\in
		U_Y
	\mid
		v(B)
			\in
		Z
	\}
)
	\in
\Sigma_{\mathfrak{X}}
$
for all
$B \in \Sigma_Y$
and
$
Z \in \Sigma_{\mathbb{I}}
$.
However, 
\if \MkLonger  1
we have
\begin{align*}
		&
(\mathfrak{L}h)^{-1}
(
	\{
		v
			\in
		U_Y
	\mid
		v(B)
			\in
		Z
	\}
)
		\\&=
\{
	u
\in
	U_X
\mid
	\mathfrak{L}h
	(u)
\in
	\{
		v
			\in
		U_Y
	\mid
		v(B)
			\in
		Z
	\}
\}
		\\&=
\{
	u
\in
	U_X
\mid
	u
		\circ
	h^{-1}
\in
	\{
		v
			\in
		U_Y
	\mid
		v(B)
			\in
		Z
	\}
\}
		\\&=
\{
	u
\in
	U_X
\mid
	(u
		\circ
	h^{-1})
	(B)
		\in
	Z
\}
		\\&=
\{
	u
\in
	U_X
\mid
	u(
	h^{-1}
	(B))
		\in
	Z
\}
		\in
\Sigma_{\mathfrak{X}} .
			&(\textrm{
by Proposition \ref{prop:sigmaGX}
			})
\end{align*}
\else 
we have
again
by Proposition \ref{prop:sigmaGX},
\begin{align*}
		&
(\mathfrak{L}h)^{-1}
(
	\{
		v
			\in
		U_Y
	\mid
		v(B)
			\in
		Z
	\}
)
		=
\{
	u
\in
	U_X
\mid
	\mathfrak{L}h
	(u)
\in
	\{
		v
			\in
		U_Y
	\mid
		v(B)
			\in
		Z
	\}
\}
		\\&=
\{
	u
\in
	U_X
\mid
	u
		\circ
	h^{-1}
\in
	\{
		v
			\in
		U_Y
	\mid
		v(B)
			\in
		Z
	\}
\}
		=
\{
	u
\in
	U_X
\mid
	(u
		\circ
	h^{-1})
	(B)
		\in
	Z
\}
		\\&=
\{
	u
\in
	U_X
\mid
	u(
	h^{-1}
	(B))
		\in
	Z
\}
		\in
\Sigma_{\mathfrak{X}} .
\end{align*}
\fi 

Next,
we will show that
$
\mathfrak{L}(j \circ h)
	=
(\mathfrak{L}j)
	\circ
(\mathfrak{L}h)
$
for all
pair of $\mpUnc$-maps

\noindent
$
\xymatrix{
	\mathfrak{X}
		\ar @{->}^h [r]
&
	\mathfrak{Y}
		\ar @{->}^-j [r]
&
	\mathfrak{Z}
		=
	(Z, \Sigma_Z, U_Z)
}
$
(
See Diagram \ref{diag:functorL}
).
But for
$
u \in U_X
$,
we have
\if \MkLonger  1
\begin{align*}
(\mathfrak{L}(j \circ h))
u
		&=
u
 \circ (j \circ h)^{-1}
		\\&=
u
 \circ h^{-1} \circ j^{-1}
		\\&=
(\mathfrak{L} h)(
u
) \circ j^{-1}
		\\&=
(\mathfrak{L}j)
(
(\mathfrak{L}h)
u
) 
		\\&=
(
(\mathfrak{L}j)
	\circ
(\mathfrak{L}h)
)
u .
\end{align*}
\else 
\begin{align*}
(\mathfrak{L}(j \circ h))
u
		&=
u
 \circ (j \circ h)^{-1}
		=
u
 \circ h^{-1} \circ j^{-1}
		\\&=
(\mathfrak{L} h)(
u
) \circ j^{-1}
		=
(\mathfrak{L}j)
(
(\mathfrak{L}h)
u
) 
		=
(
(\mathfrak{L}j)
	\circ
(\mathfrak{L}h)
)
u .
\end{align*}
\fi 

\end{proof}

\begin{diagram}[htb]
\begin{equation*}
\xymatrix{
   \mpUnc
      \ar @{->}^{\mathfrak{L}} [rrr]
&&&
   \Mble
\\
   \mathfrak{X}
      \ar @{->}^{h} [d]
      \ar @(ur,dr)[]^{1_{\mathfrak{X}}}
      \ar @/_2pc/_{
         j \circ h
      } [dd]
&&&
   \mathfrak{L}X
      \ar @(ul,dl)[]_{
         \mathfrak{L}1_{\mathfrak{X}} = 1_{\mathfrak{L}\mathfrak{X}}
      }
      \ar @{->}_{\mathfrak{L}h} [d]
      \ar @/^2pc/^{
         \mathfrak{L}(j \circ h) = (\mathfrak{L}j) \circ (\mathfrak{L}h)
      } [dd]
\\
   \mathfrak{Y}
      \ar @{->}^j [d]
      \ar @(ur,dr)[]^{1_{\mathfrak{Y}}}
&&&
   \mathfrak{L}Y
      \ar @(ul,dl)[]_{
         \mathfrak{L}1_{\mathfrak{Y}} = 1_{\mathfrak{L}\mathfrak{Y}}
      }
      \ar @{->}_{\mathfrak{L}j} [d]
\\
   \mathfrak{Z}
      \ar @(ur,dr)[]^{1_{\mathfrak{Z}}}
&&&
   \mathfrak{L}Z
      \ar @(ul,dl)[]_{
         \mathfrak{L}1_{\mathfrak{Z}} = 1_{\mathfrak{L}\mathfrak{Z}}
      }
}
\end{equation*}
\caption{Lift-up functor $\mathfrak{L}$}
\label{diag:functorL}
\end{diagram}

\begin{defn}
\label{defn:cFunctor}
A functor
$
\mathfrak{S}
	:
\Mble
	\to
\Mble
$
is called a 
\newword{C-functor}
if there exists a functor
$
\rho
	:
\Mble
	\to
\mpUnc
$
(called a 
\newword{choice functor}
)
such that
\begin{equation}
\label{eq:cfuntorDef}
\mathfrak{U}
	\circ
\rho
	=
1_{\Mble}
		\; \textrm{and} \;
\mathfrak{S}
	=
\mathfrak{L}
	\circ
\rho
\end{equation}
where
$
\mathfrak{U}
	:
\mpUnc \to \Mble
$
is the forgetful functor
that maps
$(X, \Sigma_X, U_X)$
to
$(X, \Sigma_X)$.
\end{defn}

\begin{diagram}[htb]
\begin{equation*}
\xymatrix{
	\Mble
		\ar @{->}^{\mathfrak{S}} [r]
		\ar @{->}_{1_{\Mble}} [d]
		\ar @{.>}^{\rho} [rd]
&
	\Mble
\\
	\Mble
&
	\mpUnc
		\ar @{->}_{\mathfrak{L}} [u]
		\ar @{->}^{\mathfrak{U}} [l]
}
\end{equation*}
\caption{C-functor $\mathfrak{S}$}
\label{diag:cFunctor}
\end{diagram}
We sometimes write
$
\mathfrak{S}_{\rho}
$
for this $\mathfrak{S}$
in order to clarify the choice functor $\rho$.

Here are two trivial examples of C-functors.

\begin{prop}
\label{prop:cFunctorsCandP}
Define two functors
$\mathfrak{C}$
and
$\mathfrak{P}$
from
$\Mble$
to
$\mpUnc$
by
for every measurable space 
$
\mathfrak{X}
	=
(X, \Sigma_X)
$,
\begin{enumerate}
\item
$
\mathfrak{C}
\mathfrak{X}
	:=
(X, \Sigma_X, C_{X})
$,
where
$C_X$
is the set of all capacities defined on 
$\mathfrak{X}$,

\item
$
\mathfrak{P}
\mathfrak{X}
	:=
(X, \Sigma_X, P_{X})
$,
where
$P_X$
is the set of all probability measures defined on 
$\mathfrak{X}$.
\end{enumerate}
Then, both
$
\mathfrak{S}_{\mathfrak{C}}
$
and
$
\mathfrak{S}_{\mathfrak{P}}
$
are C-functors.
\end{prop}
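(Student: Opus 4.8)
The plan is to use, as the choice functor of Definition~\ref{defn:cFunctor}, the correspondence $\mathfrak{C}$ (respectively $\mathfrak{P}$) itself; then the two conditions $\mathfrak{U}\circ\rho = 1_{\Mble}$ and $\mathfrak{S} = \mathfrak{L}\circ\rho$ reduce, respectively, to a routine verification and to the definition of the subscript notation $\mathfrak{S}_{\rho} := \mathfrak{L}\circ\rho$ introduced after Definition~\ref{defn:cFunctor}. So the only real work is to check that $\mathfrak{C}$ and $\mathfrak{P}$, which the statement specifies only on objects, extend to genuine functors $\Mble \to \mpUnc$.

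First I would pin down the action on morphisms. Let $h : (X,\Sigma_X) \to (Y,\Sigma_Y)$ be a measurable map. I claim $h$ is an $\mpUnc$-map from $\mathfrak{C}\mathfrak{X}$ to $\mathfrak{C}\mathfrak{Y}$, i.e.\ that $u\circ h^{-1} \in C_Y$ for every $u \in C_X$. This is immediate from Definition~\ref{def:capacity}: $(u\circ h^{-1})(\emptyset) = u(h^{-1}(\emptyset)) = u(\emptyset) = 0$, $(u\circ h^{-1})(Y) = u(h^{-1}(Y)) = u(X) = 1$, and $A\subseteq B$ in $\Sigma_Y$ implies $h^{-1}(A)\subseteq h^{-1}(B)$, so monotonicity of $u$ gives monotonicity of $u\circ h^{-1}$. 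Hence we may set $\mathfrak{C}h := h$, regarded as an $\mpUnc$-map. For $\mathfrak{P}$ the corresponding statement is the standard fact that the pushforward of a probability measure along a measurable map is again a probability measure: $\sigma$-additivity of $\mathbb{P}\circ h^{-1}$ follows from $\sigma$-additivity of $\mathbb{P}$ together with the fact that $h^{-1}$ commutes with countable disjoint unions. So $\mathfrak{P}h := h$ is likewise a well-defined $\mpUnc$-map.

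Next I would check functoriality, which is essentially bookkeeping. Composition and identities in $\mpUnc$ are computed on the underlying measurable maps (Proposition~\ref{prop:compUnc} ensures composites of $\mpUnc$-maps are $\mpUnc$-maps, and the identity is trivially an $\mpUnc$-map), so $\mathfrak{C}(g\circ f) = g\circ f = (\mathfrak{C}g)\circ(\mathfrak{C}f)$ and $\mathfrak{C}(1_{\mathfrak{X}}) = 1_{\mathfrak{C}\mathfrak{X}}$, and likewise for $\mathfrak{P}$. Thus $\mathfrak{C},\mathfrak{P} : \Mble \to \mpUnc$ are functors, and therefore $\mathfrak{S}_{\mathfrak{C}} = \mathfrak{L}\circ\mathfrak{C}$ and $\mathfrak{S}_{\mathfrak{P}} = \mathfrak{L}\circ\mathfrak{P}$ are honest endofunctors of $\Mble$ as composites with the functor $\mathfrak{L}$ of Theorem~\ref{thm:functorL}.

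Finally, with $\rho := \mathfrak{C}$: on objects $\mathfrak{U}(\mathfrak{C}\mathfrak{X}) = \mathfrak{U}(X,\Sigma_X,C_X) = (X,\Sigma_X) = \mathfrak{X}$, and on morphisms $\mathfrak{U}(\mathfrak{C}h) = h$, so $\mathfrak{U}\circ\mathfrak{C} = 1_{\Mble}$; and $\mathfrak{S}_{\mathfrak{C}} = \mathfrak{L}\circ\mathfrak{C}$ holds by the definition of the notation. Hence $\mathfrak{S}_{\mathfrak{C}}$ is a C-functor with choice functor $\mathfrak{C}$, and the identical argument with $\rho := \mathfrak{P}$ shows $\mathfrak{S}_{\mathfrak{P}}$ is a C-functor. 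There is no substantial obstacle here; the single point requiring attention is that pushforward keeps us inside $C_Y$ (respectively $P_Y$), which for capacities is just monotonicity of preimages and for probability measures is the familiar pushforward lemma.
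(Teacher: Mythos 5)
Your proof is correct and is exactly the routine verification the paper intends: the paper states this proposition without proof (introducing $\mathfrak{S}_{\mathfrak{C}}$ and $\mathfrak{S}_{\mathfrak{P}}$ as ``trivial examples''), and the whole content is precisely your observation that pushforward along a measurable map preserves capacities (resp.\ probability measures), so that $\mathfrak{C}h := h$ and $\mathfrak{P}h := h$ make $\mathfrak{C},\mathfrak{P}$ functors $\Mble \to \mpUnc$ satisfying $\mathfrak{U}\circ\rho = 1_{\Mble}$, whence $\mathfrak{S}_{\rho} = \mathfrak{L}\circ\rho$ is a C-functor by Definition~\ref{defn:cFunctor}. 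No gaps worth noting.
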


We frequently use 
$
\mathfrak{S}
(X, \Sigma_X) 
$
for denoting 
the underlying set
of
$
\mathfrak{S}
(X, \Sigma_X) 
$
below.

\begin{defn}
\label{defn:epsilonXiDetail}
Let
$\mathfrak{S}$
be a C-functor
and
$
\mathfrak{X}
	=
(X, \Sigma_X)
$
be a measurable space.
We use the following abbreviations.
\begin{align}
\label{eq:epsilonDetail}
\varepsilon^{\mathfrak{S}}_{\mathfrak{X}}
		&:=
\varepsilon_{(X, \Sigma_X, \mathfrak{S}{\mathfrak{X}})},
		\\
\label{eq:xiDetail}
\xi^{\mathfrak{S}}_{\mathfrak{X}}
		&:=
\xi_{(X, \Sigma_X, \mathfrak{S}{\mathfrak{X}})}.
\end{align}
\end{defn}


\begin{prop}
\label{prop:epsBcircGf}
Let
$\mathfrak{S}$
be a C-functor,
$\mathfrak{X} = (X, \Sigma_X)$
and
$\mathfrak{Y} = (Y, \Sigma_Y)$
be measurable spaces,
$
h : \mathfrak{X} \to \mathfrak{Y}
$
be a measurable map,
and
$
B \in \Sigma_Y
$.
Then, we have
\begin{equation}
\label{eq:epsBcircGf}
\varepsilon^{\mathfrak{S}}_{\mathfrak{Y}}(B)
	\circ
\mathfrak{S}
h
		=
\varepsilon^{\mathfrak{S}}_{\mathfrak{X}}
({h^{-1}(B)}) .
\end{equation}
\end{prop}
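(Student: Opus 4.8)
The plan is to unwind the definition of a C-functor and reduce the claimed identity to the tautology $u(h^{-1}(B)) = u(h^{-1}(B))$ evaluated pointwise. Write $\mathfrak{S} = \mathfrak{L} \circ \rho$ for the choice functor $\rho : \Mble \to \mpUnc$. Since $\mathfrak{U}\circ\rho = 1_{\Mble}$, we have $\rho\mathfrak{X} = (X, \Sigma_X, \mathfrak{S}\mathfrak{X})$ and $\rho\mathfrak{Y} = (Y, \Sigma_Y, \mathfrak{S}\mathfrak{Y})$, and $\mathfrak{S}h = \mathfrak{L}(\rho h)$. Because $\rho h : \rho\mathfrak{X} \to \rho\mathfrak{Y}$ is an $\mpUnc$-map whose underlying measurable map is $h$, equation (\ref{eq:defLmap}) of Theorem \ref{thm:functorL} gives
\[
\mathfrak{S}h(u) = u \circ h^{-1}
\]
for every $u \in \mathfrak{S}\mathfrak{X}$; in particular $u \circ h^{-1}$ is a well-defined element of $\mathfrak{S}\mathfrak{Y}$.

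Next I would evaluate both sides of (\ref{eq:epsBcircGf}) at an arbitrary $u \in \mathfrak{S}\mathfrak{X}$. For the left-hand side, applying the definition of $\varepsilon$ (Definition \ref{defn:uncertaintySp}(3)) to the uncertainty space $(Y, \Sigma_Y, \mathfrak{S}\mathfrak{Y})$,
\[
(\varepsilon^{\mathfrak{S}}_{\mathfrak{Y}}(B) \circ \mathfrak{S}h)(u)
	=
\varepsilon^{\mathfrak{S}}_{\mathfrak{Y}}(B)(u \circ h^{-1})
	=
(u \circ h^{-1})(B)
	=
u(h^{-1}(B)) .
\]
For the right-hand side, since $h$ is measurable we have $h^{-1}(B) \in \Sigma_X$, so $\varepsilon^{\mathfrak{S}}_{\mathfrak{X}}(h^{-1}(B))$ is defined and $\varepsilon^{\mathfrak{S}}_{\mathfrak{X}}(h^{-1}(B))(u) = u(h^{-1}(B))$. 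As $u$ was arbitrary, the two functions $U_X \to \mathbb{I}$ coincide, which is (\ref{eq:epsBcircGf}).

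There is essentially no obstacle: the statement is a bookkeeping identity, and the only points requiring care are (i) that $\mathfrak{S}h$ is genuinely the map $u \mapsto u \circ h^{-1}$, which rests on $\rho h$ being an $\mpUnc$-map so that $u \circ h^{-1}$ again lies in the chosen family of capacities, and (ii) recording that $h^{-1}(B) \in \Sigma_X$ so that the right-hand side even typechecks. Alternatively one could route the argument through Proposition \ref{prop:lemOne} on characteristic maps together with (\ref{eq:etaOneEpsilon}), but the direct pointwise computation above is the shortest.
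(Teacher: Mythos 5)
Your proposal is correct and follows essentially the same route as the paper: a pointwise evaluation at $u \in \mathfrak{S}\mathfrak{X}$, using $\mathfrak{S}h(u) = u \circ h^{-1}$ and the definition of $\varepsilon$, reducing both sides to $u(h^{-1}(B))$. The extra care you take in justifying $\mathfrak{S}h(u) = u \circ h^{-1}$ via the decomposition $\mathfrak{S} = \mathfrak{L} \circ \rho$ is a minor elaboration of what the paper uses implicitly.
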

\begin{proof}
For
$
u
	\in
\mathfrak{S}
X
$,
\if \MkLonger  1
\begin{align*}
(
\varepsilon^{\mathfrak{S}}_{\mathfrak{Y}}
(B)
	\circ
\mathfrak{S}
h
)
(u)
			&=
\varepsilon^{\mathfrak{S}}_{\mathfrak{Y}}
(B)
(
	\mathfrak{S}
	h
	(u)
)
			\\&=
\varepsilon^{\mathfrak{S}}_{\mathfrak{Y}}
(B)
(
	u
		\circ
	h^{-1}
)
			\\&=
(
	u
		\circ
	h^{-1}
)
(B)
			\\&=
u
(
	h^{-1}
	(B)
)
			\\&=
\varepsilon^{\mathfrak{S}}_{\mathfrak{X}}
({h^{-1}(B)}) 
(u) .
\end{align*}
\else 
\begin{align*}
(
\varepsilon^{\mathfrak{S}}_{\mathfrak{Y}}
(B)
	\circ
\mathfrak{S}
h
)
(u)
			&=
\varepsilon^{\mathfrak{S}}_{\mathfrak{Y}}
(B)
(
	\mathfrak{S}
	h
	(u)
)
			=
\varepsilon^{\mathfrak{S}}_{\mathfrak{Y}}
(B)
(
	u
		\circ
	h^{-1}
)
			\\&=
(
	u
		\circ
	h^{-1}
)
(B)
			=
u
(
	h^{-1}
	(B)
)
			=
\varepsilon^{\mathfrak{S}}_{\mathfrak{X}}
({h^{-1}(B)}) 
(u) .
\end{align*}
\fi 
\end{proof}


\begin{defn}
\label{defn:DiracEta}
Let
$\mathfrak{S}$
be a C-functor
and
$
\mathfrak{X}
	=
(X, \Sigma_X)
$
be a measurable space.
Define a map
$
\eta^{\mathfrak{S}}_{\mathfrak{X}}
	:
\mathfrak{X}
	\to
\mathfrak{S}
\mathfrak{X}
$
by
\begin{equation}
\label{eq:DiracEta}
\eta^{\mathfrak{S}}_{\mathfrak{X}}
	:=
\eta_{(X, \Sigma_X, \mathfrak{S}\mathfrak{X})} .
\end{equation}
\end{defn}


\begin{prop}
\label{prop:etaNT}
For a C-functor
$\mathfrak{S}$,
the correspondence
$\eta^{\mathfrak{S}}$
is a natural transformation
$
\eta^{\mathfrak{S}}
	:
1_{\Mble}
	\dot{\to}
\mathfrak{S}
$.
i.e.
the diagram
\ref{diag:natTransEta}
commutes.

\begin{diagram}[htb]
\begin{equation*}
\xymatrix{
&
	1_{\Mble}
		\ar @{->}^{
\eta^{\mathfrak{S}}
} [r]
&
	\mathfrak{S}
\\
    \mathfrak{X}
      \ar @{->}_{h} [d]
  &
    \mathfrak{X}
      \ar @{->}_{h} [d]
      \ar @{->}^{
\eta^{\mathfrak{S}}_{\mathfrak{X}}
} [r]
  &
    \mathfrak{S}\mathfrak{X}
      \ar @{->}^{\mathfrak{S}h} [d]
 \\
    \mathfrak{Y}
  &
    \mathfrak{Y}
      \ar @{->}_{
\eta^{\mathfrak{S}}_{\mathfrak{Y}}
} [r]
  &
    \mathfrak{S}\mathfrak{Y}
}
\end{equation*}
\caption{natural transformation $
\eta^{\mathfrak{S}}
	:
1_{\Mble}
	\dot{\to}
\mathfrak{S}
$}
\label{diag:natTransEta}
\end{diagram}
\end{prop}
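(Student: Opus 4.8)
The plan is to verify the single naturality square of Diagram \ref{diag:natTransEta}: for an arbitrary measurable map $h : \mathfrak{X} \to \mathfrak{Y}$ between measurable spaces, one must establish the identity $\mathfrak{S}h \circ \eta^{\mathfrak{S}}_{\mathfrak{X}} = \eta^{\mathfrak{S}}_{\mathfrak{Y}} \circ h$ of maps $\mathfrak{X} \to \mathfrak{S}\mathfrak{Y}$. Both composites are genuine measurable maps ($\eta^{\mathfrak{S}}$ is measurable by Proposition \ref{prop:mmuXmeasurable}, $h$ by hypothesis, and $\mathfrak{S}h$ because $\mathfrak{S}$ is a functor into $\Mble$), so only the pointwise equality is at stake.

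The first reduction I would make uses the very definition of the target $\sigma$-algebra: the underlying set of $\mathfrak{S}\mathfrak{Y}$ consists of capacities on $\mathfrak{Y}$, and $\Sigma_{(Y,\Sigma_Y,\mathfrak{S}\mathfrak{Y})}$ is generated by the evaluation maps $\varepsilon^{\mathfrak{S}}_{\mathfrak{Y}}(B)$, $B \in \Sigma_Y$. Since this family separates points of $\mathfrak{S}\mathfrak{Y}$ (two distinct capacities differ on some measurable $B$), it suffices to check $\varepsilon^{\mathfrak{S}}_{\mathfrak{Y}}(B) \circ \mathfrak{S}h \circ \eta^{\mathfrak{S}}_{\mathfrak{X}} = \varepsilon^{\mathfrak{S}}_{\mathfrak{Y}}(B) \circ \eta^{\mathfrak{S}}_{\mathfrak{Y}} \circ h$ for each $B \in \Sigma_Y$. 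For the left-hand side I would apply Proposition \ref{prop:epsBcircGf} to get $\varepsilon^{\mathfrak{S}}_{\mathfrak{Y}}(B) \circ \mathfrak{S}h = \varepsilon^{\mathfrak{S}}_{\mathfrak{X}}(h^{-1}(B))$, and then Proposition \ref{prop:etaXxX} (the identity $\varepsilon_{\mathfrak{X}}(A) \circ \eta_{\mathfrak{X}} = \mathbb{1}_X(A)$, instantiated with the underlying set $\mathfrak{S}\mathfrak{X}$) to reach $\mathbb{1}_X(h^{-1}(B))$. For the right-hand side, Proposition \ref{prop:etaXxX} gives $\varepsilon^{\mathfrak{S}}_{\mathfrak{Y}}(B) \circ \eta^{\mathfrak{S}}_{\mathfrak{Y}} = \mathbb{1}_Y(B)$, and then Proposition \ref{prop:lemOne} gives $\mathbb{1}_Y(B) \circ h = \mathbb{1}_X(h^{-1}(B))$. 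The two sides coincide, so the square commutes and $\eta^{\mathfrak{S}}$ is a natural transformation $1_{\Mble} \dot{\to} \mathfrak{S}$.

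There is essentially no hard step: the argument just chains three facts already proved, namely Propositions \ref{prop:lemOne}, \ref{prop:etaXxX} and \ref{prop:epsBcircGf}. The only points worth a line of care are the reduction to the generating maps $\varepsilon^{\mathfrak{S}}_{\mathfrak{Y}}(B)$ — i.e. that they separate points of $\mathfrak{S}\mathfrak{Y}$ — and the tacit fact that $\eta^{\mathfrak{S}}_{\mathfrak{X}}(x)$ lands in the underlying set of $\mathfrak{S}\mathfrak{X}$ (a Dirac measure belongs to the chosen capacity set), which for the standard choice functors $\mathfrak{C}$ and $\mathfrak{P}$ of Proposition \ref{prop:cFunctorsCandP} is automatic.

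Alternatively, one can bypass the $\varepsilon$-formalism and argue directly at the level of points and sets: since $\mathfrak{S} = \mathfrak{L} \circ \rho$ with $\mathfrak{U} \circ \rho = 1_{\Mble}$, Theorem \ref{thm:functorL} gives $(\mathfrak{S}h)(u) = u \circ h^{-1}$, whence for $x \in X$ and $B \in \Sigma_Y$ one computes $(\mathfrak{S}h)(\eta^{\mathfrak{S}}_{\mathfrak{X}}(x))(B) = \eta^{\mathfrak{S}}_{\mathfrak{X}}(x)(h^{-1}(B)) = \mathbb{1}_X(h^{-1}(B))(x) = (\mathbb{1}_Y(B) \circ h)(x) = \mathbb{1}_Y(B)(h(x)) = \eta^{\mathfrak{S}}_{\mathfrak{Y}}(h(x))(B)$, using Propositions \ref{prop:etaXxX} and \ref{prop:lemOne}. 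Either route yields the claim with only routine verification.
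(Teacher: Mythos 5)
Your proposal is correct and, in substance, the same as the paper's: the ``alternative'' pointwise computation you give at the end, $(\mathfrak{S}h)(\eta^{\mathfrak{S}}_{\mathfrak{X}}(x))(B)=\eta^{\mathfrak{S}}_{\mathfrak{X}}(x)(h^{-1}(B))=\mathbb{1}_X(h^{-1}(B))(x)=\mathbb{1}_Y(B)(h(x))=\eta^{\mathfrak{S}}_{\mathfrak{Y}}(h(x))(B)$ via Propositions \ref{prop:etaXxX} and \ref{prop:lemOne}, is exactly the argument in the paper. Your first, $\varepsilon$-based route is only a point-free repackaging of the same chain (using Proposition \ref{prop:epsBcircGf}), so there is nothing further to add.
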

\begin{proof}
Let
$\mathfrak{X} = (X, \Sigma_X)$
and
$\mathfrak{Y} = (Y, \Sigma_Y)$
be two measurable spaces,
$h :
\mathfrak{X}
	\to
\mathfrak{Y}
$
be a measurable map,
$x \in X$
and
$
B \in
\Sigma_Y
$.
Then, 
\if \MkLonger  1
we have
\begin{align*}
(
\mathfrak{S} h
	\circ
\eta^{\mathfrak{S}}_{\mathfrak{X}}
)
(x)(B)
		&=
(
\mathfrak{S} h
)
(
\eta^{\mathfrak{S}}_{\mathfrak{X}}
 (x))
(B)
		\\&=
(
\eta^{\mathfrak{S}}_{\mathfrak{X}}
(x)
	\circ
h^{-1}
)
(B)
		\\&=
\eta^{\mathfrak{S}}_{\mathfrak{X}}
 (x)
(
h^{-1}
(B)
)
		\\&=
\mathbb{1}_X(h^{-1}(B))(x)
		&(\textrm{
			by Proposition \ref{prop:etaXxX}
		})
		\\&=
(\mathbb{1}_Y(B) \circ h)(x)
		&(\textrm{
by Proposition \ref{prop:lemOne}
		})
		\\&=
\mathbb{1}_Y(B) (h(x))
		\\&=
\eta_Y
(
h
(x)
)(B) 
		&(\textrm{
			by Proposition \ref{prop:etaXxX}
		})
		\\&=
(
\eta^{\mathfrak{S}}_{\mathfrak{Y}}
	\circ
h)
(x)(B) .
\end{align*}
\else 
we have
by
Proposition \ref{prop:etaXxX},
\ref{prop:lemOne}
and
\ref{prop:etaXxX},
\begin{align*}
(
\mathfrak{S} h
	\circ
\eta^{\mathfrak{S}}_{\mathfrak{X}}
)
(x)(B)
		&=
(
\mathfrak{S} h
)
(
\eta^{\mathfrak{S}}_{\mathfrak{X}}
(x))
(B)
		=
(
\eta^{\mathfrak{S}}_{\mathfrak{X}}
(x)
	\circ
h^{-1}
)
(B)
		=
\eta^{\mathfrak{S}}_{\mathfrak{X}}
(x)
(
h^{-1}
(B)
)
		\\&=
\mathbb{1}_X(h^{-1}(B))(x)
		=
(\mathbb{1}_Y(B) \circ h)(x)
		=
\mathbb{1}_Y(B) (h(x))
		\\&=
\eta^{\mathfrak{S}}_{\mathfrak{Y}}
(
h
(x)
)(B) 
		=
(
\eta^{\mathfrak{S}}_{\mathfrak{Y}}
	\circ
h)
(x)(B) .
\end{align*}
\fi 
\end{proof}

\subsection{CM-functors}
\label{sec:ntMu}

\begin{defn}
\label{defn:condM}
A C-functor
$\mathfrak{S}$
is called a
\newword{CM-functor}
if it satisfies
for every
measurable space
$
\mathfrak{X} = (X, \Sigma_X)
$
and
$
v
	\in
\mathfrak{S}
\mathfrak{S}
\mathfrak{X} ,
$
\begin{equation}
\label{eq:condM}
I_{
	\mathfrak{S}
	\mathfrak{X}
}^{
	v
}
	\circ
\varepsilon^{\mathfrak{S}}_{\mathfrak{X}}
		\in
\mathfrak{S}
\mathfrak{X} .
\end{equation}

\end{defn}
Here we have used the following convention.
\begin{equation}
\mathfrak{S}
\mathfrak{S}
\mathfrak{X}
	:=
\mathfrak{S}
(
\mathfrak{S}
\mathfrak{X}
)
	=
(
\mathfrak{S}
	\circ
\mathfrak{S}
)
\mathfrak{X}
	=:
\mathfrak{S}^2
\mathfrak{X} .
\end{equation}

\begin{exmp}
\label{exmp:corresCM}
Both
$
\mathfrak{S}_{\mathfrak{C}}
$
and
$
\mathfrak{S}_{\mathfrak{P}}
$
are CM-functors.
\end{exmp}

Let
$\mathfrak{S}$
be a CM-functor throughout the rest of this section.

\begin{defn}
\label{defn:muX}
For a measurable space
$
\mathfrak{X}
$,
define a map
$
\mu^{\mathfrak{S}}_{\mathfrak{X}}
	:
\mathfrak{S}^2
\mathfrak{X}
	\to
\mathfrak{S}
\mathfrak{X}
$
by
for
$
v
	\in
\mathfrak{S}^2
\mathfrak{X}
$,
\begin{equation}
\label{eq:muX}
\mu^{\mathfrak{S}}_{\mathfrak{X}}
(
	v
)
	:=
I_{
	\mathfrak{S}
	\mathfrak{X}
}^{
	v
}
	\circ
\varepsilon^{\mathfrak{S}}_{\mathfrak{X}} .
\end{equation}
\end{defn}

\begin{lem}
\label{lem:alterMu}
For 
a measurable space
$
\mathfrak{X}
	=
(X, \Sigma_X)
$,
$
v
	\in
\mathfrak{S}^2
\mathfrak{X}
$
and
$
A
	\in
\Sigma_X
$,
\begin{equation}
\label{eq:alterMu}
\mu^{\mathfrak{S}}_\mathfrak{X}
(v)
(A)
		=
I_{
	\mathfrak{S}
	\mathfrak{X}
}^{
	v
}
(
	\varepsilon^{\mathfrak{S}}_{\mathfrak{X}}
	(A)
) 
		=
\xi^{\mathfrak{S}}_{
	\mathfrak{S}
	\mathfrak{X}
}
(
	\varepsilon^{\mathfrak{S}}_{\mathfrak{X}}
	(A)
)
(v)
		=
\int_0^1
	v\big(
		\{
			u \in
			\mathfrak{S}
			\mathfrak{X}
		\mid
			u(A) \ge r
		\}
	\big)
\,
dr .
\end{equation}
\end{lem}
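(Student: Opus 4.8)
The plan is to obtain the four expressions in turn, each step being the unfolding of an earlier definition together with one elementary observation about the range of $\varepsilon^{\mathfrak{S}}_{\mathfrak{X}}(A)$. Before starting I would record that $\varepsilon^{\mathfrak{S}}_{\mathfrak{X}}(A) \in L^{\infty}(\mathfrak{S}\mathfrak{X})$: it is measurable by the very definition of the $\sigma$-algebra $\Sigma_{\mathfrak{X}}$ attached to the uncertainty space $(X,\Sigma_X,\mathfrak{S}\mathfrak{X})$, and it is bounded because it takes values in $\mathbb{I} = [0,1]$; since $v \in \mathfrak{S}^2\mathfrak{X}$ is a capacity on the measurable space $\mathfrak{S}\mathfrak{X}$, the Choquet integral $I^{v}_{\mathfrak{S}\mathfrak{X}}(\varepsilon^{\mathfrak{S}}_{\mathfrak{X}}(A))$ is well defined. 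I would also note in passing that each set $\{u \in \mathfrak{S}\mathfrak{X} \mid u(A) \ge r\}$ equals $(\varepsilon^{\mathfrak{S}}_{\mathfrak{X}}(A))^{-1}([r,\infty))$ and hence lies in $\Sigma_{\mathfrak{S}\mathfrak{X}}$, so $v$ may be applied to it.

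The first equality is immediate from Definition \ref{defn:muX}, which gives $\mu^{\mathfrak{S}}_{\mathfrak{X}}(v) = I^{v}_{\mathfrak{S}\mathfrak{X}} \circ \varepsilon^{\mathfrak{S}}_{\mathfrak{X}}$ as a map $\Sigma_X \to \mathbb{R}$; evaluating at $A \in \Sigma_X$ yields $\mu^{\mathfrak{S}}_{\mathfrak{X}}(v)(A) = I^{v}_{\mathfrak{S}\mathfrak{X}}(\varepsilon^{\mathfrak{S}}_{\mathfrak{X}}(A))$. The middle equality is just the definition of $\xi$: by Definition \ref{defn:xiXf} applied to the uncertainty space $(\mathfrak{S}\mathfrak{X}, \mathfrak{S}^2\mathfrak{X})$ — which, by Definition \ref{defn:epsilonXiDetail}, is what $\xi^{\mathfrak{S}}_{\mathfrak{S}\mathfrak{X}}$ abbreviates — one has $\xi^{\mathfrak{S}}_{\mathfrak{S}\mathfrak{X}}(g)(v) = I^{v}_{\mathfrak{S}\mathfrak{X}}(g)$ for every $g \in L^{\infty}(\mathfrak{S}\mathfrak{X})$ and every $v \in \mathfrak{S}^2\mathfrak{X}$; take $g = \varepsilon^{\mathfrak{S}}_{\mathfrak{X}}(A)$.

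For the last equality I would expand the Choquet integral using Definition \ref{defn:ChoquetInt}, writing $I^{v}_{\mathfrak{S}\mathfrak{X}}(\varepsilon^{\mathfrak{S}}_{\mathfrak{X}}(A)) = \int_{-\infty}^{\infty} (\varepsilon^{\mathfrak{S}}_{\mathfrak{X}}(A))^{v}(r)\, dr$, and then use $\varepsilon^{\mathfrak{S}}_{\mathfrak{X}}(A)(u) = u(A) \in [0,1]$ for every $u \in \mathfrak{S}\mathfrak{X}$. For $r < 0$ the superlevel set is all of $\mathfrak{S}\mathfrak{X}$, so the integrand equals $v(\mathfrak{S}\mathfrak{X}) - 1 = 0$ since $v$ is a capacity; for $r > 1$ the superlevel set is empty, so the integrand equals $v(\emptyset) = 0$; and for $0 \le r \le 1$ the integrand equals $v(\{u \in \mathfrak{S}\mathfrak{X} \mid u(A) \ge r\})$. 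Hence the domain of integration shrinks from $\mathbb{R}$ to $[0,1]$ and one gets $\int_0^1 v(\{u \in \mathfrak{S}\mathfrak{X} \mid u(A) \ge r\})\, dr$, as claimed. There is no genuine obstacle here; the only thing to be careful about is the notational bookkeeping — keeping the measurable space $\mathfrak{S}\mathfrak{X}$ distinct from its underlying set in the expressions $I^{v}_{\mathfrak{S}\mathfrak{X}}$ and $\{u \in \mathfrak{S}\mathfrak{X} \mid \cdots\}$, and treating the two tails $r<0$ and $r>1$ correctly so that only $[0,1]$ survives.
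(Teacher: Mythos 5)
Your proposal is correct and follows essentially the same route as the paper: the first two equalities are unpacked definitions of $\mu^{\mathfrak{S}}_{\mathfrak{X}}$ and $\xi^{\mathfrak{S}}_{\mathfrak{S}\mathfrak{X}}$, and the last one rests on identifying the superlevel set $(\varepsilon^{\mathfrak{S}}_{\mathfrak{X}}(A)\ge r)$ with $\{u\in\mathfrak{S}\mathfrak{X}\mid u(A)\ge r\}$ together with $0\le u(A)\le 1$, which is exactly the paper's (much terser) argument. Your version merely spells out the tail computations for $r<0$ and $r>1$ that the paper leaves implicit.
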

\begin{proof}
It is immediate from the fact that
\[
(
	\varepsilon^{\mathfrak{S}}_{\mathfrak{X}}
	(A)
		\ge
	r
)
		=
\{
	u \in 
	\mathfrak{S}
	\mathfrak{X}
\mid
	\varepsilon^{\mathfrak{S}}_{\mathfrak{X}}
	(A)(u)
	\ge r
\}
		=
\{
	u \in
	\mathfrak{S}
	\mathfrak{X}
\mid
	u(A)
	\ge r
\}
\]
and that
$
0 \le u(A) \le 1 .
$
\end{proof}

\begin{prop}
For 
a measurable space
$
\mathfrak{X}
$,
the map
$
\mu^{\mathfrak{S}}_{\mathfrak{X}}
	:
\mathfrak{S}^2
\mathfrak{X}
	\to
\mathfrak{S}
\mathfrak{X}
$
is measurable.
\end{prop}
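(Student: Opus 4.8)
The plan is to verify measurability against a generating family of the target $\sigma$-algebra. The codomain $\mathfrak{S}\mathfrak{X}$, as a measurable space, carries the $\sigma$-algebra which by Proposition~\ref{prop:sigmaGX} is generated by the sets
\[
E_{A,B} \;:=\; \{\, u \in \mathfrak{S}\mathfrak{X} \mid u(A) \in B \,\}
	\;=\; \big(\varepsilon^{\mathfrak{S}}_{\mathfrak{X}}(A)\big)^{-1}(B),
	\qquad A \in \Sigma_X,\ B \in \Sigma_{\mathbb{I}} .
\]
Since a map into a space equipped with a generated $\sigma$-algebra is measurable as soon as preimages of the generators are measurable, it suffices to show that $(\mu^{\mathfrak{S}}_{\mathfrak{X}})^{-1}(E_{A,B})$ belongs to the $\sigma$-algebra of $\mathfrak{S}^2\mathfrak{X}$ for every such $A$ and $B$.

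The next step is to rewrite this preimage. Directly from the definitions,
\[
(\mu^{\mathfrak{S}}_{\mathfrak{X}})^{-1}(E_{A,B})
	\;=\; \{\, v \in \mathfrak{S}^2\mathfrak{X} \mid \mu^{\mathfrak{S}}_{\mathfrak{X}}(v)(A) \in B \,\} ,
\]
and by Lemma~\ref{lem:alterMu} one has $\mu^{\mathfrak{S}}_{\mathfrak{X}}(v)(A) = \xi^{\mathfrak{S}}_{\mathfrak{S}\mathfrak{X}}\big(\varepsilon^{\mathfrak{S}}_{\mathfrak{X}}(A)\big)(v)$ for each $v$. Hence
\[
(\mu^{\mathfrak{S}}_{\mathfrak{X}})^{-1}(E_{A,B})
	\;=\; \Big(\xi^{\mathfrak{S}}_{\mathfrak{S}\mathfrak{X}}\big(\varepsilon^{\mathfrak{S}}_{\mathfrak{X}}(A)\big)\Big)^{-1}(B) ,
\]
so the whole question collapses to the measurability of the single real-valued function $\xi^{\mathfrak{S}}_{\mathfrak{S}\mathfrak{X}}\big(\varepsilon^{\mathfrak{S}}_{\mathfrak{X}}(A)\big)$ on $\mathfrak{S}^2\mathfrak{X}$.

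Finally I would invoke the regularity of $\xi$ already proved. The function $\varepsilon^{\mathfrak{S}}_{\mathfrak{X}}(A)$ is measurable by the very definition of the $\sigma$-algebra on $\mathfrak{S}\mathfrak{X}$ and is bounded, its values lying in $[0,1]$, so $\varepsilon^{\mathfrak{S}}_{\mathfrak{X}}(A) \in L^{\infty}(\mathfrak{S}\mathfrak{X})$. Applying the proposition asserting~(\ref{eq:xiXmap}) to the uncertainty space whose base measurable space is $\mathfrak{S}\mathfrak{X}$ and whose capacity set is $\mathfrak{S}^2\mathfrak{X} = \mathfrak{S}(\mathfrak{S}\mathfrak{X})$ — for which the map $\xi$ is precisely $\xi^{\mathfrak{S}}_{\mathfrak{S}\mathfrak{X}}$ — gives $\xi^{\mathfrak{S}}_{\mathfrak{S}\mathfrak{X}}\big(\varepsilon^{\mathfrak{S}}_{\mathfrak{X}}(A)\big) \in L^{\infty}(\mathfrak{S}^2\mathfrak{X})$, in particular a measurable function on $\mathfrak{S}^2\mathfrak{X}$. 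Its preimage of $B$ therefore lies in the $\sigma$-algebra of $\mathfrak{S}^2\mathfrak{X}$, and since $A$ and $B$ were arbitrary this establishes measurability of $\mu^{\mathfrak{S}}_{\mathfrak{X}}$. I do not expect a genuine obstacle: the only point requiring care is the bookkeeping identification of $\xi^{\mathfrak{S}}_{\mathfrak{S}\mathfrak{X}}$ with the $\xi$-map of the once-iterated uncertainty space, so that the earlier $L^{\infty}\to L^{\infty}$ result applies verbatim, together with the fact that $\mu^{\mathfrak{S}}_{\mathfrak{X}}$ is well defined as a map into $\mathfrak{S}\mathfrak{X}$ thanks to the CM-functor condition~(\ref{eq:condM}).
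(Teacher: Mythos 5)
Your proposal is correct and follows essentially the same route as the paper's proof: both reduce measurability to preimages of the generating sets $\{u \mid u(A)\in B\}$, rewrite $\mu^{\mathfrak{S}}_{\mathfrak{X}}(v)(A)$ as $\xi^{\mathfrak{S}}_{\mathfrak{S}\mathfrak{X}}\big(\varepsilon^{\mathfrak{S}}_{\mathfrak{X}}(A)\big)(v)$ via Lemma \ref{lem:alterMu}, and then invoke the earlier result that $\xi$ maps $L^{\infty}$ into $L^{\infty}$ (hence is measurable) for the uncertainty space with base $\mathfrak{S}\mathfrak{X}$ and capacity set $\mathfrak{S}^2\mathfrak{X}$. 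Your additional remarks on $\varepsilon^{\mathfrak{S}}_{\mathfrak{X}}(A)$ being bounded measurable and on well-definedness via the CM-condition are the same implicit ingredients the paper relies on.
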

\begin{proof}
Let
$
\mathfrak{X}
	= (X, \Sigma_X)
$
be a measurable space.
For
$
A
	\in
\Sigma_X
$
and
$
Z
	\in
\Sigma_{\mathbf{I}},
$
we have
\if \MkLonger  1
\begin{align*}
&
(\mu^{\mathfrak{S}}_{\mathfrak{X}})^{-1}
\big(
	\{
		u
			\in
		\mathfrak{S}
		\mathfrak{X}
	\mid
		u
		(A)
			\in
		Z
	\}
\big)
			\\&=
\big\{
	v
		\in
	\mathfrak{S}^2
	\mathfrak{X}
\mid
	\mu^{\mathfrak{S}}_{\mathfrak{X}}
	(v)
		\in
	\{
		u
			\in
		\mathfrak{S}
		\mathfrak{X}
	\mid
		u
		(A)
			\in
		Z
	\}
\big\}
			\\&=
\big\{
	v
		\in
	\mathfrak{S}^2
	\mathfrak{X}
\mid
	\mu^{\mathfrak{S}}_{\mathfrak{X}}
	(v)
	(A)
		\in
	Z
\big\}
			\\&=
\big\{
	v
		\in
	\mathfrak{S}^2
	\mathfrak{X}
\mid
\xi^{\mathfrak{S}}_{
	\mathfrak{S}
	\mathfrak{X}
}
(
	\varepsilon^{\mathfrak{S}}_{\mathfrak{X}}
	(A)
)
(v)
		\in
	Z
\big\}
			&(\textrm{
				by Lemma \ref{eq:alterMu}
			})
			\\&=
\big(
\xi^{\mathfrak{S}}_{
	\mathfrak{S}
	\mathfrak{X}
}
(
	\varepsilon^{\mathfrak{S}}_{\mathfrak{X}}
	(A)
)
\big)^{-1}
(Z)
			\in
\Sigma_{
	\mathfrak{S}^2
	\mathfrak{X}
} .
			&(\textrm{
				by Proposition \ref{eq:defXiX}
			})
\end{align*}
\else 
by
				Lemma \ref{eq:alterMu}
and
				Proposition \ref{eq:defXiX},
\begin{align*}
&
(\mu^{\mathfrak{S}}_{\mathfrak{X}})^{-1}
\big(
	\{
		u
			\in
		\mathfrak{S}
		\mathfrak{X}
	\mid
		u
		(A)
			\in
		Z
	\}
\big)
			=
\big\{
	v
		\in
	\mathfrak{S}^2
	\mathfrak{X}
\mid
	\mu^{\mathfrak{S}}_{\mathfrak{X}}
	(v)
		\in
	\{
		u
			\in
		\mathfrak{S}
		\mathfrak{X}
	\mid
		u
		(A)
			\in
		Z
	\}
\big\}
			\\&=
\big\{
	v
		\in
	\mathfrak{S}^2
	\mathfrak{X}
\mid
	\mu^{\mathfrak{S}}_{\mathfrak{X}}
	(v)
	(A)
		\in
	Z
\big\}
			=
\big\{
	v
		\in
	\mathfrak{S}^2
	\mathfrak{X}
\mid
\xi^{\mathfrak{S}}_{
	\mathfrak{S}
	\mathfrak{X}
}
(
	\varepsilon^{\mathfrak{S}}_{\mathfrak{X}}
	(A)
)
(v)
		\in
	Z
\big\}
			\\&=
\big(
\xi^{\mathfrak{S}}_{
	\mathfrak{S}
	\mathfrak{X}
}
(
	\varepsilon^{\mathfrak{S}}_{\mathfrak{X}}
	(A)
)
\big)^{-1}
(Z)
			\in
\Sigma_{
	\mathfrak{S}^2
	\mathfrak{X}
} .
\end{align*}
\fi 
\end{proof}

We will show that
$\mu^{\mathfrak{S}}$
is a natural transformation.
But before doing that,
we prepare the following lemma, showing the principle of the substitution integral.

\begin{lem}
\label{lem:lemTwo}
Let
$
\mathfrak{X}
$
and
$
\mathfrak{Y}
$
be measurable spaces,
$
h :
\mathfrak{X}
	\to
\mathfrak{Y}
$
be a measurable function,
$
f 
	\in
L^{\infty}(
\mathfrak{Y}
) ,
$
and
$
u
	\in
\mathfrak{S}
\mathfrak{X}
$.
Then, we have
\begin{equation}
\label{eq:lemTwo}
I_{
	\mathfrak{X}
}^u
(f \circ h)
		=
I_{
	\mathfrak{Y}
}^{u \circ h^{-1}}
(f)
\end{equation}
if the both sides are defined.
\end{lem}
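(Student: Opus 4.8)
The plan is to establish the identity first on finite step functions, where it follows from elementary manipulations of preimages together with Lemma \ref{lem:stepFunChoquetInt}, and then to extend it to an arbitrary $f \in L^{\infty}(\mathfrak{Y})$ by the discretization of Note \ref{note:discretizeU} and the monotonicity of the Choquet integral. Before anything else I would record that $u \circ h^{-1}$ is a capacity on $\mathfrak{Y}$: indeed $(u \circ h^{-1})(\emptyset) = u(\emptyset) = 0$, $(u \circ h^{-1})(Y) = u(X) = 1$, and $B \subset B'$ in $\Sigma_Y$ forces $h^{-1}(B) \subset h^{-1}(B')$, hence monotonicity of $u$ transports along preimages. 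So the right-hand side of (\ref{eq:lemTwo}) makes sense; moreover, since $f$ is bounded all four Choquet integrals involved are finite, so the proviso ``if the both sides are defined'' is automatically met in this setting.

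For the step-function case I would take $f = \sum_{i=1}^{n} a_i \mathbb{1}_Y(B_i)$ with $B_1, \dots, B_n$ mutually disjoint members of $\Sigma_Y$ and $a_1 \ge \cdots \ge a_n$. By Proposition \ref{prop:lemOne}, $\mathbb{1}_Y(B_i) \circ h = \mathbb{1}_X(h^{-1}(B_i))$, hence $f \circ h = \sum_{i=1}^{n} a_i \mathbb{1}_X(h^{-1}(B_i))$; the sets $h^{-1}(B_i)$ are mutually disjoint members of $\Sigma_X$ since $h$ is measurable and preimages preserve disjointness. Applying Lemma \ref{lem:stepFunChoquetInt} on $\mathfrak{X}$ and on $\mathfrak{Y}$, and using $\bigcup_{j=1}^{i} h^{-1}(B_j) = h^{-1}\big(\bigcup_{j=1}^{i} B_j\big)$, gives
\[
I_{\mathfrak{X}}^{u}(f \circ h) = \sum_{i=1}^{n} (a_i - a_{i+1})\, u\Big(h^{-1}\Big(\bigcup_{j=1}^{i} B_j\Big)\Big) = \sum_{i=1}^{n} (a_i - a_{i+1})\, (u \circ h^{-1})\Big(\bigcup_{j=1}^{i} B_j\Big) = I_{\mathfrak{Y}}^{u \circ h^{-1}}(f),
\]
with the usual convention $a_{n+1} := 0$.

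For general $f$ I would fix $M$ with $\norm{f}_{\infty} < M$ and build step functions $\underline{f}_n, \bar{f}_n$ on $\mathfrak{Y}$ as in Note \ref{note:discretizeU}, so that $\underline{f}_n \le f \le \bar{f}_n$ pointwise and $\bar{f}_n - \underline{f}_n = 2^{-n}M$. Composing with $h$ preserves these pointwise inequalities on $X$, so by the monotonicity of the Choquet integral (Theorem \ref{thm:choquetThm}(1)) on $\mathfrak{X}$ and on $\mathfrak{Y}$ and by the step-function case,
\[
I_{\mathfrak{Y}}^{u \circ h^{-1}}(\underline{f}_n) = I_{\mathfrak{X}}^{u}(\underline{f}_n \circ h) \le I_{\mathfrak{X}}^{u}(f \circ h) \le I_{\mathfrak{X}}^{u}(\bar{f}_n \circ h) = I_{\mathfrak{Y}}^{u \circ h^{-1}}(\bar{f}_n),
\]
while also $I_{\mathfrak{Y}}^{u \circ h^{-1}}(\underline{f}_n) \le I_{\mathfrak{Y}}^{u \circ h^{-1}}(f) \le I_{\mathfrak{Y}}^{u \circ h^{-1}}(\bar{f}_n)$. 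Since $\bar{f}_n = \underline{f}_n + 2^{-n}M$ and a constant is comonotonic with every function, Corollary \ref{cor:ChoquetComono}(2),(3) yields $I_{\mathfrak{Y}}^{u \circ h^{-1}}(\bar{f}_n) - I_{\mathfrak{Y}}^{u \circ h^{-1}}(\underline{f}_n) = 2^{-n}M \to 0$. Hence $I_{\mathfrak{X}}^{u}(f \circ h)$ and $I_{\mathfrak{Y}}^{u \circ h^{-1}}(f)$ are squeezed between the same two converging bounds, so they coincide.

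I do not expect a genuine obstacle: the entire content is in the step-function computation, and the only care needed is (i) that the $h^{-1}(B_i)$ are disjoint and measurable, so Lemma \ref{lem:stepFunChoquetInt} applies on $\mathfrak{X}$, and (ii) that the step functions from Note \ref{note:discretizeU} be relabelled with non-increasing coefficients before the step-function case is invoked. The mildly delicate point, if any, is simply that the approximation in Note \ref{note:discretizeU} transfers correctly through composition with $h$, which is immediate because all its defining inequalities are pointwise.
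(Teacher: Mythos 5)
Your proposal is correct and follows essentially the same route as the paper: prove the identity for finite step functions via Proposition \ref{prop:lemOne} and Lemma \ref{lem:stepFunChoquetInt}, then pass to general $f$ by the discretization of Note \ref{note:discretizeU} and a squeeze via monotonicity. Your extra checks (that $u \circ h^{-1}$ is a capacity, and the explicit $2^{-n}M$ gap estimate via comonotonic additivity with constants) are fine but not new in substance; the paper gets the same convergence of the bounding integrals.
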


Note that
$
u \circ h^{-1}
	\in
\mathfrak{S}
\mathfrak{Y}
$
since 
$
\mathfrak{S}
$
is a C-functor.
\begin{proof}
Let
$
\mathfrak{X} = (X, \Sigma_X)
$
and
$
\mathfrak{Y} = (Y, \Sigma_Y)
$
be two measurable spaces.
First,
we prove 
(\ref{eq:lemTwo})
when $f$ is a finite step function such as
\if \MkLonger  1
\[
f
	:=
\sum_{i=1}^n
a_i
\mathbb{1}_Y(B_i)
\]
\else 
$
f
	:=
\sum_{i=1}^n
a_i
\mathbb{1}_Y(B_i)
$
\fi 
where
$
a_1 \ge a_2 \ge \cdots \ge a_n
$
are decreasing real numbers
and
$B_i$
are mutually disjoint elements of
$\Sigma_Y$.
Then, 
by a simple consideration,
we have
\begin{equation}
f \circ h
	:=
\sum_{i=1}^n
a_i
\mathbb{1}_X(
	h^{-1}(B_i)
) ,
\end{equation}
which means 
$
f \circ h
	\in
L^{\infty}(
	\mathfrak{X}
)
$
is a finite step function
since
$h$
is measurable.
Hence,
\if \MkLonger  1
\begin{align*}
I_{\mathfrak{X}}^u
(f \circ h)
			&=
\sum_{i=1}^n
	(a_i - a_{i+1})
	\,
	u\Big(
		\bigcup_{j=1}^i
			h^{-1}(B_j)
	\Big)
			&(\textrm{by Lemma \ref{lem:stepFunChoquetInt}})
			\\&=
\sum_{i=1}^n
	(a_i - a_{i+1})
	\,
	u\Big(
		h^{-1} \big(
			\bigcup_{j=1}^i
				B_j
		\big)
	\Big)
			\\&=
\sum_{i=1}^n
	(a_i - a_{i+1})
	\,
	(u \circ h^{-1})
	\Big(
		\bigcup_{j=1}^i
			B_j
	\Big)
			\\&=
I_{\mathfrak{Y}}^{
	u \circ h^{-1}
}
	(f) .
			&(\textrm{by Lemma \ref{lem:stepFunChoquetInt}})
\end{align*}
\else 
by Lemma \ref{lem:stepFunChoquetInt},
\begin{align*}
I_{\mathfrak{X}}^u
(f \circ h)
			&=
\sum_{i=1}^n
	(a_i - a_{i+1})
	\,
	u\Big(
		\bigcup_{j=1}^i
			h^{-1}(B_j)
	\Big)
			=
\sum_{i=1}^n
	(a_i - a_{i+1})
	\,
	u\Big(
		h^{-1} \big(
			\bigcup_{j=1}^i
				B_j
		\big)
	\Big)
			\\&=
\sum_{i=1}^n
	(a_i - a_{i+1})
	\,
	(u \circ h^{-1})
	\Big(
		\bigcup_{j=1}^i
			B_j
	\Big)
			=
I_{\mathfrak{Y}}^{
	u \circ h^{-1}
}
	(f) .
\end{align*}
\fi 

For general
$
f \in L^{\infty}(\mathfrak{Y}),
$
we create 
$\underline{f}_n$
and
$\bar{f}_n$
by the method described in
Note \ref{note:discretizeU}.
By the monotonicity of
$
I_{\mathfrak{X}}^u
$
and the previous result about step functions, 
we have
\begin{equation}
I_{\mathfrak{Y}}^{ u \circ h^{-1} }
(
	\underline{f}_n
)
	=
I_{\mathfrak{X}}^u
(
	\underline{f}_n
		\circ
	h
)
	\le
I_{\mathfrak{X}}^u
(
f
	\circ
h
)
	\le
I_{\mathfrak{X}}^u
(
\bar{f}_n
	\circ
h
) 
	=
I_{\mathfrak{Y}}^{ u \circ h^{-1} }
(
\bar{f}_n
) .
\end{equation}
But,
both
$
I_{\mathfrak{Y}}^{ u \circ h^{-1} }
(
\underline{f}_n
)
$
and
$
I_{\mathfrak{Y}}^{ u \circ h^{-1} }
(
\bar{f}_n
)
$
converge to
$
I_{\mathfrak{Y}}^{ u \circ h^{-1} }
(
f
) .
$
Therefore, we get
(\ref{eq:lemTwo}).

\end{proof}

\begin{prop}
\label{prop:MuNatTrans}
The correspondence 
$\mu^{\mathfrak{S}}$
is a natural transformation
$
\mu^{\mathfrak{S}}
	:
\mathfrak{S}^2
	\dot{\to}
\mathfrak{S}
$.
i.e.
the diagram
\ref{diag:natTransMu}
commutes.

\begin{diagram}[htb]
\begin{equation*}
\xymatrix{
&
	\mathfrak{S}^2
		\ar @{->}^{\mu^{\mathfrak{S}}} [r]
&
	\mathfrak{S}
\\
    \mathfrak{X}
      \ar @{->}_{h} [d]
  &
    \mathfrak{S}^2
    \mathfrak{X}
      \ar @{->}_{\mathfrak{S}^2 h} [d]
      \ar @{->}^{\mu^{\mathfrak{S}}_{
    		\mathfrak{X}
		}} [r]
  &
    \mathfrak{S}
    \mathfrak{X}
      \ar @{->}^{\mathfrak{S}h} [d]
 \\
    \mathfrak{Y}
  &
    \mathfrak{S}^2
    \mathfrak{Y}
      \ar @{->}_{\mu^{\mathfrak{S}}_{
    		\mathfrak{Y}
		}} [r]
  &
    \mathfrak{S}
    \mathfrak{Y}
}
\end{equation*}
\caption{natural transformation $
\mu^{\mathfrak{S}}
	:
\mathfrak{S}^2
	\dot{\to}
\mathfrak{S}
$}
\label{diag:natTransMu}
\end{diagram}
\end{prop}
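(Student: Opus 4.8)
The plan is to verify commutativity of the square in Diagram \ref{diag:natTransMu} by evaluating both composites at an arbitrary $v \in \mathfrak{S}^{2}\mathfrak{X}$ and then at an arbitrary $B \in \Sigma_{Y}$, where $\mathfrak{X} = (X, \Sigma_{X})$ and $\mathfrak{Y} = (Y, \Sigma_{Y})$. The first step is to record the description of $\mathfrak{S}$ on morphisms: since $\mathfrak{S} = \mathfrak{L} \circ \rho$ with $\mathfrak{U} \circ \rho = 1_{\Mble}$ (Definition \ref{defn:cFunctor}), the underlying measurable map of $\rho h$ is $h$ itself, so by (\ref{eq:defLmap}) the map $\mathfrak{S}h \colon \mathfrak{S}\mathfrak{X} \to \mathfrak{S}\mathfrak{Y}$ is the pushforward $u \mapsto u \circ h^{-1}$; applying this one level up, and using that $\mathfrak{S}h$ is measurable (shown in the proof of Theorem \ref{thm:functorL}), gives $\mathfrak{S}^{2}h = \mathfrak{S}(\mathfrak{S}h) \colon v \mapsto v \circ (\mathfrak{S}h)^{-1}$. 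It therefore suffices to prove, for all such $v$ and $B$,
\begin{equation*}
\bigl(\mathfrak{S}h\,(\mu^{\mathfrak{S}}_{\mathfrak{X}}(v))\bigr)(B)
	=
\bigl(\mu^{\mathfrak{S}}_{\mathfrak{Y}}(v \circ (\mathfrak{S}h)^{-1})\bigr)(B).
\end{equation*}

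For the left-hand side, unfolding the pushforward and then applying Lemma \ref{lem:alterMu} gives
\begin{equation*}
\bigl(\mathfrak{S}h\,(\mu^{\mathfrak{S}}_{\mathfrak{X}}(v))\bigr)(B)
	=
\mu^{\mathfrak{S}}_{\mathfrak{X}}(v)\bigl(h^{-1}(B)\bigr)
	=
I_{\mathfrak{S}\mathfrak{X}}^{v}\bigl(\varepsilon^{\mathfrak{S}}_{\mathfrak{X}}(h^{-1}(B))\bigr).
\end{equation*}
For the right-hand side, Lemma \ref{lem:alterMu} gives $\bigl(\mu^{\mathfrak{S}}_{\mathfrak{Y}}(v \circ (\mathfrak{S}h)^{-1})\bigr)(B) = I_{\mathfrak{S}\mathfrak{Y}}^{\,v \circ (\mathfrak{S}h)^{-1}}\bigl(\varepsilon^{\mathfrak{S}}_{\mathfrak{Y}}(B)\bigr)$, and then I would invoke the substitution principle of Lemma \ref{lem:lemTwo} for the measurable map $\mathfrak{S}h \colon \mathfrak{S}\mathfrak{X} \to \mathfrak{S}\mathfrak{Y}$ and the function $\varepsilon^{\mathfrak{S}}_{\mathfrak{Y}}(B) \in L^{\infty}(\mathfrak{S}\mathfrak{Y})$ (bounded, and measurable because $\Sigma_{\mathfrak{S}\mathfrak{Y}}$ is generated precisely to make the maps $\varepsilon^{\mathfrak{S}}_{\mathfrak{Y}}(B)$ measurable), obtaining $I_{\mathfrak{S}\mathfrak{Y}}^{\,v \circ (\mathfrak{S}h)^{-1}}\bigl(\varepsilon^{\mathfrak{S}}_{\mathfrak{Y}}(B)\bigr) = I_{\mathfrak{S}\mathfrak{X}}^{v}\bigl(\varepsilon^{\mathfrak{S}}_{\mathfrak{Y}}(B) \circ \mathfrak{S}h\bigr)$. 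Finally Proposition \ref{prop:epsBcircGf} rewrites $\varepsilon^{\mathfrak{S}}_{\mathfrak{Y}}(B) \circ \mathfrak{S}h = \varepsilon^{\mathfrak{S}}_{\mathfrak{X}}(h^{-1}(B))$, so the right-hand side coincides with $I_{\mathfrak{S}\mathfrak{X}}^{v}\bigl(\varepsilon^{\mathfrak{S}}_{\mathfrak{X}}(h^{-1}(B))\bigr)$, which is exactly the left-hand side.

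Since $B$ and $v$ are arbitrary, the two capacities agree, hence the two maps $\mathfrak{S}^{2}\mathfrak{X} \to \mathfrak{S}\mathfrak{Y}$ agree, which is the naturality square. The argument is then just a chain of the three already-proved identities (Lemmas \ref{lem:alterMu}, \ref{lem:lemTwo} and Proposition \ref{prop:epsBcircGf}); the only place demanding care is the opening step, namely confirming that $\mathfrak{S}$ acts by pushforward at both the first and second levels and that the integrands genuinely lie in the appropriate $L^{\infty}$ spaces, so that the Choquet integrals and the substitution lemma are legitimate. Once that bookkeeping is pinned down, the diagram closes immediately, and I expect no substantive obstacle beyond it.
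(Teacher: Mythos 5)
Your proposal is correct and follows essentially the same route as the paper's proof: fix $v\in\mathfrak{S}^{2}\mathfrak{X}$ and $B\in\Sigma_{Y}$, express both composites as Choquet integrals via the definition of $\mu^{\mathfrak{S}}$ (Lemma \ref{lem:alterMu}), and close the gap with the substitution principle of Lemma \ref{lem:lemTwo} together with Proposition \ref{prop:epsBcircGf}. Your extra remark that the pushforward description of $\mathfrak{S}h$ and $\mathfrak{S}^{2}h$ follows from $\mathfrak{S}=\mathfrak{L}\circ\rho$ with $\mathfrak{U}\circ\rho=1_{\Mble}$ is a correct piece of bookkeeping that the paper leaves implicit.
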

\begin{proof}
Let
$
\mathfrak{X} = (X, \Sigma_X)
$
and
$
\mathfrak{Y} = (Y, \Sigma_Y)
$
be two measurable spaces,
$
v
	\in
\mathfrak{S}^2
\mathfrak{X}
$
and
$
B \in \Sigma_Y
$.
Then, we have
\if \MkLonger  1
\begin{align*}
(
\mu^{\mathfrak{S}}_{\mathfrak{Y}}
	\circ
\mathfrak{S}^2
h
)
(v)(B)
		&=
\mu^{\mathfrak{S}}_{\mathfrak{Y}}
(
	\mathfrak{S}
		(
		\mathfrak{S}
		h
		)
		(v)
)
(B)
		\\&=
\mu^{\mathfrak{S}}_{\mathfrak{Y}}
(
	v
		\circ
	(
	\mathfrak{S}
	h
	)^{-1}
)
(B)
		\\&=
I_{
	\mathfrak{S}
	\mathfrak{Y}
}^{
	v
		\circ
	(
		\mathfrak{S}
		h
	)^{-1}
}
(
	\varepsilon_Y(B)
)
		\\&=
I_{
	\mathfrak{S}
	\mathfrak{X}
}^{
	v
}
(
	\varepsilon_Y(B)
		\circ
	\mathfrak{S}
	h
)
		&(\textrm{
by Lemma \ref{lem:lemTwo}
		})
		\\&=
I_{
	\mathfrak{S}
	\mathfrak{X}
}^{
	v
}
(
	\varepsilon_X
	(h^{-1}(B))
)
			&(\textrm{
				by Proposition \ref{prop:epsBcircGf}
			})
		\\&=
\mu^{\mathfrak{S}}_{\mathfrak{X}}
(
	v
)
(
	h^{-1}
	(B)
)
		\\&=
(
	\mu^{\mathfrak{S}}_{\mathfrak{X}}
	(v)
		\circ
	h^{-1}
)
(B)
		\\&=
\mathfrak{S}
h
(
	\mu^{\mathfrak{S}}_{\mathfrak{X}}
	(v)
)
(B)
		\\&=
(
\mathfrak{S}
h
	\circ
\mu^{\mathfrak{S}}_{\mathfrak{X}}
)
(v)(B) .
\end{align*}
\else 
by
Lemma \ref{lem:lemTwo}
and
Proposition \ref{prop:epsBcircGf},
\begin{align*}
		&
(
\mu^{\mathfrak{S}}_{\mathfrak{Y}}
	\circ
\mathfrak{S}^2
h
)
(v)(B)
		=
\mu^{\mathfrak{S}}_{\mathfrak{Y}}
(
	\mathfrak{S}
		(
		\mathfrak{S}
		h
		)
		(v)
)
(B)
		=
\mu^{\mathfrak{S}}_{\mathfrak{Y}}
(
	v
		\circ
	(
	\mathfrak{S}
	h
	)^{-1}
)
(B)
		=
I_{
	\mathfrak{S}
	\mathfrak{Y}
}^{
	v
		\circ
	(
		\mathfrak{S}
		h
	)^{-1}
}
(
	\varepsilon_Y(B)
)
		\\&=
I_{
	\mathfrak{S}
	\mathfrak{X}
}^{
	v
}
(
	\varepsilon_Y(B)
		\circ
	\mathfrak{S}
	h
)
		=
I_{
	\mathfrak{S}
	\mathfrak{X}
}^{
	v
}
(
	\varepsilon_X
	(h^{-1}(B))
)
\mu^{\mathfrak{S}}_{\mathfrak{X}}
(
	v
)
(
	h^{-1}
	(B)
)
		=
(
	\mu^{\mathfrak{S}}_{\mathfrak{X}}
	(v)
		\circ
	h^{-1}
)
(B)
		\\&=
\mathfrak{S}
h
(
	\mu^{\mathfrak{S}}_{\mathfrak{X}}
	(v)
)
(B)
		=
(
\mathfrak{S}
h
	\circ
\mu^{\mathfrak{S}}_{\mathfrak{X}}
)
(v)(B) .
\end{align*}
\fi 

\end{proof}

\begin{prop}
\label{prop:GiryUnit}
The diagram in
Diagram \ref{diag:defGiryUnit}
commutes.

\begin{diagram}[htb]
\begin{equation*}
\xymatrix{
	1_{\Mble}
	\mathfrak{S}
		\ar @{}^-{\rotatebox{90}{=}} @<-4pt> [d]
		\ar @{->}^-{\eta^{\mathfrak{S}} \mathfrak{S}} [r]
&
    \mathfrak{S}^2
		\ar @{->}^{\mu^{\mathfrak{S}}} [d]
&
	\mathfrak{S}
	1_{\Mble}
		\ar @{}^-{\rotatebox{90}{=}} @<-4pt> [d]
		\ar @{->}_-{\mathfrak{S} \eta^{\mathfrak{S}}} [l]
 \\
    \mathfrak{S}
		\ar @{}^-{=} @<-4pt> [r]
&
    \mathfrak{S}
		\ar @{}^-{=} @<-4pt> [r]
&
    \mathfrak{S}
}
\end{equation*}
\caption{Giry unit}
\label{diag:defGiryUnit}
\end{diagram}
\end{prop}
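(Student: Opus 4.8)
The plan is to verify the two triangles of Diagram \ref{diag:defGiryUnit} separately, by evaluating each composite at an arbitrary $u \in \mathfrak{S}\mathfrak{X}$ and comparing the resulting capacities pointwise on $\Sigma_X$ through the generating maps $\varepsilon^{\mathfrak{S}}_{\mathfrak{X}}(A)$, $A \in \Sigma_X$. Two preliminary remarks make the rest routine. First, since $\mathfrak{S} = \mathfrak{L}\circ\rho$ and $\mathfrak{U}\circ\rho = 1_{\Mble}$, the functor $\mathfrak{S}$ acts on a measurable map $h$ by the pushforward $\mathfrak{S}h(w) = w \circ h^{-1}$, exactly as $\mathfrak{L}$ is defined in Theorem \ref{thm:functorL}. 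Second, $\varepsilon^{\mathfrak{S}}_{\mathfrak{X}}(A)$ is a bounded measurable function on $\mathfrak{S}\mathfrak{X}$, so every Choquet integral below is defined, and $\mu^{\mathfrak{S}}_{\mathfrak{X}}$ lands in $\mathfrak{S}\mathfrak{X}$ by the CM-condition of Definition \ref{defn:condM}.

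For the left triangle I must show $\mu^{\mathfrak{S}}_{\mathfrak{X}}\circ\eta^{\mathfrak{S}}_{\mathfrak{S}\mathfrak{X}} = 1_{\mathfrak{S}\mathfrak{X}}$. Fix $u \in \mathfrak{S}\mathfrak{X}$. Then $\eta^{\mathfrak{S}}_{\mathfrak{S}\mathfrak{X}}(u)$ is the Dirac measure at $u$, an element of $\mathfrak{S}^2\mathfrak{X}$ because $\eta^{\mathfrak{S}}$ is a natural transformation $1_{\Mble}\dot{\to}\mathfrak{S}$ by Proposition \ref{prop:etaNT}. Using the formula for $\mu^{\mathfrak{S}}$ from Lemma \ref{lem:alterMu} and the Dirac-measure identity of Proposition \ref{prop:DiracMesX}, for every $A \in \Sigma_X$
\[
\mu^{\mathfrak{S}}_{\mathfrak{X}}\big(\eta^{\mathfrak{S}}_{\mathfrak{S}\mathfrak{X}}(u)\big)(A)
	=
I_{\mathfrak{S}\mathfrak{X}}^{\eta^{\mathfrak{S}}_{\mathfrak{S}\mathfrak{X}}(u)}\big(\varepsilon^{\mathfrak{S}}_{\mathfrak{X}}(A)\big)
	=
\varepsilon^{\mathfrak{S}}_{\mathfrak{X}}(A)(u)
	=
u(A),
\]
so $\mu^{\mathfrak{S}}_{\mathfrak{X}}(\eta^{\mathfrak{S}}_{\mathfrak{S}\mathfrak{X}}(u)) = u$.

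For the right triangle I must show $\mu^{\mathfrak{S}}_{\mathfrak{X}}\circ\mathfrak{S}\eta^{\mathfrak{S}}_{\mathfrak{X}} = 1_{\mathfrak{S}\mathfrak{X}}$. Fix $u \in \mathfrak{S}\mathfrak{X}$; by the pushforward description, $\mathfrak{S}\eta^{\mathfrak{S}}_{\mathfrak{X}}(u) = u \circ (\eta^{\mathfrak{S}}_{\mathfrak{X}})^{-1} \in \mathfrak{S}^2\mathfrak{X}$. Applying Lemma \ref{lem:alterMu}, then the substitution principle of Lemma \ref{lem:lemTwo} with $h = \eta^{\mathfrak{S}}_{\mathfrak{X}} : \mathfrak{X}\to\mathfrak{S}\mathfrak{X}$ and $f = \varepsilon^{\mathfrak{S}}_{\mathfrak{X}}(A)$, then the identity $\varepsilon^{\mathfrak{S}}_{\mathfrak{X}}(A)\circ\eta^{\mathfrak{S}}_{\mathfrak{X}} = \mathbb{1}_X(A)$ of Proposition \ref{prop:etaXxX} (equation (\ref{eq:etaXxY}), read through the abbreviations of Definitions \ref{defn:epsilonXiDetail} and \ref{defn:DiracEta}), and finally (\ref{eq:ChoquetOneA}), one obtains for every $A \in \Sigma_X$
\[
\mu^{\mathfrak{S}}_{\mathfrak{X}}\big(\mathfrak{S}\eta^{\mathfrak{S}}_{\mathfrak{X}}(u)\big)(A)
	=
I_{\mathfrak{S}\mathfrak{X}}^{u\circ(\eta^{\mathfrak{S}}_{\mathfrak{X}})^{-1}}\big(\varepsilon^{\mathfrak{S}}_{\mathfrak{X}}(A)\big)
	=
I_{\mathfrak{X}}^{u}\big(\varepsilon^{\mathfrak{S}}_{\mathfrak{X}}(A)\circ\eta^{\mathfrak{S}}_{\mathfrak{X}}\big)
	=
I_{\mathfrak{X}}^{u}\big(\mathbb{1}_X(A)\big)
	=
u(A),
\]
whence $\mu^{\mathfrak{S}}_{\mathfrak{X}}(\mathfrak{S}\eta^{\mathfrak{S}}_{\mathfrak{X}}(u)) = u$, closing the diagram.

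I do not expect a genuine obstacle: once Lemmas \ref{lem:alterMu} and \ref{lem:lemTwo} and Propositions \ref{prop:DiracMesX} and \ref{prop:etaXxX} are available, both triangles collapse to one-line computations. The only points demanding care are bookkeeping ones — keeping straight which copy of $\mathfrak{S}\mathfrak{X}$ is the ``base'' and which is the ``iterated'' one when invoking the substitution lemma, and checking that its hypothesis (both sides defined) is met because $\varepsilon^{\mathfrak{S}}_{\mathfrak{X}}(A)$ is bounded measurable. It is worth noting that, in contrast to these two unit laws, the associativity law $\mu^{\mathfrak{S}}\circ\mathfrak{S}\mu^{\mathfrak{S}} = \mu^{\mathfrak{S}}\circ\mu^{\mathfrak{S}}\mathfrak{S}$ need not hold for a general CM-functor, consistent with the fact (Section \ref{sec:GiryMonad}) that $\mathfrak{S}$ becomes a probability monad only under an extra hypothesis.
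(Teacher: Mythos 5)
Your proof is correct and follows essentially the same route as the paper: both unit laws are verified pointwise on $\Sigma_X$ via Lemma \ref{lem:alterMu}, with the left triangle closed by the Dirac identity of Proposition \ref{prop:DiracMesX} and the right one by the substitution Lemma \ref{lem:lemTwo} together with $\varepsilon^{\mathfrak{S}}_{\mathfrak{X}}(A)\circ\eta^{\mathfrak{S}}_{\mathfrak{X}}=\mathbb{1}_X(A)$ and (\ref{eq:ChoquetOneA}), exactly as in the paper. The only cosmetic slip is attributing $\eta^{\mathfrak{S}}_{\mathfrak{S}\mathfrak{X}}(u)\in\mathfrak{S}^2\mathfrak{X}$ to naturality (Proposition \ref{prop:etaNT}); it is simply the definition of $\eta^{\mathfrak{S}}$ at the object $\mathfrak{S}\mathfrak{X}$.
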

\begin{proof}
Let
$
\mathfrak{X}
	=
(X, \Sigma_X)
$
be a measurable space.
The natural transformations described in
Diagram \ref{diag:defGiryUnit}
are defined as
$
(
\eta^{\mathfrak{S}}
\mathfrak{S}
)_{
	\mathfrak{X}
}
	:=
\eta^{\mathfrak{S}}_{
	\mathfrak{S}
	\mathfrak{X}
}
$
and
$
(
\mathfrak{S}
\eta^{\mathfrak{S}}
)_{
	\mathfrak{X}
}
	:=
\mathfrak{S}(
	\eta^{\mathfrak{S}}_{
		\mathfrak{X}
	}
) .
$

In order to show the diagram commutes, we need to show that
$
\mu^{\mathfrak{S}}
	\circ
\eta^{\mathfrak{S}}
\mathfrak{S}
		=
1_{
	\mathfrak{S}
	\mathfrak{X}
}
$
and
$
\mu^{\mathfrak{S}}
	\circ
\mathfrak{S}
\eta^{\mathfrak{S}}
		=
1_{
	\mathfrak{S}
	\mathfrak{X}
}
$.
Let
$
u
	\in
\mathfrak{S}
\mathfrak{X}
$
and
$
A \in \Sigma_X
$.

For the first equation,
we have
\if \MkLonger  1
\begin{align*}
(
\mu^{\mathfrak{S}}
	\circ
\eta^{\mathfrak{S}}
\mathfrak{S}
)_{
	\mathfrak{X}
}
(u)(A)
		&=
(
\mu^{\mathfrak{S}}_{\mathfrak{X}}
	\circ
\eta^{\mathfrak{S}}_{
	\mathfrak{S}
	X
}
)
(u)(A)
		\\&=
\mu^{\mathfrak{S}}_{\mathfrak{X}}
(
	\eta^{\mathfrak{S}}_{
		\mathfrak{S}
		\mathfrak{X}
	}
	(u)
)
(A)
		\\&=
I_{
	\mathfrak{S}
	\mathfrak{X}
}^{
	\eta^{\mathfrak{S}}_{
		\mathfrak{S}
		\mathfrak{X}
	}
	(u)
}
\big(
	\varepsilon^{\mathfrak{S}}_{\mathfrak{X}}
	(A)
\big)
		\\&=
\varepsilon^{\mathfrak{S}}_{\mathfrak{X}}
(A)
(u)
			&(\textrm{
by Proposition \ref{prop:DiracMesX}
			})
		\\&=
u(A)
		\\&=
1_{
	\mathfrak{S}
	\mathfrak{X}
}
(u)(A) .
\end{align*}
\else 
by Proposition \ref{prop:DiracMesX},
\begin{align*}
		&
(
\mu^{\mathfrak{S}}
	\circ
\eta^{\mathfrak{S}}
\mathfrak{S}
)_{
	\mathfrak{X}
}
(u)(A)
		=
(
\mu^{\mathfrak{S}}_{\mathfrak{X}}
	\circ
\eta^{\mathfrak{S}}_{
	\mathfrak{S}
	\mathfrak{X}
}
)
(u)(A)
		=
\mu^{\mathfrak{S}}_{\mathfrak{X}}
(
	\eta^{\mathfrak{S}}_{
		\mathfrak{S}
		\mathfrak{X}
	}
	(u)
)
(A)
		\\&=
I_{
	\mathfrak{S}
	\mathfrak{X}
}^{
	\eta^{\mathfrak{S}}_{
		\mathfrak{S}
		\mathfrak{X}
	}
	(u)
}
\big(
	\varepsilon^{\mathfrak{S}}_{\mathfrak{X}}
	(A)
\big)
		=
\varepsilon^{\mathfrak{S}}_{\mathfrak{X}}
(A)
(u)
		=
u(A)
		=
1_{
	\mathfrak{S}
	\mathfrak{X}
}
(u)(A) .
\end{align*}
\fi 

Before proving 
the second equation,
we show
\begin{equation}
\label{eq:epsAEtaId}
\varepsilon^{\mathfrak{S}}_{\mathfrak{X}}
(A)
	\circ
\eta^{\mathfrak{S}}_{\mathfrak{X}}
		=
\mathbb{1}_X(A) .
\end{equation}
But this comes from
the fact that
for
$x \in X$,
\if \MkLonger  1
\begin{align*}
(
\varepsilon^{\mathfrak{S}}_{\mathfrak{X}}
(A)
	\circ
\eta^{\mathfrak{S}}_{\mathfrak{X}}
)
(x)
		&=
\varepsilon^{\mathfrak{S}}_{\mathfrak{X}}
(A)
(
	\eta^{\mathfrak{S}}_{\mathfrak{X}}
	(x)
)
		\\&=
\eta^{\mathfrak{S}}_{\mathfrak{X}}
(x)
(A)
		\\&=
\mathbb{1}_X(A)
(x) .
			&(\textrm{
by Proposition \ref{prop:etaXxX}
			})
\end{align*}
\else 
by Proposition \ref{prop:etaXxX},
\[
(
\varepsilon^{\mathfrak{S}}_{\mathfrak{X}}
(A)
	\circ
\eta^{\mathfrak{S}}_{\mathfrak{X}}
)
(x)
		=
\varepsilon^{\mathfrak{S}}_{\mathfrak{X}}(A)
(
	\eta^{\mathfrak{S}}_{\mathfrak{X}}
	(x)
)
		=
\eta^{\mathfrak{S}}_{\mathfrak{X}}
(x)
(A)
		=
\mathbb{1}_X(A)
(x) .
\]
\fi 
Then, we obtain
\if \MkLonger  1
\begin{align*}
(
\mu^{\mathfrak{S}}
	\circ
\mathfrak{S}
\eta^{\mathfrak{S}}
)_{
	\mathfrak{X}
}
(u)(A)
		&=
(
\mu^{\mathfrak{S}}_{\mathfrak{X}}
	\circ
\mathfrak{S}
(
	\eta^{\mathfrak{S}}_{\mathfrak{X}}
)
)
(u)(A)
		\\&=
\mu^{\mathfrak{S}}_{\mathfrak{X}}
(
	\mathfrak{S}
	(
		\eta^{\mathfrak{S}}_{\mathfrak{X}}
	)
	(u)
)
(A)
		\\&=
\mu^{\mathfrak{S}}_{\mathfrak{X}}
(
	u
		\circ
	(\eta^{\mathfrak{S}}_{\mathfrak{X}})^{-1}
)
(A)
		\\&=
I_{
	\mathfrak{S}
	\mathfrak{X}
}^{
	u
		\circ
	(\eta^{\mathfrak{S}}_{\mathfrak{X}})^{-1}
}
\big(
	\varepsilon^{\mathfrak{S}}_{\mathfrak{X}}(A)
\big)
		\\&=
I_{
	\mathfrak{X}
}^{
	u
}
\big(
	\varepsilon^{\mathfrak{S}}_{\mathfrak{X}}(A)
		\circ
	\eta^{\mathfrak{S}}_{\mathfrak{X}}
\big)
			&(\textrm{
by Lemma \ref{lem:lemTwo}
			})
		\\&=
I_{
	\mathfrak{X}
}^{
	u
}
\big(
	\mathbb{1}_X(A)
\big)
			&(\textrm{
by (\ref{eq:epsAEtaId})
			})
		\\&=
u(A)
			&(\textrm{
by (\ref{eq:ChoquetOneA})
			})
		\\&=
1_{
	\mathfrak{S}
	\mathfrak{X}
}
(u)(A) .
\end{align*}
\else 
by Lemma \ref{lem:lemTwo}
and equations
(\ref{eq:epsAEtaId}), 
(\ref{eq:ChoquetOneA}),
\begin{align*}
		&
(
\mu^{\mathfrak{S}}
	\circ
\mathfrak{S}
\eta^{\mathfrak{S}}
)_{
	\mathfrak{X}
}
(u)(A)
		=
(
\mu^{\mathfrak{S}}_{\mathfrak{X}}
	\circ
\mathfrak{S}
(
	\eta^{\mathfrak{S}}_{\mathfrak{X}}
)
)
(u)(A)
		=
\mu^{\mathfrak{S}}_{\mathfrak{X}}
(
	\mathfrak{S}
	(
		\eta^{\mathfrak{S}}_{\mathfrak{X}}
	)
	(u)
)
(A)
		=
\mu^{\mathfrak{S}}_{\mathfrak{X}}
(
	u
		\circ
	(\eta^{\mathfrak{S}}_{\mathfrak{X}})^{-1}
)
(A)
		\\&=
I_{
	\mathfrak{S}
	\mathfrak{X}
}^{
	u
		\circ
	(\eta^{\mathfrak{S}}_{\mathfrak{X}})^{-1}
}
\big(
	\varepsilon^{\mathfrak{S}}_{\mathfrak{X}}(A)
\big)
		=
I_{
	\mathfrak{X}
}^{
	u
}
\big(
	\varepsilon^{\mathfrak{S}}_{\mathfrak{X}}(A)
		\circ
	\eta^{\mathfrak{S}}_{\mathfrak{X}}
\big)
		=
I_{
	\mathfrak{X}
}^{
	u
}
\big(
	\mathbb{1}_X(A)
\big)
		=
u(A)
		=
1_{
	\mathfrak{S}
	\mathfrak{X}
}
(u)(A) .
\end{align*}
\fi 

\end{proof}


\begin{defn}
\label{defn:interSecS}
Let
$
\{
	\mathfrak{S}_{\alpha}
\}_{\alpha}
$
be a non-empty set of CM-functors.
Then,
$
\mathfrak{S}
	:=
\bigcap_{\alpha}
	\mathfrak{S}_{\alpha}
		:
\Mble \to \Mble
$
is a functor defined by for any
measurable space
$
\mathfrak{X}
$,
\begin{equation}
\label{eq:interSecS}
\Big(
\bigcap_{\alpha}
	\mathfrak{S}_{\alpha}
\Big)
\mathfrak{X}
		:=
\Big(
\bigcap_{\alpha}
\big(
	\mathfrak{S}_{\alpha}
	\mathfrak{X}
\big),
	\,
\Sigma_{
	\mathfrak{S}
	\mathfrak{X}
}
\Big),
\end{equation}
where
$
\Sigma_{
	\mathfrak{S}
	\mathfrak{X}
}
$
is the smallest $\sigma$-algebra
making all inclusion maps
$
i^{\alpha}_{
	\mathfrak{X}
}
	:
\mathfrak{S}
\mathfrak{X}
	\to
\mathfrak{S}_{\alpha}
\mathfrak{X}
$
measurable.

\end{defn}

\begin{prop}
\label{prop:existCapS}
The functor
$
\mathfrak{S}
	:=
\bigcap_{\alpha}
	\mathfrak{S}_{\alpha}
$
defined by
(\ref{eq:interSecS})
is a CM-functor.
\end{prop}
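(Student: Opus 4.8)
The plan is to first promote $\mathfrak{S}=\bigcap_\alpha\mathfrak{S}_\alpha$ to a C-functor by writing down an explicit choice functor, and then to verify the M-condition (\ref{eq:condM}) by pushing the relevant capacity forward along the inclusion maps $i^\alpha_{\mathfrak{X}}$ so that the M-condition already known for each $\mathfrak{S}_\alpha$ can be applied.

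For the first part, let $\rho_\alpha:\Mble\to\mpUnc$ be a choice functor for $\mathfrak{S}_\alpha$, so that $\rho_\alpha\mathfrak{X}=(X,\Sigma_X,\mathfrak{S}_\alpha\mathfrak{X})$ and $\mathfrak{S}_\alpha=\mathfrak{L}\circ\rho_\alpha$. Define $\rho:\Mble\to\mpUnc$ by $\rho\mathfrak{X}:=(X,\Sigma_X,\bigcap_\alpha\mathfrak{S}_\alpha\mathfrak{X})$ and $\rho h:=h$ on measurable maps. I would then check, in order: (i) $\bigcap_\alpha\mathfrak{S}_\alpha\mathfrak{X}$ is a non-empty set of capacities — non-emptiness because each $\mathfrak{S}_\alpha\mathfrak{X}$ contains every Dirac measure $\eta^{\mathfrak{S}_\alpha}_{\mathfrak{X}}(x)$ by Proposition \ref{prop:etaNT}, and these are independent of $\alpha$; (ii) $\rho h$ is an $\mpUnc$-map $\rho\mathfrak{X}\to\rho\mathfrak{Y}$, since each $\rho_\alpha h$ is, so $u\circ h^{-1}\in\mathfrak{S}_\alpha\mathfrak{Y}$ whenever $u\in\mathfrak{S}_\alpha\mathfrak{X}$, and hence a $u$ in the intersection is carried to the intersection; (iii) functoriality of $\rho$ and $\mathfrak{U}\circ\rho=1_{\Mble}$, both immediate; (iv) $\mathfrak{L}\circ\rho=\mathfrak{S}$. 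For (iv) the underlying objects agree by construction, and the two $\sigma$-algebras coincide because, by Proposition \ref{prop:sigmaGX}, $\Sigma_{\rho\mathfrak{X}}$ is generated by the sets $\{u\in\bigcap_\alpha\mathfrak{S}_\alpha\mathfrak{X}\mid u(A)\in B\}$ ($A\in\Sigma_X$, $B\in\Sigma_{\mathbb{I}}$), while $\Sigma_{\mathfrak{S}\mathfrak{X}}$ is generated by the $(i^\alpha_{\mathfrak{X}})^{-1}$-images of the generators of the $\Sigma_{\mathfrak{S}_\alpha\mathfrak{X}}$, which are exactly the same sets (preimage commutes with $\sigma$-generation). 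Thus $\mathfrak{S}$ is a C-functor with choice functor $\rho$, and $\varepsilon^{\mathfrak{S}}_{\mathfrak{X}}(A)(u)=u(A)$ for $u\in\mathfrak{S}\mathfrak{X}$.

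For the M-condition, fix a measurable space $\mathfrak{X}=(X,\Sigma_X)$ and $v\in\mathfrak{S}^2\mathfrak{X}=\bigcap_\alpha\mathfrak{S}_\alpha(\mathfrak{S}\mathfrak{X})$; I must show that $g:=I^v_{\mathfrak{S}\mathfrak{X}}\circ\varepsilon^{\mathfrak{S}}_{\mathfrak{X}}$ lies in $\mathfrak{S}\mathfrak{X}$, i.e.\ in every $\mathfrak{S}_\alpha\mathfrak{X}$. Fix $\alpha$, let $i:=i^\alpha_{\mathfrak{X}}:\mathfrak{S}\mathfrak{X}\hookrightarrow\mathfrak{S}_\alpha\mathfrak{X}$ be the measurable inclusion, and set $v_\alpha:=\mathfrak{S}_\alpha(i)(v)=v\circ i^{-1}$. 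Since $v\in\mathfrak{S}_\alpha(\mathfrak{S}\mathfrak{X})$ and $\mathfrak{S}_\alpha(i)$ maps $\mathfrak{S}_\alpha(\mathfrak{S}\mathfrak{X})$ into $\mathfrak{S}_\alpha(\mathfrak{S}_\alpha\mathfrak{X})$, we get $v_\alpha\in\mathfrak{S}_\alpha^2\mathfrak{X}$, so the M-condition for $\mathfrak{S}_\alpha$ gives $g_\alpha:=I^{v_\alpha}_{\mathfrak{S}_\alpha\mathfrak{X}}\circ\varepsilon^{\mathfrak{S}_\alpha}_{\mathfrak{X}}\in\mathfrak{S}_\alpha\mathfrak{X}$. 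Now for $A\in\Sigma_X$ we have $\varepsilon^{\mathfrak{S}_\alpha}_{\mathfrak{X}}(A)\circ i=\varepsilon^{\mathfrak{S}}_{\mathfrak{X}}(A)$ (both send $u\mapsto u(A)$), so the substitution formula Lemma \ref{lem:lemTwo}, applied with the C-functor $\mathfrak{S}_\alpha$, $h=i$, $u=v$ and $f=\varepsilon^{\mathfrak{S}_\alpha}_{\mathfrak{X}}(A)\in L^{\infty}(\mathfrak{S}_\alpha\mathfrak{X})$, yields
\[
g_\alpha(A)=I^{v\circ i^{-1}}_{\mathfrak{S}_\alpha\mathfrak{X}}\big(\varepsilon^{\mathfrak{S}_\alpha}_{\mathfrak{X}}(A)\big)=I^{v}_{\mathfrak{S}\mathfrak{X}}\big(\varepsilon^{\mathfrak{S}_\alpha}_{\mathfrak{X}}(A)\circ i\big)=I^{v}_{\mathfrak{S}\mathfrak{X}}\big(\varepsilon^{\mathfrak{S}}_{\mathfrak{X}}(A)\big)=g(A).
\]
Hence $g=g_\alpha\in\mathfrak{S}_\alpha\mathfrak{X}$ for every $\alpha$, so $g\in\bigcap_\alpha\mathfrak{S}_\alpha\mathfrak{X}=\mathfrak{S}\mathfrak{X}$, which is (\ref{eq:condM}) for $\mathfrak{S}$.

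I expect the genuine obstacle to be the domain mismatch in the M-condition: $v\in\mathfrak{S}^2\mathfrak{X}$ is a capacity on $\mathfrak{S}\mathfrak{X}$, whereas the M-condition for $\mathfrak{S}_\alpha$ is a statement about capacities on $\mathfrak{S}_\alpha\mathfrak{X}$, so $\mathfrak{S}_\alpha$'s hypothesis cannot be invoked directly; the resolution is precisely the pushforward $v\circ(i^\alpha_{\mathfrak{X}})^{-1}$ together with the invariance of the Choquet integral under pushforward (Lemma \ref{lem:lemTwo}). The remaining points — non-emptiness of the intersection, the fact that $\rho$ lands in $\mpUnc$ on morphisms, and the identification of the two $\sigma$-algebras in step (iv) — are routine, though step (iv) should be spelled out with care, since it is what makes $\mathfrak{S}=\mathfrak{L}\circ\rho$ literally true rather than merely true up to isomorphism.
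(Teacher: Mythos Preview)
Your proposal is correct and follows essentially the same strategy as the paper: push $v$ forward along the inclusion $i^\alpha_{\mathfrak{X}}$ to obtain $v_\alpha\in\mathfrak{S}_\alpha^2\mathfrak{X}$, invoke the M-condition for $\mathfrak{S}_\alpha$, and identify the resulting capacity with $I^v_{\mathfrak{S}\mathfrak{X}}\circ\varepsilon^{\mathfrak{S}}_{\mathfrak{X}}$. For this last identification you use the substitution formula (Lemma~\ref{lem:lemTwo}) together with $\varepsilon^{\mathfrak{S}_\alpha}_{\mathfrak{X}}(A)\circ i=\varepsilon^{\mathfrak{S}}_{\mathfrak{X}}(A)$, whereas the paper computes both sides directly via the integral representation of Lemma~\ref{lem:alterMu}; the two routes are equivalent. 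Your treatment of the C-functor part is more thorough than the paper's --- you explicitly construct $\rho$, check non-emptiness, and verify that the two candidate $\sigma$-algebras on $\bigcap_\alpha\mathfrak{S}_\alpha\mathfrak{X}$ coincide --- while the paper contents itself with checking that $u\circ h^{-1}$ stays in the intersection.
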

\begin{proof}
Let
$
\mathfrak{X} = (X, \Sigma_X)
$
and
$
\mathfrak{Y} = (Y, \Sigma_Y)
$
be two measurable spaces.
First, let us check if
$\mathfrak{S}$
is a C-functor.
Let 
$h :
\mathfrak{X}
	\to
\mathfrak{Y}
$
be a measurable map and
$
u
	\in
\mathfrak{S}
\mathfrak{X}
$.
Then, for every
$\alpha$,
\[
u 
		\in 
\mathfrak{S}
\mathfrak{X}
		=
\bigcap_{\alpha}
\big(
	\mathfrak{S}_{\alpha}
	\mathfrak{X}
\big)
		\subset
\mathfrak{S}_{\alpha}
\mathfrak{X} .
\]
Since
$
\mathfrak{S}_{\alpha}
$
is a C-functor,
we have
$
u \circ h^{-1}
	\in
\mathfrak{S}_{\alpha}
\mathfrak{Y} .
$
Therefore,
$
u \circ h^{-1}
	\in
\bigcap_{\alpha}
\big(
	\mathfrak{S}_{\alpha}
	\mathfrak{Y} 
\big)
		=
\mathfrak{S} 
\mathfrak{Y} .
$

Next,
we will check if the equation
(\ref{eq:condM}) holds.

For
$
v \in
\mathfrak{S}^2
\mathfrak{X}
$
and
any $\alpha$,
define a capacity
$
v_{\alpha}
$
on
$
\big(
\mathfrak{S}_{\alpha}
\mathfrak{X},
	\,
\Sigma_{
	\mathfrak{S}_{\alpha} 
	\mathfrak{X}
}
\big)
$
by
\[
v_{\alpha}
	:=
v \circ
\big(
	i^{\alpha}_{
		\mathfrak{X}
	}
\big)^{-1} .
\]
Then,
since
$
\mathfrak{S}
$
is a C-functor,
we have
\[
v_{\alpha}
		\in
\mathfrak{S}(
	\mathfrak{S}_{\alpha}
	\mathfrak{X}
)
		\subset
\mathfrak{S}_{\alpha}^2
\mathfrak{X} .
\]
Now, for any
$
A \in \Sigma_X
$,
\if \MkLonger  1
\begin{align*}
\big(
I_{
	\mathfrak{S}_{\alpha}
	\mathfrak{X}
}^{
	v_{\alpha}
}
	\circ
\varepsilon^{\mathfrak{S}_{\alpha}}_{\mathfrak{X}}
\big)
(A)
			&=
\int_0^1
	v_{\alpha}\big(\{
		u \in
		\mathfrak{S}_{\alpha}
		\mathfrak{X} 
	\mid
		u(A) \ge r
	\}\big)
dr
			&(\textrm{by Lemma \ref{lem:alterMu}})
			\\&=
\int_0^1
	v\Big(
	\big(i^{\alpha}_X\big)^{-1}
	\big(\{
		u \in
		\mathfrak{S}_{\alpha}
		\mathfrak{X} 
	\mid
		u(A) \ge r
	\}\big)
	\Big)
dr
			\\&=
\int_0^1
	v\Big(
	\mathfrak{S} 
	\mathfrak{X} 
		\cap
	\big(\{
		u \in
		\mathfrak{S}_{\alpha} 
		\mathfrak{X} 
	\mid
		u(A) \ge r
	\}\big)
	\Big)
dr
			\\&=
\int_0^1
	v\Big(
	\big(\{
		u \in
		\mathfrak{S}
		\mathfrak{X} 
	\mid
		u(A) \ge r
	\}\big)
	\Big)
dr
			\\&=
\big(
I_{
	\mathfrak{S}
	\mathfrak{X} 
}^{
	v
}
	\circ
\varepsilon^{\mathfrak{S}}_{\mathfrak{X}}
\big)
(A) .
			&(\textrm{by Lemma \ref{lem:alterMu}})
\end{align*}
\else 
by Lemma \ref{lem:alterMu},
\begin{align*}
			&
\big(
I_{
	\mathfrak{S}_{\alpha}
	\mathfrak{X} 
}^{
	v_{\alpha}
}
	\circ
\varepsilon^{\mathfrak{S}_{\alpha}}_{\mathfrak{X}}
\big)
(A)
			=
\int_0^1
	v_{\alpha}\big(\{
		u \in
		\mathfrak{S}_{\alpha}
		\mathfrak{X} 
	\mid
		u(A) \ge r
	\}\big)
dr
			\\&=
\int_0^1
	v\Big(
	\big(i^{\alpha}_X\big)^{-1}
	\big(\{
		u \in
		\mathfrak{S}_{\alpha}
		\mathfrak{X} 
	\mid
		u(A) \ge r
	\}\big)
	\Big)
dr
			=
\int_0^1
	v\Big(
	\mathfrak{S}
	\mathfrak{X} 
		\cap
	\big(\{
		u \in
		\mathfrak{S}_{\alpha}
		\mathfrak{X} 
	\mid
		u(A) \ge r
	\}\big)
	\Big)
dr
			\\&=
\int_0^1
	v\Big(
	\big(\{
		u \in
		\mathfrak{S}
		\mathfrak{X} 
	\mid
		u(A) \ge r
	\}\big)
	\Big)
dr
			=
\big(
I_{
	\mathfrak{S}
	\mathfrak{X} 
}^{
	v
}
	\circ
\varepsilon^{\mathfrak{S}}_{\mathfrak{X}}
\big)
(A) .
\end{align*}
\fi 
Since
$
\mathfrak{S}_{\alpha}
$
satisfies 
(\ref{eq:condM}),
we obtain
\[
I_{
	\mathfrak{S}
	\mathfrak{X} 
}^{
	v
}
	\circ
\varepsilon^{\mathfrak{S}}_{\mathfrak{X}}
		=
I_{
	\mathfrak{S}_{\alpha}
	\mathfrak{X} 
}^{
	v_{\alpha}
}
	\circ
\varepsilon^{\mathfrak{S}_{\alpha}}_{\mathfrak{X}}
		\in
\mathfrak{S}_{\alpha}
\mathfrak{X} .
\]
Hence,
\[
I_{
	\mathfrak{S}
	\mathfrak{X}
}^{
	v
}
	\circ
\varepsilon^{\mathfrak{S}}_{\mathfrak{X}}
		\in
\bigcap_{\alpha}
	\mathfrak{S}_{\alpha} 
	\mathfrak{X}
		=
\mathfrak{S} X ,
\]
which completes the proof.

\end{proof}

\begin{defn}
\label{defn:admitRel}
Let
$
\mathfrak{X}
	=
(X, \Sigma_X, U_X)
$
be an uncertainty space.
A CM-functor
$
\mathfrak{S}
$
is said to \newword{admit}
$
\mathfrak{X}
$
if
there exists a monic
$
m :
\mathfrak{L}
\mathfrak{X}
	\to
\mathfrak{S}
(X, \Sigma_X)
$
in
$\Mble$.

$
\mathfrak{S}
$
is said to \newword{admit}
a family of uncertainty spaces
$
\mathbb{X}
		=
\{
	\mathfrak{X}_{\alpha}
\}_{\alpha}
$
if
$
\mathfrak{S}
$
admits
$
	\mathfrak{X}_{\alpha}
$
for every
$\alpha$.

\end{defn}

Since
$\mathcal{C}$
admits every
uncertainty space,
and by
Proposition \ref{prop:existCapS},
the following definition is well-defined.

\begin{defn}
\label{defn:smallestS}
Let
$
\mathbb{X}
$
be
a non-empty family of uncertainty spaces.
The CM-functor 
generated by
$
\mathbb{X}
$
is defined by
\begin{equation}
\label{eq:smallestS}
\mathfrak{S}\big[
\mathbb{X}
\big]
		:=
\bigcap \Big\{
	\mathfrak{S}
\mid
	\mathfrak{S}
\textrm{ admits }	
\mathbb{X}
\Big\} .
\end{equation}

\end{defn}

We can think
$
\mathfrak{S}[
\mathbb{X}
]
$
as the
``minimum''
CM-functor admitting 
$
\mathbb{X}
$.

\section{Universal uncertainty spaces}
\label{sec:uniUncSp}

In Section \ref{sec:hieraUncertain},
we introduced $n$-layer uncertainty
showing as 
U-sequences
which are
infinite sequences of uncertain spaces.
We gave a concrete example of the 3rd layer uncertainty.
But who knows we need to consider higher level uncertainty in future.
If we get a kind of limits of the sequence, 
we may be able to use it as a universal domain for analyzing uncertainty in any levels.


Actually,
for a U-sequence $
\mathbb{X}
	=
\{
	\mathfrak{X}_n
		=
	(
		X_n, \Sigma_{X_n}, U_{X_n}
	)
\}_{n \in \mathbb{N}}
$,
we constructed its envelop CM-functor
$\mathfrak{S} := \mathfrak{S}[\mathbb{X}]$
admitting 
$\mathbb{X}$
by Definition \ref{defn:smallestS}.
Then,  
by applying the functor 
$
\mathfrak{S}
$
repeatedly,
we can generate an enlarged U-sequence
$
\mathbb{X}'
	=
\{
	(X'_n, \Sigma_{X'_n}, U_{X'_n})
\}_{n \in \mathbb{N}},
$
where 
for $n \in \mathbb{N}$,
\begin{equation}
\label{eq:enlargedUseq}
(X'_0, \Sigma_{X'_0})
 := 
(X_0, \Sigma_{X_0}),
		\quad
(X'_{n+1}, \Sigma_{X'_{n+1}})
	:=
\mathfrak{S}
(X'_{n}, \Sigma_{X'_n}),
		\quad
U_{X'_n} := X'_{n+1}.
\end{equation}

In this section,
we shall get its limit, which is called the universal uncertainty space.
Let $\mathfrak{S}$ be
a CM-functor
throughout this section.

\subsection{Retraction sequence}
\label{sec:retraction}

Let
$
\mathfrak{X}
	=
(X, \Sigma_X)
$
be a fixed measurable space.
Then,
we have a following 
infinite sequence
in $\Mble$,
which shows
the hierarchy of uncertainty on 
$\mathfrak{X}$.

\begin{equation}
\label{eq:infUnceSeq}
\xymatrix@C=40 pt@R=20 pt{
	\mathfrak{S}
	\mathfrak{X}
		\ar @{->}_{
			\eta^{\mathfrak{S}}_{
			\mathfrak{S}
			\mathfrak{X}
			}
		} @<-2pt> [r]
&
	\mathfrak{S}^2
	\mathfrak{X}
		\ar @{->}_{
			\eta^{\mathfrak{S}}_{
			\mathfrak{S}^2
			\mathfrak{X}
			}
		} @<-2pt> [r]
		\ar @{->}_{
			\mu^{\mathfrak{S}}_{
			\mathfrak{X}
			}
		} @<-2pt> [l]
&
	\mathfrak{S}^3
	\mathfrak{X}
		\ar @{->}_{
			\eta^{\mathfrak{S}}_{
			\mathfrak{S}^3
			\mathfrak{X}
			}
		} @<-2pt> [r]
		\ar @{->}_{
			\mu^{\mathfrak{S}}_{
			\mathfrak{S}
			\mathfrak{X}
			}
		} @<-2pt> [l]
&
	\cdots
		\ar @{->}_{
			\mu^{\mathfrak{S}}_{
			\mathfrak{S}^2
			\mathfrak{X}
			}
		} @<-2pt> [l]
}
\end{equation}

Here, we already know
for
$n = 1, 2, \cdots$
\begin{equation}
\mu^{\mathfrak{S}}_{
	\mathfrak{S}^{n-1}
	\mathfrak{X}
}
		\circ
\eta^{\mathfrak{S}}_{
	\mathfrak{S}^n
	\mathfrak{X}
}
		=
1_{
	\mathfrak{S}^n
	\mathfrak{X}
} 
\end{equation}
by 
Proposition \ref{prop:GiryUnit}.

Moreover,
if
$
v
	\in
\mathfrak{S}^{n+1}
\mathfrak{X}
$
is represented as
$
v
	=
\eta^{\mathfrak{S}}_{
	\mathfrak{S}^{n}
	\mathfrak{X}
}
(u)
$
with some
$
u
	\in
\mathfrak{S}^{n}
\mathfrak{X},
$
then we have
\begin{equation}
\big(
\eta^{\mathfrak{S}}_{
	\mathfrak{S}^n
	\mathfrak{X}
}
	\circ
\mu^{\mathfrak{S}}_{
	\mathfrak{S}^{n-1}
	\mathfrak{X}
}
\big)
(v)
		=
v .
\end{equation}
So, this is a retraction sequence.
In this sense, we can think that the map
$
\mu^{\mathfrak{S}}_{
	\mathfrak{S}^n
	\mathfrak{X}
}
$
gives an approximation at the lower hierarchy.

In this section, we will provide the limit of the sequence.

\subsection{The inverse limit}
\label{sec:invLimit}

\begin{defn}
\label{defn:candInvLim}
Let 
$
\mathfrak{X}
	= (X, \Sigma_X)
$
be a measurable space.
\begin{enumerate}
\item
Let
\begin{equation}
\label{eq:prodX}
\mathcal{P}_{
	\mathfrak{X}
}
	:=
\prod_{n=1}^{\infty}
	\mathfrak{S}^n
	\mathfrak{X}
\end{equation}
be the product space with
projections
$
j_n : 
\mathcal{P}_{
	\mathfrak{X}
}
	\to
	\mathfrak{S}^n
	\mathfrak{X}
$
for
$n = 1, 2, \cdots$,
equipped with the 
$\sigma$-algebra
\begin{equation}
\label{eq:prodXsigmaA}
\Sigma_{
	\mathcal{P}_{
		\mathfrak{X}
	}
}
	:=
\sigma\big(
	j_n
;
	n = 1, 2, \cdots
\big) .
\end{equation}

\item
Let 
\begin{equation}
\label{eq:SinftyX}
\mathfrak{S}^{\infty}
\mathfrak{X}
		:=
\{
	u = (u_1, u_2, \cdots)
		\in
	\mathcal{P}_{\mathfrak{X}}
\mid
	\forall n
		\,.\,
	u_n
		=
	\mu^{\mathfrak{S}}_{
		\mathfrak{S}^{n-1}
		\mathfrak{X}
	}
	(u_{n+1})
\} 
\end{equation}
be a sub-measurable space of
$
	\mathcal{P}_{\mathfrak{X}}
$ in which we use the same notations for its projections
$
j_n : 
\mathcal{P}_{\mathfrak{X}}
	\to
	\mathfrak{S}^n \mathfrak{X}
$ and its associated $\sigma$-algebra is defined as:
\begin{equation}
\label{eq:SinftyXsigmaAl}
\Sigma_{
	\mathfrak{S}^{\infty}
	\mathfrak{X}
}
	:=
\{
	A \cap
	\mathfrak{S}^{\infty}
	\mathfrak{X}
\mid
	A \in
	\Sigma_{
		\mathcal{P}_{\mathfrak{X}}
	}
\} .
\end{equation}

\end{enumerate}
\end{defn}

%

\begin{thm}
\label{ithm:InvLimit}
The measurable space
$
(
	\mathfrak{S}^{\infty}
	\mathfrak{X},
\Sigma_{
	\mathfrak{S}^{\infty}
	\mathfrak{X}
}
)
$
is the inverse limit of the sequence
(\ref{eq:infMonoUnceSeq})
in $\Mble$.
\begin{equation}
\label{eq:infMonoUnceSeq}
\xymatrix@C=40 pt@R=20 pt{
	\mathfrak{S}
	\mathfrak{X}
&
	\mathfrak{S}^2
	\mathfrak{X}
		\ar @{->}_{
			\mu^{\mathfrak{S}}_{
			\mathfrak{X}
			}
		} [l]
&
	\mathfrak{S}^3
	\mathfrak{X}
		\ar @{->}_{
			\mu^{\mathfrak{S}}_{
			\mathfrak{S}
			\mathfrak{X}
			}
		} [l]
&
	\mathfrak{S}^4
	\mathfrak{X}
		\ar @{->}_{
			\mu^{\mathfrak{S}}_{
			\mathfrak{S}^2
			\mathfrak{X}
			}
		} [l]
&
	\cdots
		\ar @{->}_{
			\mu^{\mathfrak{S}}_{
			\mathfrak{S}^3
			\mathfrak{X}
			}
		} [l]
}
\end{equation}
\end{thm}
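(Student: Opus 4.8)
The plan is to verify directly that $(\mathfrak{S}^{\infty}\mathfrak{X}, \Sigma_{\mathfrak{S}^{\infty}\mathfrak{X}})$ together with the restricted projections $j_n : \mathfrak{S}^{\infty}\mathfrak{X} \to \mathfrak{S}^n\mathfrak{X}$ is a limiting cone over the diagram (\ref{eq:infMonoUnceSeq}), exploiting that sequential limits in $\Mble$ are computed as in $\Set$ and then equipped with the initial ($\sigma$-algebra) structure, which is exactly what Definition \ref{defn:candInvLim} records. First I would check that $\mathfrak{S}^{\infty}\mathfrak{X}$ is a cone: for $u = (u_1, u_2, \cdots) \in \mathfrak{S}^{\infty}\mathfrak{X}$ and $n \ge 1$, the defining constraint (\ref{eq:SinftyX}) gives $j_n(u) = u_n = \mu^{\mathfrak{S}}_{\mathfrak{S}^{n-1}\mathfrak{X}}(u_{n+1}) = \big(\mu^{\mathfrak{S}}_{\mathfrak{S}^{n-1}\mathfrak{X}} \circ j_{n+1}\big)(u)$, so every triangle of the cone commutes. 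Each $j_n$ restricted to $\mathfrak{S}^{\infty}\mathfrak{X}$ is measurable, since it is the restriction of the measurable projection $j_n : \mathcal{P}_{\mathfrak{X}} \to \mathfrak{S}^n\mathfrak{X}$ to a subspace carrying the trace $\sigma$-algebra (\ref{eq:SinftyXsigmaAl}); moreover the trace on $\mathfrak{S}^{\infty}\mathfrak{X}$ of the $\sigma$-algebra generated by the $j_n$ is generated by the traces of the $j_n$, so $\Sigma_{\mathfrak{S}^{\infty}\mathfrak{X}}$ is precisely the initial $\sigma$-algebra for the family $\{j_n\}$, as required for a limiting cone.

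Next I would establish the universal property. Let $\mathfrak{W} = (W, \Sigma_W)$ be a measurable space and $\{g_n : \mathfrak{W} \to \mathfrak{S}^n\mathfrak{X}\}_{n \ge 1}$ a cone, i.e. $\mu^{\mathfrak{S}}_{\mathfrak{S}^{n-1}\mathfrak{X}} \circ g_{n+1} = g_n$ for all $n$. Define $g : W \to \mathcal{P}_{\mathfrak{X}}$ by $g(w) := (g_1(w), g_2(w), \cdots)$. Since each coordinate map $j_n \circ g = g_n$ is measurable and $\Sigma_{\mathcal{P}_{\mathfrak{X}}}$ is generated by the projections $j_n$, the map $g$ is measurable into $\mathcal{P}_{\mathfrak{X}}$. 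The cone identities for the $g_n$ say exactly that $g(w)$ satisfies the constraint in (\ref{eq:SinftyX}), so $g$ factors through the inclusion $\mathfrak{S}^{\infty}\mathfrak{X} \hookrightarrow \mathcal{P}_{\mathfrak{X}}$; the corestricted map $g : \mathfrak{W} \to \mathfrak{S}^{\infty}\mathfrak{X}$ is measurable for the trace $\sigma$-algebra because for $B \in \Sigma_{\mathcal{P}_{\mathfrak{X}}}$ one has $g^{-1}(B \cap \mathfrak{S}^{\infty}\mathfrak{X}) = g^{-1}(B) \in \Sigma_W$. It satisfies $j_n \circ g = g_n$ for every $n$ by construction. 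For uniqueness, if $g'$ is any measurable map with $j_n \circ g' = g_n$ for all $n$, then for every $w \in W$ the tuples $g'(w)$ and $g(w)$ agree in every coordinate, hence coincide as elements of the product; thus $g' = g$.

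I do not anticipate a deep obstacle here: this is the standard construction of a sequential inverse limit in a category with countable products (which $\Mble$ has), and the only points needing care are purely bookkeeping — matching the index shift between $\mu^{\mathfrak{S}}_{\mathfrak{S}^{n-1}\mathfrak{X}}$ and the coordinates $u_n, u_{n+1}$, and confirming that the trace $\sigma$-algebra (\ref{eq:SinftyXsigmaAl}) is indeed the initial structure making the cone limiting, which is exactly the stability of generated $\sigma$-algebras under restriction. The retraction identities of Proposition \ref{prop:GiryUnit} are not needed for the limit itself, though they give the side remark that $\mathfrak{S}^{\infty}\mathfrak{X}$ is non-empty whenever $\mathfrak{S}\mathfrak{X}$ is, since any $u_1 \in \mathfrak{S}\mathfrak{X}$ extends to a compatible tower via the maps $\eta^{\mathfrak{S}}_{\mathfrak{S}^{n-1}\mathfrak{X}}$.
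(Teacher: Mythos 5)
Your proposal is correct and follows essentially the same route as the paper: verify the cone identities $j_n = \mu^{\mathfrak{S}}_{\mathfrak{S}^{n-1}\mathfrak{X}} \circ j_{n+1}$ directly from the defining constraint (\ref{eq:SinftyX}), build the mediating map coordinate-wise, check its measurability through the product/trace $\sigma$-algebra, and get uniqueness from coordinate-wise agreement. The only cosmetic difference is that you establish measurability by first mapping into $\mathcal{P}_{\mathfrak{X}}$ and then corestricting, whereas the paper checks preimages of the generating sets $j_m^{-1}(A)$ directly; both are the same standard argument.
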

\begin{proof}
We will show the Diagram
\ref{diag:invLimit}
commutes.
\begin{diagram}[htb]
\begin{equation*}
\xymatrix @C=30 pt @R=25 pt {
	\mathfrak{S}^n 
	\mathfrak{X}
&&
	\mathfrak{S}^{n+1}
	\mathfrak{X}
		\ar @{->}_{
			\mu^{\mathfrak{S}}_{
				\mathfrak{S}^{n-1} 
				\mathfrak{X}
			}
		} [ll]
\\
&
	\mathfrak{S}^{\infty} 
	\mathfrak{X}
		\ar @{->}^{
			j_n
		} [lu]
		\ar @{->}_{
			j_{n+1}
		} [ru]
\\
&
	\mathfrak{Y}
		\ar@/^/[uul]^{
			{k_n}
		}
		\ar@/_/[uur]_{
			{k_{n+1}}
		}
		\ar@{.>}[u]|-{
			{\Prefix^{\exists !}{h}}
		}
}
\end{equation*}
\caption{inverse limit $
\mathfrak{S}^{\infty}
\mathfrak{X}
$}
\label{diag:invLimit}
\end{diagram}

By the definition of
$
	\mathfrak{S}^{\infty}
	\mathfrak{X}
$,
we have
for
$
u = (u_1, u_2, \cdots)
	\in
\mathfrak{S}^{\infty}
\mathfrak{X}
$,
\if \MkLonger  1
\[
	u_n
		=
	\mu^{\mathfrak{S}}_{
		\mathfrak{S}^{n-1}
		\mathfrak{X}
	}
	(u_{n+1}) .
\]
\else 
$
	u_n
		=
	\mu^{\mathfrak{S}}_{
		\mathfrak{S}^{n-1}
		\mathfrak{X}
	}
	(u_{n+1}) .
$
\fi 
Then,
\if \MkLonger  1
\[
	j_n(u)
		=
	\mu^{\mathfrak{S}}_{
		\mathfrak{S}^{n-1}
		\mathfrak{X}
	}
	(j_{n+1}(u)) ,
\]
\else 
$
	j_n(u)
		=
	\mu^{\mathfrak{S}}_{
		\mathfrak{S}^{n-1}
		\mathfrak{X}
	}
	(j_{n+1}(u)) ,
$
\fi 
which means
\begin{equation}
\label{eq:jNmuJnP}
	j_n
		=
	\mu^{\mathfrak{S}}_{
		\mathfrak{S}^{n-1}
		\mathfrak{X}
	}
		\circ
	j_{n+1} .
\end{equation}

Next,
let us assume that
a measurable map
$
k_n : 
		\mathfrak{Y}
	\to
		\mathfrak{S}^{n}
		\mathfrak{X}
$
satisfies
$
	k_n
		=
	\mu^{\mathfrak{S}}_{
		\mathfrak{S}^{n-1}
		\mathfrak{X}
	}
		\circ
	k_{n+1} 
$
for all $n$
with a certain measurable space
$
\mathfrak{Y}
	=
(Y, \Sigma_Y)
$.
Then,
we will show that
there exists
a measurable map
$
h :
\mathfrak{Y} \to
		\mathfrak{S}^{\infty}
		\mathfrak{X}
$
such that
$
k_n = j_n \circ h 
$
for all $n$.

Define $h$ by
for all $y \in Y$,
\if \MkLonger  1
\begin{equation}
\label{eq:proofExistH}
h(y)
		:=
(k_1(y), k_2(y), \cdots) .
\end{equation}
\else 
$
h(y)
		:=
(k_1(y), k_2(y), \cdots) .
$
\fi 
Then, by the assumption,
\if \MkLonger  1
\[
	k_n(y)
		=
	\mu^{\mathfrak{S}}_{
		\mathfrak{S}^{n-1}
		\mathfrak{X}
	}(
		k_{n+1}(y)
	) .
\]
\else 
$
	k_n(y)
		=
	\mu^{\mathfrak{S}}_{
		\mathfrak{S}^{n-1}
		\mathfrak{X}
	}(
		k_{n+1}(y)
	) .
$
\fi 
Hence,
$
h(y)
	\in 
\mathfrak{S}^{\infty} 
\mathfrak{X}
$ .
We have also
\if \MkLonger  1
\[
(j_n \circ h)(y)
		=
j_n( h(y))
		=
k_n(y).
\]
\else 
$
(j_n \circ h)(y)
		=
j_n( h(y))
		=
k_n(y).
$
\fi 
So, the remaining is to show that
$h$
is measurable.

Let
$
A \in
\Sigma_{
	\mathfrak{S}^{m} 
	\mathfrak{X}
}.
$
Then,
\begin{equation}
B
			:=
j_m^{-1}(A)
			=
\big(
	\mathfrak{S}
	\mathfrak{X}
		\times
	\mathfrak{S}^2
	\mathfrak{X}
		\times
	\cdots
		\times
	\mathfrak{S}^{m-1} 
	\mathfrak{X}
		\times
	A
		\times
	\mathfrak{S}^{m+1}
	\mathfrak{X}
		\times
	\cdots
\big)
		\cap
\mathfrak{S}^{\infty} 
\mathfrak{X}
			\label{eq:jmInvA} .
\end{equation}
Hence,
\if \MkLonger  1
\begin{align*}
h^{-1}(B)
		&=
\{
	y \in Y
\mid
	(k_1(y), k_2(y), \cdots)
\in B
\}
		\\&=
\{
	y \in Y
\mid
	k_m(y) \in A
\; \land \;
	(k_1(y), k_2(y), \cdots)
\in 
	\mathfrak{S}^{\infty}
	\mathfrak{X}
\}
			&(\textrm{by (\ref{eq:jmInvA}) })
		\\&=
k_m^{-1}(A) 
		\in
\Sigma_Y
\end{align*}
\else 
by (\ref{eq:jmInvA}),
\begin{align*}
h^{-1}(B)
		&=
\{
	y \in Y
\mid
	(k_1(y), k_2(y), \cdots)
\in B
\}
		\\&=
\{
	y \in Y
\mid
	k_m(y) \in A
\; \land \;
	(k_1(y), k_2(y), \cdots)
\in 
	\mathfrak{S}^{\infty}
	\mathfrak{X}
\}
		=
k_m^{-1}(A) 
		\in
\Sigma_Y
\end{align*}
\fi 
since
$k_m$
is measurable.
Therefore,
$h$ is measurable.

Now, suppose there exists another
$
h' : \mathfrak{Y} \to
\mathfrak{S}^{\infty}
\mathfrak{X}
$
satisfying
$
k_n
	=
j_n \circ h'
$
for all
$n$.
Then, for every
$y \in Y$,
we have
\if \MkLonger  1
\[
j_n(h'(y))
	=
k_n(y)
	=
j_n(h(y)).
\]
\else 
$
j_n(h'(y))
	=
k_n(y)
	=
j_n(h(y)).
$
\fi 
Therefore
\if \MkLonger  1
\[
h'(y)
		=
(j_1(h'(y)), j_2(h'(y)), \cdots)
		=
(j_1(h(y)), j_2(h(y)), \cdots)
		=
h(y),
\]
\else 
$
h'(y)
		=
(j_1(h'(y)), j_2(h'(y)), \cdots)
		=
(j_1(h(y)), j_2(h(y)), \cdots)
		=
h(y),
$
\fi 
which means
$h'=h$.
Hence, $h$ is uniquely determined.

\end{proof}

\subsection{Projection system}
\label{sec:projSys}

Here is  a unified system of 
projection (or approximation) maps,
which is well defined thanks to
Theorem \ref{ithm:InvLimit}.

\begin{defn}
\label{defn:unifiedProj}
Let
$\mathbb{N}^+{}$
be the set
$\{1, 2, \cdots, \infty\}$
and
$m, n
	\in
\mathbb{N}^+{}
$.
Define a map
$
\iota_{\mathfrak{X}}^{m,n}
	:
\mathfrak{S}^m
\mathfrak{X}
	\to
\mathfrak{S}^n
\mathfrak{X}
$
by
\begin{equation}
\label{eq:unifiedProj}
\iota_{\mathfrak{X}}^{m,n}
		:=
\begin{cases}
	1_{
		\mathfrak{S}^m 
		\mathfrak{X}
	}
		&\quad \textrm{if} \;
	m = n,
\\
	\iota_{\mathfrak{X}}^{m-1,n}
		\circ
	\mu^{\mathfrak{S}}_{
		\mathfrak{S}^{m-2}
		\mathfrak{X}
	}
		&\quad \textrm{if} \;
	\infty > m > n ,
\\
	j_n
		&\quad \textrm{if} \;
	\infty = m > n,
\\
	\eta^{\mathfrak{S}}_{\mathfrak{S}^{n-1} \mathfrak{X}}
		\circ
	\iota_{\mathfrak{X}}^{m,n-1}
		&\quad \textrm{if} \;
	m < n < \infty ,
\\
	\langle
		\iota_{\mathfrak{X}}^{m,1},
		\iota_{\mathfrak{X}}^{m,2},
		\cdots
	\rangle
		&\quad \textrm{if} \;
	m < n = \infty .
\end{cases}
\end{equation}

\end{defn}

\begin{prop}
\label{prop:iotaX}
Let
$
\ell, m, n \in \mathbb{N}^+
$.
\begin{enumerate}
\item
If $m \ge n$,
we have
\begin{equation}
\label{eq:iotaXretr}
\iota_{\mathfrak{X}}^{m,n}
	\circ
\iota_{\mathfrak{X}}^{n,m}
		=
1_{\mathfrak{S}^n \mathfrak{X}} .
\end{equation}

\item
If $\ell \ge m \ge n$
or $\ell \le m \le n$,
we have
\begin{equation}
\label{eq:itoaXcomp}
\iota_{\mathfrak{X}}^{m,n}
	\circ
\iota_{\mathfrak{X}}^{\ell,m}
		=
\iota_{\mathfrak{X}}^{\ell,n}.
\end{equation}
\end{enumerate}
\end{prop}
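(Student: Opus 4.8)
I would prove both identities by induction on the relevant index gaps, handling first the case where all indices are finite and then the cases involving $\infty$. The whole argument runs on three facts established earlier: the retraction identity $\mu^{\mathfrak{S}}_{\mathfrak{S}^{n-1}\mathfrak{X}}\circ\eta^{\mathfrak{S}}_{\mathfrak{S}^{n}\mathfrak{X}}=1_{\mathfrak{S}^n\mathfrak{X}}$ from Proposition \ref{prop:GiryUnit}, the compatibility $j_n=\mu^{\mathfrak{S}}_{\mathfrak{S}^{n-1}\mathfrak{X}}\circ j_{n+1}$ from (\ref{eq:jNmuJnP}), and the universal property $j_m\circ\langle k_1,k_2,\dots\rangle=k_m$ of the product $\mathcal{P}_{\mathfrak{X}}$. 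Each step is then just an unwinding of the recursive Definition \ref{defn:unifiedProj} followed by collapsing a $\mu^{\mathfrak{S}}\circ\eta^{\mathfrak{S}}$ pair.

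\textbf{Step 1: all indices finite.} For (1), fix $n$ and induct on $m-n$. The base case $m=n+1$ unwinds to $\iota^{n+1,n}_{\mathfrak{X}}\circ\iota^{n,n+1}_{\mathfrak{X}}=\mu^{\mathfrak{S}}_{\mathfrak{S}^{n-1}\mathfrak{X}}\circ\eta^{\mathfrak{S}}_{\mathfrak{S}^{n}\mathfrak{X}}=1_{\mathfrak{S}^n\mathfrak{X}}$ by Proposition \ref{prop:GiryUnit}. In the inductive step, write $\iota^{m+1,n}_{\mathfrak{X}}=\iota^{m,n}_{\mathfrak{X}}\circ\mu^{\mathfrak{S}}_{\mathfrak{S}^{m-1}\mathfrak{X}}$ and $\iota^{n,m+1}_{\mathfrak{X}}=\eta^{\mathfrak{S}}_{\mathfrak{S}^{m}\mathfrak{X}}\circ\iota^{n,m}_{\mathfrak{X}}$; the inner $\mu^{\mathfrak{S}}\circ\eta^{\mathfrak{S}}$ collapses, leaving $\iota^{m,n}_{\mathfrak{X}}\circ\iota^{n,m}_{\mathfrak{X}}=1$ by induction. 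For (2) with $\ell\le m\le n$ finite, induct on $n-m$, peeling the outermost $\eta^{\mathfrak{S}}_{\mathfrak{S}^{n-1}\mathfrak{X}}$ off both $\iota^{m,n}_{\mathfrak{X}}$ and $\iota^{\ell,n}_{\mathfrak{X}}$; for $\ell\ge m\ge n$ finite, induct on $\ell-m$, peeling the outermost $\mu^{\mathfrak{S}}_{\mathfrak{S}^{\ell-2}\mathfrak{X}}$ off $\iota^{\ell,m}_{\mathfrak{X}}$ and $\iota^{\ell,n}_{\mathfrak{X}}$. The degenerate subcases $\ell=m$ or $m=n$ are immediate because one factor is an identity.

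\textbf{Step 2: indices involving $\infty$.} I would first record that $\iota^{m,\infty}_{\mathfrak{X}}=\langle\iota^{m,1}_{\mathfrak{X}},\iota^{m,2}_{\mathfrak{X}},\dots\rangle$ genuinely lands in the inverse limit $\mathfrak{S}^{\infty}\mathfrak{X}$: by (\ref{eq:SinftyX}) this is the cone condition $\iota^{k+1,k}_{\mathfrak{X}}\circ\iota^{m,k+1}_{\mathfrak{X}}=\iota^{m,k}_{\mathfrak{X}}$ for all $k$, which is a finite case of (2) when $m\ge k+1$, and when $m\le k$ follows from Step 1 by factoring $\iota^{m,k+1}_{\mathfrak{X}}=\iota^{k,k+1}_{\mathfrak{X}}\circ\iota^{m,k}_{\mathfrak{X}}$ and applying part (1). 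Then (1) with $m=\infty$, $n$ finite reduces via the product universal property to $j_n\circ\langle\iota^{n,1}_{\mathfrak{X}},\dots\rangle=\iota^{n,n}_{\mathfrak{X}}=1$, while (1) with $n=\infty$, $m$ finite (i.e. $\iota^{m,n}_{\mathfrak{X}}\circ j_m=j_n$) follows by downward induction on $m$ using (\ref{eq:jNmuJnP}). Finally (2) with an infinite index is checked coordinatewise through the $j_k$, reducing it to the finite statements of Step 1; the only care needed is that some composites $\iota^{m,n}_{\mathfrak{X}}\circ\iota^{\ell,m}_{\mathfrak{X}}$ arising there are not monotone in the index order, and these are made monotone by inserting $\iota^{m,k}_{\mathfrak{X}}\circ\iota^{k,m}_{\mathfrak{X}}=1$ from part (1).

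\textbf{Expected obstacle.} Nothing here is deep; the main friction is the bookkeeping of Step 2 — establishing the cone condition so that $\iota^{m,\infty}_{\mathfrak{X}}$ is even well-defined before one can manipulate it, and systematically reducing the mixed up-then-down (or down-then-up) composites back to the purely finite, monotone cases by splicing in the retraction $\iota\circ\iota=1$. Everything else is a routine application of Definition \ref{defn:unifiedProj} together with Proposition \ref{prop:GiryUnit} and (\ref{eq:jNmuJnP}).
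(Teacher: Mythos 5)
Your proposal is correct and takes essentially the same route as the paper: unwind Definition \ref{defn:unifiedProj}, collapse $\mu^{\mathfrak{S}}\circ\eta^{\mathfrak{S}}=1$ (Proposition \ref{prop:GiryUnit}) by induction for the finite cases, use $j_n=\mu^{\mathfrak{S}}_{\mathfrak{S}^{n-1}\mathfrak{X}}\circ j_{n+1}$ for the $\ell=\infty$ case, and check the $n=\infty$ case coordinatewise, splicing in part (1) for the non-monotone coordinate composites. Two minor remarks: the identity $\iota_{\mathfrak{X}}^{m,n}\circ j_m=j_n$ is the $\ell=\infty$ instance of part (2), not a case of part (1) (which requires $m\ge n$), and your explicit verification that $\iota_{\mathfrak{X}}^{m,\infty}$ satisfies the cone condition, so that it really lands in $\mathfrak{S}^{\infty}\mathfrak{X}$, is a useful point the paper leaves implicit.
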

\begin{proof}
\begin{enumerate}
\item
When $n = \infty$,
we have $m = \infty$ as well,
Hence,
\if \MkLonger  1
\[
\iota_{\mathfrak{X}}^{\infty,\infty}
	\circ
\iota_{\mathfrak{X}}^{\infty,\infty}
		=
1_{\mathfrak{S}^{\infty} \mathfrak{X}}
	\circ
1_{\mathfrak{S}^{\infty} \mathfrak{X}}
		=
1_{\mathfrak{S}^{\infty} \mathfrak{X}}  .
\]
\else 
$
\iota_{\mathfrak{X}}^{\infty,\infty}
	\circ
\iota_{\mathfrak{X}}^{\infty,\infty}
		=
1_{\mathfrak{S}^{\infty} \mathfrak{X}} 
	\circ
1_{\mathfrak{S}^{\infty} \mathfrak{X}} 
		=
1_{\mathfrak{S}^{\infty} \mathfrak{X}} .
$
\fi 

When $m=\infty$ and $n < \infty$,
\if \MkLonger  1
\[
\iota_{\mathfrak{X}}^{m,n}
	\circ
\iota_{\mathfrak{X}}^{n,m}
		=
j_n
	\circ
\langle
	\iota_{\mathfrak{X}}^{n,1},
	\iota_{\mathfrak{X}}^{n,2},
	\cdots
\rangle
		=
\iota_{\mathfrak{X}}^{n,n}
		=
1_{\mathfrak{S}^n \mathfrak{X}}
\]
\else 
$
\iota_{\mathfrak{X}}^{m,n}
	\circ
\iota_{\mathfrak{X}}^{n,m}
		=
j_n
	\circ
\langle
	\iota_{\mathfrak{X}}^{n,1},
	\iota_{\mathfrak{X}}^{n,2},
	\cdots
\rangle
		=
\iota_{\mathfrak{X}}^{n,n}
		=
1_{\mathfrak{S}^n \mathfrak{X}}
$
\fi 

We prove the case
when 
$
\infty > m \ge n,
$
by induction on $m$.
If $m=n$,
\if \MkLonger  1
\[
\iota_{\mathfrak{X}}^{m,n}
	\circ
\iota_{\mathfrak{X}}^{n,m}
		=
1_{\mathfrak{S}^n \mathfrak{X}}
	\circ
1_{\mathfrak{S}^n \mathfrak{X}}
		=
1_{\mathfrak{S}^n \mathfrak{X}} .
\]
\else 
$
\iota_{\mathfrak{X}}^{m,n}
	\circ
\iota_{\mathfrak{X}}^{n,m}
		=
1_{\mathfrak{S}^n \mathfrak{X}} 
	\circ
1_{\mathfrak{S}^n \mathfrak{X}}
		=
1_{\mathfrak{S}^n \mathfrak{X}} .
$
\fi 
For induction part, we have
\if \MkLonger  1
\begin{align*}
\iota_{\mathfrak{X}}^{m+1,n}
	\circ
\iota_{\mathfrak{X}}^{n,m+1}
		&=
(
\iota_{\mathfrak{X}}^{m,n}
	\circ
\mu^{\mathfrak{S}}_{
	\mathfrak{S}^{m-1}
	\mathfrak{X}
}
)
	\circ
(
\eta^{\mathfrak{S}}_{\mathfrak{S}^{m} \mathfrak{X}}
	\circ
\iota_{\mathfrak{X}}^{n,m}
)
		\\&=
\iota_{\mathfrak{X}}^{m,n}
	\circ
(
\mu^{\mathfrak{S}}_{\mathfrak{S}^{m-1} \mathfrak{X}}
	\circ
\eta^{\mathfrak{S}}_{\mathfrak{S}^{m} \mathfrak{X}}
)
	\circ
\iota_{\mathfrak{X}}^{n,m}
		\\&=
\iota_{\mathfrak{X}}^{m,n}
	\circ
1_{\mathfrak{S}^{m} \mathfrak{X}}
	\circ
\iota_{\mathfrak{X}}^{n,m}
		\\&=
\iota_{\mathfrak{X}}^{m,n}
	\circ
\iota_{\mathfrak{X}}^{n,m}
		=
1_{\mathfrak{S}^{n} \mathfrak{X}}
\end{align*}
\else 
\begin{align*}
\iota_{\mathfrak{X}}^{m+1,n}
	\circ
\iota_{\mathfrak{X}}^{n,m+1}
		&=
(
\iota_{\mathfrak{X}}^{m,n}
	\circ
\mu^{\mathfrak{S}}_{\mathfrak{S}^{m-1} \mathfrak{X}}
)
	\circ
(
\eta^{\mathfrak{S}}_{\mathfrak{S}^{m} \mathfrak{X}}
	\circ
\iota_{\mathfrak{X}}^{n,m}
)
		=
\iota_{\mathfrak{X}}^{m,n}
	\circ
(
\mu^{\mathfrak{S}}_{\mathfrak{S}^{m-1} \mathfrak{X}}
	\circ
\eta^{\mathfrak{S}}_{\mathfrak{S}^{m} \mathfrak{X}}
)
	\circ
\iota_{\mathfrak{X}}^{n,m}
		\\&=
\iota_{\mathfrak{X}}^{m,n}
	\circ
1_{\mathfrak{S}^{m} \mathfrak{X}}
	\circ
\iota_{\mathfrak{X}}^{n,m}
		=
\iota_{\mathfrak{X}}^{m,n}
	\circ
\iota_{\mathfrak{X}}^{n,m}
		=
1_{\mathfrak{S}^{n} \mathfrak{X}}
\end{align*}
\fi 
in which the last equality comes from the assumption.

\item
If $m = n$, it is clear.
So we assume $m \ne n$.
Note that
for $\infty > m > n$,
we have
\begin{equation}
\label{eq:iterateMu}
\iota_{\mathfrak{X}}^{m,n}
		=
\mu^{\mathfrak{S}}_{\mathfrak{S}^{n-1} \mathfrak{X}}
	\circ
\mu^{\mathfrak{S}}_{\mathfrak{S}^{n} \mathfrak{X}}
	\circ
\mu^{\mathfrak{S}}_{\mathfrak{S}^{n+1} \mathfrak{X}}
	\circ
\cdots
	\circ
\mu^{\mathfrak{S}}_{\mathfrak{S}^{m-2} \mathfrak{X}} .
\end{equation}
When $\infty = \ell > m > n$,
\if \MkLonger  1
\begin{align*}
\iota_{\mathfrak{X}}^{m, n}
	\circ
\iota_{\mathfrak{X}}^{\infty,m}
			&=
(
\mu^{\mathfrak{S}}_{\mathfrak{S}^{n-1} \mathfrak{X}}
	\circ
\mu^{\mathfrak{S}}_{\mathfrak{S}^{n} \mathfrak{X}}
	\circ
\mu^{\mathfrak{S}}_{\mathfrak{S}^{n+1} \mathfrak{X}}
	\circ
\cdots
	\circ
\mu^{\mathfrak{S}}_{\mathfrak{S}^{m-2} \mathfrak{X}} .
)
	\circ
j_m
			&(\textrm{by (\ref{eq:iterateMu})})
			\\&=
\mu^{\mathfrak{S}}_{\mathfrak{S}^{n-1} \mathfrak{X}}
	\circ
\mu^{\mathfrak{S}}_{\mathfrak{S}^{n} \mathfrak{X}}
	\circ
\mu^{\mathfrak{S}}_{\mathfrak{S}^{n+1} \mathfrak{X}}
	\circ
\cdots
	\circ
(
\mu^{\mathfrak{S}}_{\mathfrak{S}^{m-2} \mathfrak{X}} .
	\circ
j_m
)
			\\&=
\mu^{\mathfrak{S}}_{\mathfrak{S}^{n-1} \mathfrak{X}}
	\circ
\mu^{\mathfrak{S}}_{\mathfrak{S}^{n} \mathfrak{X}}
	\circ
\mu^{\mathfrak{S}}_{\mathfrak{S}^{n+1} \mathfrak{X}}
	\circ
\cdots
	\circ
(
\mu^{\mathfrak{S}}_{\mathfrak{S}^{m-3} \mathfrak{X}} .
	\circ
j_{m-1}
)
			&(\textrm{by (\ref{eq:jNmuJnP})})
			\\&=
\cdots
			\\&=
\mu^{\mathfrak{S}}_{\mathfrak{S}^{n-1} \mathfrak{X}}
	\circ
j_{n+1}
			\\&=
j_{n}
			&(\textrm{by (\ref{eq:jNmuJnP})})
			\\&=
\iota_{\mathfrak{X}}^{\infty,n} .
\end{align*}
\else 
by equations
			(\ref{eq:jNmuJnP})
and
			(\ref{eq:jNmuJnP}),
\begin{align*}
\iota_{\mathfrak{X}}^{m, n}
	\circ
\iota_{\mathfrak{X}}^{\infty,m}
			&=
(
\mu^{\mathfrak{S}}_{\mathfrak{S}^{n-1} \mathfrak{X}}
	\circ
\mu^{\mathfrak{S}}_{\mathfrak{S}^{n} \mathfrak{X}}
	\circ
\mu^{\mathfrak{S}}_{\mathfrak{S}^{n+1} \mathfrak{X}}
	\circ
\cdots
	\circ
\mu^{\mathfrak{S}}_{\mathfrak{S}^{m-2} \mathfrak{X}} .
)
	\circ
j_m
			\\&=
\mu^{\mathfrak{S}}_{\mathfrak{S}^{n-1} \mathfrak{X}}
	\circ
\mu^{\mathfrak{S}}_{\mathfrak{S}^{n} \mathfrak{X}}
	\circ
\mu^{\mathfrak{S}}_{\mathfrak{S}^{n+1} \mathfrak{X}}
	\circ
\cdots
	\circ
(
\mu^{\mathfrak{S}}_{\mathfrak{S}^{m-2} \mathfrak{X}} .
	\circ
j_m
)
			\\&=
\mu^{\mathfrak{S}}_{\mathfrak{S}^{n-1} \mathfrak{X}}
	\circ
\mu^{\mathfrak{S}}_{\mathfrak{S}^{n} \mathfrak{X}}
	\circ
\mu^{\mathfrak{S}}_{\mathfrak{S}^{n+1} \mathfrak{X}}
	\circ
\cdots
	\circ
(
\mu^{\mathfrak{S}}_{\mathfrak{S}^{m-3} \mathfrak{X}} .
	\circ
j_{m-1}
)
			\\&=
\cdots
			\\&=
\mu^{\mathfrak{S}}_{\mathfrak{S}^{n-1} \mathfrak{X}}
	\circ
j_{n+1}
			=
j_{n}
			=
\iota_{\mathfrak{X}}^{\infty,n} .
\end{align*}
\fi 

On the other hand,
note that
for $m < n < \infty$,
we have
\begin{equation}
\label{eq:iterateEta}
\iota_{\mathfrak{X}}^{m,n}
		=
\eta^{\mathfrak{S}}_{\mathfrak{S}^{n-1} \mathfrak{X}}
	\circ
\eta^{\mathfrak{S}}_{\mathfrak{S}^{n-2} \mathfrak{X}}
	\circ
\cdots
	\circ
\eta^{\mathfrak{S}}_{\mathfrak{S}^{m} \mathfrak{X}} .
\end{equation}
Then, when
$
l < m < n = \infty
$,
\if \MkLonger  1
\begin{align*}
\iota_{\mathfrak{X}}^{m,\infty}
	\circ
\iota_{\mathfrak{X}}^{\ell, m}
			&=
\langle
	\iota_{\mathfrak{X}}^{m,1},
	\iota_{\mathfrak{X}}^{m,2},
	\cdots
\rangle
	\circ
\iota_{\mathfrak{X}}^{\ell, m}
			\\&=
\langle
	\iota_{\mathfrak{X}}^{m,1}
		\circ
	\iota_{\mathfrak{X}}^{\ell, m},
		\,
	\iota_{\mathfrak{X}}^{m,2}
		\circ
	\iota_{\mathfrak{X}}^{\ell, m},
	\cdots
\rangle
			\\&=
\langle
	\iota_{\mathfrak{X}}^{\ell, 1},
	\iota_{\mathfrak{X}}^{\ell, 2},
	\cdots
\rangle
			=
\iota_{\mathfrak{X}}^{\ell,\infty} .
			&(\textrm{by (\ref{eq:iterateEta}) and (\ref{eq:iotaXretr})})
\end{align*}
\else 
by (\ref{eq:iterateEta}) and (\ref{eq:iotaXretr},
we have
\begin{align*}
\iota_{\mathfrak{X}}^{m,\infty}
	\circ
\iota_{\mathfrak{X}}^{\ell, m}
			&=
\langle
	\iota_{\mathfrak{X}}^{m,1},
	\iota_{\mathfrak{X}}^{m,2},
	\cdots
\rangle
	\circ
\iota_{\mathfrak{X}}^{\ell, m}
			=
\langle
	\iota_{\mathfrak{X}}^{m,1}
		\circ
	\iota_{\mathfrak{X}}^{\ell, m},
		\,
	\iota_{\mathfrak{X}}^{m,2}
		\circ
	\iota_{\mathfrak{X}}^{\ell, m},
	\cdots
\rangle
			=
\langle
	\iota_{\mathfrak{X}}^{\ell, 1},
	\iota_{\mathfrak{X}}^{\ell, 2},
	\cdots
\rangle
			=
\iota_{\mathfrak{X}}^{\ell,\infty} .
\end{align*}
\fi 

\end{enumerate}
\end{proof}

\begin{diagram}[htb]
\begin{equation*}
\xymatrix @C=50 pt @R=35 pt {
	\mathfrak{S}^{\ell} \mathfrak{X}
&
	\mathfrak{S}^m \mathfrak{X}
		\ar @{->}_{
			\iota_{\mathfrak{X}}^{m,\ell}
		} [l]
&
	\mathfrak{S}^{n} \mathfrak{X}
		\ar @{->}_{
			\iota_{\mathfrak{X}}^{n,m}
		} [l]
		\ar @/_2pc/_{
			\iota_{\mathfrak{X}}^{n,\ell}
		} [ll]
\\
&
	\mathfrak{S}^{\infty} \mathfrak{X}
		\ar @{->}^{
			\iota_{\mathfrak{X}}^{\infty,\ell}
		} [lu]
		\ar @{->}^{
			\iota_{\mathfrak{X}}^{\infty,m}
		} [u]
		\ar @{->}_{
			\iota_{\mathfrak{X}}^{\infty,n}
		} [ru]
}
\end{equation*}
\caption{Approximation maps for $
1 \le \ell \le m \le n < \infty
$}
\label{diag:approxMaps}
\end{diagram}

The measurable space
$
	\mathfrak{S}^{\infty} \mathfrak{X}
$
is called the 
\newword{universal uncertainty space}
starting from
$\mathfrak{X}$,
which contains all levels of uncertainty hierarchy over 
the given measurable space
$\mathfrak{X}$.
You can get an approximated capacity of any level by projecting with approximation maps
specified in the commutative diagram \ref{diag:approxMaps}.

\if \InclPM  1
\section{Giry monad}
\label{sec:GiryMonad}

\cite{lawvere_1962}
and
\cite{giry_1982}
introduced the concept of 
\newword{probability monad} .
Actually, 
maps
$\eta^{\mathfrak{S}}_{\mathfrak{X}}$
and
$\mu^{\mathfrak{S}}_{\mathfrak{X}}$
introduced in
Section \ref{sec:functorS}
would be components of the probability monad.

In this section,
we investigate a sufficient condition of our triple
$
(
\mathfrak{S},
\eta^{\mathfrak{S}},
\mu^{\mathfrak{S}}
)
$
becomes a probability monad.


Let
$
\mathfrak{S}
$
be a CM-functor
throughout this section.

\begin{lem}
\label{lem:LemThree}
Let
$
\mathfrak{X}
	=
(X, \Sigma_X)
$
be a measurable space
and
$
v
	\in
\mathfrak{S}^2
\mathfrak{X}
$
be an additive capacity.
Then, 
we have,
\begin{equation}
\label{eq:LemThreeOne}
I_{\mathfrak{X}}^{
	\mu^{\mathfrak{S}}_{\mathfrak{X}}(v)
}
		=
I_{
	\mathfrak{S}
	\mathfrak{X}
}^{
	v
}
	\circ
\xi^{\mathfrak{S}}_{\mathfrak{X}} .
\end{equation}
\end{lem}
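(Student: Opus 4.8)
The plan is to show that the two maps $L^{\infty}(\mathfrak{X}) \to \mathbb{R}$ on the two sides of (\ref{eq:LemThreeOne}) agree on every $f \in L^{\infty}(\mathfrak{X})$, establishing this first on finite step functions and then extending to arbitrary bounded measurable functions by an approximation argument. First note that $\mu^{\mathfrak{S}}_{\mathfrak{X}}(v) = I_{\mathfrak{S}\mathfrak{X}}^{v} \circ \varepsilon^{\mathfrak{S}}_{\mathfrak{X}}$ is genuinely a capacity on $\mathfrak{X}$ by the defining property (\ref{eq:condM}) of a CM-functor, so the left-hand side is a well-defined Choquet integral. Moreover both sides are monotonic: the left by monotonicity of the Choquet integral (Theorem \ref{thm:choquetThm}), and the right as the composite of the monotonic map $\xi^{\mathfrak{S}}_{\mathfrak{X}}$ (Remark \ref{rem:xiX}) followed by the monotonic $I_{\mathfrak{S}\mathfrak{X}}^{v}$. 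This monotonicity is what will let the step-function identity propagate to general $f$.

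For the step-function case I would take $f = \sum_{i=1}^{n} a_i \mathbb{1}_X(A_i)$ with the $A_i \in \Sigma_X$ mutually disjoint and, after a harmless reindexing, $a_1 \ge \cdots \ge a_n$, putting $a_{n+1} := 0$. Applying Proposition \ref{prop:xiXstepF} with the uncertainty structure $U_X = \mathfrak{S}\mathfrak{X}$ gives
\[
\xi^{\mathfrak{S}}_{\mathfrak{X}}(f) = \sum_{i=1}^{n} (a_i - a_{i+1}) \, \varepsilon^{\mathfrak{S}}_{\mathfrak{X}}\Big( \bigcup_{j=1}^{i} A_j \Big).
\]
Since $v$ is additive, $I_{\mathfrak{S}\mathfrak{X}}^{v}$ is linear by Proposition \ref{prop:additiveNu}, so I may pull this finite sum through it and then use Lemma \ref{lem:alterMu} to identify $I_{\mathfrak{S}\mathfrak{X}}^{v}\big(\varepsilon^{\mathfrak{S}}_{\mathfrak{X}}(\bigcup_{j\le i} A_j)\big) = \mu^{\mathfrak{S}}_{\mathfrak{X}}(v)\big(\bigcup_{j\le i} A_j\big)$, obtaining
\[
\big( I_{\mathfrak{S}\mathfrak{X}}^{v} \circ \xi^{\mathfrak{S}}_{\mathfrak{X}} \big)(f) = \sum_{i=1}^{n} (a_i - a_{i+1}) \, \mu^{\mathfrak{S}}_{\mathfrak{X}}(v)\Big( \bigcup_{j=1}^{i} A_j \Big),
\]
which by Lemma \ref{lem:stepFunChoquetInt} is exactly $I_{\mathfrak{X}}^{\mu^{\mathfrak{S}}_{\mathfrak{X}}(v)}(f)$. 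The special case $f = \mathbb{1}_X(A)$, where both sides reduce to $\mu^{\mathfrak{S}}_{\mathfrak{X}}(v)(A)$, also follows at once from (\ref{eq:etaOneEpsilon}), (\ref{eq:ChoquetOneA}) and Lemma \ref{lem:alterMu}, and serves as a useful sanity check.

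For general $f \in L^{\infty}(\mathfrak{X})$ with $\norm{f}_{\infty} < M$, I would invoke the discretization of Note \ref{note:discretizeU} to get step functions $\underline{f}_n \le f \le \bar{f}_n$ with $\bar{f}_n - \underline{f}_n = 2^{-n} M$ on $X$ (cf. (\ref{eq:discUmono}), (\ref{eq:discUlimit})). Applying the monotonicity of $I_{\mathfrak{X}}^{\mu^{\mathfrak{S}}_{\mathfrak{X}}(v)}$ on one side and of $I_{\mathfrak{S}\mathfrak{X}}^{v} \circ \xi^{\mathfrak{S}}_{\mathfrak{X}}$ on the other, and using the step-function identity just proved, sandwiches both quantities between $I_{\mathfrak{X}}^{\mu^{\mathfrak{S}}_{\mathfrak{X}}(v)}(\underline{f}_n)$ and $I_{\mathfrak{X}}^{\mu^{\mathfrak{S}}_{\mathfrak{X}}(v)}(\bar{f}_n)$; since $\bar{f}_n = \underline{f}_n + 2^{-n}M$ and a constant function is comonotonic with every function (Corollary \ref{cor:ChoquetComono}), these two bounds differ by $2^{-n}M \to 0$, so both sides of (\ref{eq:LemThreeOne}) applied to $f$ coincide in the limit. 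I expect the only point requiring real care to be this limiting step, namely verifying that the sandwich indeed applies on the composed side; this hinges on the monotonicity of $\xi^{\mathfrak{S}}_{\mathfrak{X}}$ and on $\underline{f}_n, \bar{f}_n$ being writable in the decreasing-coefficient form needed for Proposition \ref{prop:xiXstepF}, after which everything reduces to bookkeeping with results already in hand.
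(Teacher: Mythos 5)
Your proposal is correct and follows essentially the same route as the paper: establish the identity on finite step functions via Lemma \ref{lem:stepFunChoquetInt}, the linearity of $I^{v}_{\mathfrak{S}\mathfrak{X}}$ from Proposition \ref{prop:additiveNu} (this is where additivity of $v$ enters), and Proposition \ref{prop:xiXstepF}, then extend to general $f \in L^{\infty}(\mathfrak{X})$ by the discretization of Note \ref{note:discretizeU} together with monotonicity of both sides. The only cosmetic difference is that you run the step-function computation from the right-hand side toward the left and cite Lemma \ref{lem:alterMu} where the paper uses the definition of $\mu^{\mathfrak{S}}_{\mathfrak{X}}$ directly, which changes nothing of substance.
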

\begin{proof}
We will show
for
$
f 
\in L^{\infty}(\mathfrak{X})
$,
\begin{equation}
\label{eq:LemThree}
I_{\mathfrak{X}}^{
	\mu^{\mathfrak{S}}_{\mathfrak{X}}(v)
}
(f)
		=
I_{
	\mathfrak{S}
	\mathfrak{X}
}^{
	v
}
(
	\xi^{\mathfrak{S}}_{\mathfrak{X}}(f)
) .
\end{equation}

First, 
we will prove for the case when
$f$ is a finite step function such as
\[
f
	:=
\sum_{i=1}^n
a_i
\mathbb{1}_X(A_i)
\]
where
$
a_1 \ge a_2 \ge \cdots \ge a_n
$
are decreasing real numbers
and
$A_i$
are mutually disjoint elements of
$\Sigma_X$.
But, we have
\if \MkLonger  1
\begin{align*}
I_{\mathfrak{X}}^{
	\mu^{\mathfrak{S}}_{\mathfrak{X}}(v)
}
(f)
		&=
I_{\mathfrak{X}}^{
	\mu^{\mathfrak{S}}_{\mathfrak{X}}(v)
}
(
	\sum_{i=1}^n
		a_i
		\mathbb{1}_X(A_i)
)
		\\&=
\sum_{i=1}^n
	(a_i - a_{i+1})
	\,
	\mu^{\mathfrak{S}}_{\mathfrak{X}}(v)
	\Big(
		\bigcup_{j=1}^i
			A_j
	\Big)
		&(\textrm{
			by Lemma \ref{lem:stepFunChoquetInt}
		})
		\\&=
\sum_{i=1}^n
	(a_i - a_{i+1})
	\,
	I_{
		\mathfrak{S}
		\mathfrak{X}
	}^{
		v
	}
	\Big(
		\varepsilon^{\mathfrak{S}}_{\mathfrak{X}}
		\big(
			\bigcup_{j=1}^i
				A_j
		\big)
	\Big)
		\\&=
I_{
	\mathfrak{S}
	\mathfrak{X}
}^{
	v
}
\Big(
\sum_{i=1}^n
	(a_i - a_{i+1})
	\,
		\varepsilon^{\mathfrak{S}}_{\mathfrak{X}}
		\big(
			\bigcup_{j=1}^i
				A_j
		\big)
\Big)
		&(\textrm{
			by Proposition \ref{prop:additiveNu}
		})
		\\&=
I_{
	\mathfrak{S}
	\mathfrak{X}
}^{
	v
}
(
	\xi^{\mathfrak{S}}_{\mathfrak{X}}(f)
) .
		&(\textrm{
			by 
Proposition \ref{prop:xiXstepF}
		})
\end{align*}
\else 
by Lemma \ref{lem:stepFunChoquetInt},
Proposition \ref{prop:additiveNu}
and
Proposition \ref{prop:xiXstepF},
\begin{align*}
I_{\mathfrak{X}}^{
	\mu^{\mathfrak{S}}_{\mathfrak{X}}(v)
}
(f)
		&=
I_{\mathfrak{X}}^{
	\mu^{\mathfrak{S}}_{\mathfrak{X}}(v)
}
(
	\sum_{i=1}^n
		a_i
		\mathbb{1}_X(A_i)
)
		=
\sum_{i=1}^n
	(a_i - a_{i+1})
	\,
	\mu^{\mathfrak{S}}_{\mathfrak{X}}(v)
	\Big(
		\bigcup_{j=1}^i
			A_j
	\Big)
		\\&=
\sum_{i=1}^n
	(a_i - a_{i+1})
	\,
	I_{
		\mathfrak{S}
		\mathfrak{X}
	}^{
		v
	}
	\Big(
		\varepsilon^{\mathfrak{S}}_{\mathfrak{X}}
		\big(
			\bigcup_{j=1}^i
				A_j
		\big)
	\Big)
		=
I_{
	\mathfrak{S}
	\mathfrak{X}
}^{
	v
}
\Big(
\sum_{i=1}^n
	(a_i - a_{i+1})
	\,
		\varepsilon^{\mathfrak{S}}_{\mathfrak{X}}
		\big(
			\bigcup_{j=1}^i
				A_j
		\big)
\Big)
		=
I_{
	\mathfrak{S}
	\mathfrak{X}
}^{
	v
}
(
	\xi^{\mathfrak{S}}_{\mathfrak{X}}(f)
) .
\end{align*}
\fi 

Next,
for general
$
f \in L^{\infty}(Y),
$
we create 
finite step functions
$\underline{f}_n$
and
$\bar{f}_n$
by the method described in
Note \ref{note:discretizeU}.
Then, by the monotonicity of the Choquet integration and 
$\xi^{\mathfrak{S}}_{\mathfrak{X}}$,
we have
\[
I_{\mathfrak{X}}^{\mu^{\mathfrak{S}}_{\mathfrak{X}}(v)}(
	\underline{f}_n
)
	\le
I_{\mathfrak{X}}^{\mu^{\mathfrak{S}}_{\mathfrak{X}}(v)}(
	f
)
	\le
I_{\mathfrak{X}}^{\mu^{\mathfrak{S}}_{\mathfrak{X}}(v)}(
	\bar{f}_n
)
\]
and
\[
I_{\mathfrak{S} \mathfrak{X}}^{v}\big(\xi^{\mathfrak{S}}_{\mathfrak{X}}(
	\underline{f}_n
)\big)
	\le
I_{\mathfrak{S} \mathfrak{X}}^{v}\big(\xi^{\mathfrak{S}}_{\mathfrak{X}}(
	f
)\big)
	\le
I_{\mathfrak{S} \mathfrak{X}}^{v}\big(\xi^{\mathfrak{S}}_{\mathfrak{X}}(
	\bar{f}_n
)\big) .
\]
Taking limits of both sides by letting $n$ go infinity,
we have
(\ref{eq:LemThreeOne})
for this general
$f$.

\end{proof}

The following example shows that 
we cannot extend
Lemma \ref{lem:LemThree}
to the cases when
$v$ is non-additive.

\begin{exmp}
\label{exmp:CEprobMon}
Let
$
\mathfrak{X}
	=
(X, \Sigma_X)
$
be a measurable space such that
$
X
	:=
\{ R, B, Y\}
$,
$
\Sigma_X := 2^X  
$
and
$U$
be a set of additive capacities 
over $\mathfrak{X}$
defined by
\begin{equation}
\label{eq:exmpSX}
U
	:=
\{
	u_{i,j}
\mid
	i,j \in \mathbb{N},
\,
	i + j \le N
\},
\end{equation}
where
$N$
is a fixed positive integer greater than $3$
and
\begin{equation}
\label{eq:exmpUij}
u_{i,j}(\{R\}) :=
\frac{i}{N},
		\quad
u_{i,j}(\{B\}) :=
\frac{j}{N},
		\quad
u_{i,j}(\{Y\}) :=
\frac{N-(i+j)}{N} .
\end{equation}
Then, we have
$
\#U
	=
\sum_{i=0}^N
\#\{
	j
\mid
	0 \le j \le N-i
\}
	=
\frac{1}{2}
(N-1)N
$.

Now we have
$
\mathfrak{L}
(X, \Sigma_X, U)
	=
(U, 2^U)
$
since
for 
every
$
i, j \ge 0
$
such that
$i + j \le N$,
\[
\{ u_{ij} \} 
		=
\big(
	\varepsilon^{\mathfrak{S}}_{\mathfrak{X}}
	(\{R\})
\big)^{-1}
\big(
	\{\frac{i}{N}\}
\big)
		\cap
\big(
	\varepsilon^{\mathfrak{S}}_{\mathfrak{X}}
	(\{B\})
\big)^{-1}
\big(
	\{\frac{j}{N}\}
\big)
		\in
\Sigma_{(X, \Sigma_X, U)} .
\]
So we can consider a CM-functor
$\mathfrak{S}$ such that 
$
\mathfrak{S}
\mathfrak{X}
	=
(U, 2^U)
$ .

Define a non-additive capacity $v$ on
$
\mathfrak{S}
\mathfrak{X}
$
by
with some fixed
$
\beta \ge 1
$,
\begin{equation}
\label{eq:exmpNAv}
v(A)
	:=
\Big(
\frac{
	\# A
}{
	\# U
}
\Big)^{\beta} ,
	\quad
(
A \in
\Sigma_{U}
) .
\end{equation}

Let
$f
\in
L^{\infty}(\mathfrak{X})
$
be an act defined by
\begin{equation}
\label{eq:exmpActF}
f
		:=
a
\mathbb{1}_X(\{R\})
	+
b
\mathbb{1}_X(\{R, B\})
		=
(a+b)
\mathbb{1}_X(\{R\})
	+
b
\mathbb{1}_X(\{B\}) ,
\end{equation}
where $a$ and $b$ are distinct positive numbers,
and we will check if
(\ref{eq:LemThree})
holds with it.

First, 
let us calculate the LHS of
(\ref{eq:LemThree}).
\begin{align*}
I_{\mathfrak{X}}^{
	\mu^{\mathfrak{S}}_{\mathfrak{X}}(v)
}
(f)
		&=
\{(a+b) - b\}
\mu^{\mathfrak{S}}_{\mathfrak{X}}(v)(\{R\})
	+
\{b - 0\}
\mu^{\mathfrak{S}}_{\mathfrak{X}}(v)(\{R, B\})
		\\&=
a
I_{\mathfrak{S} \mathfrak{X}}^v
\big( \varepsilon^{\mathfrak{S}}_{\mathfrak{X}}(\{R\}) \big)
	+
b
I_{\mathfrak{S} \mathfrak{X}}^v
\big( \varepsilon^{\mathfrak{S}}_{\mathfrak{X}}(\{R, B\}) \big) .
\end{align*}
By considering 
$
\varepsilon^{\mathfrak{S}}_{\mathfrak{X}}(\{R\})(u_{i,j})
	=
\frac{i}{N}
$,
define a subset
$A_k
	\subset
U
$
by
\begin{equation}
\label{eq:exmpAi}
A_k
		:=
\big( \varepsilon^{\mathfrak{S}}_{\mathfrak{X}}(\{R\}) \big)^{-1} 
\big(
	\frac{k}{N}
\big)
		=
\{
	u_{k,j}
\mid
	0 \le j \le N-k
\} .
\end{equation}
Then
we can write
\[
\varepsilon^{\mathfrak{S}}_{\mathfrak{X}}(\{R\})
		=
\frac{N}{N}
\mathbb{1}_{\mathfrak{S} \mathfrak{X}}
(A_N)
		+
\frac{N-1}{N}
\mathbb{1}_{\mathfrak{S} \mathfrak{X}}
(A_{N-1})
		+
\cdots
		+
\frac{k}{N}
\mathbb{1}_{\mathfrak{S} \mathfrak{X}}
(A_{k})
		+
\cdots
		+
\frac{1}{N}
\mathbb{1}_{\mathfrak{S} \mathfrak{X}}
(A_{1})
		+
\frac{0}{N}
\mathbb{1}_{\mathfrak{S} \mathfrak{X}}
(A_{0}) .
\]
Therefore,
\begin{align*}
I_{\mathfrak{S} \mathfrak{X}}^v
\big( \varepsilon^{\mathfrak{S}}_{\mathfrak{X}}(\{R\}) \big)
			&=
\sum_{k=N}^1
	\big(
		\frac{k}{N}
			-
		\frac{k-1}{N}
	\big)
	v\Big(
		\bigcup_{\ell=k}^N
			A_{\ell}
	\Big) 
			\\&=
\frac{1}{N}
\sum_{k=1}^N
	v\Big(
		\bigcup_{\ell=k}^N
			A_{\ell}
	\Big) 
			=
\frac{1}{N}
\sum_{k=1}^N
	\Big(
		\frac{
			(N+1-k)
			(N+2-k)
		}{
			(N-1)
			N
		}
	\Big)^{\beta}
\end{align*}
since
\[
\#\
\Big(
	\bigcup_{\ell=k}^N
		A_{\ell}
\Big)
		=
\sum_{\ell=k}^N
	\# A_{\ell}
		=
\sum_{\ell=k}^N
	(N+1-\ell)
		=
\frac{1}{2}
(N+1-k)
(N+2-k) .
\]
Similarly,
by considering
$
\varepsilon^{\mathfrak{S}}_{\mathfrak{X}}(\{R, B\})(u_{i,j})
	=
\frac{i+j}{N}
$,
for
$k = 0, 1, \cdots, N$,
define a subset
$
B_k
	\subset
U
$
by
\begin{equation}
\label{eq:exmpBk}
B_k
		:=
\big( \varepsilon^{\mathfrak{S}}_{\mathfrak{X}}(\{R, B\}) \big)^{-1} 
\big(
	\frac{k}{N}
\big)
		=
\{
	u_{i,j}
\mid
	i, j \ge 0,
\,
	i+j = k
\} .
\end{equation}
Then, we have
$
\# B_k
	=
k+1
$
and
can write
\[
\varepsilon^{\mathfrak{S}}_{\mathfrak{X}}(\{R, B\})
		=
\frac{N}{N}
\mathbb{1}_{\mathfrak{S} \mathfrak{X}}
(B_N)
		+
\frac{N-1}{N}
\mathbb{1}_{\mathfrak{S} \mathfrak{X}}
(B_{N-1})
		+
\cdots
		+
\frac{k}{N}
\mathbb{1}_{\mathfrak{S} \mathfrak{X}}
(B_{k})
		+
\cdots
		+
\frac{1}{N}
\mathbb{1}_{\mathfrak{S} \mathfrak{X}}
(B_{1})
		+
\frac{0}{N}
\mathbb{1}_{\mathfrak{S} \mathfrak{X}}
(B_{0}) .
\]
Therefore,
\if \MkLonger  1
\begin{align*}
I_{\mathfrak{S} \mathfrak{X}}^v
\big( \varepsilon^{\mathfrak{S}}_{\mathfrak{X}}(\{R, B\}) \big)
			&=
\sum_{k=N}^1
	\big(
		\frac{k}{N}
			-
		\frac{k-1}{N}
	\big)
	v\Big(
		\bigcup_{\ell=k}^N
			B_{\ell}
	\Big) 
			\\&=
\frac{1}{N}
\sum_{k=1}^N
	v\Big(
		\bigcup_{\ell=k}^N
			B_{\ell}
	\Big) 
			\\&=
\frac{1}{N}
\sum_{k=1}^N
	\Big(
		\frac{
			(N+1-k)
			(N+2+k)
		}{
			(N+1)
			(N+2)
		}
	\Big)^{\beta}
\end{align*}
\else 
\begin{align*}
I_{\mathfrak{S} \mathfrak{X}}^v
\big( \varepsilon^{\mathfrak{S}}_{\mathfrak{X}}(\{R, B\}) \big)
			&=
\sum_{k=N}^1
	\big(
		\frac{k}{N}
			-
		\frac{k-1}{N}
	\big)
	v\Big(
		\bigcup_{\ell=k}^N
			B_{\ell}
	\Big) 
			\\&=
\frac{1}{N}
\sum_{k=1}^N
	v\Big(
		\bigcup_{\ell=k}^N
			B_{\ell}
	\Big) 
			=
\frac{1}{N}
\sum_{k=1}^N
	\Big(
		\frac{
			(N+1-k)
			(N+2+k)
		}{
			(N-1)
			N
		}
	\Big)^{\beta}
\end{align*}
\fi 
since
\[
\#\
\Big(
	\bigcup_{\ell=k}^N
		B_{\ell}
\Big)
		=
\sum_{\ell=k}^N
	\# B_{\ell}
		=
\sum_{\ell=k}^N
	(\ell + 1)
		=
\frac{1}{2}
(N+1-k) 
(N+2+k) .
\]
Hence,
\begin{equation}
\label{eq:exmpIsxMuV}
I_{\mathfrak{X}}^{
	\mu^{\mathfrak{S}}_{\mathfrak{X}}
	(v)
}
(f)
		=
\frac{a}{N}
\sum_{k=1}^N
	\Big(
		\frac{
			(N+1-k)
			(N+2-k)
		}{
			(N-1)
			N
		}
	\Big)^{\beta}
			+
\frac{b}{N}
\sum_{k=1}^N
	\Big(
		\frac{
			(N+1-k)
			(N+2+k)
		}{
			(N-1)
			N
		}
	\Big)^{\beta} .
\end{equation}

Next,
we will compute the RHS of
(\ref{eq:LemThree}) .
\begin{align*}
\xi^{\mathfrak{S}}_{\mathfrak{X}}(f)(u_{i,j})
		&=
\{(a+b) - b\}
u_{i,j}(\{R\})
	+
\{b - 0\}
u_{i,j}(\{R,B\})
		\\&=
a
\frac{i}{N}
	+
b
\frac{i+j}{N}
		=
\frac{1}{N}
\big(
	(a+b)i
		+
	bj
\big) .
\end{align*}
Let
\[
K :=
\{
	(a+b)i + bj
\mid
	i, j \ge 0,
\,
	i+j \le N
\} .
\]
Then, since $K$ is a finite set,
we can write
\[
K = \{
	k_0, k_1, \cdots, k_M
\},
\]
where
$
M := \#K - 1
$
and
$
0 = k_0 < k_1 < \cdots < k_M = (a+b)N .
$
Then, by defining
\[
C_{\ell}
		:=
\big(
	\xi^{\mathfrak{S}}_{\mathfrak{X}}(f)
\big)^{-1}
\big(
	\frac{
		k_{\ell}
	}{
		N
	}
\big)
		=
\{
	u_{i,j}
\mid
	(a+b)i + bj = k_{\ell}
\} ,
\]
we have
\[
\xi^{\mathfrak{S}}_{\mathfrak{X}}(f)
	=
\sum_{\ell = M}^1
	\frac{
		k_{\ell}
	}{
		N
	}
	\mathbb{1}_{\mathfrak{S} \mathfrak{X}}
	(C_{\ell}) .
\]
Hence,
\if \MkLonger  1
\begin{align}
I_{
	\mathfrak{S} \mathfrak{X}
}^v
\big(
	\xi^{\mathfrak{S}}_{\mathfrak{X}}(f)
\big)
			&=
\frac{1}{N}
\sum_{\ell = 1}^M
	(
		k_{\ell}
			-
		k_{\ell-1}
	)
\,
	v\Big(
		\bigcup_{m=\ell}^M
			C_{m}
	\Big) 
			\nonumber
			\\&=
\frac{1}{N}
\sum_{\ell = 1}^M
	(
		k_{\ell}
			-
		k_{\ell-1}
	)
	\Big(
		\frac{
			\sum_{m=\ell}^M
				\# C_{m}
		}{
			\frac{1}{2}
			N-1
			N
		}
	\Big)^{\beta}.
\label{eq:empIsXVxiF}
\end{align}
\else 
\begin{equation}
I_{
	\mathfrak{S} \mathfrak{X}
}^v
\big(
	\xi^{\mathfrak{S}}_{\mathfrak{X}}(f)
\big)
			=
\frac{1}{N}
\sum_{\ell = 1}^M
	(
		k_{\ell}
			-
		k_{\ell-1}
	)
\,
	v\Big(
		\bigcup_{m=\ell}^M
			C_{m}
	\Big)
			=
\frac{1}{N}
\sum_{\ell = 1}^M
	(
		k_{\ell}
			-
		k_{\ell-1}
	)
	\Big(
		\frac{
			\sum_{m=\ell}^M
				\# C_{m}
		}{
			\frac{1}{2}
			(N-1)
			N
		}
	\Big)^{\beta}.
\label{eq:empIsXVxiF}
\end{equation}
\fi 

Now let
$N := 3$,
$a := 2$
and
$b := 1$.
Then,
by
(\ref{eq:exmpIsxMuV}),
\begin{align}
I_{\mathfrak{X}}^{\mu(v)}
(f)
			&=
\frac{2}{3}
\Big(
	\big(
		\frac{3 \cdot 4}{6}
	\big)^{\beta}
+
	\big(
		\frac{2 \cdot 3}{6}
	\big)^{\beta}
+
	\big(
		\frac{1 \cdot 2}{6}
	\big)^{\beta}
\Big)
		+
\frac{1}{3}
\Big(
	\big(
		\frac{3 \cdot 6}{6}
	\big)^{\beta}
+
	\big(
		\frac{2 \cdot 7}{6}
	\big)^{\beta}
+
	\big(
		\frac{1 \cdot 8}{6}
	\big)^{\beta}
\Big)
			\nonumber
			\\&=
\frac{1}{3 \cdot 3^{\beta}}
\big(
	2
	(
		6^{\beta} 
	+
		3^{\beta} 
	+
		1^{\beta} 
	)
		+
	(
		9^{\beta} 
	+
		7^{\beta} 
	+
		4^{\beta} 
	)
\big) .
			\label{eq:exmpIsxMuVConcrete}
\end{align}

On the other hand, we have
\[
M = 8, \quad
k_{\ell}
	=
\begin{cases}
	\ell
		\quad& \textrm{if} \quad
	0 \le \ell \le 7
\\
	9
		\quad& \textrm{if} \quad
	\ell = M
\end{cases}
\]
and
\begin{align*}
		&
C_0 = \{ u_{0,0} \}, \;
C_1 = \{ u_{0,1} \}, \;
C_2 = \{ u_{0,2} \}, \;
C_3 = \{ u_{0,3}, u_{1,0} \}, \;
C_4 = \{ u_{1,1} \}, \;
		\\&
C_5 = \{ u_{1,2} \}, \;
C_6 = \{ u_{2,0} \}, \;
C_7 = \{ u_{2,1} \}, \;
C_8 = \{ u_{3,0} \}.
\end{align*}
Therefore,
by
(\ref{eq:empIsXVxiF}),
\begin{align}
			&
I_{
	\mathfrak{S} \mathfrak{X}
}^v
\big(
	\xi^{\mathfrak{S}}_{\mathfrak{X}}(f)
\big)
			\nonumber
			\\=&
\frac{1}{3}
\Big(
	(9-7)
	\big( \frac{1}{3} \big)^{\beta}
		+
	(7-6)
	\big( \frac{2}{3} \big)^{\beta}
		+
	(6-5)
	\big( \frac{3}{3} \big)^{\beta}
		+
	(5-4)
	\big( \frac{4}{3} \big)^{\beta}
		+
	(4-3)
	\big( \frac{5}{3} \big)^{\beta}
			\nonumber
			\\&
		+
	(3-2)
	\big( \frac{7}{3} \big)^{\beta}
		+
	(2-1)
	\big( \frac{8}{3} \big)^{\beta}
		+
	(1-0)
	\big( \frac{9}{3} \big)^{\beta}
\Big)
			\nonumber
			\\=&
\frac{1}{3 \cdot 3^{\beta}}
\big(
	2
	\cdot
	1^{\beta}
		+
	2^{\beta}
		+
	3^{\beta}
		+
	4^{\beta}
		+
	5^{\beta}
		+
	7^{\beta}
		+
	8^{\beta}
		+
	9^{\beta}
\big) .
			\label{eq:empIsXVxiFconcrete}
\end{align}

Then, the difference between
			(\ref{eq:empIsXVxiFconcrete})
and
			(\ref{eq:exmpIsxMuVConcrete})
is
\begin{equation}
\label{eq:exmpIsDiff}
I_{
	\mathfrak{S} \mathfrak{X}
}^v
\big(
	\xi^{\mathfrak{S}}_{\mathfrak{X}}(f)
\big)
		-
I_{\mathfrak{X}}^{\mu(v)}
(f)
			=
\frac{1}{3 \cdot 3^{\beta}}
\big(
	2^{\beta}
		-
	3^{\beta}
		+
	5^{\beta}
		-
	2 \cdot
	6^{\beta}
		+
	8^{\beta}
\big) ,
\end{equation}
which is $0$
when
$\beta = 1$,
i.e.
$v$
is additive.
However, 
(\ref{eq:exmpIsDiff})
may not be
$0$
if 
$\beta > 1$.

That is,
the equation
(\ref{eq:LemThree})
does not hold in general
when
$v$
is non-additive.

\end{exmp}

\begin{defn}
\label{defn:addtiveS}
A CM-functor
$\mathfrak{S}$
is said
to be 
\newword{additive}
if
every
$
u \in \mathfrak{S} \mathfrak{X}
$
is additive
for any measurable space
$\mathfrak{X}$.
\end{defn}

\begin{prop}
\label{prop:GiryAssoc}
If 
$\mathfrak{S}$
is additive,
the diagram in
Diagram \ref{diag:defGiryAssoc}
commutes.

\begin{diagram}[htb]
\begin{equation*}
\xymatrix{
    \mathfrak{S}^3
      \ar @{->}_{\mu^{\mathfrak{S}} \mathfrak{S}} [d]
      \ar @{->}^{\mathfrak{S} \mu^{\mathfrak{S}}} [r]
&
    \mathfrak{S}^2
      \ar @{->}^{\mu^{\mathfrak{S}}} [d]
 \\
    \mathfrak{S}^2
      \ar @{->}^{\mu^{\mathfrak{S}}} [r]
&
    \mathfrak{S}
}
\end{equation*}
\caption{Giry associativity}
\label{diag:defGiryAssoc}
\end{diagram}
\end{prop}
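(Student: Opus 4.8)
The plan is to reduce the commutativity to a pointwise identity and then chase it through the integral machinery already set up. Concretely, it suffices to fix a measurable space $\mathfrak{X}=(X,\Sigma_X)$, an element $w\in\mathfrak{S}^{3}\mathfrak{X}$, and a set $A\in\Sigma_X$, and to show
\[
\big(\mu^{\mathfrak{S}}_{\mathfrak{X}}\circ\mathfrak{S}\mu^{\mathfrak{S}}_{\mathfrak{X}}\big)(w)(A)
=\big(\mu^{\mathfrak{S}}_{\mathfrak{X}}\circ\mu^{\mathfrak{S}}_{\mathfrak{S}\mathfrak{X}}\big)(w)(A).
\]
Both sides are elements of $\mathfrak{S}\mathfrak{X}$, and both $\mu^{\mathfrak{S}}$'s are already known to be measurable, so equality of the underlying set maps is all that is needed. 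I would prove that each side equals the single expression $I^{w}_{\mathfrak{S}^{2}\mathfrak{X}}\big(\xi^{\mathfrak{S}}_{\mathfrak{S}\mathfrak{X}}(\varepsilon^{\mathfrak{S}}_{\mathfrak{X}}(A))\big)$.

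For the left-hand side: since $\mathfrak{S}$ acts on arrows by push-forward, $\mathfrak{S}\mu^{\mathfrak{S}}_{\mathfrak{X}}(w)=w\circ(\mu^{\mathfrak{S}}_{\mathfrak{X}})^{-1}\in\mathfrak{S}^{2}\mathfrak{X}$. Applying Definition \ref{defn:muX} (equivalently Lemma \ref{lem:alterMu}) and then the substitution principle Lemma \ref{lem:lemTwo} with $h=\mu^{\mathfrak{S}}_{\mathfrak{X}}\colon\mathfrak{S}^{2}\mathfrak{X}\to\mathfrak{S}\mathfrak{X}$, $u=w$, $f=\varepsilon^{\mathfrak{S}}_{\mathfrak{X}}(A)$ gives
\[
\big(\mu^{\mathfrak{S}}_{\mathfrak{X}}\circ\mathfrak{S}\mu^{\mathfrak{S}}_{\mathfrak{X}}\big)(w)(A)
=I^{w\circ(\mu^{\mathfrak{S}}_{\mathfrak{X}})^{-1}}_{\mathfrak{S}\mathfrak{X}}\big(\varepsilon^{\mathfrak{S}}_{\mathfrak{X}}(A)\big)
=I^{w}_{\mathfrak{S}^{2}\mathfrak{X}}\big(\varepsilon^{\mathfrak{S}}_{\mathfrak{X}}(A)\circ\mu^{\mathfrak{S}}_{\mathfrak{X}}\big).
\]
The key observation at this point is that, by Lemma \ref{lem:alterMu}, $\big(\varepsilon^{\mathfrak{S}}_{\mathfrak{X}}(A)\circ\mu^{\mathfrak{S}}_{\mathfrak{X}}\big)(v)=\mu^{\mathfrak{S}}_{\mathfrak{X}}(v)(A)=\xi^{\mathfrak{S}}_{\mathfrak{S}\mathfrak{X}}(\varepsilon^{\mathfrak{S}}_{\mathfrak{X}}(A))(v)$ for every $v\in\mathfrak{S}^{2}\mathfrak{X}$, i.e.\ $\varepsilon^{\mathfrak{S}}_{\mathfrak{X}}(A)\circ\mu^{\mathfrak{S}}_{\mathfrak{X}}=\xi^{\mathfrak{S}}_{\mathfrak{S}\mathfrak{X}}(\varepsilon^{\mathfrak{S}}_{\mathfrak{X}}(A))$, which turns the right-hand side into the target expression.

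For the right-hand side: $\mu^{\mathfrak{S}}_{\mathfrak{S}\mathfrak{X}}(w)\in\mathfrak{S}^{2}\mathfrak{X}$, so $\big(\mu^{\mathfrak{S}}_{\mathfrak{X}}\circ\mu^{\mathfrak{S}}_{\mathfrak{S}\mathfrak{X}}\big)(w)(A)=I^{\mu^{\mathfrak{S}}_{\mathfrak{S}\mathfrak{X}}(w)}_{\mathfrak{S}\mathfrak{X}}\big(\varepsilon^{\mathfrak{S}}_{\mathfrak{X}}(A)\big)$ by Lemma \ref{lem:alterMu}. Here is where the additivity hypothesis is used: since $\mathfrak{S}$ is additive and $w\in\mathfrak{S}^{3}\mathfrak{X}=\mathfrak{S}(\mathfrak{S}^{2}\mathfrak{X})$, the capacity $w$ is additive, so Lemma \ref{lem:LemThree} applied with $\mathfrak{S}\mathfrak{X}$ in the role of the base measurable space yields $I^{\mu^{\mathfrak{S}}_{\mathfrak{S}\mathfrak{X}}(w)}_{\mathfrak{S}\mathfrak{X}}=I^{w}_{\mathfrak{S}^{2}\mathfrak{X}}\circ\xi^{\mathfrak{S}}_{\mathfrak{S}\mathfrak{X}}$. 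Evaluating at $\varepsilon^{\mathfrak{S}}_{\mathfrak{X}}(A)\in L^{\infty}(\mathfrak{S}\mathfrak{X})$ gives exactly $I^{w}_{\mathfrak{S}^{2}\mathfrak{X}}\big(\xi^{\mathfrak{S}}_{\mathfrak{S}\mathfrak{X}}(\varepsilon^{\mathfrak{S}}_{\mathfrak{X}}(A))\big)$, matching the left-hand side, and the proof is complete.

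The calculation is essentially routine once the earlier lemmas are available; the only genuinely delicate point is bookkeeping — correctly tracking which measurable space each of $\mu^{\mathfrak{S}}$, $\varepsilon^{\mathfrak{S}}$, $\xi^{\mathfrak{S}}$ and each Choquet integral $I$ lives over, and in particular recognizing that Lemma \ref{lem:LemThree} must be invoked one level up, at $\mathfrak{S}\mathfrak{X}$, and that its additivity hypothesis is met precisely because $\mathfrak{S}$ is assumed additive. (The non-additive counterexample preceding this statement shows that this hypothesis is genuinely needed: without it, the right-hand side computation breaks, since Lemma \ref{lem:LemThree} fails for non-additive $w$.)
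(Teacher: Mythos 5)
Your proposal is correct and follows essentially the same route as the paper's proof: the identity $\varepsilon^{\mathfrak{S}}_{\mathfrak{X}}(A)\circ\mu^{\mathfrak{S}}_{\mathfrak{X}}=\xi^{\mathfrak{S}}_{\mathfrak{S}\mathfrak{X}}(\varepsilon^{\mathfrak{S}}_{\mathfrak{X}}(A))$, the substitution Lemma \ref{lem:lemTwo} for the $\mathfrak{S}\mu^{\mathfrak{S}}$ leg, and Lemma \ref{lem:LemThree} applied one level up at $\mathfrak{S}\mathfrak{X}$ (where additivity of $w$, guaranteed by additivity of $\mathfrak{S}$, is exactly what is needed) are the same ingredients the paper chains together. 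The only difference is presentational: you meet in the middle at $I^{w}_{\mathfrak{S}^{2}\mathfrak{X}}\big(\xi^{\mathfrak{S}}_{\mathfrak{S}\mathfrak{X}}(\varepsilon^{\mathfrak{S}}_{\mathfrak{X}}(A))\big)$ rather than writing one continuous chain of equalities.
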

\begin{proof}
Let
$
\mathfrak{X}
	=
(X, \Sigma_X)
$
be a measurable space.
The natural transformations described in
Diagram \ref{diag:defGiryAssoc}
are defined as
$
(
\mu^{\mathfrak{S}}
\mathfrak{S}
)_{\mathfrak{X}}
	:=
\mu^{\mathfrak{S}}_{
	\mathfrak{S}
	\mathfrak{X}
}
$
and
$
(
\mathfrak{S}
\mu^{\mathfrak{S}}
)_{\mathfrak{X}}
	:=
\mathfrak{S}(
	\mu^{\mathfrak{S}}_{\mathfrak{X}}
) .
$

Let
$
A \in \Sigma_X
$.
First, we will show that
\begin{equation}
\label{eq:gmProofOne}
\varepsilon^{\mathfrak{S}}_{\mathfrak{X}}(A)
	\circ
\mu^{\mathfrak{S}}_{\mathfrak{X}}
		=
\xi^{\mathfrak{S}}_{
	\mathfrak{S}
	\mathfrak{X}
}
(
	\varepsilon^{\mathfrak{S}}_{\mathfrak{X}}(A)
) .
\end{equation}
But for 
$
v
	\in
\mathfrak{S}^2
\mathfrak{X}
$,
we have
\if \MkLonger  1
\begin{align*}
(
\varepsilon^{\mathfrak{S}}_{\mathfrak{X}}(A)
	\circ
\mu^{\mathfrak{S}}_{\mathfrak{X}}
)
(v)
		&=
\varepsilon^{\mathfrak{S}}_{\mathfrak{X}}(A)
(
	\mu^{\mathfrak{S}}_{\mathfrak{X}}
	(v)
)
		\\&=
\mu^{\mathfrak{S}}_{\mathfrak{X}}
(v)
(A)
		\\&=
I_{
	\mathfrak{S} \mathfrak{X}
}^{
	v
}
\big(
	\varepsilon^{\mathfrak{S}}_{\mathfrak{X}}(A)
\big)
		\\&=
\xi^{\mathfrak{S}}_{
	\mathfrak{S}
	\mathfrak{X}
}
(
	\varepsilon^{\mathfrak{S}}_{\mathfrak{X}}(A)
)
(v) .
\end{align*}
\else 
\begin{align*}
(
\varepsilon^{\mathfrak{S}}_{\mathfrak{X}}(A)
	\circ
\mu^{\mathfrak{S}}_{\mathfrak{X}}
)
(v)
		&=
\varepsilon^{\mathfrak{S}}_{\mathfrak{X}}(A)
(
	\mu^{\mathfrak{S}}_{\mathfrak{X}}
	(v)
)
		=
\mu^{\mathfrak{S}}_{\mathfrak{X}}
(v)
(A)
		=
I_{
	\mathfrak{S} \mathfrak{X}
}^{
	v
}
\big(
	\varepsilon^{\mathfrak{S}}_{\mathfrak{X}}(A)
\big)
		=
\xi^{\mathfrak{S}}_{
	\mathfrak{S}
	\mathfrak{X}
}
(
	\varepsilon^{\mathfrak{S}}_{\mathfrak{X}}(A)
)
(v) .
\end{align*}
\fi 
Hence, we obtain
(\ref{eq:gmProofOne}).
Now for
$
w
	\in
\mathfrak{S}^3
\mathfrak{X}
$
and
$
A \in \Sigma_X
$,
we have
\if \MkLonger  1
\begin{align*}
(
\mu^{\mathfrak{S}}
	\circ
\mathfrak{S}
\mu^{\mathfrak{S}}
)_{\mathfrak{X}}
(w)(A)
		&=
(
\mu^{\mathfrak{S}}_{\mathfrak{X}}
	\circ
\mathfrak{S}(
	\mu^{\mathfrak{S}}_{\mathfrak{X}}
)
)
(w)(A)
		\\&=
\mu^{\mathfrak{S}}_{\mathfrak{X}}
(
	\mathfrak{S}(
		\mu^{\mathfrak{S}}_{\mathfrak{X}}
	)
	(w)
)
(A)
		\\&=
I_{
	\mathfrak{S} \mathfrak{X}
}^{
	\mathfrak{S} (\mu^{\mathfrak{S}}_{\mathfrak{X}})
	(w)
}
\big(
	\varepsilon^{\mathfrak{S}}_{\mathfrak{X}}(A)
\big)
		\\&=
I_{
	\mathfrak{S} \mathfrak{X}
}^{
	w
		\circ
	(\mu^{\mathfrak{S}}_{\mathfrak{X}})^{-1}
}
\big(
	\varepsilon^{\mathfrak{S}}_{\mathfrak{X}}(A)
\big)
		\\&=
I_{
	\mathfrak{S}^2 \mathfrak{X}
}^w
\big(
	\varepsilon^{\mathfrak{S}}_{\mathfrak{X}}(A)
		\circ
	\mu^{\mathfrak{S}}_{\mathfrak{X}}
\big)
			&(\textrm{
by Lemma \ref{lem:lemTwo}
			})
		\\&=
I_{
	\mathfrak{S}^2 \mathfrak{X}
}^w
\big(
	\xi^{\mathfrak{S}}_{
		\mathfrak{S}
		\mathfrak{X}
	}
	(
		\varepsilon^{\mathfrak{S}}_{\mathfrak{X}}(A)
	)
\big)
			&(\textrm{
by (\ref{eq:gmProofOne})
			})
		\\&=
I_{
	\mathfrak{S} \mathfrak{X}
}^{
	\mu^{\mathfrak{S}}_{
		\mathfrak{S} \mathfrak{X}
	}
	(w)
}
\big(
	\varepsilon^{\mathfrak{S}}_{\mathfrak{X}}(A)
\big)
			&(\textrm{
by Lemma \ref{lem:LemThree}
			})
		\\&=
\mu^{\mathfrak{S}}_{\mathfrak{X}}
(
\mu^{\mathfrak{S}}_{
	\mathfrak{S}
	\mathfrak{X}
}
(w)
)
(A)
		\\&=
(
\mu^{\mathfrak{S}}_{\mathfrak{X}}
	\circ
\mu^{\mathfrak{S}}_{
	\mathfrak{S}
	\mathfrak{X}
}
)
(w)(A)
		\\&=
(
\mu^{\mathfrak{S}}
	\circ
\mu^{\mathfrak{S}}
\mathfrak{S}
)_{\mathfrak{X}}
(w)(A) .
\end{align*}
\else 
by Lemma \ref{lem:lemTwo},
(\ref{eq:gmProofOne})
and
Lemma \ref{lem:LemThree},
\begin{align*}
		&
(
\mu^{\mathfrak{S}}
	\circ
\mathfrak{S}
\mu^{\mathfrak{S}}
)_{\mathfrak{X}}
(w)(A)
		=
(
\mu^{\mathfrak{S}}_{\mathfrak{X}}
	\circ
\mathfrak{S}(
	\mu^{\mathfrak{S}}_{\mathfrak{X}}
)
)
(w)(A)
		=
I_{
	\mathfrak{S} \mathfrak{X}
}^{
	\mathfrak{S} (\mu^{\mathfrak{S}}_{\mathfrak{X}})
	(w)
}
\big(
	\varepsilon^{\mathfrak{S}}_{\mathfrak{X}}(A)
\big)
		=
I_{
	\mathfrak{S} \mathfrak{X}
}^{
	w
		\circ
	(\mu^{\mathfrak{S}}_{\mathfrak{X}})^{-1}
}
\big(
	\varepsilon^{\mathfrak{S}}_{\mathfrak{X}}(A)
\big)
		\\&=
I_{
	\mathfrak{S}^2 \mathfrak{X}
}^w
\big(
	\varepsilon^{\mathfrak{S}}_{\mathfrak{X}}(A)
		\circ
	\mu^{\mathfrak{S}}_{\mathfrak{X}}
\big)
		=
I_{
	\mathfrak{S}^2 \mathfrak{X}
}^w
\big(
	\xi^{\mathfrak{S}}_{
		\mathfrak{S}
		\mathfrak{X}
	}
	(
		\varepsilon^{\mathfrak{S}}_{\mathfrak{X}}(A)
	)
\big)
		=
I_{
	\mathfrak{S} \mathfrak{X}
}^{
	\mu^{\mathfrak{S}}_{
		\mathfrak{S} \mathfrak{X}
	}
	(w)
}
\big(
	\varepsilon^{\mathfrak{S}}_{\mathfrak{X}}(A)
\big)
		=
\mu^{\mathfrak{S}}_{\mathfrak{X}}
(
\mu^{\mathfrak{S}}_{
	\mathfrak{S}
	\mathfrak{X}
}
(w)
)
(A)
		\\&=
(
\mu^{\mathfrak{S}}_{\mathfrak{X}}
	\circ
\mu^{\mathfrak{S}}_{
	\mathfrak{S}
	\mathfrak{X}
}
)
(w)(A)
		=
(
\mu^{\mathfrak{S}}
	\circ
\mu^{\mathfrak{S}}
\mathfrak{S}
)_{\mathfrak{X}}
(w)(A) .
\end{align*}
\fi 
Therefore,
we showed the diagram commutes.

\end{proof}


\begin{thm}{\normalfont{[\cite{giry_1982}]}}
\label{thm:GiryMonad}
If
$
\mathfrak{S}
$
is an additive CM-functor,
the triple
$
(
	\mathfrak{S}, \eta^{\mathfrak{S}}, \mu^{\mathfrak{S}}
)
$
is a monad.
\end{thm}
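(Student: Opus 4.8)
The statement is, at this point, essentially an assembly of the four structural propositions established in Sections~\ref{sec:functorS} and \ref{sec:GiryMonad}. Recall that a monad on $\Mble$ consists of an endofunctor together with two natural transformations, the unit $1_{\Mble} \dot{\to} \mathfrak{S}$ and the multiplication $\mathfrak{S}^2 \dot{\to} \mathfrak{S}$, subject to the two unit laws and the associativity law. The endofunctor part is free: since $\mathfrak{S}$ is a C-functor it equals $\mathfrak{L} \circ \rho$ for a choice functor $\rho$, hence is a genuine functor $\Mble \to \Mble$ by Theorem~\ref{thm:functorL} and Definition~\ref{defn:cFunctor}. So the plan is simply to cite, in order: Proposition~\ref{prop:etaNT} for naturality of $\eta^{\mathfrak{S}} : 1_{\Mble} \dot{\to} \mathfrak{S}$; Proposition~\ref{prop:MuNatTrans} for naturality of $\mu^{\mathfrak{S}} : \mathfrak{S}^2 \dot{\to} \mathfrak{S}$; Proposition~\ref{prop:GiryUnit} for the two unit identities $\mu^{\mathfrak{S}} \circ \eta^{\mathfrak{S}}\mathfrak{S} = 1_{\mathfrak{S}}$ and $\mu^{\mathfrak{S}} \circ \mathfrak{S}\eta^{\mathfrak{S}} = 1_{\mathfrak{S}}$ (which, note, use no additivity); and Proposition~\ref{prop:GiryAssoc} for the associativity square $\mu^{\mathfrak{S}} \circ \mu^{\mathfrak{S}}\mathfrak{S} = \mu^{\mathfrak{S}} \circ \mathfrak{S}\mu^{\mathfrak{S}}$.

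\textbf{Key steps in order.} First I would spell out precisely which diagrams must commute for $(\mathfrak{S}, \eta^{\mathfrak{S}}, \mu^{\mathfrak{S}})$ to be a monad, matching the shapes of Diagrams~\ref{diag:natTransEta}, \ref{diag:natTransMu}, \ref{diag:defGiryUnit}, and \ref{diag:defGiryAssoc}. Second, I would observe that the additivity hypothesis on $\mathfrak{S}$ enters exactly once, namely to invoke Proposition~\ref{prop:GiryAssoc}, whose proof in turn depends on Lemma~\ref{lem:LemThree}; all other ingredients are already available unconditionally. Third, I would assemble the pieces: naturality from Propositions~\ref{prop:etaNT} and \ref{prop:MuNatTrans}, the triangle/unit identities from Proposition~\ref{prop:GiryUnit}, and the pentagon/associativity identity from Proposition~\ref{prop:GiryAssoc}. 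Since all four diagrams then commute, the triple is a monad.

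\textbf{Main obstacle.} There is no genuinely new computation to perform here; the work has all been front-loaded into the preceding propositions. The delicate point, and the reason the hypothesis cannot be dropped, is the associativity law: it hinges on Lemma~\ref{lem:LemThree}, i.e.\ on the interchange $I_{\mathfrak{X}}^{\mu^{\mathfrak{S}}_{\mathfrak{X}}(v)} = I_{\mathfrak{S}\mathfrak{X}}^{v} \circ \xi^{\mathfrak{S}}_{\mathfrak{X}}$, which holds precisely because $v \in \mathfrak{S}^2\mathfrak{X}$ is additive (additivity of $\mathfrak{S}$ forces every such $v$ to be additive) — and Example~\ref{exmp:CEprobMon} exhibits a non-additive $v$ for which the identity fails. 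Accordingly, the one thing to do with care in writing up the proof is to isolate this dependence cleanly, so that the reader sees the additivity assumption is used only to guarantee associativity and that the unit and naturality axioms hold for every CM-functor. With that remark in place, the proof reduces to a citation of Propositions~\ref{prop:etaNT}, \ref{prop:MuNatTrans}, \ref{prop:GiryUnit} and \ref{prop:GiryAssoc}.
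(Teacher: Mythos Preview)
Your proposal is correct and follows essentially the same approach as the paper, which simply writes ``Obvious by Proposition~\ref{prop:GiryUnit} and Proposition~\ref{prop:GiryAssoc}.'' You are merely more explicit in also citing Propositions~\ref{prop:etaNT} and~\ref{prop:MuNatTrans} for the naturality of $\eta^{\mathfrak{S}}$ and $\mu^{\mathfrak{S}}$, and in noting that additivity enters only through Proposition~\ref{prop:GiryAssoc} via Lemma~\ref{lem:LemThree}.
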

\begin{proof}
Obvious by
Proposition \ref{prop:GiryUnit}
and
Proposition \ref{prop:GiryAssoc}.

\end{proof}

\fi 
\fi 

\if \InclSC  1
\input{KleisliCat}
\input{SConGiry}
\input{catFilt}
\input{furtherTopics}
\fi 

\section{Concluding remarks}
\label{sec:concl}

The purpose of this paper was 
to provide a framework to systematically handle the hierarchy of uncertainty in a multi-layered manner, 
which previously had only two levels: risk and ambiguity.
To this end, 
we first introduced an uncertainty space as a measurable space equipped with multiple capacities as its measures.
This is a generalized concept of probability space.
We then introduced a sequence of uncertainty spaces, called a U-sequence, 
to represent the hierarchy of uncertainty structures.
Using these concepts, 
we provided a concrete example of the third layer of uncertainty by demonstrating Ellsberg's paradox.

Next, 
we decided to use category theory to overview the structure of the uncertainty spaces, 
and introduced two categories, $\Unc$ and $\mpUnc$.
Of these, $\Unc$ is a natural generalization of $\Prob$, 
the category of probability spaces introduced in \cite{AR_2019}.
That is, 
it is expected to be a framework for computing conditional expectations 
along $\Unc$-maps between uncertainty spaces.
Similarly, we introduced the $U^G$-map as an arrow between U-sequences,
forming 
$\USeq^G$,
the category of U-sequences.
This arrow can be viewed as a generalization of the entropic value measure.
We introduced the lift-up functor
$\mathfrak{L}
	: \mpUnc \to \Mble
$
that maps a given uncertainty space to the measurable space becoming a basis of higher level uncertainty space.
The functor
$\mathfrak{L}$
was used to define CM-functors that is considered as envelopes of U-sequences.
By the iterative application of the CM-functor to a given measurable space,
we constructed the universal uncertainty space
that has a potential to be the basis of multi-level uncertainty theory
because it has the uncertainty spaces of all levels as its projections.
Lastly, we confirmed 
a sufficient condition for the functor
$\mathfrak{S}$
to be a probability monad developed by
\cite{lawvere_1962}
and
\cite{giry_1982}.

\lspace

We would like to find concrete examples where 
$V_{\infty}(f)$
works meaningfully in the real world, 
such as in financial risk management \cite{adachi_2014crm}, which was beyond the scope of this version.
In other words, 
one of our goals is to concretely show the necessity of dealing with higher-order uncertainty structures.

On the other hand, 
it is already difficult for humans to fully follow the thought process of generative AIs such as ChatGPT and Stable Diffusion, 
and their black-box nature is inherent in uncertainty.
However, current generative AIs are still tractable in the sense that they are developed by human beings.
But, what will happen when AI starts to develop its own generative AI by itself in the future?
It seems to me that what mankind will be confronted with at that time is precisely a higher level of uncertainty.

As I understand it, 
we are just at the starting point of handling uncertainty in the engineering sense,
seeing we are still at the 2nd layer.
However, 
if such an iterative mechanism that escalates the layers of uncertainty is implemented in society, 
it would be easy to imagine that the level of uncertainty will go infinity.
Therefore, it may not be a bad idea to prepare for such a situation.








\end{document}